\documentclass[11pt]{article}
\usepackage{mainsty}
\usepackage{amsmath}
\usepackage[utf8]{inputenc}
\usepackage{enumitem}
\usepackage{soul}
\usepackage{ulem}
\usepackage{booktabs,tabularx,array,pifont}
\newcommand{\cmark}{\ding{51}}
\newcommand{\xmark}{\ding{55}}
\newcommand{\pmark}{\(\sim\)}
\RequirePackage[capitalise, nameinlink, noabbrev]{cleveref}
\crefname{as}{Assumption}{Assumptions}
\RequirePackage{thm-restate}

\usepackage{setspace}

\newcommand{\ib}{{\color{green}\operatorname{ib}}}
\newcommand{\is}{{\color{blue}\operatorname{is}}}
\newcommand{\cc}{{\color{red}\operatorname{cc}}}

\let\tr\trmat
\newcommand{\tr}{{\color{black}\operatorname{tr}}}

\newcommand{\tB}{\textsf{B}}
\newcommand{\tS}{\textsf{S}}
\newcommand{\tBS}{\textsf{BS}}
\newcommand{\edavg}[1]{\hat{\dbar{#1}}}
\newcommand{\eavg}[1]{\hat{\bar{#1}}}
\DeclareMathOperator{\dir}{dir}
\DeclareMathOperator{\tot}{tot}

\usepackage{authblk}

\title{Regression Adjustments for Double Randomization in Two-Sided Marketplaces}

\author[1]{Timothy Sudijono\thanks{Email: \texttt{tsudijon@stanford.edu}}}
\author[2]{Lihua Lei}
\author[3]{Lorenzo Masoero}
\author[4]{Suhas Vijaykumar}
\author[2]{Guido Imbens}
\author[3]{James McQueen}

\affil[1]{\textit{Stanford Statistics}}
\affil[2]{\textit{Stanford GSB}}
\affil[3]{\textit{Amazon}}
\affil[4]{\textit{UCSD Economics}}
\date{\today}

\begin{document}
\maketitle

\begin{abstract}
Multiple randomization designs (MRDs) are a class of experimental designs used to handle interference in two-sided marketplaces. We investigate regression adjustment strategies for estimating total, spillover, and direct effects in MRDs. We derive minimum asymptotic variance estimators among a broad class of linearly adjusted estimators, without assuming a linear model on the potential outcomes. Surprisingly, the optimal regression adjustments are estimable from data and are generally different from regression adjustments in classical randomized experiments. For example, one such optimal estimator for the direct effect corresponds to a weighted regression with interacted two-way fixed effects. We establish model-robustness properties, central limit theorems, and inferential methods for our estimators, relying on improved theoretical results for MRD experiments. Our results provide the analog of classical regression adjustments for marketplace experiments. Numerical simulations demonstrate a considerable increase in efficiency over simpler approaches, enabling better inference when running MRDs. 
\end{abstract}

\noindent\textbf{Keywords:} Covariate Adjustment, Randomization Inference, Design-based Inference, Spillovers, Marketplaces

\section{Introduction}

Two-sided marketplaces are ubiquitous in the digital economy. In these settings, standard experimental designs (A/B tests) aimed at measuring the impact of an intervention can fall short due to \textit{interference}: interventions can lead to strategic interactions (or ``spillovers") within or across groups. Multiple Randomization Designs (MRDs) are a rich class of experimental designs \cite{bajari2021multiple, johari2022experimental} designed to handle interference in two-sided marketplaces. In MRDs, the outcome of interest is indexed by a buyer-seller pair and treatments can be implemented at this level. Under a \textit{local interference} assumption, \cite{bajari2021multiple, masoero2023leveraging} use MRDs to produce estimands that capture cross-unit interference and obtain corresponding estimates of spillover effects.  However, the existing MRD literature does not address how covariate adjustments can be leveraged to improve estimator precision, as is common in standard A/B tests \cite{deng2013improving}. This gap has practical consequences: real-world MRDs can suffer from low power due to explicit randomization over both buyers and sellers in the experiment, especially if one side of the marketplace is small. Incorporating covariates, such as the outcome $Y_{ij}$ from a previous time period, can help address this limitation by enabling more powerful inference. Our work fills the gap by developing regression adjustment theory and methods for MRDs, resolving a significant barrier to their use in practice.

\subsection{Background}

Fix a two-sided marketplace with $I$ buyers and $J$ sellers. For each buyer-seller pair $(i,j)$ we observe an outcome $y_{ij}$ of interest. Further, associated to each buyer-seller pair $(i,j)$ is a vector of covariates $X_{ij} \in \bR^d$, which may represent past observed outcomes, or individual buyer / seller level characteristics.

We will consider \textit{simple multiple randomization designs} (SMRDs). The SMRD assigns variables $W_i^{\tB} \in \set{0,1}$ to each buyer, as in a completely randomized experiment, with $\sum_{i=1}^I W_i^{\tB} = I_T$, the total number of treated buyers. Similarly each seller is assigned a binary variable $W_j^{\tS} \in \set{0,1}$, with $J_T$ total treated sellers. All buyer-seller pairs $(i,j)$ for which $W_i^{\tB}W_j^{\tS} = 1$ are given a binary treatment $W_{ij} = \text{T}$, while all other pairs serve as control (if $W_i^{\tB}W_j^{\tS} = 0$, then $W_{ij} = \text{C})$. An example of such a treatment matrix $\mathbf{W} = (W_{ij})_{ij}$ is given as follows. Here, we consider a marketplace of $4$ buyers and $8$ sellers, where the first two buyers and first four sellers are given treatment:
\begin{equation} \label{eq:assignment_matrix}
\mathbf{W} = 
\begin{bmatrix}
\textcolor{black}{\text{T}} & \textcolor{black}{\text{T}} & \textcolor{black}{\text{T}} & \textcolor{black}{\text{T}} & \textcolor{green}{\text{C}} & \textcolor{green}{\text{C}} & \textcolor{green}{\text{C}} & \textcolor{green}{\text{C}} \\
\textcolor{black}{\text{T}} & \textcolor{black}{\text{T}} & \textcolor{black}{\text{T}} & \textcolor{black}{\text{T}} & \textcolor{green}{\text{C}} & \textcolor{green}{\text{C}} & \textcolor{green}{\text{C}} & \textcolor{green}{\text{C}} \\
\textcolor{blue}{\text{C}} & \textcolor{blue}{\text{C}} & \textcolor{blue}{\text{C}} & \textcolor{blue}{\text{C}} & \textcolor{red}{\text{C}} & \textcolor{red}{\text{C}} & \textcolor{red}{\text{C}} & \textcolor{red}{\text{C}} \\
\textcolor{blue}{\text{C}} & \textcolor{blue}{\text{C}} & \textcolor{blue}{\text{C}} & \textcolor{blue}{\text{C}} & \textcolor{red}{\text{C}} & \textcolor{red}{\text{C}} & \textcolor{red}{\text{C}} & \textcolor{red}{\text{C}}
\end{bmatrix}
\end{equation}

The \textit{local interference} assumption of \cite{bajari2021multiple} is used to identify several estimands of interest. 
\begin{definition}[Local Interference Assumption]

Potential outcomes satisfy the \textit{local interference} assumption if $y_{ij}(\mathbf{W}) = y_{ij}(\mathbf{W}')$ for any pair $(i,j)$ such that 1) $W_{ij} = W'_{ij}$, 2) the fraction of treated sellers for buyer $i$ is the same for $\mathbf{W}, \mathbf{W}'$, and 3) the fraction of treated buyers for seller $j$ is the same for $\mathbf{W}, \mathbf{W}'$.
\end{definition}
Under the local interference assumption, each buyer-seller pair has four potential outcomes indexed by $\gamma \in \Gamma :=  \set{\tr,\ib,\is,\cc}$, defined as follows:

\[
\gamma_{ij}
=
\begin{cases}
    \tr & \text{if } W_i^\tB = 1, W_j^\tS = 1 \\
    \ib & \text{if } W_i^\tB = 1, W_j^\tS = 0 \\
    \is & \text{if } W_i^\tB = 0, W_j^\tS = 1 \\
    \cc & \text{if } W_i^\tB = 0, W_j^\tS = 0.
\end{cases}
\]
The $\tr$ group consists of directly treated pairs, while the $\cc$ group serves as a clean, homogeneous control in which both the buyer and seller are unexposed to treatment. The $\ib$ and $\is$ groups are control pairs ($W_{ij} = \text{C}$) whose buyer (resp.\ seller) is assigned treatment and therefore interacts with treated units on the other side of the marketplace; these three control groups ($\cc$, $\ib$, $\is$) are comparable prior to treatment, yet can differ post-treatment due to spillovers. Their comparison is what enables SMRDs to detect and measure interference.
Similar forms of interference were previously considered in \cite{manski2013identification} (cf.\ ``anonymous interactions'') and in \cite{hudgens2008toward} (cf.\ ``stratified interference'').

For a detailed example of a two-sided marketplace in which local interference is at play, we refer to \cite[Example 2.5]{masoero2024multiple}.
In that example, the authors present a two-sided platform model where content creators and advertisers interact. Each creator-advertiser pair generates revenue based on their respective quality scores and a compatibility factor. The platform compensates both parties through revenue-sharing contracts, while creators and advertisers incur costs to maintain quality. The model includes a binary intervention in the form of additional incentives that affect revenue generation.
The key idea is that agents' strategic responses to these incentives create local interference: the outcome of any creator-advertiser interaction depends not only on their direct treatment status, but also on the average treatment status of all their other interactions. This leads to equilibrium outcomes (both revenue and profit) that exhibit local interference structure. Generalizing beyond this example, local interference provides a natural model for potential outcomes in two-sided marketplaces with spillover effects.

\paragraph{Estimands.} We introduce the notation 
$
\dbar{y}_\gamma := \frac{1}{IJ}\sum_{i=1}^I \sum_{j=1}^J y_{ij}(\gamma), \gamma \in \Gamma
$
to denote the average of the potential outcomes for a given type $\gamma$. We are primarily interested in the causal estimands
\[
\tau_{c} = \sum_{\gamma \in \Gamma} c_\gamma\dbar{y}_{\gamma},
\]
for any vector $c = [c_{\tr},c_{\ib},c_{\is},c_{\cc}]$. Taking $c = [1,0,0,-1], [1,-1,-1,1]$ recover the \textit{total effect} and \textit{direct effect} respectively, while $c = [0,1,0,-1],[0,0,1,-1]$ measure \textit{buyer} and \textit{seller spillover effects} defined in \cite{bajari2021multiple, masoero2024multiple}. The buyer (or seller) spillover captures the effect of having a treated buyer (resp. seller) on a given interaction. Intuitively, the direct effect captures the average impact on a pair $(i,j)$ solely from the intervention. The total effect may be thought of as the unit-level average effect of the randomized treatment policy on the whole marketplace. It may be decomposed into the direct effect of the treatment plus the effect of buyer and seller spillovers, under the local interference assumption \cite{masoero2024multiple}. 

\subsection{Motivation: the direct effect}

To showcase our results and the importance of thinking carefully about regression adjustment, consider the direct effect estimator proposed in \cite{masoero2024multiple}:
\begin{equation}
\label{eq:direct_effect_estimator}
\hat{\tau}_{\dir} := \frac{1}{I_\tr J_\tr} \sum_{(i,j) \in \tr} y_{ij} - \frac{1}{I_\ib J_\ib} \sum_{(i,j) \in \ib} y_{ij}- \frac{1}{I_\is J_\is} \sum_{(i,j) \in \is} y_{ij}  + \frac{1}{I_\cc J_\cc} \sum_{(i,j) \in \cc} y_{ij}.
\end{equation}
This estimator can equivalently be written as the coefficient $\tau$ in the regression 
\begin{equation}
\label{eq:dir_unadj}
y_{ij} \sim \mu + \alpha W_i^\tB + \delta W_j^\tS + \tau W_i^\tB W_j^\tS.
\end{equation}
Thus, a seemingly natural way to incorporate covariates into the analysis is to consider the ordinary least squares (OLS) regression
\begin{equation}
\label{eq:dir_ols}
y_{ij} \sim \mu + \alpha W_i^\tB + \delta W_j^\tS + \tau W_i^\tB W_j^\tS + X_{ij}^\top \beta,
\end{equation}
which can be seen as an analog of Fisher's ANCOVA discussed in \cite{freedman2008regressionA, ding2024first} or the analog of CUPED from A/B Testing \cite{deng2013improving}.
The estimated coefficient $\hat\tau$ serves as a regression-adjusted estimate of the direct effect. The method serves as a good baseline; under mild assumptions, $\hat\tau$ is consistent for the direct effect. 

However, when the regression specification \eqref{eq:dir_ols} does not correspond to the true model for $y_{ij}$, there exists an estimator which is provably more efficient than ANCOVA, in large samples. The intuition is the following. Let $e_{ij} = y_{ij} - X_{ij}^\top \beta$ and define
\begin{equation}
\label{eq:direct_effect_general_imputation_estimator}
\hat\tau_{\dir}(\beta) := \frac{1}{I_\tr J_\tr} \sum_{(i,j) \in \tr} e_{ij} - \frac{1}{I_\ib J_\ib} \sum_{(i,j) \in \ib} e_{ij}- \frac{1}{I_\is J_\is} \sum_{(i,j) \in \is} e_{ij}  + \frac{1}{I_\cc J_\cc} \sum_{(i,j) \in \cc} e_{ij}.
\end{equation}

By inspecting the first order condition of the OLS, $\tau$ in  \eqref{eq:dir_ols} is equal to $\hat{\tau}_{\dir}(\hat \beta)$ for $\hat{\beta}$ minimizing the OLS condition. The form of  \eqref{eq:direct_effect_general_imputation_estimator} also suggests a more efficient estimator, by choosing $\beta$ to minimize the variance of  \eqref{eq:direct_effect_general_imputation_estimator}. A priori, it would seem that $\beta$ depends on unobservable potential outcomes, but we show that there exists a consistent estimator $\hat{\beta}$ of this optimal adjustment. This insight is implicit in \cite{li2017general, lu2022tyranny} in the cross-sectional case and it is also closely related to control variates from the Monte Carlo literature; we discuss such connections further in \cref{sec:literature_review}. This suggests the simple plug-in $\hat{\tau}_{\dir}(\hat{\beta})$. 

Surprisingly, the plug-in estimator for the direct effect is interpretable and equal to the coefficient $\tau$ in the \textit{weighted} least squares
\begin{align}
y_{ij} & \stackrel{w_{ij}}{\sim} (\mu + \mu^{\tB}_i + \mu^{\tS}_j) + (\alpha + \alpha^{\tB}_i + \alpha^{\tS}_j)W_i^{\tB}  + (\delta + \delta^{\tB}_i + \delta^{\tS}_j)W_j^{\tS} + (\tau + \tau^{\tB}_i + \tau^{\tS}_j)W_i^{\tB}W_j^{\tS} + X_{ij}^\top \beta\label{eq:twfe_regression} \\
w_{ij} & = \frac{1}{I_\gamma^2 J_\gamma^2}, (i,j) \in \gamma,\nonumber
\end{align}
under suitable identifiability constraints. Specifically, the buyer and seller fixed effects are constrained to sum to zero within each treatment group:
\begin{align*}
& \textstyle\sum_{i \in \cl{I}_\cc} \mu_i^\tB = \sum_{j \in \cl{J}_\cc} \mu_j^\tS = 0, \qquad
  \sum_{i \in \cl{I}_\cc} \delta_i^\tB = \sum_{j \in \cl{J}_\cc} \alpha_j^\tS = 0, \\
& \textstyle\sum_{i \in \cl{I}_\tr} (\mu_i^\tB + \alpha_i^\tB) = \sum_{j \in \cl{J}_\tr} (\mu_j^\tS + \delta_j^\tS) = 0, \\
& \textstyle\sum_{i \in \cl{I}_\tr} (\delta_i^\tB + \tau_i^\tB) = \sum_{j \in \cl{J}_\tr} (\alpha_j^\tS + \tau_j^\tS) = 0.
\end{align*}
These constraints ensure that $\tau$ captures the global average direct effect. This is a weighted difference-in-differences panel regression with interacted two-way fixed effects. Importantly, we do not assume the regression specification \eqref{eq:twfe_regression} gives the correct model. In fact, this weighted least squares estimator has the minimal asymptotic variance among the class \eqref{eq:direct_effect_general_imputation_estimator}, without any structural assumptions on the outcome distribution conditional on covariates.

The weights are closely related to the tyranny-of-minority regression \cite{lu2022tyranny}, which minimizes the asymptotic variance with the same coefficient $\beta$ shared across all groups in cross-sectional experiments. 
Empirically, using the optimal regression adjustment $\hat \tau_{\dir}(\hat \beta)$ can significantly improve estimation and inference. 
\cref{fig:intro_plots} shows a situation in which naive regression adjustment according to ANCOVA \eqref{eq:dir_ols} can actually hurt asymptotic variance compared to no adjustment at all. Similar results were demonstrated by Freedman \cite{freedman2008regressionA} in the cross-sectional case, for imbalanced trials. At the same time, the figure demonstrates that the optimal adjustment can be much more precise than naive adjustment. Further details on this simulation are discussed in \cref{sec:simulations}, with inferential results discussed therein.

\begin{figure}[H]
    \centering
    \includegraphics[width=0.7\linewidth]{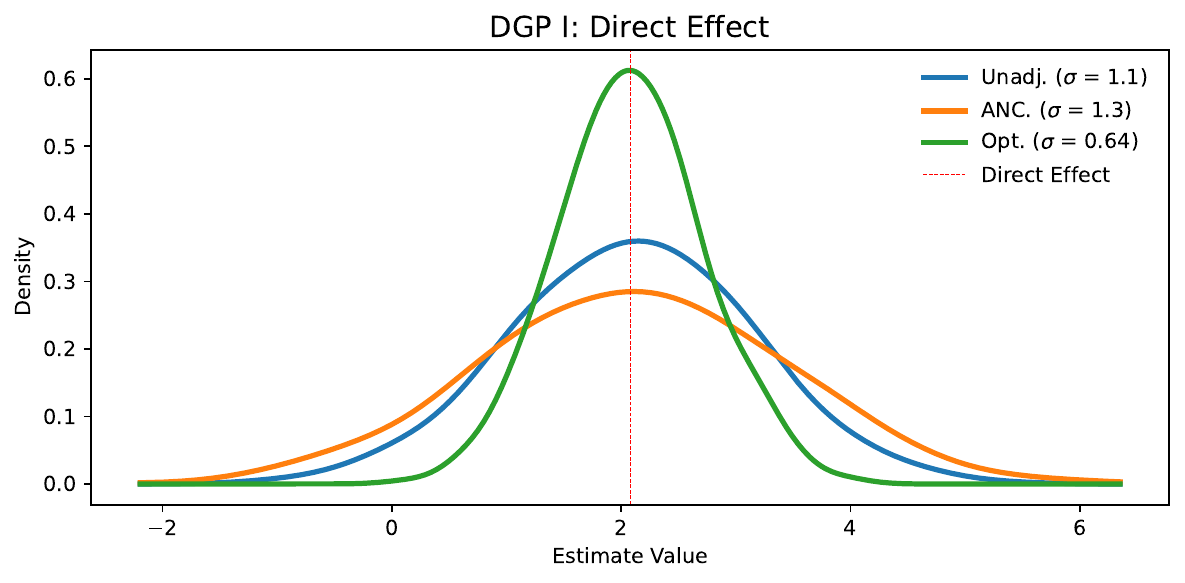}
    \caption{KDE estimate of sampling distributions for the direct effect estimators i) without adjustment, ii) with the ANCOVA adjustment, and iii) with the plug-in optimal adjustment.}
    \label{fig:intro_plots}
\end{figure}

\subsection{Contributions}

In this work, we primarily analyze regression adjustments of the form 
\begin{equation}
\label{eq:imputation_estimators}
\hat{\tau}_{c}(\beta) := \sum_{\gamma \in \Gamma} c_\gamma \frac{1}{I_\gamma J_\gamma} \sum_{(i,j) \in \gamma} \left(y_{ij} - X_{ij}^\top \beta\right),
\end{equation}
where $I_\gamma, J_\gamma$ denote the number of buyers and sellers respectively in a block $\gamma$ (cf.\ \eqref{eq:assignment_matrix}). This is equal to $I_T,J_T$ or $I_C,J_C$ depending on $\gamma.$ We will refer to these as \textit{non-interacted imputation estimators}. These imputation estimators generalize both the unadjusted estimators and naive methods of regression adjustment. Our contributions are the following.

\begin{enumerate}
    \item \textit{Optimal Regression Adjustments.} For general effects of interest, we derive the choice of $\beta$ which minimizes the variances of the imputation estimators, absent any structural assumptions on the potential outcomes. In all cases, the optimal choices of $\beta$ can be estimated from observed data; they do not depend on unidentifiable contrasts of potential outcomes. This leads to natural plug-in estimators which provably outperform the unadjusted estimators in terms of asymptotic variance. Furthermore, the optimal adjustments are generally different from ordinary least squares.
    \item \textit{Theory and Methods for Inference.} We show asymptotic normality of the regression adjustment strategy and accompanying conservative confidence intervals for the regression adjusted estimator.
    \item \textit{Empirical Performance.} In several simulation settings, we demonstrate efficiency gains of the optimal regression adjustment over other methods. The accompanying confidence intervals are also shorter while maintaining desired coverage.
    \item \textit{New Theoretical Tools for MRDs.} Our theoretical results for the plug-in estimators rely on improvements to the theory of doubly-randomized sums, which gives inference for vanilla MRDs. This is another key contribution of this paper. In particular, we give a new CLT for doubly-randomized sums in $1$-Wasserstein distance which shows asymptotic normality in settings where Theorem 4.6 of \cite{masoero2024multiple} does not apply. The CLT uses a novel approach based on Stein's method and concentration of measure tools for sampling without replacement. Moreover, we show the asymptotic consistency of the conservative variance of \cite{masoero2024multiple} under weaker assumptions, which match those imposed by our CLT.

\end{enumerate}

The results are proved in a purely finite population framework where the potential outcomes are treated as fixed and the only randomness comes from the explicit randomization of $W_i^\tB, W_j^\tS$. Previous work in the finite population framework \cite{lin2013agnostic, li2017general, lei2021regression, lu2022tyranny} typically assumes a sequence of finite populations of potential outcomes in which various quantities such as variances and population regression adjustments converge to limits. For example, consider the difference-in-means estimator for a finite population $\set{Y_i(1), Y_i(0)}_{i=1}^n$, when sampling $n_0$ units without replacement. Its variance is well known and given by
\begin{equation}
\frac{1}{n_0} S_0^2 + \frac{1}{n_1} S_1^2 - \frac{1}{n} S_{\tau}^2
\end{equation}
with $n_1 = n - n_0,$ $S_i^2 = \frac{1}{n-1} \sum_{j=1}^n (Y_j(i) - \bar{Y}(i))^2$ for $i \in \set{0,1}$, and similarly $S_\tau^2$ denotes the finite-population variance of the individual treatment effects. By way of a super-population setup and the law of large numbers, the quantities $S_0^2, S_1^2, S_\tau^2$ are presumed to have non-zero constant limits. Therefore, the variance of the difference-in-means estimator scales like $\Theta(1/n)$.

The analogous situation in the MRD setting is less straightforward, comprising one challenge in this work. The variance of the direct effect estimator \eqref{eq:direct_effect_estimator} is more complicated and is recorded in \eqref{eq:direct_effect_variance_formula}. Roughly the formula is of the form
\begin{equation}
\label{eq:intro_var_formula}
\frac{1}{I}\nu^\tB + \frac{1}{J}\nu^\tS + \frac{1}{IJ}\nu^\tBS,    
\end{equation}
where $\nu^\tB, \nu^\tS, \nu^\tBS$ are linear combinations of finite population variance quantities. Assuming that these quantities have constant limits raises a problem here, as then $\Var(\hat{\tau}_{\dir})$ can either scale like $\Theta(I^{-1} + J^{-1})$, $\Theta((IJ)^{-1})$, or in-between as $\Theta(I^{-\alpha}), \alpha \in [1,2].$ Example \ref{ex:iid_normal_superpop} details one such situation. This challenge has not been highlighted before in previous work; \cite{masoero2024multiple} primarily requires that $\Var(\hat{\tau}_{c}) = \Omega(I^{-1} + J^{-1})$ for their results to hold. 

A comprehensive analysis of MRDs in all possible scaling regimes is a significant research program; we take the first step in this direction. For our inferential results on the plug-in regression adjustment procedure $\hat{\tau}_c(\hat{\beta}_c)$, we make use of the previous assumption that $\Var(\hat{\tau}_{c}) = \Theta(I^{-1} + J^{-1})$ and in some important examples, show how we can go beyond this.  These inferential results rely on improved theoretical tools for MRDs without covariates which apply to the regime where $\Var(\hat{\tau}_{c}) = \Theta((IJ)^{-1})$. We detail theoretical questions for future research in \cref{sec:discussion} and note some avenues for improvement. Regardless, our novel estimators have clear practical utility and our analysis covers a suite of interesting examples.

Another natural class to consider is the class of \textit{interacted} imputation estimators:
\begin{equation}
\label{eq:general_interacted_estimator}
\hat{\tau}_{c, \text{int}}(\bm{\beta}) := \sum_{\gamma \in \Gamma} c_\gamma \frac{1}{I_\gamma J_\gamma} \sum_{(i,j) \in \gamma} \left(y_{ij} - X_{ij}^\top \beta_\gamma\right)
\end{equation}
where a separate $\beta_\gamma$ is used for each group and $c = [c_{\tr},c_{\ib},c_{\is},c_{\cc}] = [1,-1,-1,1].$ These strictly nest the estimators of \eqref{eq:imputation_estimators}. The same insight is true here: one can estimate the $\beta_\gamma$'s which minimize the $\Var(\hat{\tau}_{\dir}(\bm{\beta}))$ and define a plug-in estimator. As the discussion of the interacted estimators requires more complicated notation, we will only discuss this class further in \cref{sec:interacted_estimators}. Our techniques for the non-interacted case extend naturally to the interacted case. Moreover, the non-interacted estimators \eqref{eq:direct_effect_general_imputation_estimator} are worth studying in their own right, especially if one of the four MRD groups is small relative to the number of covariates. In such cases, \eqref{eq:general_interacted_estimator} may be unstable compared to \eqref{eq:imputation_estimators}. \cite{lu2022tyranny} details several practical and theoretical benefits of non-interacted versus interacted adjustments in cross-sectional experiments.

\subsection{Related Work}
\label{sec:literature_review}

\paragraph{Online Experimentation.} Online randomized experiments (A/B tests) have become a standard tool to drive innovation in the technology industry \cite{bojinov2022online, quin2024b}.
In the data analysis phase, experimenters often perform regression adjustment with pre-experimental covariates. This yields estimates that can be significantly more reliable than those obtained without adjustment \cite{masoero2023leveraging}.

Standard practice in online experimentation is to expose a completely random subset of units to treatment.  However, this standard design falls short when interventions (i) differentially impact multiple populations of units (e.g., an intervention on an online service that influences interactions between buyers and sellers) and/or (ii) lead to strategic behavior that affects the measured outcomes within or across these groups (e.g., the treatment causes buyers and sellers to adjust their behavior in a manner that influences their other, untreated interactions). Acknowledging the novel challenges posed by online experiments and the inability of standard experiments to address them, a number of novel estimators and experimental designs have been proposed over the last few years. 
In particular, a number of authors have focused on proposing experimental designs that are robust to settings in which there is \emph{interference} -- of the sort outlined above -- across units in the experiment \cite{bajari2023experimental, savje2021average, savje2023causal, bojinov2022online}.

\paragraph{Multiple Randomization Designs.} Among these proposals, Multiple Randomization Designs (MRDs) have emerged as a flexible and rich class of designs \cite{johari2022experimental, bajari2023experimental} for two-sided marketplaces of buyers and sellers, where the outcome of interest is indexed by a buyer-seller pair. \cite{johari2022experimental} motivates the design as a way to reduce bias, under a Markov Decision Process model. \cite{bajari2021multiple} adopt the finite-population potential outcome framework to show how --- under a general notion of interference across units, referred to as \textit{local interference} --- it is possible to use MRDs to define estimands and corresponding ``unadjusted'' estimators to capture cross-unit interference and estimate spillover effects. 
Exact variances and corresponding variance estimators are also provided.
While flexible, a missing piece and open question in the existing MRD framework is how covariate adjustments may be leveraged to enhance precision, similarly to what is done for ``standard'' experiments \cite{deng2013improving, lin2013agnostic}.

Beyond the standard buyer-seller marketplace, MRDs have found applications in a variety of experimental settings. When the second dimension corresponds to time, the design reduces to a \emph{switchback} (or crossover) experiment, where units alternate between treatment and control across time periods; such designs have been studied and deployed at scale in industry \cite{bojinov2020design, missault2025robust}. MRDs can also be combined with cluster randomization, where one side of the marketplace is randomized at the level of groups of units rather than individuals \cite{hut2024measuring}.

\paragraph{Regression Adjustments in Experiments.}

For standard randomized control trials with two treatment groups, regression adjustment has a long history. Most relevant to this work are \cite{freedman2008regressionA,  lin2013agnostic, li2017general, lu2022tyranny}. To explain the story and how it relates to our results, a little notation is helpful. Define $Y_i(0),Y_i(1), i=1,\dots,n$ potential outcomes, $Z_i$ the treatment indicator, and $X_i \in \bR^d$ associated covariates. $Z_i$ are indicators of sampling without replacement such that $\sum_i Z_i = n_1.$

The seminal work \cite{lin2013agnostic} addresses previous complaints of regression adjustment by Freedman \cite{freedman2008regressionA, freedman2008regressionB}. Freedman argued that regression adjustments of the form $Y_i \sim Z_i + X_i$ can hurt asymptotic variance compared to regular difference-in-means $Y_i \sim Z_i$ especially when the treatment and control groups are imbalanced. In \cite{lin2013agnostic}, it is argued that using the interacted regression $Y_i \sim X_i + Z_i + X_i Z_i$ addresses these concerns. Lin's interacted regression solution is particularly satisfying. Firstly, it is model-robust, not requiring the linearity specification to hold. Secondly, it always improves on the asymptotic variance of both difference-in-means and naive regression adjustment. Thirdly, conservative inference is easily done using robust standard errors in the interacted regression. Moreover, it is shown that a certain weighted regression, the \textit{tyranny-of-minority} (ToM) regression, also has the same asymptotic properties as Lin's interacted regression. These results are all justified in a finite population framework, where the potential outcomes are treated as fixed and the only randomness is from the randomization $Z_i$.

The results of \cite{lin2013agnostic} appear mystical; fortunately, several works \cite{li2017general, ding2024first, guo2023generalized} elucidate the interacted regression. The idea is to consider the imputation estimator
\begin{equation}
\label{eq:cross_sectional_imputation}
\frac{1}{n_1} \sum_{i=1}^n Z_i (Y_i - X_i^\top \beta_1) - \frac{1}{n - n_1} \sum_{i=1}^n (1-Z_i) (Y_i - X_i^\top \beta_0)
\end{equation}
akin to \eqref{eq:general_interacted_estimator}. When the covariates are centered, the form of \eqref{eq:cross_sectional_imputation} suggests that it should be consistent for the direct effect. Therefore, $\beta_1,\beta_0$ should be chosen in order to minimize the asymptotic variance of \eqref{eq:cross_sectional_imputation}. In Example 9 of \cite{li2017general}, it is shown via properties of least-squares that the optimal choices are the population OLS coefficients in the treatment and control groups separately. Using the plug-in estimates of these OLS coefficients, one recovers Lin's interacted linear regression. One can also ask the question: what if the coefficients $\beta_0 = \beta_1 =: \beta$ in \eqref{eq:cross_sectional_imputation} are equal? Such estimators are natural especially when $n_0,n - n_0$ are small relative to $d.$ As \cite{lu2022tyranny} shows, the optimal $\beta$ can be estimated and the plug-in returns the ToM regression. Because this class of imputation estimators nests the unadjusted estimator and the regression adjustment $Y_i \sim Z_i + X_i$, both Lin's regression and ToM have better asymptotic variance.

Our results echo these cross-sectional findings and also highlight unexpected differences. Returning to the MRD setting, the analog of ANCOVA \eqref{eq:dir_ols} can hurt asymptotic variance especially in imbalanced experiments, echoing Freedman's finding. Consider again the imputation estimator in \eqref{eq:imputation_estimators}. We also show that optimal regression adjustment $\beta$ is still estimable in this complex setting, even though a priori it depends on unobservable potential outcomes. A further inspection of the first order condition for the optimal $\beta$ shows that this should be a generic phenomenon. Using the resulting plug-in gives an interpretable regression in some cases, but not others. For the direct effect, one obtains the weighted interacted two-way fixed effect regression in \eqref{eq:twfe_regression} whose weights resemble those of ToM. The story is similar for the interacted regression \eqref{eq:general_interacted_estimator}, revealing optimal regression adjustments that are different than running separate OLS's in each group and cannot generally be written in a clean regression form. \cref{tab:mrd-vs-cross-sectional} summarizes this discussion and presages some further phenomena discussed throughout the text. 

Similar investigations on model-robustness of various regressions have been carried out for other designs. For example, \cite{su2021model, wang2026mixed} analyze the cluster randomized design. \cite{cytrynbaum2024covariate, zhao2022reconciling,arkhangelsky2024design} respectively analyze stratified experiments,  split-plot designs, and panel data settings.

\begin{table}[t]
\centering
\caption{Parallels and differences between MRDs and classical regression adjustments (A/B).}
\label{tab:mrd-vs-cross-sectional}
\renewcommand{\arraystretch}{1.15}
\begin{tabularx}{\textwidth}{
    >{\raggedright\arraybackslash}p{0.56\textwidth}
    >{\centering\arraybackslash}p{0.18\textwidth}
    >{\centering\arraybackslash}p{0.18\textwidth}
}
\toprule
\textbf{Phenomenon} & \textbf{MRD} & \textbf{A/B} \\
\midrule

No modeling assumptions on potential outcomes
& \cmark
& \cmark \\
\addlinespace[0.4em]

ANCOVA can harm asymptotic precision
& \cmark
& \cmark \\
\addlinespace[0.4em]

Optimal adjustments are estimable
& \cmark
& \cmark \\
\addlinespace[0.4em]

Optimal interacted adjustment has regression form
& \xmark
& \cmark\ (Lin) \\
\addlinespace[0.4em]

Optimal non-interacted adjustment has regression form
& \pmark\ (Direct Effect)
& \cmark\ (ToM) \\
\addlinespace[0.4em]

Optimal interacted and non-interacted adjustments are asymptotically equivalent
& \xmark
& \cmark \\
\addlinespace[0.4em]

Robust standard errors from the corresponding regression directly yield conservative inference
& \xmark
& \cmark \\

\bottomrule
\end{tabularx}

\vspace{0.35em}
\begin{minipage}{0.96\textwidth}
\footnotesize
\textit{Notes.}
\pmark\ indicates a partial result. For MRDs, the direct effect admits a weighted TWFE regression interpretation, while the total and spillover effects generally do not admit an equally simple regression representation.
\end{minipage}
\end{table}

\subsection{Further Setup \& Notation}

Before describing our results, let us establish some notation and assumptions. Let $\cl{S} = \sigma\set{(W_j^\tS)_j}$ denote the sigma algebra generated by the seller treatment indicators, and $\cl{B} = \sigma\set{(W_i^\tB)_i}$ similarly. Recall $\Gamma :=  \set{\tr,\ib,\is,\cc}$.  We will occasionally use $W^\tB, W^\tS$ to denote the vector of the treatment indicator variables. For each group $\gamma \in \Gamma$, we will write $\cl{I}_\gamma$ to denote the buyer indices in group $\gamma,$ and similarly with $\cl{J}_\gamma$. For example, $\cl{I}_\tr = \set{i : W_i^\tB = 1}, \cl{J}_\tr = \set{j : W_j^\tS = 1}$. We write $(i,j) \in \gamma$ to denote $i \in \cl{I}_\gamma, j \in \cl{J}_\gamma$ and occasionally we will use the shorthand $i \in \gamma$ and $j \in \gamma$ to mean $i \in \cl{I}_\gamma, j \in \cl{J}_\gamma$ respectively. Throughout the paper we will use $A \lesssim B$ to denote $A \leq CB$ for an absolute constant $C$ not depending on any asymptotic scaling. Throughout this paper, we will assume that 
\begin{equation}\label{eq:asymptotic_regime}
\lim_{n\rightarrow \infty} \frac{I_T}{I} \in (0, 1), \quad \lim_{n\rightarrow \infty} \frac{J_T}{J}\in (0, 1), \quad \lim_{n\rightarrow \infty} \frac{I}{J}\in (0, \infty).
\end{equation}
Implicitly, each of $I_T, J_T, I,J$ is indexed by $n \uparrow \infty$ which enumerates a sequence of finite populations.

\paragraph{Matrix notation.} 

For each $\gamma \in \Gamma$, we define the following quantities:
\begin{align*}
\bar{y}_{i,\bullet}(\gamma) & := \frac{1}{J}\sum_{j=1}^J y_{ij}(\gamma), \quad \bar{y}_{i,\bullet}^{(d)}(\gamma) = \bar{y}_{i,\bullet}(\gamma) - \dbar{y}_\gamma \\
\bar{y}_{\bullet,j}(\gamma) & := \frac{1}{I}\sum_{i=1}^I y_{ij}(\gamma) ,
\quad \bar{y}_{\bullet,j}^{(d)}(\gamma) = \bar{y}_{\bullet,j}(\gamma) - \dbar{y}_\gamma \\
y_{ij}^{(dcr)}(\gamma) & := y_{ij}(\gamma) - \bar{y}_{i,\bullet}(\gamma) - \bar{y}_{\bullet,j}(\gamma) + \dbar{y}_\gamma.
\end{align*}
We will call the operation in the last line \textit{double-decentering} a matrix, forcing row and column means to be zero. Define the following variance quantities for the potential outcomes:
\begin{align}
\label{eq:sigma_terms}
\nu_{\gamma}^{\tB} & := \frac{1}{I-1}\sum_{i=1}^I \bar{y}^{(d)}_{i,\bullet}(\gamma)^2, \ \nu_{\gamma}^{\tS} := \frac{1}{J-1}\sum_{j=1}^J \bar{y}^{(d)}_{\bullet,j}(\gamma)^2, \ \nu_{\gamma}^{\tBS} := \frac{1}{(I-1)(J-1)}\sum_{j=1}^J\sum_{i=1}^Iy_{ij}^{(dcr)}(\gamma)^2.
\end{align}
\begin{align}
\xi_{\gamma,\gamma'}^{\tB} & := \frac{1}{I-1}\sum_{i=1}^I \left( \bar{y}^{(d)}_{i,\bullet}(\gamma) 
 - \bar{y}^{(d)}_{i,\bullet}(\gamma') \right)^2, \ \xi_{\gamma,\gamma'}^{\tS} := \frac{1}{J-1}\sum_{j=1}^J \left(\bar{y}^{(d)}_{\bullet,j}(\gamma) - \bar{y}^{(d)}_{\bullet,j}(\gamma')\right)^2, \\\xi_{\gamma,\gamma'}^{\tBS} & := \frac{1}{(I-1)(J-1)}\sum_{j=1}^J\sum_{i=1}^I \left[y_{ij}^{(dcr)}(\gamma) -y_{ij}^{(dcr)}(\gamma')\right]^2.
\end{align}

Given a matrix $(A_{ij})_{ij}$, we will use the notation $\nu^\theta(A)$ for $\theta \in \set{\tB,\tS,\tBS}$ to denote the analogues of the above equations with $A$ replacing $y$.

\section{Optimal Non-Interacted Regression Adjustments}
\label{sec:methodology}
Returning to regression adjustments for MRDs, we present the general methodology for generic estimands of interest
\[
\tau_{c} = \sum_{\gamma \in \Gamma} c_\gamma\dbar{y}_{\gamma},
\]
given any vector $c = [c_{\tr},c_{\ib},c_{\is},c_{\cc}]$. We consider the non-interacted imputation estimator in \eqref{eq:imputation_estimators}:
\begin{equation}
    \tau_{c}(\beta) = \sum_{\gamma \in \Gamma}  c_\gamma\frac{1}{I_\gamma J_\gamma} \sum_{(i,j) \in \gamma}  \left(y_{ij} - X_{ij}^\top \beta\right)
\end{equation}

To describe the form of the optimal regression adjustment $\tilde{\beta}_c$ that minimizes $\Var(\tau_{c}(\beta)),$ we first claim that the variance can be written as a positive definite quadratic form given by 
\[
\Var(\tau_{c}(\beta)) = \beta^\top Z \beta - 2u^\top \beta + \Var(\tau_{c}),
\]
where $Z$ is an appropriate linear combination of $Z^\tB,Z^\tS,Z^\tBS$ and $u$ is an appropriate linear combination of $u_\gamma^\tB, u_\gamma^\tS, u_\gamma^\tBS$, to be described. By inspecting the variance and covariance formulas in \cref{prop:generic_variance_formula,prop:generic_covariance_formula}, $\Var(\tau_{c}(\beta))$ can be decomposed as the linear combination of terms $\tilde\nu^\theta_{\gamma}$ for $\theta \in \set{\tB,\tS,\tBS}$ and $\gamma \in \Gamma$, analogous to those in  \eqref{eq:sigma_terms}, but with residualized outcomes $y_{ij}(\gamma) - X_{ij}^\top \beta$ instead of $y_{ij}(\gamma)$ alongside terms of the form $\xi_\gamma^\theta$, which do not depend on $\beta$. Define the inner product terms 
\begin{equation}
\label{eq:inner_prod_terms}
u_\gamma^{\tB} := \frac{1}{I-1}\sum_{i=1}^I \bar{X}_{i,\bullet}^{(d)}\bar{y}_{i,\bullet}^{(d)}(\gamma), \quad u_\gamma^{\tS} = \frac{1}{J-1}\sum_{j=1}^J \bar{X}_{\bullet,j}^{(d)}\bar{y}_{\bullet,j}^{(d)}(\gamma), \quad u_\gamma^{\tBS} = \frac{1}{(I-1)(J-1)}\sum_{i=1}^I\sum_{j=1}^J X_{ij}^{(dcr)} y_{ij}^{(dcr)}(\gamma).
\end{equation}
for each $\gamma \in \Gamma$, and the gram-matrix terms
\begin{equation}
\label{eq:gram_matrix_terms}
Z^{\tB} := \frac{1}{I-1}\sum_{i=1}^I \bar{X}_{i,\bullet}^{(d)}\bar{X}_{i,\bullet}^{(d)\top}, \quad Z^{\tS} := \frac{1}{J-1}\sum_{j=1}^J \bar{X}_{\bullet,j}^{(d)}\bar{X}_{\bullet,j}^{(d)\top}, \quad
Z^{\tBS} := \frac{1}{(I-1)(J-1)}\sum_{i=1}^I\sum_{j=1}^J X_{ij}^{(dcr)}X_{ij}^{(dcr)\top},
\end{equation}

Expanding out the square, we find that
\[
\tilde{\nu}^{\theta}_{\gamma}  = \beta^\top Z^\theta \beta - 2\beta^\top u_\gamma^{\theta} + \nu^\theta_\gamma
\]
for all $\theta \in \set{\tB,\tS,\tBS}$ and $\gamma \in \Gamma$. Summing these terms, we have that $\Var(\tau_{c}(\beta))$ is of the form $\beta^\top \tilde{Z}_c\beta - 2\tilde{u}_c^\top \beta + C$, where 
\begin{align}
\label{eq:def_quadratic_form_components}
 \tilde{Z}_c & := \sum_{\gamma \in \Gamma} a_{\gamma, \tB} Z^{\tB} + a_{\gamma, \tS} Z^{\tS} + a_{\gamma, \tBS} Z^{\tBS}\\
\label{eq:def_inner_product_beta_components}
 \tilde{u}_c & := \sum_{\gamma \in \Gamma} a_{\gamma, \tB} u_\gamma^{\tB} + a_{\gamma, \tS} u_\gamma^{\tS} + a_{\gamma, \tBS} u_\gamma^{\tBS}\\
 C & = \Var(\tau_{c})
\end{align}
The variance formula in \cref{prop:generic_variance_formula} implies that the coefficients satisfy $a_{\gamma, \tB} = O(1/I), a_{\gamma, \tS} = O(1/J), a_{\gamma, \tBS} = O(1/IJ)$. Finally, the quadratic form is positive semi-definite because its value is nonnegative. The variance-minimizing $\beta$ is given by solving the first order condition. When $\tilde{Z}_c$ is positive definite, $\tilde{\beta}_c = \tilde{Z}_c^{-1} \tilde{u}_c$.

\begin{remark}[Well-posedness of the optimal adjustment]
Null directions of $\tilde{Z}_c$ can arise when covariates carry no variation along certain components of the variance decomposition. For instance, in the direct effect case (\cref{ex:optimal_adjustment_direct_effect}), the optimization depends only on $Z^{\tBS}$, so any covariate that is constant across all buyers or all sellers lies in the null space after double-decentering. In practice, such covariates should be excluded, or one may use the Moore--Penrose pseudoinverse. The asymptotic invertibility required by \cref{assmp:non_vanishing_buyer_seller_variation} ensures that $\tilde{Z}_c$ is eventually positive definite along the relevant directions.
\end{remark}

To estimate the optimal adjustment, we will use
\begin{align}
\label{eq:optimal_beta_estimation}
 \hat{Z}_c & := \sum_{\gamma \in \Gamma} a_{\gamma, \tB} \hat{Z}_\gamma^{\tB} + a_{\gamma, \tS} \hat{Z}_\gamma^{\tS} + a_{\gamma, \tBS} \hat{Z}_\gamma^{\tBS}\\
 \hat{u}_c & := \sum_{\gamma \in \Gamma} a_{\gamma, \tB} \hat{u}_\gamma^{\tB} + a_{\gamma, \tS} \hat{u}_\gamma^{\tS} + a_{\gamma, \tBS} \hat{u}_\gamma^{\tBS},
\end{align}
with the estimators
\begin{equation}
\label{eq:inner_prod_term_estimators}
\begin{split}
& \hat{u}_\gamma^{\tB} := \frac{1}{I_\gamma}\sum_{i \in \cl{I}_\gamma} \eavg{X}_{i,\bullet}^{(d)}(\gamma)\eavg{y}_{i,\bullet}^{(d)}(\gamma), \quad \hat{u}_\gamma^{\tS} = \frac{1}{J_\gamma}\sum_{j \in \cl{J}_\gamma} \eavg{X}_{\bullet,j}^{(d)}(\gamma)\eavg{y}_{\bullet,j}^{(d)}(\gamma)\\
& \hat{u}_\gamma^{\tBS} = \frac{1}{I_\gamma J_\gamma}\sum_{(i,j) \in \gamma}\hat{X}_{ij}^{(dcr)}(\gamma) \hat{y}_{ij}^{(dcr)}(\gamma).
\end{split}
\end{equation}

\begin{equation}
\label{eq:gram_matrix_terms_estimators}
\begin{split}
& \hat{Z}_\gamma^{\tB} := \frac{1}{I_\gamma}\sum_{i \in \cl{I}_\gamma} \eavg{X}_{i,\bullet}^{(d)}(\gamma)\eavg{X}_{i,\bullet}^{(d)\top}(\gamma), \quad \hat{Z}_\gamma^{\tS} := \frac{1}{J_\gamma}\sum_{j \in \cl{J}_\gamma} \eavg{X}_{\bullet,j}^{(d)}(\gamma)\eavg{X}_{\bullet,j}^{(d)\top}(\gamma) \\
& \hat{Z}_\gamma^{\tBS} := \frac{1}{I_\gamma J_\gamma}\sum_{(i,j) \in \gamma} \hat{X}_{ij}^{(dcr)}(\gamma)\hat{X}_{ij}^{(dcr)\top}(\gamma).
\end{split}
\end{equation}
Crucially for each of these estimators, the decentering and double decentering is done within each group $\gamma$. For example,  $\eavg{X}_{i,\bullet}^{(d)}(\gamma) := \frac{1}{J_\gamma} \sum_{j\in \cl{J}_\gamma} X_{ij} - \frac{1}{I_\gamma J_\gamma} \sum_{(i,j) \in \gamma} X_{ij}$. Although the covariates for each unit are known, we use the empirical averages to ensure $\beta$ has a projection property detailed in \cref{lemma:beta_decomposition}.

\begin{example}[Direct Effect]
\label{ex:optimal_adjustment_direct_effect}

We first derive optimal regression adjustment strategies for the direct effect. As derived in \cref{sec:optimal_direct_effect_derivation}, the variance of $\hat \tau_{\dir}(\beta)$ is given by:
\begin{equation}
\label{eq:direct_effect_variance_formula}
\begin{split}
& \frac{1}{I_T J_T }\nu^{\tBS}_\tr + \frac{1}{I_T J_C }\nu^{\tBS}_\ib + \frac{1}{I_C J_T }\nu^{\tBS}_\is + \frac{1}{I_C J_C }\nu^{\tBS}_\cc \\   
& + \frac{I_C}{I_{T}I}\xi^{\tB}_{\tr,\ib}
  - \frac{1}{I}\xi^{\tB}_{\tr,\is} 
  + \frac{1}{I}\xi^{\tB}_{\tr,\cc}
 + \frac{1}{I} \xi^{\tB}_{\ib,\is}
 - \frac{1}{I} \xi^{\tB}_{\ib,\cc}
 + \frac{I_T}{I_C}\frac{1}{I}\xi^{\tB}_{\is,\cc} \\
 &  - \frac{1}{J}   \xi^{\tS}_{\tr,\ib}  
+ \frac{J_C}{J_T}\frac{1}{J} \xi^{\tS}_{\tr,\is}
+ \frac{1}{J} \xi^{\tS}_{\tr,\cc} 
+ \frac{1}{J} \xi^{\tS}_{\ib,\is} 
+ \frac{J_T}{J_C}\frac{1}{J} \xi^{\tS}_{\ib,\cc} 
- \frac{1}{J} \xi^{\tS}_{\is,\cc} \\
&  - \frac{I_C}{I_T}\frac{1}{IJ}  \xi^{\tBS}_{\tr,\ib} 
  - \frac{J_C}{J_T}\frac{1}{IJ}  \xi^{\tBS}_{\tr,\is} 
   - \frac{1}{IJ}  \xi^{\tBS}_{\tr,\cc}
  - \frac{1}{IJ}  \xi^{\tBS}_{\ib,\is}
 - \frac{J_T}{J_C}\frac{1}{IJ}  \xi^{\tBS}_{\ib,\cc}
  - \frac{I_T}{I_C}\frac{1}{IJ} \xi^{\tBS}_{\is,\cc}.
\end{split}
\end{equation}

The potential outcome contrasts do not depend on $\beta$, so the objective reduces to minimizing 
\begin{equation}
\label{eq:population_objective_direct_effect_simplified}
\frac{1}{I_T J_T }\nu^{\tBS}_\tr + \frac{1}{I_T J_C }\nu^{\tBS}_\ib + \frac{1}{I_C J_T }\nu^{\tBS}_\is + \frac{1}{I_C J_C }\nu^{\tBS}_\cc.
\end{equation}

Solving for the optimal coefficient $\beta$ using the previous section, we see that $
\tilde{\beta}_{\dir} := \tilde{Z}_{\dir}^{-1} \tilde{u}_{\dir}$ with
\begin{align*}
    \tilde{Z}_{\dir} := \sum_{\gamma} \frac{1}{I_\gamma J_\gamma}Z^{\tBS}, \quad \tilde{u}_{\dir} = \sum_{\gamma \in \Gamma}\frac{1}{I_\gamma J_\gamma} u_\gamma^{\tBS}.
\end{align*}
Thus, we will use the estimator $\hat{\beta}_{\dir} := \hat{Z}_{\dir}^{-1} \hat{u}_{\dir}$ with
\begin{align*}
    \hat Z_{\dir} := \sum_{\gamma \in \Gamma} \frac{1}{I_\gamma J_\gamma}\hat{Z}_\gamma^{\tBS}, \quad \hat u_{\dir} = \sum_{\gamma \in \Gamma}\frac{1}{I_\gamma J_\gamma} \hat{u}_\gamma^{\tBS}.
\end{align*}
There is an interpretation for the regression adjustment in this setting. The resulting plug-in $\hat{\tau}_{\dir}(\hat{\beta}_{\dir})$ turns out to be  the coefficient $\tau$ in the weighted least squares regression
\begin{align*}
y_{ij} & \stackrel{w_{ij}}{\sim} (\mu + \mu^{\tB}_i + \mu^{\tS}_j) + (\alpha + \alpha^{\tB}_i + \alpha^{\tS}_j)W_i^{\tB}  + (\delta + \delta^{\tB}_i + \delta^{\tS}_j)W_j^{\tS} + (\tau + \tau^{\tB}_i + \tau^{\tS}_j)W_i^{\tB}W_j^{\tS} + X_{ij}^\top \beta,
\end{align*}
with weights $w_{ij} = \frac{1}{I_\gamma^2 J_\gamma^2}$ for $\gamma \ni (i,j).$ To make the coefficient $\tau$ identifiable, we enforce several constraints among the variables. The appendix gathers the relevant details. 

The result is interesting for a few reasons. Firstly, the form of the regression is evocative of cross-sectional ToM, as the groups $\gamma$ of smaller size are upweighted in this regression, by a similar factor as in ToM. Moreover, because regression corresponds exactly to an imputation form estimator, our results in \cref{sec:asymptotic_theory} show a \textit{no-harm principle} -- the estimate is consistent and asymptotically normal around the true direct effect without assuming the linear covariate model. Finally, the weighted TWFE regression asymptotically outperforms no adjustment and the ANCOVA adjustment \eqref{eq:dir_ols}, since they are all contained in this class of imputation estimators. Taking $\beta = 0$ and the least squares $\beta$ from the ANCOVA objective in \eqref{eq:imputation_estimators} yields exactly the latter two procedures. Since the TWFE regression corresponds to the optimal choice of $\beta$, we should always expect an improvement in variance reduction for each of these methods.
\end{example}

\begin{example}[Total Effect]

Conducting the same exercise for the total effect, we see that the variance formula derived in \cref{sec:total_effect_variance_formulas} simplifies slightly, yielding
\begin{align}
\label{eq:total_effect_variance_simplified}
\begin{split}
\Var(\esttr - \estcc) & = \frac{1}{I_{\tr}}\nu_{\tr}^{\tB} + \frac{1}{J_{\tr}}\nu_{\tr}^{\tS} + \frac{1}{IJ}\left( \frac{I_{\cc}J_{\cc}}{I_{\tr}J_{\tr}} - 1\right)\nu_{\tr}^{\tBS}\\
& + \frac{1}{I_{\cc}}\nu_{\cc}^{\tB} + \frac{1}{J_{\cc}}\nu_{\cc}^{\tS} + \frac{1}{IJ}\left(\frac{I_{\tr}J_{\tr}}{I_{\cc}J_{\cc}} - 1\right)\nu_{\cc}^{\tBS},\\
\end{split}
\end{align}
where we again replace the potential outcomes by $y_{ij}(\gamma) - X_{ij}^\top \beta$. The resulting expression is a quadratic form in $\beta$, which is positive definite since it is the variance of a random variable. Solving the first order condition gives the optimal $\tilde\beta_{\tot} = \tilde{Z}_{\tot}^{-1} \tilde{u}_{\tot}$ with
\begin{align}
\begin{split}
    \tilde{Z}_{\tot} & := \left(\frac{1}{I_T} +  \frac{1}{I_C} \right) Z^\tB + \left(\frac{1}{J_T} +  \frac{1}{J_C} \right) Z^{\tS} + \frac{1}{IJ}\left( \frac{I_{C}J_{C}}{I_{T}J_{T}} + \frac{I_{T}J_{T}}{I_{C}J_{C}}- 2\right) Z^{\tBS}
\end{split}
\end{align}
and
\begin{align}
\begin{split}
    \tilde{u}_{\tot} & := \frac{1}{I_T} u^{\tB}_\tr + \frac{1}{J_T} u^{\tS}_\tr + \frac{1}{IJ}\left( \frac{I_{C}J_{C}}{I_{T}J_{T}} - 1\right) u^{\tBS}_\tr \\
    & + \frac{1}{I_{C}} u^{\tB}_\cc + \frac{1}{J_{C}} u^{\tS}_\cc + \frac{1}{IJ}\left(\frac{I_{T}J_{T}}{I_{C}J_{C}} - 1\right) u^{\tBS}_\cc
\end{split}
\end{align}
As an estimator for $\beta$, we will use

\begin{align}
\begin{split}
    \hat{Z}_{\tot} & := \frac{1}{I_T} \hat{Z}^{\tB}_\tr + \frac{1}{J_T} \hat{Z}^{\tS}_\tr + \frac{1}{IJ}\left( \frac{I_{C}J_{C}}{I_{T}J_{T}} - 1\right) \hat{Z}^{\tBS}_\tr \\
    & + \frac{1}{I_{C}} \hat{Z}^{\tB}_\cc + \frac{1}{J_{C}} \hat{Z}^{\tS}_\cc + \frac{1}{IJ}\left(\frac{I_{T}J_{T}}{I_{C}J_{C}} - 1\right) \hat{Z}^{\tBS}_\cc \\
    \hat{u}_{\tot} & := \frac{1}{I_T} \hat u^{\tB}_\tr + \frac{1}{J_T}\hat  u^{\tS}_\tr + \frac{1}{IJ}\left( \frac{I_{C}J_{C}}{I_{T}J_{T}} - 1\right) \hat u^{\tBS}_\tr \\
    & + \frac{1}{I_{C}}\hat u^{\tB}_\cc + \frac{1}{J_{C}} \hat  u^{\tS}_\cc + \frac{1}{IJ}\left(\frac{I_{T}J_{T}}{I_{C}J_{C}} - 1\right) \hat  u^{\tBS}_\cc.
\end{split}
\end{align}

In contrast to the direct effect, it appears difficult to write this out as a regression, unless one also includes the row and column means of the matrix as data. This is slightly evocative of Mundlak regression from the panel data literature, where the regression can be decomposed into within-unit and across unit-terms. The optimal MRD adjustments solve a mean-square error objective that can be decomposed into mean-square error terms for the row means, column means, and the doubly decentered matrix. The weights placed on each of these terms make it difficult to write the adjustment in a clean Mundlak form.

\end{example}

\begin{example}[Spillover Effects]
Computations in \cref{sec:variance_formulas} show that for the buyer spillover, it suffices to choose $\beta$ to optimize
\begin{align*}
    \frac{1}{I_T}\nu^{\tB}_\ib  +  \frac{1}{I_C}\nu^{\tB}_\cc + \frac{J_T}{I_TJ_C J}\nu^{\tBS}_\ib + \frac{J_T}{I_CJ_C J}\nu^{\tBS}_\cc,
\end{align*}
so that 
\begin{align}
\begin{split}
    \tilde{Z}_{\text{bs}} & := \left(\frac{1}{I_T} +  \frac{1}{I_C} \right) Z^\tB + \frac{J_T}{J_CJ}\left(\frac{1}{I_T} + \frac{1}{I_C} \right) Z^{\tBS} \\
    \tilde{u}_{\text{bs}} & := \frac{1}{I_T} u^{\tB}_\ib + \frac{1}{I_{C}} u^{\tB}_\cc + \frac{J_T}{I_TJ_C J}u^{\tBS}_\ib + \frac{J_T}{I_CJ_C J}u^{\tBS}_\cc.
\end{split}
\end{align}

For the seller spillover effect,
\begin{align*}
    \frac{1}{J_T}\nu^{\tS}_\is  +  \frac{1}{J_C}\nu^{\tS}_\cc + \frac{I_T}{J_TI_C I}\nu^{\tBS}_\is + \frac{I_T}{J_CI_C I}\nu^{\tBS}_\cc.
\end{align*}

The optimal adjustments are given by
\begin{align}
\begin{split}
    \tilde{Z}_{\text{ss}} & := \left(\frac{1}{J_T} +  \frac{1}{J_C} \right) Z^\tS + \frac{I_T}{I_CI}\left(\frac{1}{J_T} + \frac{1}{J_C} \right) Z^{\tBS} \\
    \tilde{u}_{\text{ss}} & := \frac{1}{J_T} u^{\tS}_\is + \frac{1}{J_{C}} u^{\tS}_\cc + \frac{I_T}{J_TI_C I}u^{\tBS}_\is + \frac{I_T}{J_CI_C I}u^{\tBS}_\cc.
\end{split}
\end{align}

Regression-adjusted estimators may be derived in a similar fashion to the total and direct effects; these adjustments do not seem to have particularly nice interpretations as weighted least squares. 
\end{example}

\section{Asymptotic Theory}
\label{sec:asymptotic_theory}

To state our asymptotic guarantees on the regression adjustment procedure, we apply an improved CLT for MRDs discussed in \cref{sec:improvements} to obtain a central limit theorem for the oracle $\hat\tau_c(\tilde{\beta}_c)$. We then show that the plug-in $\hat\tau_c(\hat{\beta}_c)$ has the same asymptotic distribution as the oracle estimator and derive inference procedures. Throughout this section, fix a coefficient vector $c := (c_\tr,c_\ib,c_\is,c_\cc)$ and the associated adjustments $\tilde{\beta}_c,\hat \beta_c$ from \cref{sec:methodology}. All proofs can be found in \cref{sec:proofs_for_asymptotic_theory}. The main assumptions are stated below. Let $e_{ij}(\gamma) := y_{ij} - X_{ij}^\top \tilde{\beta}_c.$ 

\begin{as}[Finite Population Limit]
\label{assmp:finite_population_limits}
Assume the quantities $\nu^\theta_\gamma, u_\gamma^{\theta}, Z^\theta$ have limits for $\theta \in \set{\tB,\tS,\tBS}$ and $\gamma \in \set{\tr,\ib,\is,\cc}$. 
\end{as}

\begin{as}[Small Residuals]
\label{assmp:main_assumption}
We suppose that 
        \begin{equation}
            \begin{split}
                \max\bigg(& \frac{1}{I} \max_{i=1}^I |\bar{e}^{(d)}_{i,\bullet}(\gamma)|, \frac{1}{J} \max_{j=1}^J |\bar{e}_{\bullet,j}^{(d)}(\gamma)|, \\
                & \frac{\sqrt{\log I}}{I^{3/2}} \max_{ij} |e_{ij}(\gamma) - \bar{e}_{i,\bullet}(\gamma)|, \frac{\sqrt{\log J}}{J^{3/2}} \max_{ij} |e_{ij}(\gamma) - \bar{e}_{\bullet,j}(\gamma)|,\\
                & \frac{1}{I^2} \norm{e_{ij}(\gamma) - \dbar{e}_\gamma}_{\text{op}}  \bigg)
                 = o(\Var(\widehat{\tau}_c(\tilde{\beta}_c))^{1/2}).
            \end{split}
        \end{equation}
\end{as}

\begin{as}[Covariate Control]
 \label{assmp:covariate_assumption}
For each coordinate $k = 1,\dots,d$,
        \begin{equation}
            \label{eq:covariate_assumption}
            \frac{1}{I} \max_{i,j} | X_{ij}^{(k)} - \dbar{X}^{(k)} |^2 = o(1), \quad \max_i |X_{i,\bullet}^{(k)} - \dbar{X}^{(k)}| = O(1), \quad \max_j |X_{j,\bullet}^{(k)} - \dbar{X}^{(k)}| = O(1)
        \end{equation}
Finally, $\frac{1}{I}\norm{X_{ij}^{(k)(dcr)}}_{op} = O(1)$.
\end{as}

\begin{as}[Variance Lower Bound]
\label{assmp:non_vanishing_buyer_seller_variation} 
We will assume the variance lower bound
    \begin{equation}
        \label{eq:variance_scaling_assmp}
        \Var(\hat{\tau}_c(\tilde{\beta}_c)) = \omega(I^{-2})
    \end{equation} 
and $I \sum_\gamma a_{\gamma, \theta} Z^\theta$ for $\theta \in \set{\tB,\tS}$ converges to an invertible matrix.
\end{as}

Papers in the finite population literature \cite{lin2013agnostic, li2017general, lei2021regression,lu2022tyranny} often require similar conditions as \cref{assmp:finite_population_limits,assmp:main_assumption,assmp:covariate_assumption}. \cref{assmp:non_vanishing_buyer_seller_variation} does not have direct precedent in the literature on cross-sectional adjustments. The new challenge for multiple randomization designs, as mentioned previously, is the possibility of multiple scaling regimes for the variance as well as $\tilde{Z}_c, \tilde{u}_c$.  For example, it is possible that $Z^\tB, Z^{\tS} \rightarrow 0$ while $u^\tB_\gamma, u^{\tS}_\gamma$ do not, so that $\tilde{\beta}_c$ blows up. \cref{assmp:non_vanishing_buyer_seller_variation} is a convenient starting point for our analysis. Equation \eqref{eq:variance_scaling_assmp} covers a broader range of examples than the theory of \cite{masoero2024multiple} allows, and the second condition is sufficient for $\tilde{\beta}_c = O(1)$.  We show later in this section how to go beyond this assumption in an important example.

\begin{theorem}
\label{thm:clt_population_regression}
Suppose \cref{assmp:finite_population_limits}, \cref{assmp:main_assumption} and $\sum_{\gamma \in \Gamma} c_\gamma = 0$. Then 
\[
\frac{\hat{\tau}_c(\tilde{\beta}_{c}) - \tau_c}{\sqrt{\Var(\hat{\tau}_c(\tilde{\beta}_{c}))}} \Rightarrow \textsf{N}(0,1).
\]
\end{theorem}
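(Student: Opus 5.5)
The plan is to reduce the statement to the improved CLT for doubly-randomized sums without covariates (the Wasserstein CLT announced in the contributions and developed in \cref{sec:improvements}), applied to the residualized potential outcomes $e_{ij}(\gamma) = y_{ij}(\gamma) - X_{ij}^\top \tilde{\beta}_c$. The key observation is that, for the fixed (non-random) coefficient $\tilde{\beta}_c$, the oracle estimator $\hat{\tau}_c(\tilde{\beta}_c)$ is exactly the unadjusted MRD estimator computed on the matrix of pseudo-outcomes $e_{ij}(\gamma)$. Since $\tilde{\beta}_c$ is deterministic given the finite population, no new randomness is introduced.

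The first step is to check centering. We have
\[
\mathbb{E}\,\hat{\tau}_c(\tilde{\beta}_c) = \sum_\gamma c_\gamma \bigl(\dbar{y}_\gamma - \dbar{X}^\top \tilde{\beta}_c\bigr) = \tau_c - \Bigl(\sum_\gamma c_\gamma\Bigr)\dbar{X}^\top\tilde{\beta}_c = \tau_c .
\]
This uses unbiasedness of each block average $\frac{1}{I_\gamma J_\gamma}\sum_{(i,j)\in\gamma}$ under the SMRD, which follows because buyers and sellers are independently completely randomized. It also uses the hypothesis $\sum_\gamma c_\gamma = 0$, which kills the covariate mean term. Thus $\hat{\tau}_c(\tilde{\beta}_c) - \tau_c$ is a mean-zero linear combination of doubly-randomized sums of the $e_{ij}(\gamma)$.

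The second step is to verify the hypotheses of the improved CLT for this pseudo-outcome matrix. \cref{assmp:main_assumption} is stated precisely in terms of the row-mean, column-mean, row/column-residual and operator-norm quantities of $e(\gamma)$, normalized by $\Var(\hat{\tau}_c(\tilde{\beta}_c))^{1/2}$. I expect these to be exactly the smallness conditions under which the Stein's-method bound drives the $1$-Wasserstein distance to zero. The argument would decompose the estimator via the Hoeffding-type split into a buyer term (row means), a seller term (column means), and a doubly-decentered interaction term. The first two are sampling-without-replacement sums controlled by the max row/column mean conditions. The interaction term is a bilinear form in $(W^\tB,W^\tS)$ controlled by the $\sqrt{\log I}/I^{3/2}$ max conditions and the operator-norm condition via concentration for sampling without replacement.

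\cref{assmp:finite_population_limits} is used so that the variance components $\tilde\nu^\theta_\gamma = \tilde{\beta}_c^\top Z^\theta\tilde{\beta}_c - 2\tilde{\beta}_c^\top u^\theta_\gamma + \nu^\theta_\gamma$, and the $\xi$ terms, have limits. Here $\tilde{\beta}_c$ itself converges, given convergence of $Z^\theta, u^\theta_\gamma$ and the scaling of the $a_{\gamma,\theta}$. This ensures the normalized variance is well behaved along the sequence, so the Wasserstein bound divided by the standard deviation tends to zero and convergence in distribution follows.

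The main obstacle is matching the paper's Assumption exactly to the hypotheses of the CLT in \cref{sec:improvements} when several components of $c$ are nonzero. The estimator is a sum over four blocks that share the same randomization, so the CLT must be applied to the joint linear combination rather than block by block. I would handle this by writing everything as a single doubly-randomized statistic $\sum_{i,j} f_{ij}(W_i^\tB,W_j^\tS)$ with $f$ built from $c_\gamma e_{ij}(\gamma)/(I_\gamma J_\gamma)$. I would then invoke the CLT in its general (vector/contrast) form, checking that the per-$\gamma$ bounds in \cref{assmp:main_assumption} transfer by the triangle inequality, given \eqref{eq:asymptotic_regime} ensuring $I_\gamma\asymp I$ and $J_\gamma\asymp J$.
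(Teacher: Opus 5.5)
This is essentially the paper's proof. The paper applies \cref{cor:clt_mrd_estimators} to the residualized outcomes $e_{ij}(\gamma)=y_{ij}(\gamma)-X_{ij}^\top\tilde\beta_c$, and uses $\sum_\gamma c_\gamma=0$ for the centering at $\tau_c$ exactly as you compute. That corollary already carries out your single-statistic reduction concretely. It uses the fixed margins $\sum_i W_i^{\tB}=I_T$ and $\sum_j W_j^{\tS}=J_T$ to rewrite every block average as $\frac{1}{I_TJ_T}\sum_{i,j}W_i^{\tB}W_j^{\tS}z_{ij}(\gamma)$ with row/column-mean-corrected pseudo-outcomes. It then transfers the per-$\gamma$ bounds to $\sum_\gamma c_\gamma z_{ij}(\gamma)$ by the triangle inequality.

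One correction, which does not affect the argument. \cref{assmp:finite_population_limits} alone neither makes $\tilde\beta_c$ converge nor gives limits for the $\xi$ terms. Convergence of $\tilde\beta_c$ needs an invertibility condition such as the one in \cref{assmp:non_vanishing_buyer_seller_variation}; cf.\ \cref{lemma:betas_bounded}. None of this is needed, though. \cref{assmp:main_assumption} is already normalized by $\Var(\hat\tau_c(\tilde\beta_c))^{1/2}$, and the bound in \cref{thm:improved_clt} is already stated for the standardized statistic. Likewise, that theorem is proved by conditioning on the sellers rather than via your Hoeffding split, but you only use it as a black box.
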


This follows as an immediate consequence of \cref{cor:clt_mrd_estimators} and the balancing condition on $c$. As discussed in the introduction,
\cref{thm:clt_population_regression} shows a form of model robustness: we do not need to assume that $y_{ij}(\gamma)$ follows a linear model in the covariates. These results are true of Lin's estimator and the ToM regression \cite{lin2013agnostic, lu2022tyranny}. \cite{cohen2024no} develops another approach to these so-called no-harm principles using calibration for nonlinear models.

\begin{restatable}{proposition}{consistentbeta}
\label{prop:beta_consistency}
Under \cref{assmp:finite_population_limits,assmp:main_assumption,assmp:covariate_assumption,assmp:non_vanishing_buyer_seller_variation}, the estimator $\hat{\beta}_c$ in \cref{sec:methodology} is consistent for the optimal coefficient $\tilde{\beta}_c$: $\hat \beta_c - \tilde{\beta}_c = O_p(1/\sqrt{I}).$
\end{restatable}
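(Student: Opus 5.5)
The plan is to write $\hat\beta_c - \tilde\beta_c$ in terms of the estimation errors of the Gram and inner-product pieces. After rescaling by $I$, we will show that each rescaled error is $O_p(I^{-1/2})$. Since $I\tilde Z_c$ converges to an invertible matrix (\cref{assmp:non_vanishing_buyer_seller_variation}), we may write
\[
\hat\beta_c - \tilde\beta_c = (I\hat Z_c)^{-1}\left[ I(\hat u_c - \tilde u_c) - I(\hat Z_c - \tilde Z_c)\tilde\beta_c \right].
\]
It therefore suffices to show two things: $I(\hat Z_c - \tilde Z_c) = O_p(I^{-1/2})$, which also makes $(I\hat Z_c)^{-1} = O_p(1)$, and $I(\hat u_c - \tilde u_c) - I(\hat Z_c-\tilde Z_c)\tilde\beta_c = O_p(I^{-1/2})$. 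The second requirement is cleaner if we combine terms. By linearity of all the estimators in the outcome (\cref{lemma:beta_decomposition}-type projection property), $\hat u_\gamma^\theta - \hat Z_\gamma^\theta\tilde\beta_c$ is exactly the plug-in estimator of $u^\theta_\gamma(e) := u^\theta_\gamma - Z^\theta\tilde\beta_c$, with the outcome replaced by the residual $e_{ij}(\gamma)$. So the whole task reduces to a single claim: each within-group estimator $\hat u^\theta_\gamma(e)$ and $\hat Z^\theta_\gamma$ is consistent for its population counterpart at the appropriate rate. Because $a_{\gamma,\tB}=O(1/I)$, $a_{\gamma,\tS}=O(1/J)$ and $a_{\gamma,\tBS}=O(1/IJ)$, the required rates are $O_p(I^{-1/2})$ for the $\tB$ and $\tS$ pieces, and only $O_p(I^{1/2})$ for the $\tBS$ pieces.

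For the $\tB$ pieces, expand $\hat u^\tB_\gamma(e)$. The within-group row mean $\frac{1}{J_\gamma}\sum_{j\in\cl J_\gamma} e_{ij}$ equals $\bar e_{i,\bullet}$ plus a seller-sampling fluctuation, and the within-group grand mean is handled the same way. This gives $\hat u^\tB_\gamma$ as a sample covariance over a simple random sample of buyers $\cl I_\gamma$ of the population row-mean pairs $(\bar X^{(d)}_{i,\bullet}, \bar e^{(d)}_{i,\bullet})$, plus cross terms involving the seller fluctuations $\frac{1}{J_\gamma}\sum_{j\in\cl J_\gamma}(e_{ij}-\bar e_{i,\bullet})$.

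1. The main term is a sample mean under sampling without replacement. Its variance is bounded by $\frac{1}{I}\cdot\frac{1}{I}\sum_i (\bar X^{(d)}_{i,\bullet})^2(\bar e^{(d)}_{i,\bullet})^2$. This is $O(1/I)$ because $\max_i|\bar X^{(d)}_{i,\bullet}|=O(1)$ (\cref{assmp:covariate_assumption}) and $\nu^\tB(e)$ has a finite limit (\cref{assmp:finite_population_limits}).
2. For the cross terms, conditioning on $\cl S$, the seller fluctuation has second moment $\lesssim \frac{1}{J}\cdot\frac{1}{J}\sum_j(e_{ij}-\bar e_{i,\bullet})^2$. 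Averaged over $i$, this is controlled by $\nu^\tS(e)+\nu^\tBS(e)$, both $O(1)$, giving $O_p(J^{-1/2})$ after Cauchy--Schwarz.
3. The $\tS$ pieces are symmetric.

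For the $\tBS$ pieces, a crude bound suffices. $\hat u^\tBS_\gamma(e)$ is a sample average of products of doubly decentered entries, and its expectation is $u^\tBS_\gamma(e)$ up to $O(1/I+1/J)$ corrections. Its variance is bounded via the decomposition into row, column and interaction parts. Using $\frac1I\max|X_{ij}-\dbar X|^2=o(1)$ and the finite limits of $\nu^\tBS(e)$ and $Z^\tBS$, we get $\hat u^\tBS_\gamma - u^\tBS_\gamma = o_p(1)$, which is far better than the $O_p(I^{1/2})$ needed. The Gram matrices $\hat Z^\theta_\gamma$ are handled identically with $e$ replaced by $X$. Finally, $\tilde\beta_c=O(1)$ from the invertibility condition, so the residuals $e$ inherit the moment bounds.

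The main obstacle is step 2: the cross terms between buyer and seller sampling in the $\tB$ and $\tS$ pieces, where both indices are random simultaneously. To handle it, I would first condition on $\cl B$ and compute the conditional variance in the seller sampling, then take the expectation over buyers. This needs the operator-norm bound $\frac1I\|X^{(dcr)}\|_{op}=O(1)$ to control bilinear forms such as $\sum_{i\in\cl I_\gamma}\sum_{j\in\cl J_\gamma}\bar X^{(d)}_{i,\bullet}e^{(dcr)}_{ij}$ uniformly over the random index sets.
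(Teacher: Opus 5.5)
Your proposal is correct and is essentially the paper's proof. It uses the same reduction $\hat\beta_c-\tilde\beta_c=\hat Z_c^{-1}\hat e_c$ with $\tilde e_c=0$, the same $I$-rescaling bookkeeping (the $\tB$/$\tS$ pieces need $O_p(I^{-1/2})$, the $\tBS$ pieces only a weak rate), and the same four-term decomposition of the buyer/seller pieces as in \cref{prop:convergence_of_empirical_buyer_seller_term}; you use Cauchy--Schwarz where the paper computes variances explicitly, and continuity of inversion where it uses a perturbation bound.

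Your closing operator-norm remark is unnecessary, and a worst-case operator-norm bound over the random index sets would not give the $I^{-1/2}$ rate under the stated assumptions anyway. Your step 2 already gives $O_p(J^{-1/2})$: take the second moment over the seller sampling for fixed $i$ (i.e.\ conditioning on $\cl{B}$, not $\cl{S}$), then average over $\cl{I}_\gamma$ using independence of $\cl{B}$ and $\cl{S}$. The paper never uses $\|X^{(dcr)}\|_{\text{op}}$ in this proof.
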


\begin{proposition}
\label{prop:same_asymptotic_distribution}
Suppose \cref{assmp:finite_population_limits,assmp:main_assumption,assmp:covariate_assumption,assmp:non_vanishing_buyer_seller_variation} and $\sum_{\gamma \in \Gamma} c_\gamma = 0$. Then the oracle adjusted estimator $\hat{\tau}_{c}(\tilde{\beta}_{c})$ and the regression adjusted estimator $\hat{\tau}_{c}(\hat{\beta}_{c})$ have the same asymptotic distribution. 
\end{proposition}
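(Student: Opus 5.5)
The plan is to show that the gap between the plug-in and the oracle is negligible relative to the oracle's standard deviation, and then invoke Slutsky's lemma together with \cref{thm:clt_population_regression}. Since $\hat{\tau}_c(\beta)$ is affine in $\beta$, we have exactly
\[
\hat{\tau}_c(\hat{\beta}_c) - \hat{\tau}_c(\tilde{\beta}_c) = -(\hat{\beta}_c - \tilde{\beta}_c)^\top \hat{\delta}_X, \qquad \hat{\delta}_X := \sum_{\gamma\in\Gamma} c_\gamma \frac{1}{I_\gamma J_\gamma}\sum_{(i,j)\in\gamma} X_{ij}.
\]
The balancing condition $\sum_\gamma c_\gamma = 0$ gives $\hat{\delta}_X = \sum_\gamma c_\gamma \bigl(\frac{1}{I_\gamma J_\gamma}\sum_{(i,j)\in\gamma} X_{ij} - \dbar{X}\bigr)$. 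So $\hat{\delta}_X$ is precisely the unadjusted MRD estimator applied to the covariate matrix, whose ``potential outcomes'' do not depend on $\gamma$ and whose estimand is $0$. By \cref{prop:beta_consistency}, $\hat{\beta}_c - \tilde{\beta}_c = O_p(I^{-1/2})$. It therefore suffices to show
\[
\|\hat{\delta}_X\| \cdot I^{-1/2} = o_p\bigl(\Var(\hat{\tau}_c(\tilde{\beta}_c))^{1/2}\bigr).
\]

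To bound $\hat{\delta}_X$, I will apply the variance formula of \cref{prop:generic_variance_formula} coordinatewise to the covariate matrix $X^{(k)}$. Because the covariate rows coincide across $\gamma$, every $\xi$ contrast vanishes. The variance is then a combination of $\nu^\tB(X^{(k)})$, $\nu^\tS(X^{(k)})$ and $\nu^\tBS(X^{(k)})$, with coefficients of order $1/I$, $1/J$ and $1/(IJ)$ respectively. These $\nu$'s are the diagonal entries of $Z^\tB$, $Z^\tS$ and $Z^\tBS$. They have limits by \cref{assmp:finite_population_limits}, and they are $O(1)$ by \cref{assmp:covariate_assumption}, since row and column means of the centered covariates are uniformly bounded. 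Chebyshev's inequality together with \eqref{eq:asymptotic_regime} then gives $\hat{\delta}_X = O_p(I^{-1/2})$. Hence the difference between the two estimators is $O_p(I^{-1})$.

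The main obstacle is comparing this $O_p(I^{-1})$ bound with the standard deviation of the oracle. \cref{assmp:non_vanishing_buyer_seller_variation} states $\Var(\hat{\tau}_c(\tilde{\beta}_c)) = \omega(I^{-2})$, so the standard deviation is $\omega(I^{-1})$. This is exactly what is needed:
\[
\frac{\hat{\tau}_c(\hat{\beta}_c) - \hat{\tau}_c(\tilde{\beta}_c)}{\Var(\hat{\tau}_c(\tilde{\beta}_c))^{1/2}} = \frac{O_p(I^{-1})}{\omega(I^{-1})} = o_p(1).
\]
I would double-check that the $O_p(I^{-1/2})$ rate in \cref{prop:beta_consistency} is not itself lossy. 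In particular, the rate should be driven by the $\tB$ and $\tS$ components, since $I\tilde{Z}_c$ converges to an invertible matrix. If the bound on $\hat{\beta}_c - \tilde{\beta}_c$ were only $o_p(1)$, the argument would instead require the standard deviation to be $\Omega(I^{-1/2})$.

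With the normalized difference shown to be $o_p(1)$, Slutsky's lemma combined with \cref{thm:clt_population_regression} shows that $(\hat{\tau}_c(\hat{\beta}_c) - \tau_c)/\Var(\hat{\tau}_c(\tilde{\beta}_c))^{1/2} \Rightarrow \textsf{N}(0,1)$. This is the same limit as for the oracle, which proves the proposition.
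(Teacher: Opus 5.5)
Your proposal is correct and follows essentially the same route as the paper. The paper also writes the difference as $\mathbb{X}^\top(\hat\beta_c-\tilde\beta_c)$ with $\mathbb{X}=\sum_\gamma c_\gamma\edavg{X}_\gamma$, which is mean zero by the balancing condition. It then bounds $\Var(\mathbb{X}^{(k)})=O(1/I)$ via \cref{prop:generic_variance_formula,prop:generic_covariance_formula} and \cref{assmp:finite_population_limits}, combines this with the $O_p(I^{-1/2})$ rate from \cref{prop:beta_consistency} to get $O_p(I^{-1})$, and concludes from $\Var(\hat\tau_c(\tilde\beta_c))=\omega(I^{-2})$.

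Your appeal to \cref{assmp:covariate_assumption} for boundedness of the $\nu$'s is redundant (it would not cover $\nu^{\tBS}(X^{(k)})$ anyway), since the finite limits in \cref{assmp:finite_population_limits} already suffice. Your explicit Slutsky step with \cref{thm:clt_population_regression} simply spells out what ``same asymptotic distribution'' means.
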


\subsection{Conservative Variance Estimation and Inference}

In order to conduct inference on an effect $\tau_c$ of interest, it suffices to consider conservative variance estimators for $\Var(\hat{\tau}_{c}(\tilde{\beta}_{c}))$. The most straightforward idea is to take the residuals $\hat{e}_{ij} := y_{ij} - X_{ij}^\top \hat{\beta}_c$ and plug them into the conservative variance estimator of \cite{masoero2024multiple}. Section 4.4 of \cite{masoero2024multiple} supplies said conservative variance estimator $\hat{V}_{c}$. To review the construction, we define
\begin{equation}
    \hat{\nu}_\gamma^\tB := \frac{1}{I_\gamma} \sum_{i \in \cl{I}_\gamma} \left[ \hat{\bar{y}}_{i,\bullet}(\gamma) - \hat{\dbar{y}}_\gamma \right]^2, \quad     \hat{\nu}_\gamma^\tS := \frac{1}{J_\gamma} \sum_{j \in \cl{J}_\gamma} \left[ \hat{\bar{y}}_{\bullet,j}(\gamma) - \hat{\dbar{y}}_\gamma \right]^2
\end{equation}
and the estimator counterpart for $\nu_\gamma^\tBS$ by
\begin{equation}
\hat{\nu}_\gamma^\tBS := \frac{1}{I_\gamma J_\gamma} \sum_{(i,j) \in \gamma} \left[y_{ij}(\gamma) - \hat{\bar{y}}_{i,\bullet}(\gamma) - \hat{\bar{y}}_{\bullet,j}(\gamma) + \hat{\dbar{y}}_\gamma \right]^2.
\end{equation}

Let
\begin{equation}\label{eq:alpha_gamma}
\alpha^\tB_\gamma = \frac{1}{2}\frac{I - I_\gamma}{I I_\gamma},\quad  \alpha^\tS_\gamma = \frac{1}{2}\frac{J - J_\gamma}{J J_\gamma}.
\end{equation}
Then Theorem 4.4 of \cite{masoero2024multiple} demonstrates that an unbiased estimator for $\Var(\hat{\dbar{y}}_\gamma)$ is given by
\begin{equation}
\label{eq:conservative_var_est}
\begin{split}
\hat{\Sigma}_\gamma & := \frac{\alpha^\tB_\gamma \hat{\nu}^\tB_\gamma + \alpha^\tS_\gamma \hat{\nu}^\tS_\gamma  + \alpha^\tB_\gamma \alpha^\tS_\gamma \hat{\nu}^\tBS_\gamma }{1 - \alpha^\tB_\gamma - \alpha^\tS_\gamma + \alpha^\tB_\gamma\alpha^\tS_\gamma}    \\
& - \frac{\alpha^\tB_\gamma}{1- \alpha^\tB_\gamma} \frac{J - J_\gamma}{J(J_\gamma - 1)} \frac{1}{I_\gamma J_\gamma} \sum_{(i,j) \in \gamma} \left( y_{ij}(\gamma) - \hat{\bar{y}}_{i,\bullet}(\gamma) \right)^2 \\
& - \frac{\alpha^\tS_\gamma}{1- \alpha^\tS_\gamma} \frac{I - I_\gamma}{I(I_\gamma - 1)} \frac{1}{I_\gamma J_\gamma} \sum_{(i,j) \in \gamma} \left( y_{ij}(\gamma) - \hat{\bar{y}}_{\bullet,j}(\gamma) \right)^2.    
\end{split}
\end{equation}

As a result, one can estimate the conservative upper bound
\begin{equation}
\label{eq:conservative_upper_bound_on_variance}
V_c := \left(\sum_{\gamma \in \Gamma} |c_\gamma| \sqrt{\Var(\edavg{y}_\gamma)} \right)^2    
\end{equation}
using the conservative variance estimator
\begin{equation}
\label{eq:conservative_variance_estimator}
\hat{V}_c = \left(\sum_{\gamma \in \Gamma} |c_\gamma| \sqrt{\hat{\Sigma}_\gamma^+} \right)^2.
\end{equation}
where $x^+ = \max(x,0)$. \\  

Plugging in the residuals $y_{ij} - X_{ij}^\top \hat{\beta}_c$ into \eqref{eq:conservative_variance_estimator} yields the plug-in estimate $\hat{V}_c(\hat{\beta}_c)$. Using this we can create an asymptotically conservative confidence interval given by
\begin{equation}
\label{eq:plug_in_conservative_confidence_interval}
\hat{\tau}_c(\hat{\beta}_c)  \pm z_{1-\alpha/2} \hat{V}_c(\hat{\beta}_c)^{1/2}.
\end{equation}


The next theorem gives assumptions under which the confidence interval is valid, leveraging an improvement in conservative variance estimation for MRDs detailed in the next subsection. The main assumption is that the group average variances are of comparable or larger order than $\Var(\hat \tau_c (\tilde{\beta}_c))$ over the  localized ball $B_I(M) := \set{b : \norm{b - \tilde{\beta}_c}_2 \leq M I^{-1/2}}$. Define $e_{ij}(b,\gamma) = y_{ij}(\gamma) - X_{ij}^\top b.$

\begin{as}[Per Group Variance Bound]
\label{assmp:per_gamma_variance_lb} 
For each $M > 0$, 
\begin{equation}
\inf_{b \in B_I(M)} \Var(\edavg{e}(b,\gamma) ) = \Omega(\Var(\hat \tau_c (\tilde{\beta}_c))).
\end{equation}
\end{as}

\begin{theorem}[Valid Conservative Inference]
\label{thm:valid_inference}
Under \cref{assmp:finite_population_limits,assmp:main_assumption,assmp:covariate_assumption,assmp:non_vanishing_buyer_seller_variation,assmp:per_gamma_variance_lb}, the conservative confidence interval in \eqref{eq:plug_in_conservative_confidence_interval} has asymptotic coverage at least $1-\alpha$. 
\end{theorem}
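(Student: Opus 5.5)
Since \cref{prop:same_asymptotic_distribution} already gives $(\hat\tau_c(\hat\beta_c)-\tau_c)/\Var(\hat\tau_c(\tilde\beta_c))^{1/2}\Rightarrow \textsf{N}(0,1)$, coverage at least $1-\alpha$ follows once we show
\[
\hat V_c(\hat\beta_c)\ \geq\ \Var(\hat\tau_c(\tilde\beta_c))\,(1-o_p(1)).
\]
We then apply Slutsky's lemma and use that $z_{1-\alpha/2}$ times a ratio tending to a limit $\ge 1$ gives an interval at least as wide as the oracle one. The plan is to prove this by comparing $\hat V_c(\hat\beta_c)$ to the population upper bound $V_c(\tilde\beta_c)$ of \eqref{eq:conservative_upper_bound_on_variance}, computed on the residuals $e_{ij}(\tilde\beta_c,\gamma)$. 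By Cauchy--Schwarz (the triangle inequality for standard deviations), $V_c(\tilde\beta_c)\ge \Var(\sum_\gamma c_\gamma \edavg{e}(\tilde\beta_c,\gamma)) = \Var(\hat\tau_c(\tilde\beta_c))$.

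\textbf{Step 1 (fixed $b$).} For each deterministic $b\in B_I(M)$, invoke the improved consistency result for the conservative variance estimator of \cite{masoero2024multiple} from the next subsection on the outcome matrix $e_{ij}(b,\gamma)$. We need its hypotheses, which are small-residual conditions of the type in \cref{assmp:main_assumption}, to hold uniformly over the ball. Since $e(b,\gamma)-e(\tilde\beta_c,\gamma)=-X^\top(b-\tilde\beta_c)$ with $\|b-\tilde\beta_c\|\le MI^{-1/2}$, \cref{assmp:covariate_assumption} bounds the extra row, column and operator-norm terms by $O(I^{-1/2})$ times the corresponding covariate quantities. These extra terms are $o(\Var(\hat\tau_c(\tilde\beta_c))^{1/2})$ thanks to \cref{assmp:non_vanishing_buyer_seller_variation} ($\Var=\omega(I^{-2})$). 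This yields $\hat\Sigma_\gamma(b)/\Var(\edavg{e}(b,\gamma))\to 1$ in probability, or at least that the difference is $o_p(\Var(\hat\tau_c(\tilde\beta_c)))$. \cref{assmp:per_gamma_variance_lb} is exactly what turns this into the relative statement: the per-group variances do not degenerate relative to the target, so the absolute error $o_p(\Var(\hat\tau_c(\tilde\beta_c)))$ is negligible relative to each $\Var(\edavg{e}(b,\gamma))$ and taking $\sqrt{\cdot^+}$ is harmless.

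\textbf{Step 2 (random $\hat\beta_c$).} By \cref{prop:beta_consistency}, $\hat\beta_c\in B_I(M)$ with probability at least $1-\epsilon$ for $M$ large. The key observation is that $\hat\Sigma_\gamma(b)$ is an explicit quadratic polynomial in $b$:
\[
\hat\Sigma_\gamma(b)=\hat\Sigma_\gamma(\tilde\beta_c)-2(b-\tilde\beta_c)^\top \hat g_\gamma+(b-\tilde\beta_c)^\top\hat H_\gamma(b-\tilde\beta_c),
\]
where $\hat g_\gamma,\hat H_\gamma$ are the same sample (double-)decentered cross moments of $X$ with $e(\tilde\beta_c)$ and of $X$ with itself, weighted by the $\alpha$ coefficients. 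Similarly, $\Var(\edavg{e}(b,\gamma))$ is a quadratic in $b$ with population coefficients $g_\gamma,H_\gamma$. We then bound $\sup_{b\in B_I(M)}|\hat\Sigma_\gamma(b)-\Var(\edavg{e}(b,\gamma))|$ by
\[
|\hat\Sigma_\gamma(\tilde\beta_c)-\Var(\edavg{e}(\tilde\beta_c,\gamma))| + MI^{-1/2}\|\hat g_\gamma-g_\gamma\| + M^2 I^{-1}\|\hat H_\gamma - H_\gamma\|.
\]
The first term is handled by Step 1. For the other two, the moment-estimation arguments underlying \cref{prop:beta_consistency} (concentration for sampling without replacement) give $\hat H_\gamma - H_\gamma = o_p(\text{scale of } H_\gamma)$, with the same for $g$. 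Since the $\alpha$-weights are $O(1/I)$, these terms are $o_p(I^{-2})$ or smaller, hence $o_p(\Var(\hat\tau_c(\tilde\beta_c)))$. Combining with \cref{assmp:per_gamma_variance_lb} gives $\hat\Sigma^+_\gamma(\hat\beta_c)\ge \Var(\edavg{e}(\hat\beta_c,\gamma))(1-o_p(1))$.

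\textbf{Step 3 (back to the oracle).} Again by \cref{assmp:per_gamma_variance_lb} and the quadratic expansion, $\Var(\edavg{e}(\hat\beta_c,\gamma))$ and $\Var(\edavg{e}(\tilde\beta_c,\gamma))$ differ by $o_p(\Var(\hat\tau_c(\tilde\beta_c)))$. Therefore $\hat V_c(\hat\beta_c)\ge V_c(\tilde\beta_c)(1-o_p(1))\ge\Var(\hat\tau_c(\tilde\beta_c))(1-o_p(1))$, which is the displayed inequality above. Note that the sign of $\Var(\hat\tau_c(\tilde\beta_c))\le V_c$ requires no assumption.

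The main obstacle is Step 1: establishing consistency of $\hat\Sigma_\gamma$ relative to a variance that may be as small as $\omega(I^{-2})$. This lies outside the regime of \cite{masoero2024multiple}, so everything rests on the improved result of the next subsection, and the extra subtraction terms in \eqref{eq:conservative_var_est} must be controlled in the $\tBS$-dominated regime. The uniformity over $b$ in Step 2 is routine once the quadratic structure is exploited.
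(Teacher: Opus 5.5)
Your architecture matches the paper's: perturb \cref{assmp:main_assumption} over $B_I(M)$, exploit that everything is quadratic in $b$, invoke \cref{assmp:per_gamma_variance_lb}, and finish with Slutsky. However, Step~3 is false as stated, and Step~2 mishandles the same kind of term.

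Write $\Sigma_\gamma(b):=\Var(\edavg{e}(b,\gamma))$. The point $\tilde\beta_c$ minimizes $\Var(\hat\tau_c(\beta))$, not $\Sigma_\gamma$. It only zeroes the $a$-weighted combination $\tilde u_c-\tilde Z_c\tilde\beta_c$. Hence $\nabla\Sigma_\gamma(\tilde\beta_c)=-2\bigl[\tfrac{I-I_\gamma}{I_\gamma I}(u^\tB_\gamma-Z^\tB\tilde\beta_c)+\tfrac{J-J_\gamma}{J_\gamma J}(u^\tS_\gamma-Z^\tS\tilde\beta_c)\bigr]+O(I^{-2})$ is generically of exact order $1/I$. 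Since $\hat\beta_c-\tilde\beta_c$ is typically of exact order $I^{-1/2}$, the difference $\Sigma_\gamma(\hat\beta_c)-\Sigma_\gamma(\tilde\beta_c)$ is of order $I^{-3/2}$. The assumptions, however, only give $\Var(\hat\tau_c(\tilde\beta_c))=\omega(I^{-2})$.

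This regime really occurs. Take the total effect with residual row effects $p\,w_i$ in $\tr$ and $-(1-p)\,w_i$ in $\cc$, where $p=I_T/I$. These cancel in the buyer part of $\Var(\hat\tau_c(\beta))$, which is proportional to $\sum_i[(1-p)\bar e^{(d)}_{i,\bullet}(\tr)+p\,\bar e^{(d)}_{i,\bullet}(\cc)]^2$. Yet each is correlated with $\bar X_{i,\bullet}$ when $w$ is. A small vanishing perturbation then gives, e.g., $\Var(\hat\tau_c(\tilde\beta_c))\asymp I^{-7/4}$ with $\Sigma_\gamma\asymp 1/I$, which is consistent with all the assumptions. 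In that case your Step~3 difference is not $o_p(\Var(\hat\tau_c(\tilde\beta_c)))$. \cref{assmp:per_gamma_variance_lb} is a lower bound and cannot help here. Similarly, in Step~2, $\hat g_\gamma-g_\gamma=o_p(1/I)$ only gives $MI^{-1/2}\|\hat g_\gamma-g_\gamma\|=o_p(I^{-3/2})$, not the claimed $o_p(I^{-2})$.

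The missing idea is to compare $\hat V_c(\hat\beta_c)$ with $V_c$ at the \emph{same} point $\hat\beta_c$, and then use optimality of $\tilde\beta_c$ for the total variance. This is what the paper does. It proves $\sup_{b\in B_I(M)}|\hat V_c(b)/V_c(b)-1|=o_p(1)$, hence $\hat V_c(\hat\beta_c)/V_c(\hat\beta_c)\to_p 1$. Then, with $\varphi(b):=\Var(\hat\tau_c(b))$ for deterministic $b$, it uses $V_c(\hat\beta_c)\ge\varphi(\hat\beta_c)\ge\varphi(\tilde\beta_c)$. The last step is the exact identity $\varphi(b)-\varphi(\tilde\beta_c)=(b-\tilde\beta_c)^\top\tilde Z_c(b-\tilde\beta_c)\ge 0$, which has no linear term.

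If you want to keep your route, replace Step~3 by the relative statement $\Sigma_\gamma(\hat\beta_c)=\Sigma_\gamma(\tilde\beta_c)(1+o_p(1))$. For unit $t$, Cauchy--Schwarz gives $|t^\top(u^\tB_\gamma-Z^\tB\tilde\beta_c)|\le(t^\top Z^\tB t)^{1/2}\nu^\tB_\gamma(e)^{1/2}$, and likewise for $\tS$ and $\tBS$. So the linear term is $O_p(I^{-1}\Sigma_\gamma(\tilde\beta_c)^{1/2})=o_p(\Sigma_\gamma(\tilde\beta_c))$, because $\Sigma_\gamma(\tilde\beta_c)=\omega(I^{-2})$. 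Normalize all your errors by $\Sigma_\gamma(\tilde\beta_c)$ rather than $\Var(\hat\tau_c(\tilde\beta_c))$. Ratio consistency in Step~1 only gives errors $o_p(\Sigma_\gamma)$, which are not $o_p(\Var(\hat\tau_c(\tilde\beta_c)))$ when $\Sigma_\gamma\gg\Var(\hat\tau_c(\tilde\beta_c))$.

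For the uniformity in Step~2 you have two options. First, you can use the sharper rate $\hat e^\tB_\gamma-\tilde e^\tB_\gamma=O_p(I^{-1/2})$ from \cref{prop:convergence_of_empirical_buyer_seller_term}. It is available because \cref{assmp:covariate_assumption} gives $\max_i|\bar X_{i,\bullet}-\dbar{X}|=O(1)$, and it makes the linear term $O_p(I^{-2})$. Second, you can use the paper's device, \cref{lem:stencil}. The map $u\mapsto\hat\Sigma_\gamma(\tilde\beta_c+I^{-1/2}u)-\Sigma_\gamma(\tilde\beta_c+I^{-1/2}u)$ is a quadratic polynomial. Its supremum over $\|u\|_2\le M$ is therefore at most a constant times its maximum over finitely many fixed points. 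At those points your Step~1 applies directly, and no coefficient rates are needed.
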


It is possible to tighten these results under certain regimes. If we posit the variance lower bound $\Var(\hat{\tau}_c(\tilde{\beta}_c)) = \Omega(I^{-1})$ of \cite{masoero2024multiple}, we can dispense with the last three bounds of \cref{assmp:covariate_assumption} and \cref{assmp:per_gamma_variance_lb}. In some cases, we can also go beyond \cref{assmp:non_vanishing_buyer_seller_variation}, an important example being inference for the direct effect. 

\begin{example}[Inference for the Direct Effect]
\label{ex:inference_for_direct_effect}
If one inspects the case of the direct effect in \cref{ex:optimal_adjustment_direct_effect}, the regression adjustment minimizes \eqref{eq:population_objective_direct_effect_simplified}. But notice that this term is of order $O((IJ)^{-1})$, suggesting that direct effect adjustment is meaningful asymptotically only when buyer and seller mean variation is of lower order. Our tools can theoretically show regression adjustment works in this case. To illustrate, suppose that 
\[
\nu^\tB(y_{ij}(\gamma)) = O(1/I), \quad  \nu^\tS(y_{ij}(\gamma)) = O(1/I)
\]
for each $\gamma$ and similarly for each coordinate of the covariates. To avoid degenerate cases, suppose further that $\nu^\tBS(e_{ij}(\gamma)) \not \rightarrow 0$ and $\lim Z^{\tBS}$ is invertible. In particular, this implies that the variance of the direct effect estimator is $\Theta(1/IJ)$.

Under \cref{assmp:finite_population_limits,assmp:main_assumption,assmp:covariate_assumption}, one can establish the conclusions of \cref{prop:same_asymptotic_distribution,thm:valid_inference}. The idea is to apply \cref{thm:clt_population_regression} to $\hat\tau_{\dir}(\tilde{\beta}_{\dir})$ and note that $\hat\tau_{\dir}(\hat{\beta}_{\dir}) - \hat\tau_{\dir}(\tilde{\beta}_{\dir})$ is given by
\[
\left(\edavg{X}_\tr - \edavg{X}_\ib - \edavg{X}_\is + \edavg{X}_\cc\right)^\top(\hat \beta_{\dir} - \tilde \beta_{\dir}).
\]
The covariate difference term $\edavg{X}_\tr - \edavg{X}_\ib - \edavg{X}_\is + \edavg{X}_\cc$ is small: using the variance formula for the direct effect in \cref{sec:optimal_direct_effect_derivation}, the variance of each coordinate is of order $O(1/IJ).$ Exploiting this gives the results. Details can be found in the appendix.
\end{example}


\subsection{Improved MRD Theory}
\label{sec:improvements}

Our results on the plug-in optimal adjustment utilize improvements to the theory of MRDs, strengthening the conclusions and relaxing some assumptions of \cite{masoero2024multiple}. The first improvement is a quantitative CLT for the doubly randomized sum $\dbar{y}_\gamma.$ The approach uses finite CLTs and tensorization properties in the $1$-Wasserstein distance and somewhat simplifies the argument in \cite{masoero2024multiple} by avoiding multivariate quantitative CLTs. Although the bounds are not directly comparable since the CLT of \cite{masoero2024multiple} uses Kolmogorov distance, our new result shows asymptotic normality 
without two previous assumptions: 1) boundedness and 2) the variance lower bound $\Var(\edavg{y}) = \Omega(I^{-1} + J^{-1})$. The proof and examples are given in \cref{sec:appendix_improved_theory}. 

\begin{restatable}{theorem}{thmimprovedclt}
\label{thm:improved_clt}
Consider an $I \times J$ matrix $Y$ with entries $y_{ij}$. Let $W_i^\tB, W_j^\tS$ be indicator variables for sampling $I_T$ rows and $J_T$ columns respectively. Let $\edavg{y} = \frac{1}{I_TJ_T} \sum_{i,j} W_i^\tB W_j^\tS y_{ij}$, $\dbar{y}$ be the grand mean $\frac{1}{I J} \sum_{i,j} y_{ij}$, and define $\sigma_{\textsf{Tot}}^2 := \Var(\edavg{y})$. We have the bound
\begin{align}
d_W\left( \cl{L}\left(\frac{\edavg{y} - \dbar{y}}{\sigma_{\textsf{Tot}}} \right), \textsf{N}\left(0, 1 \right)\right) & \lesssim \frac{1}{\sqrt{I}} + \frac{1}{I} \frac{\max_{i=1}^I \left| \bar{y}_{i,\bullet} - \dbar{y} \right|}{\sigma_{\textsf{Tot}}} + \frac{1}{J} \frac{\max_{j=1}^J \left| \bar{y}_{\bullet,j} - \dbar{y} \right|}{\sigma_{\textsf{Tot}}} \\
& \quad + \frac{\sqrt{\log I}}{I^{3/2}}  
 \frac{\max_{i,j} |y_{ij} - \bar{y}_{i,\bullet}|}{\sigma_{\textsf{Tot}}} + \left(\frac{1}{I^2} \frac{\norm{y_{ij} - \dbar{y}}_{\text{op}}}{\sigma_{\textsf{Tot}}} \right)^{1/2}.
\end{align} 
\end{restatable}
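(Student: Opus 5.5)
We decompose $\edavg{y} - \dbar{y}$ into a row term, a column term and an interaction term. We condition on $\cl{S}$ (the seller draw), use a one-dimensional Wasserstein CLT for sampling without replacement over buyers, and then treat the remaining seller randomness with a second finite-population CLT. A tensorization or triangle-inequality argument in $d_W$ combines the two.

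\textbf{Setup.} Write $\bar{y}^{\tS}_{i} := \frac{1}{J_T}\sum_j W_j^\tS y_{ij}$ for the row means over the sampled columns, so that $\edavg{y} = \frac{1}{I_T}\sum_i W_i^\tB \bar{y}^{\tS}_i$. Conditionally on $\cl{S}$, this is a simple random sample mean of the fixed population $(\bar{y}^\tS_i)_{i\le I}$, whose conditional mean is $m(\cl{S}) := \frac{1}{I}\sum_i \bar{y}^\tS_i = \frac{1}{J_T}\sum_j W_j^\tS \bar{y}_{\bullet,j}$. We get the decomposition
\[
\edavg{y} - \dbar{y} = \big(\edavg{y} - m(\cl{S})\big) + \big(m(\cl{S}) - \dbar{y}\big) =: A + B,
\]
where $B$ is $\cl{S}$-measurable and is itself a sampling-without-replacement mean of the column means. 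Let $\sigma_A^2(\cl{S}) = \Var(A\mid \cl{S})$ and $\sigma_B^2 = \Var(B)$, so that $\sigma_{\textsf{Tot}}^2 = \E\sigma_A^2(\cl{S}) + \sigma_B^2$.

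\textbf{Conditional CLT.} Apply the Stein-method $W_1$ bound for a simple random sample mean (Hájek/Bolthausen type): $d_W(A/\sigma_A, N) \lesssim \sum_i |a_i|^3 / (\text{variance})^{3/2}$ with $a_i = \bar{y}^\tS_i - m$. We bound this by $\max_i|a_i| / (I\sigma_A)$ up to constants. The same bound applied to $B$ gives $\frac{1}{J}\max_j|\bar{y}_{\bullet,j} - \dbar{y}|/\sigma_B$. To combine, let $G,G'$ be independent standard normals and compare $A+B$ to $\sigma_A(\cl{S})G + B$, then to $\sigma_A(\cl{S})G + \sigma_B G'$. The first comparison costs $\E[\sigma_A d_W(A/\sigma_A,N)\mid\cl{S}]$ and the second costs at most $\sigma_B d_W(B/\sigma_B,N)$. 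The final step replaces $\sigma_A(\cl{S})$ with the deterministic $\bar\sigma_A := (\E\sigma_A^2)^{1/2}$, at cost $\E|\sigma_A(\cl{S}) - \bar\sigma_A|\le \E|\sigma_A^2-\bar\sigma_A^2|/\bar\sigma_A$. Normalizing by $\sigma_{\textsf{Tot}}$ cancels the $\sigma_A,\sigma_B$ factors (Wasserstein is 1-homogeneous), leaving numerators over $\sigma_{\textsf{Tot}}$.

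\textbf{Controlling $\max_i |a_i|$.} We split $a_i = (\bar{y}_{i,\bullet} - \dbar{y}) + r_i$, where $r_i := \frac{1}{J_T}\sum_j W_j^\tS y^{(dcr)}_{ij}$ is a sampling-without-replacement mean of the centered row $i$. A Hoeffding/Serfling bound for sampling without replacement, together with a union bound over $I$ rows, gives $\max_i|r_i| \lesssim \sqrt{\log I / J}\,\max_{ij}|y_{ij} - \bar{y}_{i,\bullet}|$ with high probability. Dividing by $I$ and using $I\asymp J$ produces the $\sqrt{\log I}\,I^{-3/2}$ term. On the low-probability complement, a crude bound suffices, and that event contributes $O(I^{-1/2})$ or less.

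\textbf{Main obstacle: concentration of $\sigma_A^2(\cl{S})$.} We have $\sigma_A^2(\cl{S}) = \frac{I-I_T}{I I_T}\frac{1}{I-1}\sum_i a_i^2$, a quadratic form in the seller indicators. Its fluctuation around its mean splits as follows.
\begin{itemize}
\item A linear part, $\frac{1}{I^2}\sum_i (\bar{y}_{i,\bullet}-\dbar{y})r_i$, which reduces to a column-sampling mean.
\item A quadratic part, $\frac{1}{I^2}\sum_i r_i^2 = \frac{1}{I^2 J_T^2}\, W^{\tS\top} M^\top M W^\tS$, with $M$ the doubly centered matrix.
\end{itemize}
We bound the fluctuation of the quadratic part by $\norm{M}_{op}^2/(I^2 J)$ times lower-order factors, using variance formulas for quadratic forms under sampling without replacement (or Hanson--Wright for sampling without replacement). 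Dividing by $\bar\sigma_A\,\sigma_{\textsf{Tot}}$ and applying AM--GM/Cauchy--Schwarz to $\bar\sigma_A$ yields the square-root term $(\norm{y-\dbar{y}}_{op}/(I^2\sigma_{\textsf{Tot}}))^{1/2}$. The linear part is absorbed into the $1/\sqrt{I}$ and max terms. This careful bookkeeping of the quadratic form is where we expect the difficulty. We would first try the exact second-moment computation for $W^{\tS\top} Q W^\tS$ under the hypergeometric design, bounding $\operatorname{tr}(Q^2)\le \norm{Q}_{op}\operatorname{tr}Q$ with $Q = M^\top M$.
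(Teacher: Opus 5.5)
Your architecture matches the paper's. You condition on $\cl{S}$ and apply Goldstein's $W_1$ bound for sampling without replacement to the buyer average given $\cl{S}$; after mixing over $\cl{S}$ this costs $\lesssim I^{-1}\E\max_i|a_i|$, where $a_i=\delta_i+r_i$ and $\delta_i:=\bar y_{i,\bullet}-\dbar{y}$. You apply the same bound to the column-mean average $B$, and then control the fluctuation of the conditional variance.

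\textbf{One step fails as written.} When you pass from $\sigma_A(\cl{S})G+B$ to $\sigma_A(\cl{S})G+\sigma_B G'$, you cannot charge only $d_W(B,\sigma_B G')$. Both $\sigma_A(\cl{S})$ and $B$ are functions of the same seller draw, so the noise $\sigma_A(\cl{S})G$ is not independent of $B$. What matters is the joint law of $(\sigma_A(\cl{S}),B)$, not just the marginal of $B$. The convolution argument (\cref{lemma:wasserstein_convolution}) needs a deterministic scale. So you must swap your last two steps, as the paper does in \eqref{eq:clt_decomp_term_II}--\eqref{eq:clt_decomp_term_III}: first replace $\sigma_A(\cl{S})$ by $\bar\sigma_A$ at cost $\E|\sigma_A(\cl{S})-\bar\sigma_A|\,\E|G|$, and only then replace $B$ by $\sigma_B G'$. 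Separately, for $\E\max_i|r_i|$ you can use the expected maximum of sub-Gaussians directly; the high-probability event plus a crude bound on its complement is unnecessary.

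\textbf{Your variance-concentration step is a genuinely different and workable route.} The paper writes $I\sigma^2_{\cl{S}}\propto m_{2,\tS}-\mu_\tS^2$ and uses $|\sqrt{x}-\sqrt{y}|\le\sqrt{|x-y|}$. It gets sub-Gaussianity of $\|(\eavg{d}_{i,\bullet})_i\|_2$ with parameter $\|d\|_{\mathrm{op}}/J_T$ from vector concentration for sampling without replacement, and squares via \cref{lemma:subexponential}. The separately handled $\mu_\tS^2$ term is what produces its $1/\sqrt{I}$.

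Centering at $m(\cl{S})$ instead gives the exact split $\sum_i a_i^2=\|\delta\|_2^2+2\delta^\top r+\|r\|_2^2$, and second moments suffice.
\begin{itemize}
\item The linear part $\delta^\top r$ is a column-sampling mean of $v=M^\top\delta$, which sums to zero. Its standard deviation is therefore $\lesssim J^{-1}\|M\|_{\mathrm{op}}\|\delta\|_2$. This lands in the operator-norm term, not in the $1/\sqrt{I}$ or max terms as you assert.
\item In the quadratic part, $Q=M^\top M$ has zero row and column sums. Lemma A.5 of \cite{lei2021regression} together with $\operatorname{tr}(Q^2)\le\|Q\|_{\mathrm{op}}\operatorname{tr}Q$ shows that this part of $\sigma_A^2$ fluctuates by $\lesssim I^{-2}J^{-2}\|M\|_{\mathrm{op}}\|M\|_F$, not by your $\|M\|_{\mathrm{op}}^2/(I^2J)$.
\end{itemize}

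Since $\bar\sigma_A\gtrsim I^{-1}(\|\delta\|_2+J^{-1}\|M\|_F)$, your inequality $|\sigma_A-\bar\sigma_A|\le|\sigma_A^2-\bar\sigma_A^2|/\bar\sigma_A$ then yields $\E|\sigma_A-\bar\sigma_A|/\sigma_{\textsf{Tot}}\lesssim\|M\|_{\mathrm{op}}/(IJ\sigma_{\textsf{Tot}})$. This is at most a constant times the stated square-root term, for two reasons. First, $\|M\|_{\mathrm{op}}\le\|d\|_{\mathrm{op}}$, because double centering is a two-sided projection. Second, $\|M\|_{\mathrm{op}}\le\|M\|_F\lesssim IJ\sigma_{\textsf{Tot}}$, because $\sigma^2_{\textsf{Tot}}\gtrsim\nu^{\tBS}/(IJ)$.

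So your route trades the paper's Lipschitz-concentration lemma for an exact second-moment computation of a sampling-without-replacement quadratic form. It is more elementary, and at this step slightly sharper: it loses no square root and needs no $1/\sqrt{I}$.
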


Roughly, the theorem is proved by conditioning on the seller variables $\cl{S}$ and applying a finite population central limit theorem in the $L^1$-Wasserstein distance \cite{goldstein20071}. The sketch of course suggests that we may condition first on the buyer variables, obtaining an analogous result to \cref{thm:improved_clt} where the right-hand-side applies to $Y^\top$. This can yield better guarantees, especially when the matrix is imbalanced.  \cref{thm:improved_clt} can be used to establish a CLT for various MRD estimators.

\begin{restatable}{corollary}{cormrdclt}
\label{cor:clt_mrd_estimators}
Consider any MRD estimator $\widehat{\tau}_c = \sum_{\gamma\in\Gamma} c_\gamma \edavg{y}_\gamma,$
and suppose that for each set of potential outcomes $(y_{ij}(\gamma))_{ij}, \gamma \in \Gamma$, we have the analogue of Eq. \eqref{eq:CLT_assumptions}:  
\begin{equation}
\label{eq:CLT_assumptions_MRD_estimands}
\begin{split}
& \max\bigg(\frac{1}{I} \max_{i=1}^I |\bar{y}_{i,\bullet}(\gamma)- \dbar{y}_{\gamma}|, \frac{1}{J} \max_{j=1}^J |\bar{y}_{\bullet,j}(\gamma) - \dbar{y}_{\gamma}|, \\
& \max_{i,j}\frac{\sqrt{\log I}}{I^{3/2}} |y_{ij}(\gamma) - \bar{y}_{\bullet,j}(\gamma)|, \frac{1}{I^2} \norm{y_{ij}(\gamma) - \dbar{y}_\gamma}_{\text{op}}  \bigg) = o(\Var(\widehat{\tau}_c)^{1/2}).
\end{split}
\end{equation}
Then
\[
\frac{\widehat{\tau}_{c} - \tau_{c}}{\Var(\widehat{\tau}_{c})^{1/2}} \Rightarrow \textsf{N}(0,1).
\]
\end{restatable}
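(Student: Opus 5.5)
The idea is to apply \cref{thm:improved_clt} not to each $\edavg{y}_\gamma$ separately but to a single doubly-randomized sum that represents the whole contrast. The estimator $\widehat\tau_c=\sum_\gamma c_\gamma \edavg{y}_\gamma$ involves four blocks, each selected by a different pair of buyer and seller indicators. Because $I_T,J_T$ are fixed, $W_i^\tB=1$ exactly when buyer $i$ is in the sampled set of size $I_T$, and $1-W_i^\tB$ indicates the complementary set. I will therefore write
\[
\widehat\tau_c-\tau_c=\frac{1}{I_TJ_T}\sum_{i,j}W_i^\tB W_j^\tS\, z_{ij}+R,
\]
where $z_{ij}$ is a fixed matrix built from the four outcome matrices and $R$ is a remainder that depends only on the buyer indicators or only on the seller indicators.

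To build $z_{ij}$ and $R$, use the identities
\[
\sum_{i}(1-W_i^\tB)a_i=\sum_i a_i-\sum_i W_i^\tB a_i,\qquad \sum_{i,j}(1-W_i^\tB)(1-W_j^\tS)y_{ij}=\sum_{i,j}y_{ij}-\sum_{i,j}W_i^\tB y_{ij}-\sum_{i,j}W_j^\tS y_{ij}+\sum_{i,j}W_i^\tB W_j^\tS y_{ij}.
\]
For instance, $\edavg{y}_\cc$ expands into a constant, two single-randomization row/column averages, and a doubly randomized sum with weight $I_TJ_T/(I_CJ_C)$. Collecting terms gives $z_{ij}=\sum_\gamma c_\gamma \kappa_\gamma y_{ij}(\gamma)$ with deterministic $\kappa_\gamma=O(1)$. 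The one-sided pieces are sums over a single without-replacement sample of row means $\bar y_{i,\bullet}(\gamma)$ or column means $\bar y_{\bullet,j}(\gamma)$.

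I would then fold these one-sided pieces into the double sum. A buyer-only term $\frac1{I_T}\sum_i W_i^\tB a_i$ equals $\frac{1}{I_TJ_T}\sum_{ij}W_i^\tB W_j^\tS a_i$ exactly, since $\sum_j W_j^\tS=J_T$; seller-only terms are handled the same way. The constant terms are collected and must cancel against $\tau_c$. Unbiasedness of each $\edavg{y}_\gamma$ guarantees this, so no appeal to $\sum_\gamma c_\gamma=0$ should be needed. The outcome is a single matrix $\tilde y_{ij}$ with $\widehat\tau_c-\tau_c=\edavg{\tilde y}-\dbar{\tilde y}$ exactly.

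Next apply \cref{thm:improved_clt} to $\tilde y$ with $\sigma^2_{\textsf{Tot}}=\Var(\widehat\tau_c)$. It remains to check that every term on the right-hand side is $o(1)$.
\begin{itemize}
\item The $1/\sqrt I$ term vanishes trivially.
\item The row-mean deviations of $\tilde y$ are bounded by $O(1)$ linear combinations of $\max_i|\bar y_{i,\bullet}(\gamma)-\dbar y_\gamma|$, because the folded-in $a_i$ are themselves row means. The column-mean deviations are bounded the same way.
\item The operator norm $\norm{\tilde y-\dbar{\tilde y}}_{op}$ is bounded by the triangle inequality over $\gamma$ and over the rank-one folded terms. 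A rank-one term $a\mathbf 1^\top$ has norm $\sqrt J\norm{a}_2\le J\max_i|a_i|$ when $I\asymp J$, so the resulting $I^{-2}$ contribution is dominated by the row-mean condition.
\item The entrywise term $\max_{ij}|\tilde y_{ij}-\bar{\tilde y}_{i,\bullet}|$ has two subtleties. Folded buyer-only terms cancel exactly in this centering. Folded seller-only terms contribute column-mean deviations, which satisfy $\frac{\sqrt{\log I}}{I^{3/2}}\cdot$ (column deviation) $\le \frac1J\max_j|\cdot|$, so they are controlled.
\end{itemize}
Finally, the Wasserstein convergence implies weak convergence.

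The main obstacle is a mismatch in the entrywise centering. Hypothesis \eqref{eq:CLT_assumptions_MRD_estimands} centers by the column mean $\bar y_{\bullet,j}(\gamma)$, whereas \cref{thm:improved_clt} uses the row mean. I would resolve this with the transposed version of \cref{thm:improved_clt}, conditioning on buyers first, which the paper notes holds with $Y^\top$. Alternatively, the triangle inequality
\[
|y_{ij}-\bar y_{i,\bullet}|\le|y_{ij}-\bar y_{\bullet,j}|+|\bar y_{\bullet,j}-\dbar y|+|\bar y_{i,\bullet}-\dbar y|
\]
works, with the last two pieces absorbed by the $\frac1I,\frac1J$ conditions since $\sqrt{\log I}/I^{3/2}\ll 1/I$.
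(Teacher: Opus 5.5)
Your proposal is correct and takes essentially the same route as the paper. Like the paper, you rewrite $\widehat\tau_c$ as a single doubly-randomized sum of pseudo-outcomes: you expand the complementary indicators and fold the one-sided sums into the double sum using $\sum_j W_j^\tS = J_T$, then check the terms of \cref{thm:improved_clt} for the combined matrix. You also handle explicitly the mismatch between the column-centered entrywise hypothesis and the row-centered term in the theorem, either by the triangle inequality or by the transposed theorem; the paper passes over this when it invokes \cref{cor:simplified_clt}, and your fix is valid.
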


In the appendix, we show \cref{thm:improved_clt} recovers the central limit theorem of \cite{masoero2024multiple} with bounded outcomes and a lower bound on the variance. \cref{ex:iid_normal_superpop,ex:sparse_uniform_subgraph,ex:sparse_heavy_tailed} in the appendix give several examples where this \cref{cor:clt_mrd_estimators} may apply but the CLT of \cite{masoero2024multiple} does not. The idea behind these examples is vanishing variation in the row and column means, which often appears due to \textit{sparsity}. It is often the case in real two-sided marketplaces that only a small fraction of the $I \times J$ potential interactions between buyers and sellers actually take place. \cref{thm:improved_clt} can establish asymptotic normality in these settings. \\

Our next result shows that the conservative variance estimator is consistent in Theorem 4.4 of \cite{masoero2024multiple}, again relaxing the need for a $\Theta(I^{-1})$ lower bound on the variance. 

\begin{restatable}[Consistent Variance Estimation]{theorem}{conservativevariance}
\label{thm:consistent_variance_estimator}

Suppose \eqref{eq:CLT_assumptions_MRD_estimands} with the additional condition $\frac{\sqrt{\log I}}{I^{3/2}} |y_{ij}(\gamma) - \bar{y}_{i,\bullet}(\gamma)| = o(\Var(\widehat{\tau}_c)^{1/2}).$ Then
\[
\frac{\hat{V}_{c}}{V_c}  \xrightarrow{p} 1.
\]
\end{restatable}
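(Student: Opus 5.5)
The plan is to reduce the statement to a per-group consistency claim and then to a variance bound for each component of $\hat{\Sigma}_\gamma$. Write $\sigma_c^2 := \Var(\widehat{\tau}_c)$ and $\Sigma_\gamma := \Var(\edavg{y}_\gamma)$.

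\emph{Step 1: reduction.} By Minkowski's inequality, $\sigma_c = \|\sum_\gamma c_\gamma(\edavg{y}_\gamma - \dbar{y}_\gamma)\|_{L^2} \le \sum_\gamma |c_\gamma|\Sigma_\gamma^{1/2} = V_c^{1/2}$. For $a \in \mathbb{R}$ and $b \ge 0$ we have $|\sqrt{a^+} - \sqrt{b}| \le |a-b|^{1/2}$. Together these give
\[
\bigl|\hat{V}_c^{1/2} - V_c^{1/2}\bigr| \le \sum_\gamma |c_\gamma|\, |\hat{\Sigma}_\gamma - \Sigma_\gamma|^{1/2}.
\]
It therefore suffices to show that $\hat{\Sigma}_\gamma - \Sigma_\gamma = o_p(\sigma_c^2)$ for every $\gamma$. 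Dividing by $V_c^{1/2} \ge \sigma_c$ then yields $\hat{V}_c/V_c \to 1$ in probability. Theorem 4.4 of \cite{masoero2024multiple} says $\hat{\Sigma}_\gamma$ is unbiased for $\Sigma_\gamma$. By Chebyshev, the whole task reduces to proving $\Var(\hat{\Sigma}_\gamma) = o(\sigma_c^4)$. I will establish this bound separately for each of the five pieces of \eqref{eq:conservative_var_est}, each multiplied by its coefficient.

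\emph{Step 2: the buyer and seller mean pieces.} Consider $\alpha^\tB_\gamma \hat{\nu}^\tB_\gamma$, where $\alpha^\tB_\gamma \asymp 1/I$. I will apply the law of total variance, conditioning on $\cl{S}$. Conditionally on $\cl{S}$, the quantity $\hat{\nu}^\tB_\gamma$ is the sample variance of the numbers $\hat{\bar{y}}_{i,\bullet}(\gamma)$ under simple random sampling of rows. The standard bound for the variance of a sample variance without replacement is $\lesssim I^{-1}\max_i a_i^2 \cdot \frac{1}{I}\sum_i a_i^2$, with $a_i$ the centered row means. Under the condition $\max_i |\bar{y}_{i,\bullet}-\dbar{y}| = o(I\sigma_c)$, this gives $\alpha^2 \Var \lesssim I^{-2}\cdot o(I^2\sigma_c^2)\cdot \nu^\tB/I = o(\sigma_c^2)\cdot \nu^\tB/I$. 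Since $\nu^\tB/I$ is a leading term of $\Sigma_\gamma$, this is $o(\sigma_c^2 \Sigma_\gamma)$.

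The remaining fluctuation comes from the random column set, which perturbs each row mean. It is controlled by the within-row deviations $|y_{ij} - \bar{y}_{i,\bullet}|$; this is exactly why the extra hypothesis $\frac{\sqrt{\log I}}{I^{3/2}}\max|y_{ij} - \bar{y}_{i,\bullet}| = o(\sigma_c)$ is imposed. Combining it with a Hoeffding-type maximal inequality for sampling without replacement, uniformly over $i$, costs the $\sqrt{\log I}$ factor. The seller piece is handled symmetrically, conditioning on $\cl{B}$ and using the column condition.

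\emph{Step 3: the interaction pieces, which I expect to be the main obstacle.} The pieces $\alpha^\tB_\gamma\alpha^\tS_\gamma\hat{\nu}^\tBS_\gamma$ and the two within-row/within-column correction terms are all of nominal size $1/(IJ)$. Crucially, these terms cancel against one another in expectation, so their individual fluctuations can be much larger than their net contribution. I would therefore first group them algebraically into a single quadratic form in the doubly-decentered sampled submatrix plus lower-order remainders. I would then bound the variance of that quadratic form via the operator norm: writing it as $\mathrm{tr}$ of a sampled submatrix Gram, its fluctuation is controlled by $\|y - \dbar{y}\|_{\text{op}}^2$ times Frobenius-type terms. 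The hypothesis $I^{-2}\|y - \dbar{y}\|_{\text{op}} = o(\sigma_c)$ should then deliver $o(\sigma_c^4)$.

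If a direct computation of this variance is too messy, my fallback is to condition successively on $\cl{S}$ and then on $\cl{B}$, and use a matrix Bernstein or Chernoff bound for row subsampling to control the submatrix operator norm. Summing the per-piece bounds from Steps 2 and 3 and applying Step 1 completes the proof.
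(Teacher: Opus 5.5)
Your Step 1 device is fine, but it is aimed at the wrong target. Step 2 also assigns the hardest fluctuation to the wrong hypothesis.

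\textbf{Step 1 asks for too much.} You divide by $V_c^{1/2}$, so you only need $\hat\Sigma_\gamma-\Sigma_\gamma=o_p(V_c)$. The stronger condition $\Var(\hat\Sigma_\gamma)=o(\sigma_c^4)$ can fail under the hypotheses. Take all four outcome matrices equal to $y_{ij}=I^{-1/4}u_i+\epsilon_{ij}$. Here $u_i\in\{\pm1,\pm2\}$ is balanced, and $\epsilon$ is a sign matrix with $\|\epsilon\|_{\text{op}}\lesssim\sqrt I$ and row and column means $O(\sqrt{\log I/I})$. Let $c$ be the direct effect.
\begin{itemize}
\item Then $\sigma_c^2\asymp I^{-2}$, every hypothesis holds, and $\Sigma_\gamma\asymp I^{-3/2}$.
\item Row sampling moves $I_\gamma^{-1}\sum_{i\in\mathcal I_\gamma}I^{-1/2}u_i^2$ by order $I^{-1}$. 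So $\alpha^{\tB}_\gamma\hat\nu^{\tB}_\gamma$ alone has standard deviation of order $I^{-2}\asymp\sigma_c^2$.
\item Your Step 2 bound $o(\sigma_c^2\Sigma_\gamma)$ is therefore not $o(\sigma_c^4)$ here.
\item It is $o(V_c^2)$, because $\sigma_c^2\le V_c$ and $c_\gamma^2\Sigma_\gamma\le V_c$. So retarget Step 1 to $o_p(V_c)$.
\end{itemize}
With that change, your absolute-error reduction is more robust than the paper's. The paper requires $\hat\Sigma_\gamma=(1+o_p(1))\Sigma_\gamma$ in every group, even though the hypotheses are stated relative to $\sigma_c$.

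\textbf{The genuine gap is in Step 2.} After conditioning on $\mathcal S$, consider the term $\Var_{\mathcal S}\bigl(\E[\alpha^{\tB}_\gamma\hat\nu^{\tB}_\gamma\mid\mathcal S]\bigr)$. It is essentially the column-sampling variance of $\alpha^{\tB}_\gamma I^{-1}\sum_i(\hat{\bar y}_{i,\bullet}-\dbar{y})^2$, a quadratic form in $W^{\tS}$. Within-row maxima plus a maximal inequality cannot control it.

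Take $y_{ij}=u_iv_j$ with $\pm1$ entries and $\sum_iu_i=\sum_jv_j=0$, with $c$ selecting a single group so that $\sigma_c^2=\Sigma_\gamma\asymp(IJ)^{-1}$. Every hypothesis except $I^{-2}\|y-\dbar{y}\|_{\text{op}}=o(\sigma_c)$ holds: row and column means vanish and $|y_{ij}-\bar y_{i,\bullet}|=1$. Yet $\hat\nu^{\tB}_\gamma=\hat{\bar v}^2(1-\hat{\bar u}^2)$. Hence $\alpha^{\tB}_\gamma\hat\nu^{\tB}_\gamma$ fluctuates at order $(IJ)^{-1}\asymp\Sigma_\gamma$, and the conclusion fails.

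The operator-norm hypothesis is needed exactly at this point. The paper sets $d=y-\dbar{y}$ and applies concentration for vector sums under sampling without replacement to $D=\|(\hat{\bar d}_{i,\bullet})_i\|_2$. This shows $D-\E D$ is sub-Gaussian with parameter $\|d\|_{\text{op}}/J_T$, and the paper then bounds $\Var(D^2)$ through central moments. The within-row condition enters only the conditional-on-$\mathcal S$ part, through fourth moments of $\hat{\bar y}_{i,\bullet}-\bar y_{i,\bullet}$.

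Conversely, the cancellation you plan to exploit in Step 3 is unnecessary, since each $1/(IJ)$-weighted piece has mean $O(\Sigma_\gamma)$. The paper rewrites $\hat\nu^{\tBS}_\gamma$ and the two correction sums in terms of three kinds of pieces, and bounds each separately:
\begin{itemize}
\item $\frac{1}{I_\gamma J_\gamma}\sum_{(i,j)\in\gamma}(y_{ij}-\dbar{y}_\gamma)^2$, controlled by entrywise maxima;
\item the row- and column-mean sums of squares, controlled by the same lemma used for $\hat\nu^{\tB}_\gamma$;
\item $(\edavg{y}_\gamma-\dbar{y}_\gamma)^2$, controlled by a fourth-moment bound.
\end{itemize}
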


The assumption can be slightly relaxed; in the third assumption of \cref{cor:clt_mrd_estimators}, we only need $I^{-3/2} \max_{i,j}|y_{ij}(\gamma) - \bar{y}_{\bullet,j}(\gamma)| = o(\Var(\widehat{\tau}_c)^{1/2}).$

\section{Simulation Results}
\label{sec:simulations}

We present two simulation settings comparing three methods for various target estimands: the unadjusted MRD estimator, ANCOVA, and the plug-in optimal non-interacted estimator. For simplicity, we will refer to the latter estimator as Opt. We plot sampling distributions of each estimator along with averaged coverage and lengths of confidence intervals obtained from the plug-in procedure. The simulations randomize the assignment variables $W_i^\tB,W_j^\tS$ over 5000 Monte Carlo runs. Overall, we corroborate the \textit{no-harm} property of the optimal adjustment -- in large samples, the method is guaranteed to be at least as efficient as the other methods. In some settings, the improvement can be substantial. Throughout this section, we will roughly refer to an experiment as \textit{imbalanced} if the ratios $I_T/I, J_T/J$ are small and \textit{balanced} if $I_T/I \approx J_T/J \approx 0.5$.

We first consider a toy synthetic example, which is also the setting of \cref{fig:intro_plots} in the introduction. The example considers $Y_{ij}(\gamma) \sim \textsf{N}(\mu_\gamma,1)$ from \cref{ex:iid_normal_superpop}. We take $(\mu_\tr,\mu_\ib,\mu_\is,\mu_\cc) = (5,2,2,1)$ and three specifications for the number of treated buyers and sellers, where $\set{10\%, 20\%, 50\%}$ of the buyers and sellers are treated. In this semisynthetic simulation, we provide a four-dimensional covariate where the coordinates are noisy versions of $Y_{ij}(\gamma), \gamma \in \Gamma$ and the target is the direct effect estimand. Although unrealistic, this example showcases the setting of \cref{ex:inference_for_direct_effect} where the theory of \cite{masoero2024multiple} does not apply. The results are displayed in \cref{fig:normal_data_direct_effect_sampling_distribution,fig:normal_data_direct_effect_ci}. 

Opt outperforms both methods in all cases, delivering substantial gains in efficiency for imbalanced experiments. For a balanced experiment, Opt and ANCOVA perform very similarly. For imbalanced experiments, we also find that the ANCOVA procedure has \textit{larger variance} than the unadjusted procedure, extending the findings of \cite{freedman2008regressionA, freedman2008regressionB} where the naive regression adjustment might worsen asymptotic precision in cross-sectional experiments. All methods overcover the target, with Opt continuing to deliver the shortest confidence intervals.

\begin{figure}
    \centering
    \includegraphics[width=\linewidth]{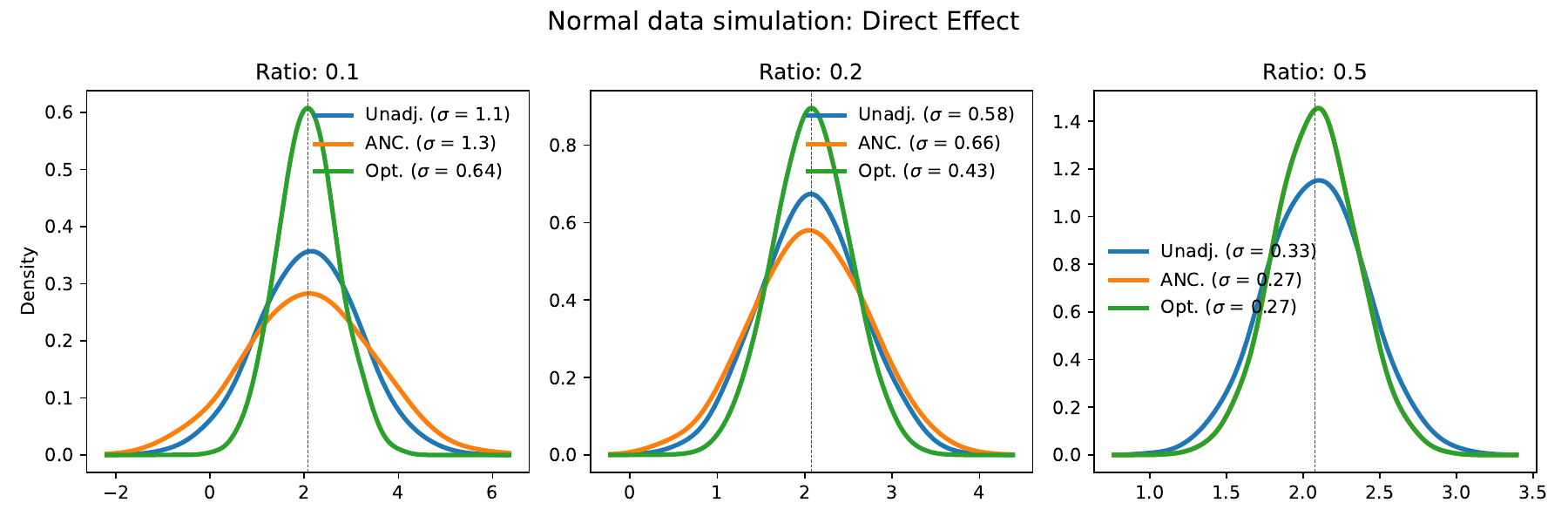}
    \caption{Sampling distributions for Unadjusted, ANCOVA, and Optimal Non-Interacted adjustments over 5000 Monte Carlo samples, on a fixed realization of the i.i.d. normal data generating process of \cref{ex:iid_normal_superpop} with parameters described in the text. The direct effect is the estimand. Monte Carlo randomizes over $W_i^\tB, W_j^\tS$. The three panels correspond to three scenarios for ratio $I_T/I = J_T/J$, $\set{0.1,0.2,0.5}.$  Legends show estimated standard deviations of each method.}
    \label{fig:normal_data_direct_effect_sampling_distribution}
\end{figure}

\begin{figure}
    \centering
    \includegraphics[width=\linewidth]{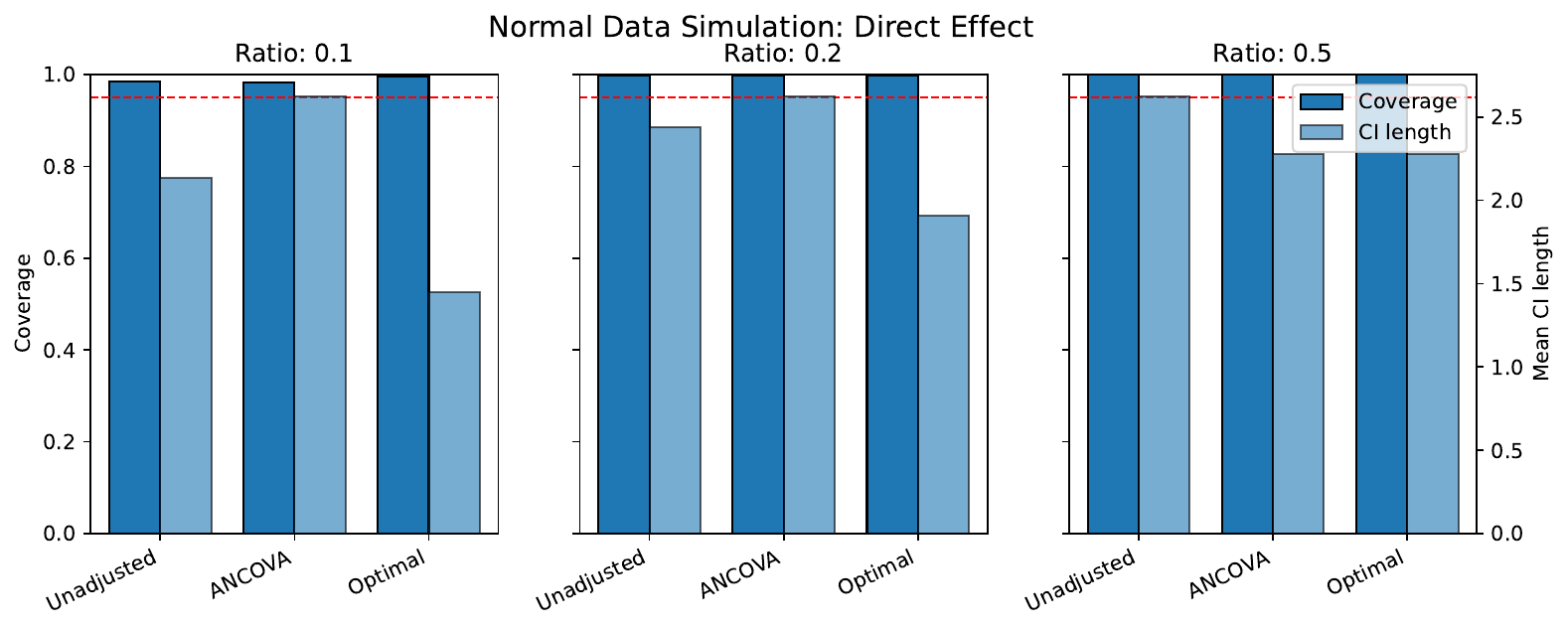}
    \caption{Coverage and CI length for Unadjusted, ANCOVA, and Optimal Non-Interacted adjustments averaged over 5000 Monte Carlo samples, on a fixed realization of the i.i.d. normal data generating process of \cref{ex:iid_normal_superpop} with parameters described in the text. The direct effect is the estimand. Parameters are described in the text. Monte Carlo randomizes over $W_i^\tB, W_j^\tS$. The three panels correspond to three scenarios for ratio $I_T/I = J_T/J$, $\set{0.1,0.2,0.5}.$ }
    \label{fig:normal_data_direct_effect_ci}
\end{figure}

In the second set of simulations, we instantiate a more realistic synthetic model taken from Example 2.5 of \cite{masoero2024multiple}. The example models a marketplace of content creators and advertisers, indexed by $i,j$ respectively. Each creator $i$ produces content with \textit{score} $q_i^c$ and each advertiser places ads with corresponding quality $q_j^a$. Each creator-advertiser pair generates revenue $y_{ij}$ equal to $m_{ij}(q_i^c + q_j^a)$ for some notion of compatibility $m_{ij}$. For some pre-negotiated parameters $r_i^c,r_j^a$, creators and advertisers are compensated $r_i^cy_{ij}, r_j^ay_{ij}$ respectively with the rest going to the platform. A binary intervention $\mathbf{w}$ is tested which allocates a subsidy $\eta$, affecting revenue as follows:
\[
y_{ij}(\mathbf{w}) = (m_{ij} + \eta w_{ij}) \set{q_i^c(\mathbf{w}) + q_j^a(\mathbf{w})}
\]
A simple model of incentives for both creators and advertisers leads to a Nash equilibrium where the revenue $y_{ij}(\mathbf{w})$ satisfies the local interference condition (see Example 2.5 of \cite{masoero2024multiple}). We consider nearly the same parameters as in Section 5 of that work, with $\eta = 5$, i.i.d. parameters $m_{ij} \sim \text{Exp}(1)$, and $r_i^c, r_j^a \sim \text{Unif}[0,0.2]$ across creators and advertisers
$i \in [I], j \in [J]$. We take $I = 200, J = 150$ and consider the covariate vector $X_{ij} = (\hat m_{ij}r_i^c, \hat m_{ij}r_j^a) \in \bR^2.$ We assume the planner has access to a noisy observation of the compatibility $m$, given by $\hat{m}_{ij} \sim m_{ij}(1 + \textsf{N}(0,0.1))$. Three specifications are considered:  $(I_T, J_T) = \set{(20, 15), (40, 30), (100, 75)}.$ We will test our methods for the buyer spillover effect in this example. The results are displayed in \cref{fig:marketplace_ib_effect_sampling_distribution,fig:marketplace_ib_effect_ci}.

\begin{figure}[H]
    \centering
    \includegraphics[width=\linewidth]{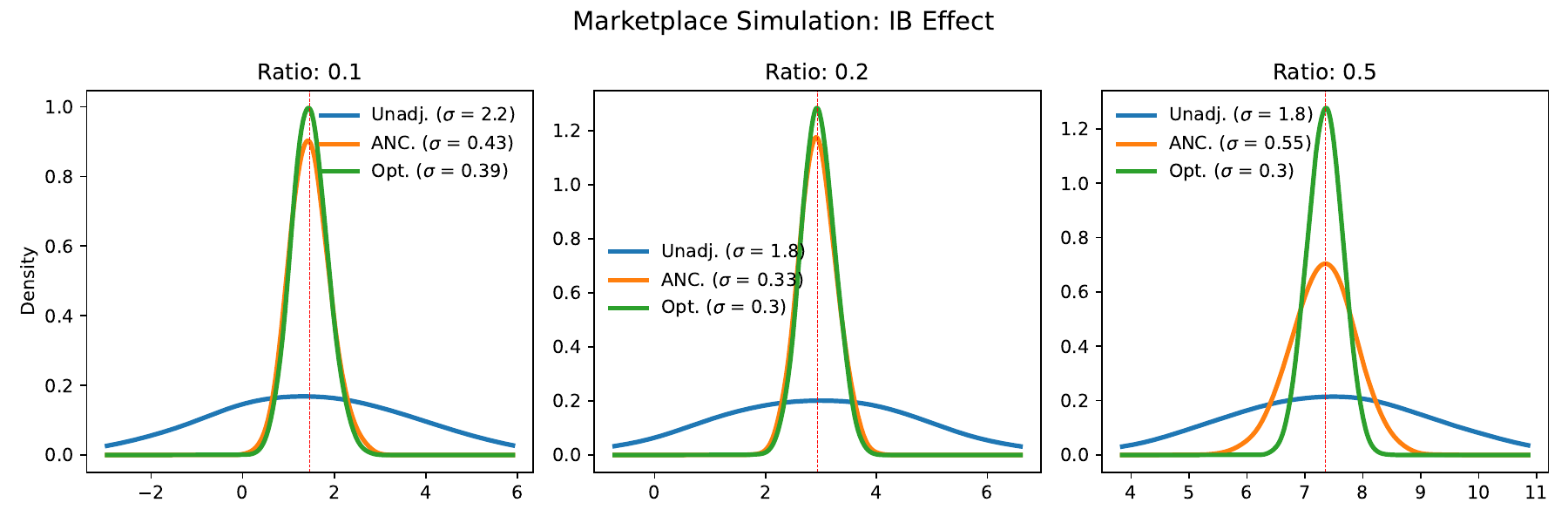}
    \caption{Sampling distributions for Unadjusted, ANCOVA, and Optimal Non-Interacted adjustments over 5000 Monte Carlo samples, on a fixed realization of the marketplace data generating process of \cite{masoero2024multiple}. The buyer spillover effect is the estimand. Parameters are described in the text. }
    \label{fig:marketplace_ib_effect_sampling_distribution}
\end{figure}

\begin{figure}[H]
    \centering
    \includegraphics[width=\linewidth]{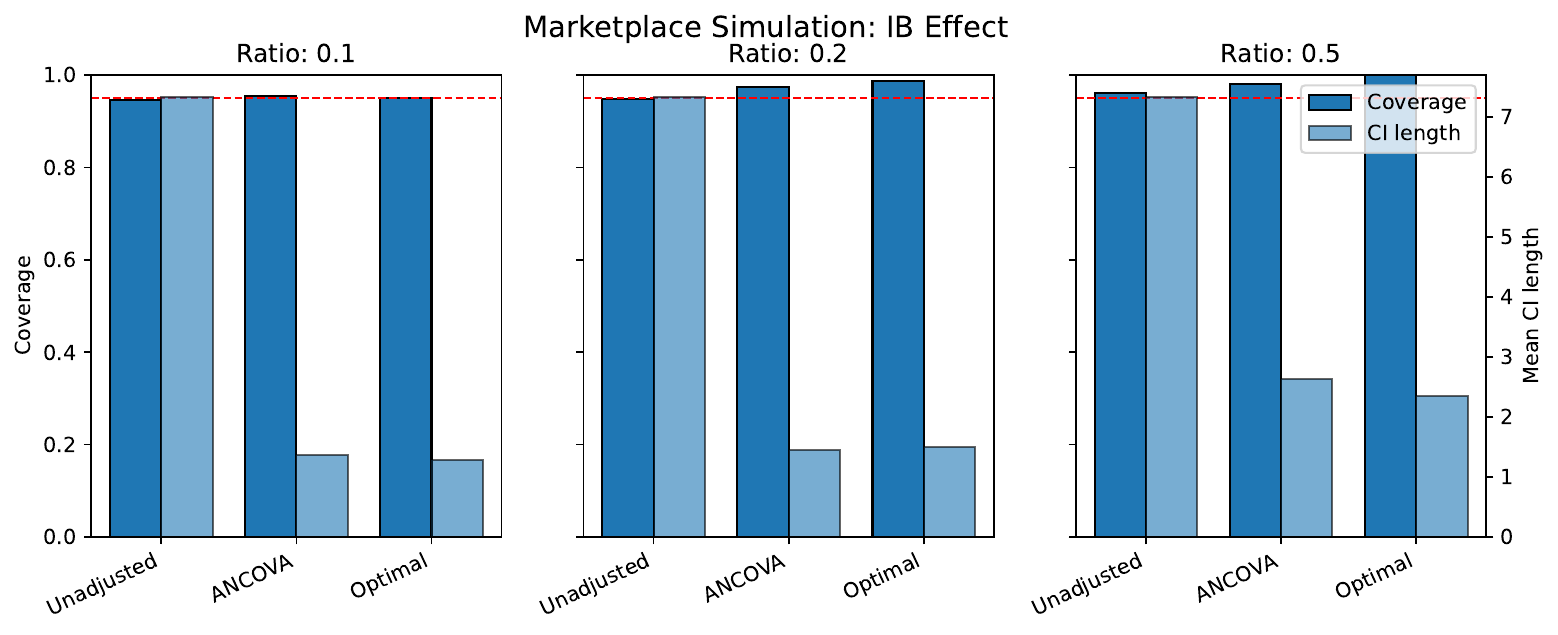}
    \caption{Coverage and CI length for Unadjusted, ANCOVA, and Optimal Non-Interacted adjustments averaged over 5000 Monte Carlo samples, on a fixed realization of the marketplace data generating process of \cite{masoero2024multiple}. The buyer spillover effect is the estimand. Parameters are described in the text. }
    \label{fig:marketplace_ib_effect_ci}
\end{figure}

Both methods for regression adjustment significantly outperform the unadjusted method in all metrics.  ANCOVA and Opt perform similarly in imbalanced experiments, with Opt being slightly more efficient. Opt significantly outperforms ANCOVA for balanced experiments in efficiency. However, the reduction in confidence interval length is not as large as the reduction in variance. It may be fruitful to consider optimal regression adjustments which optimize $\beta$ to minimize the length of the conservative confidence interval, but this is at odds with the natural objective of finding a minimum variance estimator.

Inspecting the two simulation settings, Opt significantly outperforms ANCOVA for different ratios of treatment and control, in different data generating processes. Although a theoretical analysis could be done to analyze when this occurs, the no-harm principle seems robust enough to practically recommend Opt over ANCOVA. In practice, we also recommend some attention to the choice of covariates. Notice that the direct effect estimator \eqref{eq:imputation_estimators} cannot use covariates which are constant across either buyers or sellers; similarly, the imputation estimator for the buyer and seller spillovers cannot take advantage of covariates which are the same across sellers and buyers, respectively.

\section{Discussion}
\label{sec:discussion}

In this work, we propose optimal regression adjustments for estimands in multiple randomization designs. These adjustments are model-robust and satisfy no-harm principles which guarantee improvement over other baselines such as ANCOVA or unadjusted MRD estimators. We establish central limit theorems for these optimal regression adjustments as well as methods for conservative inference. These results build on improved tools that also strengthen the results in \cite{masoero2024multiple}. Our tools can also be used to analyze optimal interacted regression adjustments; we provide formulas and numerical results for these interacted adjustments in \cref{sec:interacted_estimators}, but leave detailed theoretical analysis for future work.

\subsection{Directions for Future Research}

\paragraph{Further Questions on MRDs.}
As discussed in the introduction, a full investigation of the scaling limits of MRDs would constitute a substantial body of future research. Investigating all reasonable asymptotic limits for doubly randomized sums, like in \cref{thm:improved_clt}, would be a first step. For example, it is possible to have non-normal limits. Let $Y_{ij}$ be an $I \times I$ square matrix and equal to a rank one matrix $xx^\top$ for some vector $x \in \bR^I$. The doubly randomized sum can be factored into 
\[
\edavg{Y} = \left( \frac{1}{I_T}\sum_{i=1}^I W_i^\tB x_i \right) \left( \frac{1}{J_T}\sum_{j=1}^J W_j^\tS x_j \right).
\]
When $\bar{x} = 0,$ one can show that the last term in the bound of \cref{thm:improved_clt} fails to vanish and $\edavg{Y}$ converges weakly to the product of two independent normals. The situation bears some resemblance to the theory of second-order $U$-statistics. Characterizing all possible distributional limits and developing methods for uniform inference would be an interesting direction to explore. In a related but different setting, \cite{bhattacharya2024fluctuations} characterizes the possible distributional limits of \textit{quadratic chaos}: quantities of the form $\sum_{i,j} a_{i,j} X_i X_j$ where $a_{ij}$ is a symmetric $\set{0,1}$-valued symmetric matrix and $X_1,\dots,X_n$ are iid random variables with zero mean and variance $1$. It is shown that possible limits are independent sums of a Gaussian, a weighted sum of independent chi-squared random variables, and a Gaussian mixture with random variance. It is possible a similar decomposition holds for doubly randomized sums.

\paragraph{Parallels and Differences with the Cross-Sectional Case.} As we emphasize throughout the paper, several of our findings echo results in the cross-sectional case. However, several questions remain. \cite{lin2013agnostic} shows that the Huber-White standard errors in the cross-sectional interacted linear regression serve as conservative variance estimators suitable for inference. Is it possible to extend this result to the interacted TWFE for the direct effect? Secondly, \cite{lin2013agnostic} shows that the tyranny-of-minority and interacted regressions are both asymptotically the same. Preliminary investigation suggests that both analogues are false in the MRD setting for the direct effect, but further work would be interesting. Separately, \cite{lei2021regression} explore regression adjustments in a high-dimensional setting using debiasing techniques, which one can extend to the MRD setting.  Nonlinear regression adjustment strategies are also important here. Some outcomes in multiple randomization designs are naturally modelled as Poisson or logistic regressions. Extending the results of \cite{guo2023generalized} and \cite{cohen2024no} would be valuable.

\paragraph{Optimized Conservative Variance Bounds.} It would be interesting to find variance estimators which are less conservative than those proposed in \cref{sec:improvements}. For example, analyzing the direct effect formula in the appendix suggests alternative ways to derive conservative variance bounds, using Cauchy-Schwarz and throwing away negative covariance terms. \cite{harshaw2021optimized} provides an optimization approach to finding least-conservative variance estimators in cross-sectional experiments. Pursuing similar ideas could yield sharper inferences in multiple randomization designs.

\bibliographystyle{alpha}
\bibliography{main}

\appendix

\section{Variance and Covariance Formulas for MRD Estimators}
\label{sec:variance_formulas}

This section records the results of \cite{masoero2024multiple}, with some inconsequential differences in definitions, for example switching $1/I$ with $1/(I-1)$. As we work under an asymptotic framework, we will interchange $1/I$ and $1/(I-1)$. Firstly, we define the demeaned treatment indicators 
\begin{equation}
    D_i^\tB = W_i^\tB  - \frac{I_T}{I}, \text{ and } D_j^\tS = W_j^\tS  - \frac{J_T}{J}.
\end{equation}
Lemma A.3 of \cite{masoero2024multiple} shows that $D_i^\tB,D_j^\tS$ are mean zero and characterizes their variance and covariance. Recall the notation
\[
\dbar{y}_\gamma := \frac{1}{IJ}\sum_{i=1}^I \sum_{j=1}^J y_{ij}(\gamma)
\]
Further, define the following average-like quantities:
\begin{align*}
\overline{y}_{i,\bullet}(\gamma) & := \frac{1}{J}\sum_{j=1}^J y_{ij}(\gamma), \  \overline{y}_{\bullet,j}(\gamma) := \frac{1}{I}\sum_{i=1}^I y_{ij}(\gamma)
\\
\delta_{i}^\tB(\gamma) & := \overline{y}_{i,\bullet}(\gamma) - \bar{\bar{y}}_\gamma, \ \delta_{j}^\tS(\gamma) := \overline{y}_{\bullet,j}(\gamma) - \bar{\bar{y}}_\gamma \\
\delta_{ij}^{\tBS}(\gamma) & := y_{ij}(\gamma) - \overline{y}_{i,\bullet}(\gamma) - \overline{y}_{\bullet,j}(\gamma) + \bar{\bar{y}}_\gamma
\end{align*}
and the following variance-like quantities: 
\begin{align}
\nu_{\gamma}^{\tB} & := \frac{\sum_{i=1}^I[\delta_i^{\tB}(\gamma)]^2}{I-1}, \ \nu_{\gamma}^{\tS} := \frac{\sum_{j=1}^J[\delta_j^{\tS}(\gamma)]^2}{J-1} \\
\nu_{\gamma}^{\tBS} & := \frac{\sum_{i=1}^I\sum_{j=1}^J[\delta_{ij}^{\tBS} (\gamma)]^2}{(I-1)(J-1)} \\
\xi_{\gamma,\gamma'}^{\tB} & := \frac{\sum_{i=1}^I[\delta_i^{\tB}(\gamma) - \delta_i^{\tB}(\gamma')]^2}{I-1}, \ \xi_{\gamma,\gamma'}^{\tS} := \frac{\sum_{j=1}^J[\delta_j^{\tS}(\gamma) - \delta_j^{\tS}(\gamma')]^2}{J-1} \\ 
\xi_{\gamma,\gamma'}^{\tBS} & := \frac{\sum_{i=1}^I\sum_{j=1}^J[\delta_{ij}^{\tBS} (\gamma) - \delta_{ij}^{\tBS} (\gamma')]^2}{(I-1)(J-1)}.
\end{align}
Notice that the $\xi$ quantities are all zero when $\gamma = \gamma'.$ Next, we define the average residuals for each $\gamma \in \set{\cc,\ib,\is,\tr}:$
\begin{align}
    \overline{\e}_\gamma^{\tB} & = \frac{1}{I_\gamma}\sum_{i=1}^I D_i^{\tB} \delta_i^{\tB}(\gamma) \\
    \overline{\e}_\gamma^{\tS} & = \frac{1}{J_\gamma}\sum_{j=1}^J D_j^{\tS} \delta_j^{\tS}(\gamma) \\
    \dbar{\e}_\gamma^{\tBS} & = \frac{1}{I_\gamma J_\gamma}\sum_{i,j} D_i^{\tB} D_j^{\tS}\delta_{ij}^{\tBS}(\gamma).
\end{align}
Because $\delta_i^{\tB}, \delta_j^{\tS}$ sum to zero and $\delta_{ij}^{\tBS}$ has zero row and column sums, the quantities remain unchanged if we replace $D_i^\tB,D_j^\tS$ by $W_i^\tB,W_j^\tS$. The groupwise average estimators can be decomposed in terms of the average residuals, as the next lemma shows.

\begin{lemma}[Hoeffding Decomposition of Groupwise Averages, Lemma A.4 of \cite{masoero2024multiple}]
\label{lemma:hoeffding_decomposition}
\begin{align*}
\esttr & = \dbar{y}_{\tr} + \overline{\e}_{\tr}^{\tB} + \overline{\e}_{\tr}^{\tS} + \dbar{\e}_{\tr}^{\tBS} \\
\estib & = \dbar{y}_{\ib} + \overline{\e}_{\ib}^{\tB} - \overline{\e}_{\ib}^{\tS} - \dbar{\e}_{\ib}^{\tBS} \\
\estis & = \dbar{y}_{\is} - \overline{\e}_{\is}^{\tB} + \overline{\e}_{\is}^{\tS} - \dbar{\e}_{\is}^{\tBS} \\
\estcc & = \dbar{y}_{\cc} - \overline{\e}_{\cc}^{\tB} - \overline{\e}_{\cc}^{\tS} + \dbar{\e}_{\cc}^{\tBS} 
\end{align*}
Moreover, the residuals are mean zero for each $\gamma$ and uncorrelated between different groups:
\begin{align*}
    \E[\overline{\e}_{\gamma}^{\tB}] & = \E[\overline{\e}_{\gamma}^{\tS}] = \E\left[\dbar{\e}_{\gamma}^{\tBS} \right] = 0,\\
    \Cov(\overline{\e}_{\gamma}^{\tB},\overline{\e}_{\gamma'}^{\tS}) & = \Cov(\overline{\e}_{\gamma}^{\tB},\dbar{\e}_{\gamma'}^{\tBS}) = \Cov(\overline{\e}_{\gamma}^{\tS},\dbar{\e}_{\gamma'}^{\tBS})=0.
\end{align*}
\end{lemma}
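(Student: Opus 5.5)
The plan is to decompose each potential outcome matrix additively and then read off each groupwise average term by term, using only that the buyer and seller assignments are independent and that each $D$ has mean zero. For each $\gamma$ I would start from the exact two-way ANOVA identity
\[
y_{ij}(\gamma) = \dbar{y}_\gamma + \delta_i^\tB(\gamma) + \delta_j^\tS(\gamma) + \delta_{ij}^\tBS(\gamma).
\]
It holds by the definitions, with $\sum_i \delta_i^\tB = \sum_j \delta_j^\tS = 0$ and with $\delta^\tBS$ having zero row and column sums.

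\textbf{Step 1: the $\tr$ group.} Here $\esttr = \frac{1}{I_T J_T}\sum_{i,j} W_i^\tB W_j^\tS y_{ij}(\tr)$. Substituting the identity, I would treat the four pieces separately:
\begin{itemize}
\item The constant contributes $\dbar{y}_\tr$, since $\sum_{i,j} W_i^\tB W_j^\tS = I_T J_T$.
\item The buyer piece gives $\frac{1}{I_T J_T}\sum_i W_i^\tB \delta_i^\tB \sum_j W_j^\tS = \frac{1}{I_T}\sum_i W_i^\tB\delta_i^\tB$.
\item The seller piece is symmetric.
\item The interaction piece is $\frac{1}{I_TJ_T}\sum_{i,j} W_i^\tB W_j^\tS \delta^\tBS_{ij}$.
\end{itemize}
Because the $\delta$'s are centered (respectively doubly centered), each $W$ can be replaced by the corresponding $D$ without changing the value. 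This yields $\overline{\e}^\tB_\tr + \overline{\e}^\tS_\tr + \dbar{\e}^\tBS_\tr$.

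\textbf{Step 2: the other three groups.} I would repeat Step 1 with the indicator $1-W_j^\tS$ (for sellers in control) and/or $1-W_i^\tB$ (for buyers in control), using the normalizations $J_C$ and $I_C$. The key observation is that centering converts a control indicator into a sign flip. Concretely,
\[
\sum_j (1-W_j^\tS)\delta_j^\tS = -\sum_j W_j^\tS \delta_j^\tS = -\sum_j D_j^\tS \delta_j^\tS.
\]
Similarly, the row- and column-centering of $\delta^\tBS$ gives
\[
\sum_{i,j} W_i^\tB(1-W_j^\tS)\delta^\tBS_{ij} = -\sum_{i,j} W_i^\tB W_j^\tS\delta^\tBS_{ij},
\]
and
\[
\sum_{i,j} (1-W_i^\tB)(1-W_j^\tS)\delta^\tBS_{ij} = +\sum_{i,j} W_i^\tB W_j^\tS\delta^\tBS_{ij}.
\]
With the sums normalized by $I_\gamma$, $J_\gamma$, these identities produce exactly the sign patterns stated for $\ib$, $\is$ and $\cc$. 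The bookkeeping of which normalization ($I_T$ versus $I_C$) goes with which term is the only place an error could creep in, so I would check it carefully.

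\textbf{Step 3: moments.} I would use that $W^\tB$ and $W^\tS$ are independent and that $\E D_i^\tB = \E D_j^\tS = 0$.
\begin{itemize}
\item Each residual is mean zero: the buyer and seller terms are linear in a mean-zero $D$. For the interaction term, $\E[D_i^\tB D_j^\tS] = \E D_i^\tB\,\E D_j^\tS = 0$.
\item The buyer-type and seller-type residuals are functions of independent vectors and are mean zero, so their covariance vanishes.
\item For a buyer residual against an interaction residual, expand
\[
\Cov\Big(\sum_i a_i D_i^\tB,\ \sum_{k,l} b_{kl}D_k^\tB D_l^\tS\Big) = \sum_{i,k,l} a_i b_{kl}\,\E[D_i^\tB D_k^\tB]\,\E[D_l^\tS] = 0.
\]
The same expansion handles a seller residual against an interaction residual.
\end{itemize}
No real obstacle arises beyond this sign and normalization bookkeeping.
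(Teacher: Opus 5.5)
Your proof is correct. Substituting the two-way ANOVA decomposition and using the centering of the $\delta$'s turns each control indicator into a sign flip, with the normalizations $I_\gamma, J_\gamma$ matching exactly. Independence of the buyer and seller assignments, together with the mean-zero demeaned indicators, then gives all the moment claims. The paper gives no proof of its own and simply imports the result as Lemma A.4 of \cite{masoero2024multiple}; your direct verification is the standard argument behind that lemma, so nothing is missing.
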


\begin{proposition}[Generic Variance Formula]
\label{prop:generic_variance_formula}
\[
\text{Var}_{\gamma} := \Var(\estgen_{\gamma}) = \frac{I - I_\gamma}{I_\gamma I}\nu^{\tB}_\gamma + \frac{J - J_\gamma}{J_\gamma J}\nu^{\tS}_\gamma + \frac{I - I_\gamma}{I_\gamma I}\frac{J - J_\gamma}{J_\gamma J}\nu^{\tBS}_\gamma 
\]
\end{proposition}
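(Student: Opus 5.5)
The plan is to use the Hoeffding decomposition of \cref{lemma:hoeffding_decomposition}. It writes $\estgen_\gamma - \dbar{y}_\gamma$ as $\pm\overline{\e}^{\tB}_\gamma \pm \overline{\e}^{\tS}_\gamma \pm \dbar{\e}^{\tBS}_\gamma$, and the three residuals are pairwise uncorrelated. The signs are irrelevant once we square, so
\[
\Var(\estgen_\gamma) = \Var(\overline{\e}^{\tB}_\gamma) + \Var(\overline{\e}^{\tS}_\gamma) + \Var(\dbar{\e}^{\tBS}_\gamma),
\]
and it remains to compute each of the three terms separately.

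For the buyer term, write $\overline{\e}^{\tB}_\gamma = \frac{1}{I_\gamma}\sum_i W^\gamma_i \delta^{\tB}_i(\gamma)$. Here $W^\gamma_i$ is the buyer indicator of group $\gamma$, which equals $W_i^\tB$ or $1-W_i^\tB$. In either case it is a completely randomized indicator selecting $I_\gamma$ of the $I$ buyers, and we use that the $\delta^{\tB}_i$ sum to zero. The classical variance of a sample mean under sampling without replacement then gives
\[
\Var(\overline{\e}^{\tB}_\gamma) = \frac{I-I_\gamma}{I I_\gamma}\,\frac{1}{I-1}\sum_i \delta^{\tB}_i(\gamma)^2 = \frac{I-I_\gamma}{I_\gamma I}\nu^{\tB}_\gamma.
\]
The seller term is identical with $J$ in place of $I$.

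For the interaction term, $\dbar{\e}^{\tBS}_\gamma = \frac{1}{I_\gamma J_\gamma}\sum_{i,j} W^\gamma_i W^\gamma_j \delta^{\tBS}_{ij}(\gamma)$. Because buyer and seller assignments are independent, we expand the second moment as
\[
\frac{1}{I_\gamma^2 J_\gamma^2}\sum_{i,i',j,j'} \E[W^\gamma_i W^\gamma_{i'}]\,\E[W^\gamma_j W^\gamma_{j'}]\,\delta^{\tBS}_{ij}\delta^{\tBS}_{i'j'}.
\]
Under complete randomization, $\E[W_iW_{i'}]$ takes one value $p_1 = I_\gamma/I$ on the diagonal and another value $p_2 = I_\gamma(I_\gamma-1)/(I(I-1))$ off it. So the buyer covariance matrix has the form $a\,\mathrm{Id} + b\,\mathbf{1}\mathbf{1}^\top$, and the seller matrix has the same form. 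The zero row and column sums of $\delta^{\tBS}$ kill every $\mathbf{1}\mathbf{1}^\top$ contribution, leaving
\[
\frac{(p_1-p_2)^{\tB}(p_1-p_2)^{\tS}}{I_\gamma^2J_\gamma^2}\sum_{ij}(\delta^{\tBS}_{ij})^2 .
\]
Using $p_1-p_2 = \frac{I_\gamma(I-I_\gamma)}{I(I-1)}$, this becomes $\frac{I-I_\gamma}{I_\gamma I}\frac{J-J_\gamma}{J_\gamma J}\nu^{\tBS}_\gamma$. The mean is zero, so this second moment is the variance.

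The only delicate step is this last cancellation. It requires the double-decentering identity $\sum_i \delta^{\tBS}_{ij} = \sum_j\delta^{\tBS}_{ij}=0$, applied after factoring the expectation by independence of $\cl{B}$ and $\cl{S}$. Everything else is bookkeeping, including the $1/I$ versus $1/(I-1)$ conventions, which the paper treats as interchangeable.
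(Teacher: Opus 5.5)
Your proposal is correct and follows the same route as the paper. The Hoeffding decomposition of \cref{lemma:hoeffding_decomposition} reduces the variance to the sum of the three residual variances, and each is then computed by expanding the square under sampling without replacement. Your $a\,\mathrm{Id}+b\,\mathbf{1}\mathbf{1}^\top$ argument for the interaction term, where the zero row and column sums of $\delta^{\tBS}$ remove every rank-one contribution, is the computation the paper performs by expanding the square (citing Lemma A.7 of \cite{masoero2024multiple}), and your constant $p_1-p_2=\frac{I_\gamma(I-I_\gamma)}{I(I-1)}$ reproduces its coefficient exactly.
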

\begin{proof}
By the Hoeffding decomposition above, the variance of $\estgen_\gamma$ is simply $\Var(\overline{\e}_{\gamma}^{\tB}) + \Var(\overline{\e}_{\gamma}^{\tS}) + \Var(\dbar{\e}_{\gamma}^{\tBS})$, and so we must calculate the variance of each term. By the computation in Lemma A.7 of \cite{masoero2024multiple}, we expand out the square in $\overline{\e}_{\gamma}^{\tB}$ and cancel terms using the fact that $\sum_i^I \delta_i^{\tB}(\gamma) = 0.$ Carrying out the details leads to the conclusion
\begin{align*}    
\Var_\gamma^{\tB} & = \frac{1}{I_\gamma^2} \frac{(I - I_\gamma)I_\gamma}{I^2(I-1)} \sum_{i=1}^I \delta_i^{\tB}(\gamma)^2 + \frac{1}{I_\gamma^2} \frac{(I - I_\gamma)I_\gamma}{I^2} \sum_{i=1}^I \delta_i^{\tB}(\gamma)^2 \\
& = \frac{1}{I_\gamma^2} \frac{(I - I_\gamma)I_\gamma}{I(I-1)} \sum_{i=1}^I \delta_i^{\tB}(\gamma)^2 \\
& = \frac{I - I_\gamma}{I_\gamma I} \nu_{\gamma}^{\tB}.
\end{align*}
The quantity $\Var(\overline{\e}_{\gamma}^{\tS})$ gives the analogous result by the same computation:
\[
\Var_\gamma^{\tS} = \frac{J - J_\gamma}{J_\gamma J} \nu_{\gamma}^{\tS}.
\]
Finally the doubly decentered term $\dbar{\e}_{\gamma}^{\tBS}$ is handled in a similar fashion by expanding out the square.
\begin{align*}
\Var_{\gamma}^{\tBS} & = \frac{1}{I_\gamma^2J_\gamma^2} \frac{I_\gamma J_\gamma(I - I_\gamma)(J - J_\gamma) }{I^2J^2} \frac{IJ}{(I-1)(J-1)} \sum_{i=1}^I \sum_{j=1}^J \delta_{ij}^{\tBS}(\gamma)^2 \\
& = \frac{I-I_\gamma}{I_\gamma I}\frac{J-J_\gamma}{J_\gamma J} \nu_{\gamma}^{\tBS}.
\end{align*}
\end{proof}

\begin{corollary}[Explicit Variance Formulas]
\begin{align*}
\Var_{\tr} & =  \frac{I_C}{I_T I}\nu^{\tB}_\tr + \frac{J_C}{J_T J}\nu^{\tS}_\tr + \frac{I_C}{I_T I}\frac{J_C}{J_T J}\nu^{\tBS}_\tr \\    
\Var_{\ib} & =  \frac{I_C}{I_T I}\nu^{\tB}_\ib + \frac{J_T}{J_C J}\nu^{\tS}_\ib + \frac{I_C}{I_T I}\frac{J_T}{J_C J}\nu^{\tBS}_\ib \\    
\Var_{\is} & =  \frac{I_T}{I_C I}\nu^{\tB}_\is + \frac{J_C}{J_T J}\nu^{\tS}_\is + \frac{I_T}{I_C I}\frac{J_C}{J_T J}\nu^{\tBS}_\is \\    
\Var_{\cc} & =  \frac{I_T}{I_C I}\nu^{\tB}_\cc + \frac{J_T}{J_C J}\nu^{\tS}_\cc + \frac{I_T}{I_C I}\frac{J_T}{J_C J}\nu^{\tBS}_\cc.  
\end{align*}
 
\end{corollary}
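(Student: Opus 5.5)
The approach is to substitute the explicit variance formula of \cref{prop:generic_variance_formula} separately for each of the four groups $\gamma \in \set{\tr,\ib,\is,\cc}$. Which group is meant fixes $(I_\gamma, J_\gamma)$ as either $(I_T,J_T)$ or $(I_C,J_C)$ in each coordinate.

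\begin{enumerate}
\item For the buyer coordinate:
\begin{itemize}
\item $\tr$ and $\ib$ have $I_\gamma = I_T$, so $I - I_\gamma = I_C$ and the factor is $\frac{I_C}{I_T I}$.
\item $\is$ and $\cc$ have $I_\gamma = I_C$, so $I - I_\gamma = I_T$ and the factor is $\frac{I_T}{I_C I}$.
\end{itemize}
\item For the seller coordinate:
\begin{itemize}
\item $\tr$ and $\is$ have $J_\gamma = J_T$, giving $\frac{J_C}{J_T J}$.
\item $\ib$ and $\cc$ have $J_\gamma = J_C$, giving $\frac{J_T}{J_C J}$.
\end{itemize}
\item The $\nu^{\tBS}_\gamma$ coefficient in \cref{prop:generic_variance_formula} is the product of the buyer and seller factors. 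Multiplying the factors from the first two steps for each group reproduces the four stated lines.
\end{enumerate}

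There is no real obstacle: the statement is a pure bookkeeping specialization. The only care needed is to use the definitions of the groups consistently. Here $\ib$ means the buyer is treated and the seller is control, and $\is$ means the reverse. This is what fixes which of $I_T, I_C$ (respectively $J_T, J_C$) appears in each group's factors.
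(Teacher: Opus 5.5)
Your proposal is correct and matches the paper: the corollary is the specialization of \cref{prop:generic_variance_formula} with $(I_\gamma, J_\gamma)$ read off from the group definitions, using $I - I_T = I_C$ and $J - J_T = J_C$. Your case assignments ($\tr$ and $\ib$ have $I_\gamma = I_T$; $\tr$ and $\is$ have $J_\gamma = J_T$) are right and reproduce all four lines.
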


\begin{proposition}[Generic Covariance Formula]
\label{prop:generic_covariance_formula}
Let $\eta_{\gamma,\gamma'}^\tB \in \set{-1,+1}$ be $+1$ if groups $\gamma, \gamma'$ share the same rows, and $-1$ otherwise. Similarly, let $\eta_{\gamma,\gamma'}^\tS \in \set{-1,+1}$ be $+1$ if groups $\gamma, \gamma'$ share the same columns, and $-1$ otherwise. Then
\begin{align*}
\Cov_{\gamma,\gamma'} := \Cov(\estgen_{\gamma},\estgen_{\gamma'}) & =  \eta_{\gamma,\gamma'}^\tB \frac{I_CI_T}{I_\gamma I_{\gamma'}}\frac{1}{2I}\left(\nu^{\tB}_\gamma + \nu^{\tB}_{\gamma'} - \xi^{\tB}_{\gamma,\gamma'}\right) \\
 + &  \eta_{\gamma,\gamma'}^\tS \frac{J_CJ_T}{J_\gamma J_{\gamma'}}\frac{1}{2J} \left(\nu^{\tS}_\gamma + \nu^{\tS}_{\gamma'} - \xi^{\tS}_{\gamma,\gamma'} \right) \\
 + & \eta_{\gamma,\gamma'}^\tB \eta_{\gamma,\gamma'}^\tS \frac{I_CI_TJ_CJ_T}{I_\gamma J_\gamma I_{\gamma'}J_{\gamma'}}\frac{1}{2IJ} \left(\nu^{\tBS}_\gamma + \nu^{\tBS}_{\gamma'} - \xi^{\tBS}_{\gamma,\gamma'}\right)
\end{align*}
\end{proposition}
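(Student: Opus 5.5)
We reduce the covariance to the Hoeffding decomposition of \cref{lemma:hoeffding_decomposition}. Write $\estgen_\gamma - \dbar{y}_\gamma = s^\tB_\gamma \overline{\e}^\tB_\gamma + s^\tS_\gamma \overline{\e}^\tS_\gamma + s^\tB_\gamma s^\tS_\gamma \dbar{\e}^\tBS_\gamma$. Here $s^\tB_\gamma = +1$ if the rows of $\gamma$ are the treated buyers and $-1$ otherwise, and similarly for $s^\tS_\gamma$. This sign convention is exactly the one displayed in the lemma; for instance, $\ib$ has signs $(+,-,-)$. Because the lemma asserts that cross terms of different types ($\tB$ vs.\ $\tS$, $\tB$ vs.\ $\tBS$, $\tS$ vs.\ $\tBS$) are uncorrelated, we get
\[
\Cov_{\gamma,\gamma'} = s^\tB_\gamma s^\tB_{\gamma'}\Cov(\overline{\e}^\tB_\gamma,\overline{\e}^\tB_{\gamma'}) + s^\tS_\gamma s^\tS_{\gamma'}\Cov(\overline{\e}^\tS_\gamma,\overline{\e}^\tS_{\gamma'}) + s^\tB_\gamma s^\tB_{\gamma'} s^\tS_\gamma s^\tS_{\gamma'}\Cov(\dbar{\e}^\tBS_\gamma,\dbar{\e}^\tBS_{\gamma'}).
\]
We have $s^\tB_\gamma s^\tB_{\gamma'} = \eta^\tB_{\gamma,\gamma'}$ and $s^\tS_\gamma s^\tS_{\gamma'}=\eta^\tS_{\gamma,\gamma'}$, which produces the sign pattern in the statement. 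It remains to compute three same-type covariances.

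For the buyer term, $\overline{\e}^\tB_\gamma = I_\gamma^{-1}\sum_i D_i^\tB \delta_i^\tB(\gamma)$, so
\[
\Cov(\overline{\e}^\tB_\gamma,\overline{\e}^\tB_{\gamma'}) = \frac{1}{I_\gamma I_{\gamma'}}\sum_{i,i'} \Cov(D_i^\tB, D_{i'}^\tB)\,\delta_i^\tB(\gamma)\delta_{i'}^\tB(\gamma').
\]
Under complete randomization, $\Var(D_i^\tB) = I_TI_C/I^2$ and $\Cov(D_i^\tB,D_{i'}^\tB) = -I_TI_C/(I^2(I-1))$ for $i\neq i'$. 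Since $\sum_i \delta_i^\tB = 0$, the off-diagonal sum equals minus the diagonal sum. The result is therefore
\[
\frac{I_TI_C}{I_\gamma I_{\gamma'} I(I-1)}\sum_i \delta_i^\tB(\gamma)\delta_i^\tB(\gamma').
\]
This is the same computation as in \cref{prop:generic_variance_formula}, now bilinear. We then apply the polarization identity $ab = \tfrac12(a^2+b^2-(a-b)^2)$ to get $\sum_i \delta_i^\tB(\gamma)\delta_i^\tB(\gamma')/(I-1) = \tfrac12(\nu^\tB_\gamma+\nu^\tB_{\gamma'}-\xi^\tB_{\gamma,\gamma'})$. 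This yields the first line of the statement. The seller term is identical with $J$ in place of $I$.

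For the interaction term, independence of $W^\tB$ and $W^\tS$ gives
\[
\Cov(D_i^\tB D_j^\tS, D_{i'}^\tB D_{j'}^\tS) = \Cov(D_i^\tB,D_{i'}^\tB)\Cov(D_j^\tS,D_{j'}^\tS),
\]
since all these quantities are mean zero. Using the zero row and column sums of $\delta^\tBS$, we collapse the sums over $i\ne i'$ and over $j\ne j'$ in turn. This gives the factor
\[
\frac{I_TI_C}{I(I-1)}\cdot\frac{J_TJ_C}{J(J-1)}\sum_{ij}\delta^\tBS_{ij}(\gamma)\delta^\tBS_{ij}(\gamma'),
\]
divided by $I_\gamma J_\gamma I_{\gamma'}J_{\gamma'}$. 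Polarization again converts the sum into $\tfrac12(\nu^\tBS_\gamma+\nu^\tBS_{\gamma'}-\xi^\tBS_{\gamma,\gamma'})$ times $(I-1)(J-1)$. Because the appendix agreed to interchange $1/I$ and $1/(I-1)$, the leftover factors become the stated $1/(2IJ)$.

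The step needing the most care is the sign bookkeeping in the decomposition and checking that the covariances with mismatched types vanish. The latter is supplied by \cref{lemma:hoeffding_decomposition}. The remaining algebra is routine and mirrors the variance proof. As a consistency check, setting $\gamma=\gamma'$ makes $\xi=0$ and $\eta=1$, and the formula reduces to \cref{prop:generic_variance_formula} because $I_TI_C/I_\gamma^2 = (I-I_\gamma)/I_\gamma$.
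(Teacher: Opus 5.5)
Your proposal is correct and follows the paper's proof. You use the Hoeffding decomposition to reduce the covariance to the three same-type covariances with signs $\eta^\tB$, $\eta^\tS$, $\eta^\tB\eta^\tS$, compute those from the covariance structure of the $D_i^\tB, D_j^\tS$ (the paper cites Lemma A.8 of \cite{masoero2024multiple} for exactly this step), and polarize into $\nu$ and $\xi$. One small correction: you do not need to interchange $1/I$ and $1/(I-1)$ in the interaction term. The factor $(I-1)(J-1)$ from polarization cancels exactly against $I(I-1)J(J-1)$, so the stated $1/(2IJ)$ holds as an exact identity, as in the buyer and seller terms.
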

\begin{proof}
By \cref{lemma:hoeffding_decomposition}, it can be seen that
\[
\Cov(\edavg{y}_{\gamma}, \edavg{y}_{\gamma'}) = \eta_{\gamma,\gamma'}^\tB \Cov(\overline{\e}_{\gamma}^{\tB}, \overline{\e}_{\gamma'}^{\tB}) + \eta_{\gamma,\gamma'}^\tS \Cov(\overline{\e}_{\gamma}^{\tS}, \overline{\e}_{\gamma'}^{\tS}) + \eta_{\gamma,\gamma'}^\tB \eta_{\gamma,\gamma'}^\tS\Cov(\dbar{\e}_{\gamma}^{\tBS}, \dbar{\e}_{\gamma'}^{\tBS}).
\]
The computation in Lemma A.8 of \cite{masoero2024multiple} shows
\begin{align*}
    \Cov(\overline{\e}_{\gamma}^{\tB}, \overline{\e}_{\gamma'}^{\tB}) & = \frac{1}{I_\gamma I_{\gamma'}} \sum_{i=1}^I \left(\frac{I_C I_T}{I^2(I-1)} + \frac{I_CI_T}{I^2} \right) \delta_i^\tB(\gamma) \delta_i^\tB(\gamma')\\
    & = \frac{I_CI_T}{I_\gamma I_{\gamma'} I(I-1)} \sum_{i=1}^I \delta_i^\tB(\gamma) \delta_i^\tB(\gamma').
\end{align*}
Next, 
\begin{align*}
\xi^\tB_{\gamma,\gamma'} & = \nu^\tB_\gamma + \nu^\tB_{\gamma'} - \frac{2}{I-1}\sum_{i=1}^I \delta_i^\tB(\gamma) \delta_i^\tB(\gamma') \\
& = \nu^\tB_\gamma + \nu^\tB_{\gamma'} - \frac{2I_\gamma I_{\gamma'} I}{I_CI_T} \Cov(\overline{\e}_{\gamma}^{\tB}, \overline{\e}_{\gamma'}^{\tB}).
\end{align*}
Rearranging gives 
\[
\Cov(\overline{\e}_{\gamma}^{\tB}, \overline{\e}_{\gamma'}^{\tB}) = \frac{I_C I_T}{2I_\gamma I_{\gamma'} I}\left(\nu^\tB_\gamma + \nu^\tB_{\gamma'} - \xi^\tB_{\gamma,\gamma'}\right).
\]
An analogous computation gives
\[
\Cov(\overline{\e}_{\gamma}^{\tS}, \overline{\e}_{\gamma'}^{\tS}) = \frac{J_C J_T}{2J_\gamma J_{\gamma'} J}\left(\nu^\tS_\gamma + \nu^\tS_{\gamma'} - \xi^\tS_{\gamma,\gamma'}\right).
\]
For the buyer-seller term, the computation in Lemma A.8 of \cite{masoero2024multiple} gives
\[
\Cov(\dbar{\e}_{\gamma}^{\tBS}, \dbar{\e}_{\gamma'}^{\tBS}) = \frac{I_CI_TJ_CJ_T}{I_\gamma I_{\gamma'} J_\gamma J_{\gamma'}}\frac{1}{IJ(I-1)(J-1)}\sum_{i,j} \delta_{ij}^\tBS(\gamma)\delta_{ij}^\tBS(\gamma')
\]
Expanding $\xi^{\tBS}_{\gamma,\gamma'}$ and rearranging as before gives 
\[
\Cov(\dbar{\e}_{\gamma}^{\tBS}, \dbar{\e}_{\gamma'}^{\tBS})  = \frac{I_CI_TJ_CJ_T}{I_\gamma I_{\gamma'} J_\gamma J_{\gamma'}}\frac{1}{2IJ}\left(\nu^\tBS_\gamma + \nu^\tBS_{\gamma'} - \xi_{\gamma,\gamma'}^{\tBS}\right),
\]
as desired.
\end{proof}

The following corollary records the explicit formulas for the covariances.

\begin{corollary}[Explicit Covariance Formulas; Lemma A.8 of \cite{masoero2024multiple}]
\begin{align*}
    \Cov_{\tr,\ib} & =  \frac{I_C}{2I_{T}I}\left(\nu^{\tB}_{\tr} + \nu^{\tB}_{\ib} - \xi^{\tB}_{\tr,\ib}\right) - \frac{1}{2J} \left(\nu^{\tS}_{\tr} + \nu^{\tS}_{\ib} - \xi^{\tS}_{\tr,\ib} \right) - \frac{I_C}{I_T}\frac{1}{2IJ} \left(\nu^{\tBS}_{\tr} + \nu^{\tBS}_{\ib} - \xi^{\tBS}_{\tr,\ib}\right) \\
    \Cov_{\tr,\is} & = - \frac{1}{2I}\left(\nu^{\tB}_{\tr}+ \nu^{\tB}_{\is} - \xi^{\tB}_{\tr,\is}\right) + \frac{J_C}{J_T}\frac{1}{2J} \left(\nu^{\tS}_{\tr} + \nu^{\tS}_{\is} - \xi^{\tS}_{\tr,\is} \right) - \frac{J_C}{J_T}\frac{1}{2IJ} \left(\nu^{\tBS}_{\tr} + \nu^{\tBS}_{\is} - \xi^{\tBS}_{\tr,\is}\right) \\
    \Cov_{\tr,\cc} & = - \frac{1}{2I}\left(\nu^{\tB}_{\tr}+ \nu^{\tB}_{\cc} - \xi^{\tB}_{\tr,\cc}\right) - \frac{1}{2J} \left(\nu^{\tS}_{\tr} + \nu^{\tS}_{\cc} - \xi^{\tS}_{\tr,\cc} \right) + \frac{1}{2IJ} \left(\nu^{\tBS}_{\tr} + \nu^{\tBS}_{\cc} - \xi^{\tBS}_{\tr,\cc}\right) \\
    \Cov_{\ib,\is} & = - \frac{1}{2I}\left(\nu^{\tB}_{\ib}+ \nu^{\tB}_{\is} - \xi^{\tB}_{\ib,\is}\right) - \frac{1}{2J} \left(\nu^{\tS}_{\ib} + \nu^{\tS}_{\is} - \xi^{\tS}_{\ib,\is} \right) + \frac{1}{2IJ} \left(\nu^{\tBS}_{\ib} + \nu^{\tBS}_{\is} - \xi^{\tBS}_{\ib,\is}\right) \\
    \Cov_{\ib,\cc} & = - \frac{1}{2I}\left(\nu^{\tB}_{\ib} + \nu^{\tB}_{\cc} - \xi^{\tB}_{\ib,\cc}\right) + \frac{J_T}{J_C}\frac{1}{2J} \left(\nu^{\tS}_{\ib} + \nu^{\tS}_{\cc} - \xi^{\tS}_{\ib,\cc} \right) - \frac{J_T}{J_C}\frac{1}{2IJ} \left(\nu^{\tBS}_{\ib} + \nu^{\tBS}_{\cc} - \xi^{\tBS}_{\ib,\cc}\right) \\
    \Cov_{\is,\cc} & =  \frac{I_T}{I_C}\frac{1}{2I}\left(\nu^{\tB}_{\is}+ \nu^{\tB}_{\cc} - \xi^{\tB}_{\is,\cc}\right) - \frac{1}{2J} \left(\nu^{\tS}_{\is} + \nu^{\tS}_{\cc} - \xi^{\tS}_{\is,\cc} \right) - \frac{I_T}{I_C}\frac{1}{2IJ} \left(\nu^{\tBS}_{\is} + \nu^{\tBS}_{\cc} - \xi^{\tBS}_{\is,\cc}\right)
\end{align*}    

\end{corollary}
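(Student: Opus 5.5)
The corollary is a direct specialization of \cref{prop:generic_covariance_formula}. For each of the six unordered pairs $\{\gamma,\gamma'\}$ I will read off three ingredients and substitute them into the generic formula: the signs $\eta^\tB_{\gamma,\gamma'}$ and $\eta^\tS_{\gamma,\gamma'}$, and the group sizes $I_\gamma, I_{\gamma'}, J_\gamma, J_{\gamma'}\in\{I_T,I_C\}\times\{J_T,J_C\}$. No further computation with potential outcomes is needed, since the bracketed combinations $\nu_\gamma+\nu_{\gamma'}-\xi_{\gamma,\gamma'}$ pass through unchanged.

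First I record the row and column membership of each group. Groups $\tr$ and $\ib$ use the treated buyers $\cl{I}_\tr$, while $\is$ and $\cc$ use the control buyers. Groups $\tr$ and $\is$ use the treated sellers, while $\ib$ and $\cc$ use the control sellers. This gives the signs:
\begin{itemize}
\item $\eta^\tB=+1$ exactly for the pairs $(\tr,\ib)$ and $(\is,\cc)$, and $\eta^\tB=-1$ otherwise;
\item $\eta^\tS=+1$ exactly for the pairs $(\tr,\is)$ and $(\ib,\cc)$, and $\eta^\tS=-1$ otherwise.
\end{itemize}

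Next I simplify the size prefactors. When two groups share rows, $I_\gamma=I_{\gamma'}$, so $\frac{I_CI_T}{I_\gamma I_{\gamma'}}$ equals $I_C/I_T$ if the shared rows are treated and $I_T/I_C$ if they are control. When the groups use different rows, $\{I_\gamma,I_{\gamma'}\}=\{I_T,I_C\}$ and the prefactor is exactly $1$. The column prefactor behaves the same way with $J$ in place of $I$, and the $\tBS$ prefactor is the product of the row and column prefactors.

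Then I check the pairs one by one. For $(\tr,\ib)$ the $\tB$ term carries $+I_C/I_T$, the $\tS$ term carries $-1$, and the $\tBS$ term carries $(+1)(-1)\cdot (I_C/I_T)\cdot 1=-I_C/I_T$. These match the displayed $\Cov_{\tr,\ib}$. The pairs $(\tr,\is)$, $(\ib,\cc)$ and $(\is,\cc)$ follow the same pattern, with the nontrivial ratio appearing on whichever side is shared. For the "diagonal" pairs $(\tr,\cc)$ and $(\ib,\is)$ both signs are $-1$ and all prefactors are $1$, which gives $-\frac{1}{2I}$, $-\frac{1}{2J}$ and $+\frac{1}{2IJ}$. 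There is no genuine obstacle here; the only care needed is bookkeeping the signs and ratios, and I would double-check them against the variance formula for the direct effect in \eqref{eq:direct_effect_variance_formula} as a consistency test.
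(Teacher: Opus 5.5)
Your proposal is correct and matches the paper's approach: the corollary is a direct specialization of \cref{prop:generic_covariance_formula}, obtained by reading off the signs $\eta^\tB_{\gamma,\gamma'},\eta^\tS_{\gamma,\gamma'}$ and the group-size prefactors for each pair. Your sign and ratio bookkeeping is right for all six pairs, including $(\tr,\is)$, $(\ib,\cc)$ and $(\is,\cc)$, where the nontrivial ratio falls on whichever side (rows or columns) the two groups share.
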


\subsection{Direct Effect}
\label{sec:optimal_direct_effect_derivation}

Recalling the direct effect estimator $\esttr - \estib - \estis + \estcc$, the population variance of the estimator can be written explicitly using the variance and covariance formulas earlier in this section. The resulting expression is:
\begin{align*}
& \frac{I_C}{I_T I}\nu^{\tB}_\tr + \frac{J_C}{J_T J}\nu^{\tS}_\tr + \frac{I_C}{I_T I}\frac{J_C}{J_T J}\nu^{\tBS}_\tr \\    
& + \frac{I_C}{I_T I}\nu^{\tB}_\ib + \frac{J_T}{J_C J}\nu^{\tS}_\ib + \frac{I_C}{I_T I}\frac{J_T}{J_C J}\nu^{\tBS}_\ib \\    
& +\frac{I_T}{I_C I}\nu^{\tB}_\is + \frac{J_C}{J_T J}\nu^{\tS}_\is + \frac{I_T}{I_C I}\frac{J_C}{J_T J}\nu^{\tBS}_\is \\    
& + \frac{I_T}{I_C I}\nu^{\tB}_\cc + \frac{J_T}{J_C J}\nu^{\tS}_\cc + \frac{I_T}{I_C I}\frac{J_T}{J_C J}\nu^{\tBS}_\cc \\
& - \frac{I_C}{I_{T}I}\left(\nu^{\tB}_{\tr} + \nu^{\tB}_{\ib} - \xi^{\tB}_{\tr,\ib}\right) + \frac{1}{J} \left(\nu^{\tS}_{\tr} + \nu^{\tS}_{\ib} - \xi^{\tS}_{\tr,\ib} \right) + \frac{I_C}{I_T}\frac{1}{IJ} \left(\nu^{\tBS}_{\tr} + \nu^{\tBS}_{\ib} - \xi^{\tBS}_{\tr,\ib}\right) \\
&  + \frac{1}{I}\left(\nu^{\tB}_{\tr}+ \nu^{\tB}_{\is} - \xi^{\tB}_{\tr,\is}\right) - \frac{J_C}{J_T}\frac{1}{J} \left(\nu^{\tS}_{\tr} + \nu^{\tS}_{\is} - \xi^{\tS}_{\tr,\is} \right) + \frac{J_C}{J_T}\frac{1}{IJ} \left(\nu^{\tBS}_{\tr} + \nu^{\tBS}_{\is} - \xi^{\tBS}_{\tr,\is}\right) \\
&  - \frac{1}{I}\left(\nu^{\tB}_{\tr}+ \nu^{\tB}_{\cc} - \xi^{\tB}_{\tr,\cc}\right) - \frac{1}{J} \left(\nu^{\tS}_{\tr} + \nu^{\tS}_{\cc} - \xi^{\tS}_{\tr,\cc} \right) + \frac{1}{IJ} \left(\nu^{\tBS}_{\tr} + \nu^{\tBS}_{\cc} - \xi^{\tBS}_{\tr,\cc}\right) \\
& - \frac{1}{I}\left(\nu^{\tB}_{\ib}+ \nu^{\tB}_{\is} - \xi^{\tB}_{\ib,\is}\right) - \frac{1}{J} \left(\nu^{\tS}_{\ib} + \nu^{\tS}_{\is} - \xi^{\tS}_{\ib,\is} \right) + \frac{1}{IJ} \left(\nu^{\tBS}_{\ib} + \nu^{\tBS}_{\is} - \xi^{\tBS}_{\ib,\is}\right) \\
& + \frac{1}{I}\left(\nu^{\tB}_{\ib} + \nu^{\tB}_{\cc} - \xi^{\tB}_{\ib,\cc}\right) - \frac{J_T}{J_C}\frac{1}{J} \left(\nu^{\tS}_{\ib} + \nu^{\tS}_{\cc} - \xi^{\tS}_{\ib,\cc} \right) + \frac{J_T}{J_C}\frac{1}{IJ} \left(\nu^{\tBS}_{\ib} + \nu^{\tBS}_{\cc} - \xi^{\tBS}_{\ib,\cc}\right) \\
& - \frac{I_T}{I_C}\frac{1}{I}\left(\nu^{\tB}_{\is}+ \nu^{\tB}_{\cc} - \xi^{\tB}_{\is,\cc}\right) + \frac{1}{J} \left(\nu^{\tS}_{\is} + \nu^{\tS}_{\cc} - \xi^{\tS}_{\is,\cc} \right) + \frac{I_T}{I_C}\frac{1}{IJ} \left(\nu^{\tBS}_{\is} + \nu^{\tBS}_{\cc} - \xi^{\tBS}_{\is,\cc}\right)
\end{align*}
A close inspection reveals that all first order terms $\nu_{\gamma}^{\tB},\nu_{\gamma}^{\tS}$ in the expression cancel. Further, grouping all the $\nu_{\gamma}^{\tBS}$ terms yields $\frac{1}{I_\gamma J_\gamma}\nu_{\gamma}^{\tBS}$ for each $\gamma \in \set{\tr,\ib,\is,\cc}.$  Thus we may simplify the expression above to
\begin{align*}
& \frac{1}{I_T J_T }\nu^{\tBS}_\tr + \frac{1}{I_T J_C }\nu^{\tBS}_\ib + \frac{1}{I_C J_T }\nu^{\tBS}_\is + \frac{1}{I_C J_C }\nu^{\tBS}_\cc \\   
& + \frac{I_C}{I_{T}I}\xi^{\tB}_{\tr,\ib}
  - \frac{1}{I}\xi^{\tB}_{\tr,\is} 
  + \frac{1}{I}\xi^{\tB}_{\tr,\cc}
 + \frac{1}{I} \xi^{\tB}_{\ib,\is}
 - \frac{1}{I} \xi^{\tB}_{\ib,\cc}
 + \frac{I_T}{I_C}\frac{1}{I}\xi^{\tB}_{\is,\cc} \\
 &  - \frac{1}{J}   \xi^{\tS}_{\tr,\ib}  
+ \frac{J_C}{J_T}\frac{1}{J} \xi^{\tS}_{\tr,\is}
+ \frac{1}{J} \xi^{\tS}_{\tr,\cc} 
+ \frac{1}{J} \xi^{\tS}_{\ib,\is} 
+ \frac{J_T}{J_C}\frac{1}{J} \xi^{\tS}_{\ib,\cc} 
- \frac{1}{J} \xi^{\tS}_{\is,\cc} \\
&  - \frac{I_C}{I_T}\frac{1}{IJ}  \xi^{\tBS}_{\tr,\ib} 
  - \frac{J_C}{J_T}\frac{1}{IJ}  \xi^{\tBS}_{\tr,\is} 
   - \frac{1}{IJ}  \xi^{\tBS}_{\tr,\cc}
  - \frac{1}{IJ}  \xi^{\tBS}_{\ib,\is}
 - \frac{J_T}{J_C}\frac{1}{IJ}  \xi^{\tBS}_{\ib,\cc}
  - \frac{I_T}{I_C}\frac{1}{IJ} \xi^{\tBS}_{\is,\cc}.
\end{align*}

\subsection{Total Effect}
\label{sec:total_effect_variance_formulas}

From the results of \cref{sec:variance_formulas}, the population variance formula for the total effect estimator is given as follows.
\[
\Var(\esttr - \estcc) = \Var(\esttr) + \Var(\estcc) - 2\Cov(\esttr,\estcc).
\]
Explicitly,

\begin{align}
\begin{split}
    \Var(\esttr) & = \frac{I_{\cc}}{I_{\tr}I}\nu_{\tr}^{\tB} + \frac{J_{\cc}}{J_{\tr}J}\nu_{\tr}^{\tS} + \frac{I_{\cc}J_{\cc}}{I_{\tr}IJ_{\tr}J}\nu_{\tr}^{\tBS} \\
    \Var(\estcc) & = \frac{I_{\tr}}{I_{\cc}I}\nu_{\cc}^{\tB} + \frac{J_{\tr}}{J_{\cc}J}\nu_{\cc}^{\tS} + \frac{I_{\tr}J_{\tr}}{I_{\cc}J_{\cc}IJ}\nu_{\cc}^{\tBS} \\
    \Cov(\esttr,\estcc) & = -\frac{1}{2I}\left(\nu_{\tr}^{\tB} + \nu_{\cc}^{\tB} - \xi_{\tr,\cc}^{\tB}\right) \\
     & - \frac{1}{2J}\left(\nu_{\tr}^{\tS} + \nu_{\cc}^{\tS} - \xi_{\tr,\cc}^{\tS}\right)\\
     & + \frac{1}{2IJ}\left(\nu_{\tr}^{\tBS} + \nu_{\cc}^{\tBS} - \xi_{\tr,\cc}^{\tBS}\right)
\end{split}
\end{align}

It is helpful to rearrange some terms in the variance formula above. Notice that $\alpha_{\tr}^{\tB} + \frac{1}{I} = \frac{1}{I_{\tr}}, \alpha_{\tr}^{\tS} + \frac{1}{J} = \frac{1}{J_{\tr}}$ and similarly with $\alpha_{\cc}^{\tB} + \frac{1}{I} = \frac{1}{I_{\cc}}, \alpha_{\cc}^{\tS} + \frac{1}{J} = \frac{1}{J_{\cc}}$. As a result,

\begin{align}
\begin{split}
\Var(\esttr - \estcc) & = \frac{1}{I_{\tr}}\nu_{\tr}^{\tB} + \frac{1}{J_{\tr}}\nu_{\tr}^{\tS} + \frac{1}{IJ}\left( \frac{I_{\cc}J_{\cc}}{I_{\tr}J_{\tr}} - 1\right)\nu_{\tr}^{\tBS}\\
& + \frac{1}{I_{\cc}}\nu_{\cc}^{\tB} + \frac{1}{J_{\cc}}\nu_{\cc}^{\tS} + \frac{1}{IJ}\left(\frac{I_{\tr}J_{\tr}}{I_{\cc}J_{\cc}} - 1\right)\nu_{\cc}^{\tBS}\\
& - \frac{1}{I}\xi_{\tr,\cc}^{\tB} - \frac{1}{J}\xi_{\tr,\cc}^{\tS} + \frac{1}{IJ}
\xi_{\tr,\cc}^{\tBS}\label{eq:var_total_effect}
\end{split}
\end{align}

\subsection{Spillover Effects}

For completeness, we also include variance formulas for estimators of the spillover effects. The \textit{buyer spillover} and \textit{seller spillover} effects are defined by $\dbar{y}_{\ib} - \dbar{y}_{\cc}, \dbar{y}_{\is} - \dbar{y}_{\cc}$ respectively. Consider the non-interacted regression adjustment \eqref{eq:imputation_estimators} for the buyer spillover, which we denote $\tau_{\text{bs}}(\beta)$, and $\tau_{\text{ss}}(\beta)$ for the seller spillover. Using the full formulas for the variances and covariances of group means, we have
\begin{align*}
    \Var(\tau_{\text{bs}}(\beta)) & =  \frac{I_C}{I_T I}\nu^{\tB}_\ib + \frac{J_T}{J_C J}\nu^{\tS}_\ib + \frac{I_C}{I_T I}\frac{J_T}{J_C J}\nu^{\tBS}_\ib \\ 
    + & \frac{I_T}{I_C I}\nu^{\tB}_\cc + \frac{J_T}{J_C J}\nu^{\tS}_\cc + \frac{I_T}{I_C I}\frac{J_T}{J_C J}\nu^{\tBS}_\cc \\
    + & \frac{1}{I}\left(\nu^{\tB}_{\ib} + \nu^{\tB}_{\cc} - \xi^{\tB}_{\ib,\cc}\right) - \frac{J_T}{J_C}\frac{1}{J} \left(\nu^{\tS}_{\ib} + \nu^{\tS}_{\cc} - \xi^{\tS}_{\ib,\cc} \right) + \frac{J_T}{J_C}\frac{1}{IJ} \left(\nu^{\tBS}_{\ib} + \nu^{\tBS}_{\cc} - \xi^{\tBS}_{\ib,\cc}\right).
\end{align*}

The seller terms cancel, yielding
\begin{align*}
    \Var(\tau_{\text{bs}}(\beta)) & =  \frac{1}{I_T}\nu^{\tB}_\ib  +  \frac{1}{I_C}\nu^{\tB}_\cc + \frac{J_T}{I_TJ_C J}\nu^{\tBS}_\ib + \frac{J_T}{I_CJ_C J}\nu^{\tBS}_\cc \\
    - & \frac{1}{I}\xi^{\tB}_{\ib,\cc} + \frac{J_T}{J_CJ} \xi^{\tS}_{\ib,\cc} - \frac{J_T}{J_C}\frac{1}{IJ}\xi^{\tBS}_{\ib,\cc}.
\end{align*}

Carrying out the same computation for the seller spillover effects, we obtain
\begin{align*}
    \Var(\tau_{\text{ss}}(\beta)) & =  \frac{I_T}{I_C I}\nu^{\tB}_\is + \frac{J_C}{J_T J}\nu^{\tS}_\is + \frac{I_T}{I_C I}\frac{J_C}{J_T J}\nu^{\tBS}_\is \\
    + & \frac{I_T}{I_C I}\nu^{\tB}_\cc + \frac{J_T}{J_C J}\nu^{\tS}_\cc + \frac{I_T}{I_C I}\frac{J_T}{J_C J}\nu^{\tBS}_\cc \\
    - & \frac{I_T}{I_C}\frac{1}{I}\left(\nu^{\tB}_{\is}+ \nu^{\tB}_{\cc} - \xi^{\tB}_{\is,\cc}\right) + \frac{1}{J} \left(\nu^{\tS}_{\is} + \nu^{\tS}_{\cc} - \xi^{\tS}_{\is,\cc} \right) + \frac{I_T}{I_C}\frac{1}{ IJ} \left(\nu^{\tBS}_{\is} + \nu^{\tBS}_{\cc} - \xi^{\tBS}_{\is,\cc}\right).
\end{align*}
Here, the buyer terms cancel, yielding
\begin{align*}
    \Var(\tau_{\text{ss}}(\beta)) & =  \frac{1}{J_T}\nu^{\tS}_\is  +  \frac{1}{J_C}\nu^{\tS}_\cc + \frac{I_T}{J_TI_C I}\nu^{\tBS}_\is + \frac{I_T}{J_CI_C I}\nu^{\tBS}_\cc \\
    + & \frac{I_T}{I_C}\frac{1}{I}\xi^{\tB}_{\is,\cc} - \frac{1}{J} \xi^{\tS}_{\is,\cc} - \frac{I_T}{I_C}\frac{1}{IJ} \xi^{\tBS}_{\is,\cc}.
\end{align*}

\section{Proofs for \cref{sec:methodology}}

\begin{proof}[Proof of \cref{ex:optimal_adjustment_direct_effect}]

To make $\tau$ identifiable, we enforce the following constraints:
\begin{align*}
& \sum_i (1 - W_i^\tB)\mu_i^\tB = \sum_j (1 - W_j^\tS)\mu_j^\tS = 0 \\
& \sum_i (1 - W_i^\tB)\delta_i^\tB = \sum_j (1 - W_j^\tS)\alpha_j^\tS = 0 \\
& \sum_i W_i^\tB(\mu_i^\tB + \alpha_i^\tB) = \sum_j W_j^\tS(\mu_j^\tS + \delta_j^\tS) = 0 \\
& \sum_i W_i^\tB(\delta_i^\tB + \tau_i^\tB) = \sum_j W_j^\tS(\alpha_j^\tS + \tau_j^\tS) = 0.
\end{align*}

Define 
\begin{align*}
    \Delta_{\tr} & := \mu + \alpha + \delta + \tau \\
    \Delta_{\ib} & := \mu + \alpha  \\
    \Delta_{\is} & := \mu + \delta  \\
    \Delta_{\cc} & := \mu 
\end{align*}
and analogously define $\Delta_{\gamma,i}^{\tB},\Delta^{\tS}_{\gamma,j}$ for the fixed effects. The WLS objective can then be written
\begin{align*}
    & \frac{1}{I_T^2 J_T^2}\sum_{i,j} W_i^{\tB} W_j^{\tS} \left(y_{ij} - X_{ij}^\top \beta - \Delta_{\tr} - \Delta_{\tr,i}^{\tB} - \Delta_{\tr,j}^{\tS} \right)^2 \\
    + & \frac{1}{I_T^2 J_C^2}\sum_{i,j} W_i^{\tB} (1-W_j^{\tS}) \left(y_{ij} - X_{ij}^\top \beta - \Delta_{\ib} - \Delta_{\ib,i}^{\tB} - \Delta_{\ib,j}^{\tS} \right)^2 \\
    + & \frac{1}{I_C^2 J_T^2}\sum_{i,j} (1-W_i^{\tB}) W_j^{\tS} \left(y_{ij} - X_{ij}^\top \beta - \Delta_{\is} - \Delta_{\is,i}^{\tB} - \Delta_{\is,j}^{\tS} \right)^2 \\
    + & \frac{1}{I_C^2 J_C^2}\sum_{i,j}(1- W_i^{\tB})(1- W_j^{\tS}) \left(y_{ij} - X_{ij}^\top \beta - \Delta_{\cc} - \Delta_{\cc,i}^{\tB} - \Delta_{\cc,j}^{\tS} \right)^2.
\end{align*}
The identifiability constraints are chosen exactly to force $\sum_{i \in \gamma} \Delta_{\gamma,i}^\tB = \sum_{j \in \gamma} \Delta_{\gamma,j}^\tS = 0$.
Recall the well-known fact that regardless of $\beta$, $\Delta_{\gamma},\Delta_{\gamma,i}^{\tB},\Delta_{\gamma,j}^{\tS}$ doubly decenters each of the matrices. Thus the optimal $\beta$ must minimize
\begin{align*}
    & \frac{1}{I_T^2 J_T^2}\sum_{i,j} W_i^{\tB} W_j^{\tS} \left(y_{ij}^{(dcr)} - X_{ij}^{(dcr)\top} \beta \right)^2 \\
    + & \frac{1}{I_T^2 J_C^2}\sum_{i,j} W_i^{\tB} (1-W_j^{\tS}) \left(y_{ij}^{(dcr)} - X_{ij}^{(dcr)\top} \beta\right)^2 \\
    + & \frac{1}{I_C^2 J_T^2}\sum_{i,j} (1-W_i^{\tB}) W_j^{\tS} \left(y_{ij}^{(dcr)} - X_{ij}^{(dcr)\top} \beta \right)^2 \\
    + & \frac{1}{I_C^2 J_C^2}\sum_{i,j} (1- W_i^{\tB})(1- W_j^{\tS})\left(y_{ij}^{(dcr)} - X_{ij}^{(dcr)\top} \beta\right)^2,
\end{align*}
where the double decentering is done within the respective groups $\gamma$. We can equivalently write the objective as 
\begin{equation}
\label{eq:direct_effect_proof_twfe}
\sum_{\gamma \in \Gamma} \frac{1}{I_\gamma^2 J_\gamma^2} \sum_{(i,j) \in \gamma} \left(y_{ij}^{(dcr)}(\gamma) - X_{ij}^{(dcr)}(\gamma)^{\top} \beta \right)^2,
\end{equation}
from which it is apparent that the optimal $\beta$ is given by $\tilde{Z}_{\dir}^{-1}\tilde{u}_{\dir}$. By the identifiability constraints, we have 
\[
\Delta_\gamma = \frac{1}{I_\gamma J_\gamma} \sum_{(i,j)\in \gamma} (y_{ij}(\gamma) - X_{ij}^\top \beta).
\]
Recalling that $\tau = \Delta_{\tr} - \Delta_{\ib} - \Delta_{\is} + \Delta_{\cc}$ we see that the fitted coefficient $\tau$ is equal to $\hat{\tau}_{\dir}(\beta),$ with $\beta$ minimizing \eqref{eq:direct_effect_proof_twfe}.
\end{proof}

\section{Additional Details on Improved Theory for MRDs}
\label{sec:appendix_improved_theory}

In this section, we prove \cref{thm:improved_clt} and present examples for which the result is stronger than the CLT of \cite{masoero2024multiple}. We can start by using a finite population central limit theorem in Wasserstein distance. The conclusion is slightly modified to suit our purposes. 
\begin{theorem}[Adaptation of Theorem 5.1 in \cite{goldstein20071}]
\label{thm:goldstein_adaptation}
Let $z_1,\dots,z_N$ be $N$ real numbers, and $\cl{M}$ be a uniformly random subset of $[N]$ of size $n$. Let $S = \sum_{k \in \cl{M}} z_k.$ Then
\[
d_W \left(\cl{L}\left( S - \E S\right),\textsf{N}(0,\Var(S)) \right) \leq 9 \frac{\sum_{k=1}^N |z_k - \overline{z}|^3}{\sum_{k=1}^N (z_k - \overline{z})^2}
\]
\end{theorem}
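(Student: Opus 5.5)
The plan is to obtain the bound from Goldstein's standardized $L^1$ (Wasserstein) bound for simple random sampling, using three reductions: centering, rescaling, and symmetry between a sample and its complement.

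\emph{Step 1: reduce to centered values.} Replacing $z_k$ by $z_k - \overline{z}$ changes $S$ only by the deterministic amount $n\overline{z}$. So neither $S - \E S$ nor $\Var(S)$ changes, and the right-hand side depends only on $z_k - \overline{z}$. We may therefore assume $\overline{z} = 0$. Two degenerate cases are handled directly. If all $z_k$ are equal, or if $n \in \set{0,N}$, then $S - \E S \equiv 0$ and $\Var(S) = 0$, so the left side is $d_W(\delta_0,\delta_0)=0$. We then assume $1 \le n \le N-1$ and $\sum_k z_k^2 > 0$.

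\emph{Step 2: use complement symmetry.} If $\cl{M}^c$ is the complementary sample, then under $\overline{z}=0$ we have $S = -\sum_{k\in\cl{M}^c} z_k$. Hence $S-\E S$ has the law of the negative of the analogous statistic with sample size $N-n$. Since $\textsf{N}(0,\Var(S))$ is symmetric and $d_W$ is invariant under the reflection $x \mapsto -x$, we may assume $n \le N/2$. This places the sampling fraction in the regime where the combinatorial factors in Goldstein's bound are easiest to control.

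\emph{Step 3: standardize and rescale.} The finite-population variance formula gives
\[
\sigma^2 := \Var(S) = \frac{n(N-n)}{N(N-1)}\sum_{k} z_k^2 .
\]
Set $W = S/\sigma$. We invoke Theorem 5.1 of \cite{goldstein20071}, which bounds $d_W(\cl{L}(W),\textsf{N}(0,1))$ by an absolute constant times a third-moment quantity of the form
\[
\frac{n}{N}\,\frac{\sum_k |z_k|^3}{\sigma^3},
\]
up to factors of $(N-n)/N$ that are bounded when $n\le N/2$. We then use the scaling identity $d_W(\cl{L}(\sigma W),\textsf{N}(0,\sigma^2)) = \sigma\, d_W(\cl{L}(W),\textsf{N}(0,1))$. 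Multiplying the standardized bound by $\sigma$ gives a quantity proportional to
\[
\frac{n}{N}\,\frac{\sum_k|z_k|^3}{\sigma^2} \;=\; \frac{N-1}{N-n}\,\frac{\sum_k |z_k|^3}{\sum_k z_k^2}.
\]
With $n\le N/2$ the prefactor $(N-1)/(N-n)$ is at most $2$. The $\sigma$-dependence cancels, which is exactly why the adapted statement carries no variance term.

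\emph{Main obstacle.} The only delicate part is bookkeeping of constants. We must read off the precise form and constant of Goldstein's Theorem 5.1, which is stated for standardized sums and possibly with slightly different normalizations. We must then check that, after the reduction $n\le N/2$ and the cancellation above, the absolute constant is at most $9$. If Goldstein's constant is stated in a form that does not immediately give $9$, the fallback is to absorb the bounded factors $(N-1)/(N-n)\le 2$ and $n/N \le 1/2$ into the constant. This would still give the $\lesssim$ version, which is all that is used later in the proof of \cref{thm:improved_clt}.
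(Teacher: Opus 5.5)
Your route is the paper's: center the $z_k$, use the complement to assume $n \le N/2$, insert $\Var(S)=\frac{n(N-n)}{N(N-1)}\sum_k (z_k-\overline z)^2$, rescale the standardized Wasserstein bound by $\sqrt{\Var(S)}$, and let the variance cancel. What is missing is the one step that produces the constant $9$, and your guess at Goldstein's prefactor is slightly off. The paper uses the bound in the form
\[
d_W\left(\mathcal{L}\left(\frac{S-\E S}{\sqrt{\Var(S)}}\right),\textsf{N}(0,1)\right) \le 4\,\frac{n}{N}\,\frac{N-n}{N-1}\left(1+\frac{n}{N}\right)^2\frac{\sum_k |z_k-\overline z|^3}{\Var(S)^{3/2}}.
\]
After you multiply by $\sqrt{\Var(S)}$, the factor $\frac{n}{N}\frac{N-n}{N-1}$ cancels \emph{exactly} against the prefactor in $\Var(S)$. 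No $\frac{N-1}{N-n}$ remainder survives; what is left is $4(1+n/N)^2\,\sum_k|z_k-\overline z|^3/\sum_k(z_k-\overline z)^2$.

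The reduction to $n\le N/2$ is therefore not used to control $\frac{N-1}{N-n}$. It bounds the factor $(1+n/N)^2$, which is not symmetric under $n\leftrightarrow N-n$, by $9/4$, giving $4\cdot\frac94=9$. Without the reduction you would only get $16$.

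As written, your fallback delivers an unspecified absolute constant. That is weaker than the stated theorem, although, as you note, it is enough for the $\lesssim$ use in the proof of \cref{thm:improved_clt}.
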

\begin{proof}
The result of Goldstein states that 
\[
d_W \left(\cl{L}\left( \frac{S - \E S}{\sqrt{\Var(S)}}\right),\textsf{N}(0,1) \right) \leq 4 \frac{n}{N}\frac{N-n}{N-1} \left(1 + \frac{n}{N}\right)^2 \frac{\sum_{k=1}^N |z_k - \overline{z}|^3}{\Var(S)^{3/2}}.
\]
We can assume $n \leq N/2$ without loss of generality since $-(S - \E S) = - \sum_{k \in \cl{M}} (z_k - \bar{z}) = \sum_{k \in \cl{M}^c} (z_k - \bar{z})$. The variance formula for sampling without replacement gives $\Var(S) = \frac{n(N-n)}{N(N-1)} \sum_{i=1}^N  (z_i - \bar{z})^2$. By scaling properties of Wasserstein distance, 
\begin{align*}
d_W \left(\cl{L}\left( S - \E S\right),\textsf{N}(0,\Var(S)) \right) & \leq \sqrt{\Var(S)} d_W \left(\cl{L}\left( \frac{S - \E S}{\sqrt{\Var(S)}}\right),\textsf{N}(0,1) \right)\\
& \leq 4 \frac{n}{N}\frac{N-n}{N-1} \left(1 + \frac{n}{N}\right)^2 \frac{\sum_{k=1}^N |z_k - \overline{z}|^3}{\Var(S)} \\
& = 4 \left(1 + \frac{n}{N}\right)^2 \frac{\sum_{k=1}^N |z_k - \overline{z}|^3}{\sum_{k=1}^N |z_k - \overline{z}|^2}.
\end{align*}
The proof is completed by noting that $1 + \frac{n}{N} \leq 3/2.$ 
\end{proof}

With these preliminaries, we are able to now prove \cref{thm:improved_clt}, which is restated here for convenience. The basic idea is to condition on $\scr{S}$, the collection of seller indicator variables $W_j^\tS$. Since the random variables in $\scr{S}$ are discrete, conditioning amounts to considering all possible values of $(W_j^{\tS})_{j=1}^J \in \set{0,1}^J,$ so there are no measurability issues. Fixing $\scr{S}$, by independence of $\cl{B}$ and $\scr{S}$, \cref{thm:goldstein_adaptation} applies on the finite population $\frac{1}{I_T} \eavg{y}_{i,\bullet}$. The theorem shows that conditional on $\tS, \edavg{y}$ is approximately normal with mean $\mu_\tS := \E \left[\edavg{y} \ | \ \scr{S} \right]$ and conditional variance $\sigma^2_{\scr{S}} := \Var\left( \edavg{y} \ | \ \scr{S} \right)$. Under some conditions, one can show that $\sigma^2_{\scr{S}}$ concentrates around its expectation $\E[\Var\left( \edavg{y} \ | \ \scr{S} \right)].$ Now, we can apply \cref{thm:goldstein_adaptation} again to $\mu_{\tS}$, showing that $\mu_{\tS}$ is approximately normal with mean $\mu$ and variance $\Var(\mu_{\tS}).$ Combining these facts roughly suggests that $\edavg{y}$ is approximately normal with mean $\mu$ and variance $\E\left[\Var\left( \edavg{y} \ | \ \scr{S} \right)\right] + \Var(\mu_{\tS}) = \Var(\edavg{y}).$ The proof makes this precise.

\thmimprovedclt*

Throughout this section, we will define the notation 
\begin{equation}
    d_{ij} := y_{ij} - \dbar{y}
\end{equation}
to denote the centered matrix $Y$.

\begin{proof}[Proof of \cref{thm:improved_clt}]
Let $\mu := \E \left[\edavg{y}\right], \mu_\tS := \E \left[\edavg{y} \ | \ \scr{S} \right].$ Similarly, let $\sigma^2_{\scr{S}} := \Var\left( \edavg{y} \ | \ \scr{S} \right)$ and recall the notation $\sigma_{\textsf{Tot}}^2 := \Var\left( \edavg{y}\right)$. Let $Z_1 \sim \textsf{N}(0,1)$ and independent of $\scr{S}$. By the law of total variance, we have $\sigma_{\textsf{Tot}}^2= \E[\sigma_{\scr{S}}^2] + \Var(\mu_\tS).$ Then by triangle inequality,
\begin{align}
\label{eq:clt_decomp_term_I}
d_W\left( \cl{L}(\edavg{y}), \textsf{N}\left( \mu, \sigma_{\textsf{Tot}}^2\right)\right) & \leq d_W\left( \cl{L}(\edavg{y}), \cl{L}\left(\mu_\tS + \sigma_{\scr{S}}Z_1 \right) \right) \\
\label{eq:clt_decomp_term_II}
& +  d_W\left( \cl{L}\left(\mu_\tS + \sigma_{\scr{S}}Z_1 \right), \cl{L}\left(\mu_\tS + \sqrt{\E\left[\sigma^2_{\scr{S}} \right]}Z_1 \right) \right) \\
\label{eq:clt_decomp_term_III}
& + d_W\left( \cl{L}\left(\mu_\tS + \sqrt{\E\left[\sigma^2_{\scr{S}} \right]}Z_1 \right), \textsf{N}(\mu,\Var[\mu_\tS]) * \textsf{N}\left(0, \E\left[\sigma_{\scr{S}}^2 \right] \right) \right)
\end{align}
Let us first bound \eqref{eq:clt_decomp_term_I}. Conditioning on $(W_j^\tS)_{j=1}^J = w$, we may write by \cref{lemma:wasserstein_tensorization} 
\[
d_W\left( \cl{L}(\edavg{y}), \cl{L}\left(\mu_\tS + \sigma_{\scr{S}}Z_1 \right) \right)\leq \E\left[d_W\left( \cl{L}(\edavg{y}), \cl{L}\left(\mu_\tS + \sigma_{\scr{S}}Z_1 \right) \ \bigg| \ (W_j^\tS)_{j=1}^J = w  \right)   \right].
\]
Apply \cref{thm:goldstein_adaptation}, fixing $(W_j^\tS)_{j=1}^J$  with $z_i = \frac{1}{I_T} \eavg{y}_{i,\bullet}$, to the sum $\frac{1}{I_T}\sum_i W_i^\tB \eavg{y}_{i,\bullet}$. The conditional average and variance we can calculate explicitly as
\begin{align*}
     \mu_\tS & := \E[S | \scr{S}] = \frac{1}{I}\sum_{i=1}^I \eavg{y}_{i,\bullet}  \\
     \sigma_{\scr{S}}^2 & := \Var[S | \scr{S}] = \left(\frac{1}{I_T} - \frac{1}{I}\right)\frac{1}{I-1}\sum_{i=1}^I \left(\eavg{y}_{i,\bullet} -  \mu_\tS \right)^2.
\end{align*}
Note that 
\begin{align*}
\E[\mu_\tS] & = \E[S] = \dbar{y}. \\
\Var[\mu_\tS] & = \left(\frac{1}{J_T} - \frac{1}{J}\right)\frac{1}{J-1}\sum_{j=1}^J \left(\overline{y}_{\bullet,j} - \dbar{y} \right)^2 = \left(\frac{1}{J_T} - \frac{1}{J}\right)\nu^\tS(y) \\
\E[\Var[S | \scr{S}]] & = \Var[S] - \Var[\mu_\tS] = \left(\frac{1}{I_T} - \frac{1}{I} \right) \nu^\tB(y) + \left(\frac{1}{I_T} - \frac{1}{I} \right)\left(\frac{1}{J_T} - \frac{1}{J} \right) \nu^\tBS(y).
\end{align*}
\cref{thm:goldstein_adaptation} gives
\[
d_W\left( \cl{L}(\edavg{y}), \cl{L}\left(\mu_\tS + \sigma_{\scr{S}}Z_1 \right) \ \bigg| \ (W_j^\tS)_{j=1}^J = w  \right)  \leq \frac{9}{I_T} \frac{\sum_{i=1}^I \left| \eavg{y}_{i,\bullet} - \frac{1}{I}\sum_{i'=1}^I\eavg{y}_{i',\bullet} \right|^3 }{\sum_{i=1}^I \left( \eavg{y}_{i,\bullet} - \frac{1}{I}\sum_{i'=1}^I\eavg{y}_{i',\bullet} \right)^2},
\]
so 
\[
d_W\left( \cl{L}(\edavg{y}), \cl{L}\left(\mu_\tS + \sigma_{\scr{S}}Z_1 \right) \right) \leq \frac{9}{I_T} \E \left[ \max_{i=1}^I \left| \eavg{y}_{i,\bullet} - \frac{1}{I}\sum_{i'=1}^I\eavg{y}_{i',\bullet} \right| \right].
\]
We can bound the right hand side by the expected suprema of sub-Gaussian random variables via \cref{lemma:sampling_subgaussianity}, after removing the expected value:
\begin{align*}
\E \left[\max_i \left| \eavg{y}_{i,\bullet} - \frac{1}{I}\sum_{i'=1}^I\eavg{y}_{i',\bullet} \right| \right] & \leq 2 \E \max_i \left|\eavg{y}_{i,\bullet} - \dbar{y} \right|\\
& \lesssim \max_{i=1}^I \left| \overline{y}_{i,\bullet} - \dbar{y} \right| + \E \left[\max_i \left| \eavg{y}_{i,\bullet} - 
 \overline{y}_{i,\bullet} \right| \right] \\
& \lesssim \max_{i=1}^I \left| \overline{y}_{i,\bullet} - \dbar{y} \right| + \sqrt{\log I} \max_i \sqrt{\Var \eavg{y}_{i,\bullet}}  \\
& \lesssim  \max_{i=1}^I \left| \overline{y}_{i,\bullet} - \dbar{y} \right| + \sqrt{\frac{\log I}{J}}  \max_{i,j} |y_{ij} - \overline{y}_{i,\bullet}|.
\end{align*}

Next, consider the second term \eqref{eq:clt_decomp_term_II}. First, if we condition on a particular realization $(W_j^\tS)_{j=1}^J = w$, then the obvious coupling gives
\[
 d_W\left( \cl{L}\left(\mu_\tS + \sigma_{\scr{S}}Z_1 \right), \cl{L}\left(\mu_\tS + \sqrt{\E\left[\sigma^2_{\scr{S}} \right]}Z_1  \right) \ \bigg| \ (W_j^\tS)_{j=1}^J = w \right) \leq \left|\sigma_{\scr{S}} - \sqrt{\E\left[\sigma^2_{\scr{S}} \right]} \right|\E[|Z_1|].
\]
Thus, by \cref{lemma:wasserstein_tensorization}, we have
\[
d_W\left( \cl{L}\left(\mu_\tS + \sigma_{\scr{S}}Z_1 \right), \cl{L}\left(\mu_\tS + \sqrt{\E\left[\sigma^2_{\scr{S}} \right]}Z_1 \right) \right) \lesssim \E \left|\sigma_{\scr{S}} - \sqrt{\E\left[\sigma^2_{\scr{S}} \right]} \right|.
\]
We upper bound this using vector concentration for sampling without replacement, using \cref{lemma:variance_concentration} to obtain
\[
\E \left|\sigma_{\scr{S}} - \sqrt{\E\left[\sigma^2_{\scr{S}} \right]} \right| \lesssim \frac{1}{I} \sqrt{\sigma_{\textsf{Tot}} \norm{d_{ij}}_{\text{op}}} + \frac{\Var(\edavg
y)^{1/2}}{\sqrt{I}}.
\]

Finally, for the third term \eqref{eq:clt_decomp_term_III}, rewrite $\mu_\tS = \frac{1}{J_T} \sum_{j=1}^J W_j^\tS \overline{y}_{\bullet,j}$. We apply \cref{lemma:wasserstein_convolution} and \cref{thm:goldstein_adaptation} again to give the upper bound
\begin{align*}
     d_W\left( \cl{L}\left(\mu_\tS + \sqrt{\E\left[\sigma^2_{\scr{S}} \right]}Z_1 \right), \textsf{N}(\mu,\Var[\mu_\tS]) * \textsf{N}\left(0, \E\left[\sigma_{\scr{S}}^2 \right] \right) \right) & \leq d_W\left(\cl{L}(\mu_\tS), \textsf{N}(\mu,\Var[\mu_\tS])\right) \\
 \leq & \frac{9}{J_T} \frac{\sum_{j=1}^J \left| \overline{y}_{\bullet,j} - \dbar{y} \right|^3 }{\sum_{j=1}^J \left( \overline{y}_{\bullet,j} - \dbar{y} \right)^2}.
\end{align*}
Putting the bounds together and simplifying using e.g.~$I_T \gtrsim I$, we get
\begin{align*}
    d_W\left( \cl{L}_{\edavg{y}}, \textsf{N}\left( \mu, \sigma_{\textsf{Tot}}^2\right)\right) & \lesssim \frac{1}{I} \max_{i=1}^I \left| \overline{y}_{i,\bullet} - \dbar{y} \right| + \frac{\sqrt{\log I}}{I^{3/2}}  \max_{i,j} |y_{ij} - \overline{y}_{i,\bullet}| \\
    & + \frac{1}{I} \sqrt{\sigma_{\textsf{Tot}}\norm{d_{ij}}_{\text{op}}} + \frac{\Var(\edavg
y)^{1/2}}{\sqrt{I}} \\
    & +  \frac{1}{J} \max_{j=1}^J \left| \overline{y}_{\bullet,j} - \dbar{y} \right|.
\end{align*}
Dividing through by the square root of the variance, we obtain
\begin{align*}
d_W\left( \cl{L}\left(\frac{\edavg{y}-\mu}{\sigma_{\textsf{Tot}}} \right), \textsf{N}\left( 0, 1 \right)\right) & \lesssim  \frac{1}{\sqrt{I}} + \frac{1}{I} \frac{\max_{i=1}^I \left| \overline{y}_{i,\bullet} - \dbar{y} \right|}{\sigma_{\textsf{Tot}}} + \frac{1}{J} \frac{\max_{j=1}^J \left| \overline{y}_{\bullet,j} - \dbar{y} \right|}{\sigma_{\textsf{Tot}}} \\
& + \frac{\sqrt{\log I}}{I^{3/2}}  
 \frac{\max_{i,j} |y_{ij} - \overline{y}_{i,\bullet}|}{\sigma_{\textsf{Tot}}} + \left(\frac{1}{I^2} \frac{\norm{y_{ij} - \dbar{y}}_{\text{op}}}{\sigma_{\textsf{Tot}}} \right)^{1/2}.
\end{align*} 
\end{proof}

\begin{lemma}
\label{lemma:variance_concentration}
We have the bound   
\[
\E \left|\sigma_{\scr{S}} - \sqrt{\E\left[\sigma^2_{\scr{S}} \right]} \right| \lesssim \frac{1}{I} \sqrt{\sigma_{\textsf{Tot}}\norm{d_{ij}}_{\operatorname{op}}} + \frac{\Var(\edavg
y)^{1/2}}{\sqrt{I}}.
\]
\end{lemma}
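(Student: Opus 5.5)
The plan is to write $\sigma_{\scr{S}}$ as a scaled Euclidean norm of a linear function of the seller indicators. Then we use concentration of convex Lipschitz functions under sampling without replacement. Finally we pass from the squared scale back to $\sigma_{\scr{S}}$ with the elementary inequality $|a-b|\le \sqrt{|a^2-b^2|}$ for $a,b\ge 0$.

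\textbf{Step 1 (reduction to a norm).} Since $\eavg{y}_{i,\bullet}-\dbar{y} = \frac{1}{J_T}\sum_j W_j^\tS d_{ij}$, we have
\[
\sigma_{\scr{S}}^2 = \kappa\,\frac{1}{J_T^2}\,\|P D w\|_2^2,\qquad \kappa := \Big(\tfrac{1}{I_T}-\tfrac1I\Big)\tfrac{1}{I-1}\asymp I^{-2}.
\]
Here $D=(d_{ij})$, $w=(W_j^\tS)_j$, and $P=\mathrm{Id}-\frac1I\mathbf{1}\mathbf{1}^\top$ is the centering projector over buyers. Set $f(w):=\|PDw\|_2$. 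Then $f$ is convex in $w$ and Lipschitz with constant $\|PD\|_{\mathrm{op}}\le\|D\|_{\mathrm{op}}=\norm{d_{ij}}_{\mathrm{op}}$. We also have $\sigma_{\scr{S}}=\sqrt{\kappa}\,f/J_T$, with $\sqrt{\kappa}/J_T\asymp (IJ)^{-1}$.

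\textbf{Step 2 (concentration).} Here we invoke a Talagrand-type concentration inequality for convex Euclidean-Lipschitz functions of a uniformly random subset of fixed size. It can be derived from Talagrand's inequality for uniform random permutations, or from Samson's/Paulin's results for sampling without replacement. It gives
\[
\E|f-\E f|\lesssim \|D\|_{\mathrm{op}}\quad\text{and}\quad \Var(f)\lesssim \|D\|_{\mathrm{op}}^2 .
\]
This is the step I expect to be the main obstacle. Bounded-difference (McDiarmid-for-permutations) arguments only control swaps, and they lose a factor $\sqrt{J}$. The dimension-free Euclidean bound genuinely needs convexity together with a Talagrand convex-distance inequality for permutations. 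I would first try to quote that inequality directly, or, as a fallback, Bobkov-type results for convex Lipschitz functions on the slice.

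\textbf{Step 3 (back to $\sigma_{\scr{S}}$).} Write $f^2-\E f^2 = 2\E f\,(f-\E f) + (f-\E f)^2 - \Var f$. This yields
\[
\E|f^2-\E f^2|\lesssim \E f\,\|D\|_{\mathrm{op}} + \|D\|_{\mathrm{op}}^2 .
\]
Moreover $\E f\le (\E f^2)^{1/2}\asymp IJ\,(\E\sigma_{\scr{S}}^2)^{1/2}\le IJ\,\sigma_{\textsf{Tot}}$, by the law of total variance. Multiplying by $\kappa/J_T^2\asymp (IJ)^{-2}$ and using $I\asymp J$ gives
\[
\E|\sigma_{\scr{S}}^2-\E\sigma_{\scr{S}}^2|\lesssim I^{-2}\sigma_{\textsf{Tot}}\norm{d_{ij}}_{\mathrm{op}} + I^{-4}\norm{d_{ij}}_{\mathrm{op}}^2 .
\]
Then by $|a-b|\le\sqrt{|a^2-b^2|}$ and Jensen,
\[
\E\Big|\sigma_{\scr{S}}-\sqrt{\E\sigma_{\scr{S}}^2}\Big| \lesssim \tfrac1I\sqrt{\sigma_{\textsf{Tot}}\norm{d_{ij}}_{\mathrm{op}}} + I^{-2}\norm{d_{ij}}_{\mathrm{op}} .
\]

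\textbf{Step 4 (absorbing the remainder).} We absorb the term $I^{-2}\norm{d_{ij}}_{\mathrm{op}}$ by a case split. If $I^{-2}\norm{d_{ij}}_{\mathrm{op}}\le\sigma_{\textsf{Tot}}$, then
\[
I^{-2}\norm{d_{ij}}_{\mathrm{op}}\le I^{-1}\sqrt{\sigma_{\textsf{Tot}}\norm{d_{ij}}_{\mathrm{op}}}.
\]
Otherwise we use the trivial bound: since $\E\sigma_{\scr{S}}\le\sqrt{\E\sigma_{\scr{S}}^2}\le\sigma_{\textsf{Tot}}$, the left-hand side is at most $2\sigma_{\textsf{Tot}}$, and in this case $\sigma_{\textsf{Tot}}< I^{-1}\sqrt{\sigma_{\textsf{Tot}}\norm{d_{ij}}_{\mathrm{op}}}$.

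The additional $\sigma_{\textsf{Tot}}/\sqrt I$ term in the statement is slack that absorbs lower-order bookkeeping. This bookkeeping comes from the $I$ versus $I-1$ normalizations, and from replacing $J_T/J$ and $I_T/I$ by their limits in \eqref{eq:asymptotic_regime}. I would verify that each such correction is of relative size $O(I^{-1/2})$ times $\sigma_{\textsf{Tot}}$.
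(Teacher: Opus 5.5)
Your proof is correct and is essentially the paper's argument. Both pass to the squared scale via $|\sqrt{x}-\sqrt{y}|\le\sqrt{|x-y|}$ and Jensen. Both then control the fluctuation of a Euclidean norm of $Dw$ with the vector concentration inequality for sampling without replacement (\cref{lemma:vector_concentration_sampling_without_replacement}, which is exactly the convex-Lipschitz bound your Step 2 needs, applied to the columns of $PD$), and expand the square around its mean as in \cref{lemma:subexponential}. The differences are minor. You fold the buyer-centering into the norm through $P$, whereas the paper splits off $\mu_{\tS}^2$ and bounds it separately; that separate piece, not normalization bookkeeping, is the source of the $\sigma_{\textsf{Tot}}/\sqrt{I}$ summand, so your bound is marginally sharper and your final verification paragraph is unnecessary. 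You also absorb $I^{-2}\norm{d_{ij}}_{\operatorname{op}}$ by a case split, where the paper instead uses $\norm{d_{ij}}_{\operatorname{op}}\le\norm{d_{ij}}_F\lesssim I^2\sigma_{\textsf{Tot}}$ from \cref{lemma:d_matrix_frobenius}.
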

\begin{proof}
Recall that 
\[
\sigma_{\scr{S}}^2 \propto \frac{1}{I} \left(\frac{1}{I} \sum_{i=1}^I \eavg{y}_{i,\bullet}^2 - \mu_\tS^2 \right).
\]
Since $\sigma_{\scr{S}}$ remains unchanged when shifting $y_{ij}$ by a constant, we can replace $y_{ij}$ by $d_{ij} = y_{ij} - \dbar{y}.$ Let $m_{2,\tS} := \frac{1}{I} \sum_{i=1}^I \eavg{d}_{i,\bullet}^2$, so $I\sigma_{\scr{S}}^2 \propto m_{2,\tS} - \mu_\tS^2$. Then by the simple inequality $\left|\sqrt{x} - \sqrt{y} \right| \leq \sqrt{|x - y|}$,
\[
\E \left|\sigma_{\scr{S}} - \sqrt{\E\left[\sigma^2_{\scr{S}} \right]} \right| \leq \E \sqrt{\left|\sigma^2_\tS - \E\left[\sigma^2_{\scr{S}} \right] \right|}.
\]
Applying Jensen's inequality to the right-hand side above, then using the triangle inequality and
the elementary bound $\sqrt{x + y} \lesssim \sqrt{x} + \sqrt{y}$, we have 
\begin{equation}
\label{eq:CLT_proof_variance_concentration_helper}
\sqrt{I}\E \left|\sigma_{\scr{S}} - \sqrt{\E\left[\sigma^2_{\scr{S}} \right]} \right| \leq \sqrt{I\E|\sigma_{\scr{S}} - \E[\sigma_{\scr{S}}]|}\lesssim \sqrt{\E\left|m_{2,\tS} - \E[m_{2,\tS}] \right|} + \sqrt{\E\left|\mu_\tS^2 - \E[\mu_\tS^2] \right|}.
\end{equation}
We will upper bound both these quantities using subgaussianity and \cref{lemma:subexponential}. Firstly, $\sqrt{I m_{2,\tS}} = \sqrt{\sum_{i=1}^I \eavg{d}_{i,\bullet}^2}$ is subgaussian after centering using \cref{lemma:vector_concentration_sampling_without_replacement}. Apply the lemma taking $u_j$ to be the $J$ vectors $\frac{1}{J_T}(d_{1,j},\dots,d_{I,j})$, so $\sum_j W_j^\tS u_j = (\eavg{d}_{1,\bullet}^2, \ldots, \eavg{d}_{I,\bullet}^2)$ and $\|\sum_j W_j^\tS u_j\|_2^2 = Im_{2,\tS}$. Thus, 
\[
\Prob \left( \left| \sqrt{I m_{2,\tS}} - \E \sqrt{I m_{2,\tS}} \right| \geq t \right) \leq \exp\left(-\frac{J_T^2 t^2}{8 \norm{(d_{ij})}_{\text{op}}^2} \right):
\]
$\sqrt{m_{2,\tS}} - \E\sqrt{m_{2,\tS}}$ is subgaussian with parameter $\norm{(d_{ij})}_{\text{op}}/I^{3/2}$, up to absolute constants. Applying \cref{lemma:subexponential}, 
\begin{align*}
\E \left| m_{2,\tS} - \E[m_{2,\tS}]\right| & \lesssim \frac{\norm{(d_{ij})}_{\text{op}}^2 }{I^3} + \frac{\norm{(d_{ij})}_{\text{op}} }{I^{3/2}} \left|\E[\sqrt{m_{2,\tS}}]\right| \\
& \leq \frac{\norm{(d_{ij})}_{\text{op}}^2 }{I^3} + \frac{\norm{(d_{ij})}_{\text{op}} }{I^{3/2}} \sqrt{\E[m_{2,\tS}]}.
\end{align*}
Note the bounds $\norm{(d_{ij})}_{\text{op}} \leq \norm{(d_{ij})}_{\text{F}} \leq I^2\sigma_{\textsf{Tot}}$ by \cref{lemma:d_matrix_frobenius} along with
\begin{align*}
    \E[m_{2,\tS}] & = \frac{1}{I} \sum_i \E[\eavg{d}_{i,\bullet}^2]\\ 
    & \lesssim \frac{1}{I} \sum_i \left(\frac{1}{I^2} \sum_j (d_{ij} - \overline{d}_{i,\bullet})^2 + \overline{d}_{i,\bullet}^2 \right)\\ 
    & = \frac{1}{I^3} \sum_{i,j} (y_{ij} - \overline{y}_{i,\bullet})^2 + \frac{1}{I} \sum_i \overline{d}_{i,\bullet}^2 \\
    & = \frac{1}{I^3} \sum_{i,j} (\delta_{ij}^{\tB\tS} + \delta_j^{\tS})^2 + \frac{1}{I} \sum_i \overline{d}_{i,\bullet}^2
    \\
    & \lesssim \frac{1}{I}(\nu^\tBS + \nu^\tS) + \nu^\tB.
\end{align*}
Thus $\E[m_{2,\tS}] \lesssim I\sigma_{\textsf{Tot}}^2$. Putting these bounds together, we arrive at
\[
\E \left| m_{2,\tS} - \E[m_{2,\tS}]\right| \lesssim \frac{1}{I}\sigma_{\textsf{Tot}} \norm{d_{ij}}_{\text{op}}.
\]
Next, a direct application of \cref{lemma:sampling_subgaussianity} shows that $\mu_\tS - \E[\mu_\tS]$ is subgaussian with parameter $\sqrt{\frac{1}{J}\nu^\tS}$. Applying \cref{lemma:subexponential}, we find
\[
\E\left|\mu_\tS^2 - \E[\mu_\tS^2] \right| \lesssim \frac{1}{J}\nu^\tS + \sqrt{\frac{1}{J}\nu^\tS} \left|\E [\mu_{\tS}]\right| \lesssim \sigma_{\textsf{Tot}}^2,
\]
which follows since $\left|\E [\mu_{\tS}]\right| = 0$ by the assumption that $y_{ij}$ can be replaced by $d_{ij}.$ Putting these bounds together with \eqref{eq:CLT_proof_variance_concentration_helper}, we obtain
\[
\E \left|\sigma_{\scr{S}} - \sqrt{\E\left[\sigma^2_{\scr{S}} \right]} \right| \lesssim \frac{1}{I} \sqrt{\sigma_{\textsf{Tot}}\norm{d_{ij}}_{\text{op}}} + \frac{\sigma_{\textsf{Tot}}}{\sqrt{I}}.
\]
\end{proof}

\subsection{Examples}

We explore simple conditions and data generating processes under which we can expect asymptotic normality.
\begin{corollary}
\label{cor:simplified_clt}
Suppose $I,J \rightarrow \infty, I_T/I, J_T/J = O(1)$ and 
\begin{equation}
\label{eq:CLT_assumptions}
\max\left(\frac{1}{I} \max_{i=1}^I |\overline{y}_{i,\bullet} - \dbar{y}|, \frac{1}{J} \max_{j=1}^J |\overline{y}_{\bullet,j} - \dbar{y}|, \frac{\sqrt{\log I}}{I^{3/2}} \max_{ij} |y_{ij} - \overline{y}_{\bullet,j}|, \frac{1}{I^2} \norm{y_{ij} - \dbar{y}}_{\text{op}} \right) = o(\sigma_{\textsf{Tot}})
\end{equation}
Then 
\[
\frac{\edavg{y} - \dbar{y}}{\sigma_{\textsf{Tot}}} \Rightarrow \textsf{N}(0,1).
\]
\end{corollary}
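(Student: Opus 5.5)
The corollary should follow directly from \cref{thm:improved_clt} together with the fact that $1$-Wasserstein convergence to $\textsf{N}(0,1)$ implies weak convergence. The plan is to show that each of the five terms on the right-hand side of the Wasserstein bound in \cref{thm:improved_clt} is $o(1)$ under \eqref{eq:CLT_assumptions}. Weak convergence then follows because $d_W(\mu_n,\textsf{N}(0,1))\to 0$ metrizes convergence in distribution together with first moments. The centering also matches: $\E[\edavg{y}]=\dbar{y}$, as noted in the proof of \cref{thm:improved_clt}.

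The term-by-term check is as follows.
\begin{itemize}
\item The first term $1/\sqrt{I}$ vanishes since $I\to\infty$.
\item The second and third terms are exactly $\frac{1}{I}\max_i|\bar y_{i,\bullet}-\dbar y|/\sigma_{\textsf{Tot}}$ and $\frac{1}{J}\max_j|\bar y_{\bullet,j}-\dbar y|/\sigma_{\textsf{Tot}}$. Both are $o(1)$ by the first two entries of the max in \eqref{eq:CLT_assumptions}.
\item The last term is $(\frac{1}{I^2}\norm{y_{ij}-\dbar y}_{\text{op}}/\sigma_{\textsf{Tot}})^{1/2}$. This is the square root of an $o(1)$ quantity by the fourth entry, hence also $o(1)$.
\end{itemize}

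The only step needing care is the fourth term, which involves $\max_{ij}|y_{ij}-\bar y_{i,\bullet}|$. The hypothesis, however, controls $\max_{ij}|y_{ij}-\bar y_{\bullet,j}|$, the row- rather than column-centered residual. There are two ways to bridge this mismatch.

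\begin{itemize}
\item \emph{Transpose argument (preferred).} Use the remark after \cref{thm:improved_clt}: conditioning first on the buyer variables instead of the sellers yields the analogous bound for $Y^\top$. In that version the roles of $I$ and $J$, and of row and column means, are swapped, so the fourth term becomes $\frac{\sqrt{\log J}}{J^{3/2}}\max_{ij}|y_{ij}-\bar y_{\bullet,j}|/\sigma_{\textsf{Tot}}$. Since $I/J$ converges to a positive finite limit by \eqref{eq:asymptotic_regime}, this is comparable to $\frac{\sqrt{\log I}}{I^{3/2}}\max_{ij}|y_{ij}-\bar y_{\bullet,j}|/\sigma_{\textsf{Tot}}=o(1)$. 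The remaining terms of the transposed bound are the same quantities up to constants: $\norm{Y^\top-\dbar y}_{\text{op}}=\norm{Y-\dbar y}_{\text{op}}$, and $I^{-2}$ is comparable to $J^{-2}$. So the transposed theorem applies cleanly.
\item \emph{Direct bound (alternative).} Write $y_{ij}-\bar y_{i,\bullet}=(y_{ij}-\bar y_{\bullet,j})+(\bar y_{\bullet,j}-\dbar y)-(\bar y_{i,\bullet}-\dbar y)$. The last two pieces are bounded by the column- and row-mean maxima. These carry prefactors $1/J$ and $1/I$ in the hypothesis, and $\sqrt{\log I}/I^{3/2}$ is much smaller than $1/I$, so they are also $o(\sigma_{\textsf{Tot}})$ after multiplication.
\end{itemize}

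The alternative route avoids invoking the transposed theorem at all. Either way, all five terms are $o(1)$ and the corollary follows. I expect only this row/column bookkeeping to require any thought; the rest is direct substitution.
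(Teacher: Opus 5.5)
Your proposal is correct and follows the paper's route: the paper's entire proof is the remark that \eqref{eq:CLT_assumptions} are the primitive conditions making each term on the right-hand side of \cref{thm:improved_clt} vanish. Your handling of the row- versus column-centered residual mismatch fills in a step the paper leaves implicit. Either route works: the transposed theorem, or the triangle-inequality bound using $\sqrt{\log I/I}\to 0$ and $I\asymp J$.
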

These are the primitive conditions for the right hand side of \cref{thm:improved_clt} to go to zero. Next, we may apply the corollary above to obtain CLTs for MRD estimators.

\cormrdclt*
\begin{proof}
First, notice that for each empirical average $\edavg{y}_\gamma$, we may write
\[
\edavg{y}_\gamma = \frac{1}{I_T J_T} \sum_{i,j} W_i^\tB W_j^\tS z_{ij}(\gamma)
\]
for some ``pseudo"-potential outcomes $z_{ij}.$ Thus,
\begin{equation}
\label{eq:mrd_estimator_unified}
\hat{\tau}_{\mathbf{c}} = \frac{1}{I_T J_T} \sum_{i,j} W_i^\tB W_j^\tS (\sum_\gamma c_\gamma z_{ij}(\gamma))  
\end{equation}

Explicitly, 
\begin{align*}
& \sum_{i,j} (1-W_i^\tB) (1-W_j^\tS) y_{ij}(\cc)  = \sum_{i,j} y_{ij}(\cc) - \sum_{i,j} W_i^\tB y_{ij}(\cc)  - \sum_{i,j} W_j^\tS y_{ij}(\cc) + \sum_{i,j} W_i^\tB W_j^\tS y_{ij}(\cc)\\
& = \frac{IJ}{I_T J_T} \sum_{i,j} W_i^\tB W_j^\tS \dbar{y}(\cc) - \frac{J}{J_T} \sum_{i,j} W_i^\tB W_j^\tS  \overline{y}_{i,\bullet}(\cc)  - \frac{I}{I_T}\sum_{i,j} W_i^\tB W_j^\tS \overline{y}_{\bullet,j}(\cc) + \sum_{i,j} W_i^\tB W_j^\tS y_{ij}(\cc) \\
& = \sum_{i,j} W_i^\tB W_j^\tS \left( \frac{IJ}{I_T J_T} \dbar{y}_\cc - \frac{J}{J_T}\overline{y}_{i,\bullet}(\cc) -  \frac{I}{I_T}\overline{y}_{\bullet,j}(\cc) + y_{ij}(\cc) \right).
\end{align*}
Similar computations can be carried out for the other groups.
\begin{align*}
& \sum_{i,j} W_i^\tB (1-W_j^\tS) y_{ij}(\ib)  = \sum_{i,j} W_i^\tB y_{ij}(\ib) - \sum_{i,j} W_i^\tB W_j^\tS y_{ij}(\ib)\\
& = \frac{J}{J_T} \sum_{i,j} W_i^\tB W_j^\tS  \overline{y}_{i,\bullet}(\ib) - \sum_{i,j} W_i^\tB W_j^\tS y_{ij}(\ib) \\
& = \sum_{i,j} W_i^\tB W_j^\tS \left( \frac{J}{J_T}\overline{y}_{i,\bullet}(\ib) - y_{ij}(\ib) \right).
\end{align*}
Finally,
\begin{align*}
& \sum_{i,j} (1-W_i^\tB) W_j^\tS y_{ij}(\is)  = \sum_{i,j} W_j^\tS y_{ij}(\is) - \sum_{i,j} W_i^\tB W_j^\tS y_{ij}(\is)\\
& = \frac{I}{I_T} \sum_{i,j} W_i^\tB W_j^\tS  \overline{y}_{\bullet,j}(\is) - \sum_{i,j} W_i^\tB W_j^\tS y_{ij}(\is) \\
& = \sum_{i,j} W_i^\tB W_j^\tS \left( \frac{I}{I_T}\overline{y}_{\bullet,j}(\is) - y_{ij}(\is) \right).
\end{align*}




Let $\gamma_\cc := I_TJ_T/I_CI_C, \gamma_\is := J_T/J_C, \gamma_\ib := I_T/I_C$. Thus,
\begin{align}
    z_{ij}(\tr) & :=  y_{ij}(\tr)\label{eq:zij_tr}\\
    z_{ij}(\ib) & :=  \gamma_\ib\frac{I}{I_T}\overline{y}_{i,\bullet}(\ib) - \gamma_\ib y_{ij}(\ib)\label{eq:zij_ib}\\
    z_{ij}(\is) & := \gamma_\is\frac{J}{J_T}\overline{y}_{\bullet,j}(\is) - \gamma_\is y_{ij}(\is)\label{eq:zij_is}\\
    z_{ij}(\cc) & := \gamma_\cc\frac{IJ}{I_T J_T} \dbar{y}_\cc - \gamma_\cc\frac{J}{J_T}\overline{y}_{\bullet,j}(\cc) -  \gamma_\cc\frac{I}{I_T}\overline{y}_{i,\bullet}(\cc) + \gamma_\cc y_{ij}(\cc)\label{eq:zij_cc}.
\end{align}
Inspecting these formulas, it is evident that for each $\gamma$,
\begin{align*}
    & \left|\overline{z}_{i,\bullet}(\gamma) - \dbar{z}_\gamma\right| \lesssim \left|\overline{y}_{i,\bullet}(\gamma) - \dbar{y}_\gamma\right|\\
    & \left|\overline{z}_{\bullet,j}(\gamma) - \dbar{z}_\gamma\right| \lesssim \left|\overline{y}_{\bullet,j}(\gamma) - \dbar{y}_\gamma\right|\\
    & \left|z_{ij}(\gamma) - \overline{z}_{i,\bullet}(\gamma)\right| \lesssim \left|y_{ij}(\gamma) - \overline{y}_{i,\bullet}(\gamma)\right| + \left|\overline{y}_{\bullet,j}(\gamma) - \dbar{y}_\gamma\right| \\
    & \left|z_{ij}(\gamma) - \overline{z}_{\bullet,j}(\gamma)\right| \lesssim \left|y_{ij}(\gamma) - \overline{y}_{\bullet,j}(\gamma)\right| + \left|\overline{y}_{i,\bullet}(\gamma) - \dbar{y}_\gamma\right|.
\end{align*}

Also, note that by the triangle inequality for the operator norm,
\[
\norm{z_{ij}(\gamma) - \dbar{z}_\gamma}_{\text{op}} \lesssim \norm{y_{ij}(\gamma) - \dbar{y}_\gamma}_{\text{op}} + \norm{\overline{y}_{i,\bullet}(\gamma) - \dbar{y}_\gamma}_{\text{op}} + \norm{\overline{y}_{\bullet,j}(\gamma) - \dbar{y}_\gamma}_{\text{op}}.
\]
Further, $\norm{\overline{y}_{i,\bullet}(\gamma) - \dbar{y}_\gamma}_{\text{op}} \lesssim I \max_i \left|\overline{y}_{i,\bullet}(\gamma) - \dbar{y}_\gamma\right|$ and $\norm{\overline{y}_{\bullet,j}(\gamma) - \dbar{y}_\gamma}_{\text{op}} \lesssim J \max_j \left|\overline{y}_{\bullet,j}(\gamma) - \dbar{y}_\gamma\right|$.
Using these relations and the given assumption, it is not difficult to check \eqref{eq:CLT_assumptions} for the pseudo-outcomes  $\sum_\gamma c_\gamma z_{ij}(\gamma)$. Thus \cref{cor:simplified_clt} applies.
\end{proof}

Another condition could just bound the residuals of the pseudo-outcomes directly in terms of the variance. Our theorem generalizes the MRD of \cite{masoero2024multiple} in some ways. 
\begin{corollary}
Suppose that $y_{ij}(\gamma)$ are all bounded by $C$, for $\gamma \in \set{\tr,\ib,\is,\cc}$ and suppose $I^2\Var(\hat{\tau}_{\mathbf{c}}) \rightarrow \infty.$ Then
\[
\frac{\hat{\tau}_{\mathbf{c}} - \tau_{\mathbf{c}}}{\Var(\hat{\tau}_{\mathbf{c}})^{1/2}} \Rightarrow \textsf{N}(0,1).
\]
\end{corollary}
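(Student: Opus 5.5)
The plan is to deduce this from \cref{cor:clt_mrd_estimators}, by checking each quantity in \eqref{eq:CLT_assumptions_MRD_estimands} against $\Var(\hat{\tau}_{\mathbf{c}})^{1/2}$. Write $\sigma := \Var(\hat{\tau}_{\mathbf{c}})^{1/2}$. The hypothesis $I^2\sigma^2 \to \infty$ is the same as $\sigma = \omega(I^{-1})$. Under the standing regime \eqref{eq:asymptotic_regime}, $I \asymp J$, so this is also $\sigma = \omega(J^{-1})$. Hence it suffices to show that each of the four terms is $O(I^{-1})$, or a little smaller, using only $|y_{ij}(\gamma)| \le C$.

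Take each term in turn, for every $\gamma$.

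\begin{enumerate}
\item \emph{Row-mean term.} Boundedness gives $|\bar y_{i,\bullet}(\gamma) - \dbar{y}_\gamma| \le 2C$, so $\frac{1}{I}\max_i|\bar y_{i,\bullet}(\gamma)-\dbar{y}_\gamma| \le 2C/I = o(\sigma)$.
\item \emph{Column-mean term.} By the same argument, it is at most $2C/J = o(\sigma)$.
\item \emph{Entrywise term.} Here $|y_{ij}(\gamma)-\bar y_{\bullet,j}(\gamma)| \le 2C$, so the term is at most $2C\sqrt{\log I}/I^{3/2}$. This equals $I^{-1}\cdot\sqrt{\log I/I} = o(I^{-1}) = o(\sigma)$.
\item \emph{Operator-norm term.} Bound $\|y_{ij}(\gamma)-\dbar{y}_\gamma\|_{\text{op}}$ by the Frobenius norm, which is at most $2C\sqrt{IJ} \lesssim I$. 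The term is then at most $I^{-2}\cdot O(I) = O(I^{-1}) = o(\sigma)$.
\end{enumerate}

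If one uses the sharper form of \cref{thm:consistent_variance_estimator}, the additional condition on $|y_{ij}-\bar y_{i,\bullet}|$ is handled exactly like item 3. All four conditions of \eqref{eq:CLT_assumptions_MRD_estimands} therefore hold, and \cref{cor:clt_mrd_estimators} gives the stated normal limit.

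I expect no real obstacle. The only step needing care is item 4: the operator norm must be bounded crudely through the Frobenius norm, using $I\asymp J$ from \eqref{eq:asymptotic_regime}, since there is no low-rank structure to exploit. That bound is exactly of order $I^{-1}$, which is why the hypothesis needs $I^2\Var \to \infty$ rather than merely bounded away from zero. Items 1 and 2 are the other places where this strict growth is used.
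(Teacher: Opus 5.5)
Your argument is correct and is essentially the paper's: both reduce to checking that every term in the CLT conditions is $O(I^{-1})$. Each check uses boundedness and the crude bound $\|M\|_{\text{op}} \lesssim \sqrt{IJ}\max_{ij}|M_{ij}|$, and then $I^2\Var(\hat{\tau}_{\mathbf{c}})\to\infty$ finishes. The only difference is that the paper applies \cref{cor:simplified_clt} directly to the bounded pseudo-outcomes $z'_{ij}$ from \eqref{eq:zij_tr}--\eqref{eq:zij_cc}, whereas you verify the hypotheses of \cref{cor:clt_mrd_estimators}, which performs that pseudo-outcome reduction for you.
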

\begin{proof}
By the argument of Corollary \ref{cor:clt_mrd_estimators}, we may write
\[
\hat{\tau}_{\mathbf{c}} = \frac{1}{I_T J_T} \sum_{i,j} W_i^\tB W_j^\tS z'_{ij}  
\]
for some quantities $z'_{ij}$ stated explicitly in \eqref{eq:zij_tr}-\eqref{eq:zij_cc}. By the boundedness assumption, $|z'_{ij}| \leq C'$. Further, by the inequality $\norm{M}_{\operatorname{op}} \lesssim \sqrt{IJ} \max_{ij} |M_{ij}|$ for any $I \times J$ matrix $M$, we see that $\norm{z_{ij}' - \dbar{z}'}_{\operatorname{op}} \leq C'I$. Then by assumption $I^{-2}\norm{z_{ij}' - \dbar{z}'}_{\operatorname{op}} = o_p(\Var(\hat{\tau}_{\mathbf{c}})^{1/2})$. Thus, \cref{cor:simplified_clt} implies the desired weak convergence.
\end{proof}

\begin{example}[IID normal superpopulation]
\label{ex:iid_normal_superpop}
Consider potential outcome matrices which are a single realization from the data generating process
\begin{equation}
\label{eq:example_iid_normal}
y_{ij}(\gamma) \stackrel{i.i.d.}{\sim} \sf{N}(\mu_\gamma,1)
\end{equation}
for some constants $\mu_\gamma$. Then $\nu^{\tB}(y_{ij}(\gamma)) = \frac{1}{I} \sum_{i=1}^I (\bar{y}_{i,\bullet}(\gamma) - \dbar{y}_\gamma)^2 = \frac{1}{I} \sum_{i=1}^I O(1/\sqrt{J})^2 = O(1/J)$. On the other hand, $\nu^{\tBS}(y_{ij}(\gamma)) = O(1).$ Therefore, $\Var(\edavg{y}_\gamma) = \Theta(1/IJ).$ Similarly, one can also show that $\Var(\hat{\tau}_c) = \Theta(1/IJ)$. This does not fit into the framework of the CLT of \cite{masoero2024multiple}, since the potential outcomes are unbounded and the variance decays too quickly. In contrast, our results can establish asymptotic normality in this case. \\

Using \cref{cor:clt_mrd_estimators} it suffices to check \eqref{eq:CLT_assumptions_MRD_estimands}. In particular, for the data generating process in \eqref{eq:example_iid_normal}, the following estimates hold with high probability:
\begin{align*}
    & \max_{i=1}^I |\overline{y}_{i,\bullet}(\gamma)- \dbar{y}_{\gamma}| = O(\sqrt{J^{-1}\log I}) \\
    & \max_{j=1}^J |\overline{y}_{\bullet,j}(\gamma)- \dbar{y}_{\gamma}| = O(\sqrt{I^{-1}\log J}) \\
    & \max_{i,j} |y_{ij}(\gamma) - \overline{y}_{\bullet,j}(\gamma)| = O(\sqrt{\log IJ}) \\
    & \norm{y_{ij}(\gamma) - \dbar{y}_\gamma}_{\operatorname{op}} = O(\sqrt{I}).
\end{align*}
The last claim is a standard fact; see e.g. \cite{vershynin2018high}. Thus \eqref{eq:CLT_assumptions_MRD_estimands} is satisfied and the CLT holds. 
\end{example}

\begin{example}[Sparse Uniform Subgraph]
\label{ex:sparse_uniform_subgraph}
Suppose that $y_{ij}(\gamma) \sim X_iY_jZ_{ij}$ where $X_i, Y_j \overset{\mathrm{iid}}\sim \mathrm{Ber}(1/2)$ and $Z_{ij}\overset{\mathrm{iid}}\sim \mathrm{Ber}(4\mu)$ for $\frac{1}{2} \gg \mu \gg (I \wedge J)^{-1}$. In the limit as $\mu \downarrow 0$, we can show that up to logarithmic factors,
\[\Var\{\edavg{y}_\gamma|X,Y,Z\} \asymp \frac{\mu^2}{I} + \frac{\mu^2}{J} + \frac{\mu}{IJ}.\]
Meanwhile, with high probability,
\begin{align*}
    & \max_{i=1}^I |\bar{y}_{i,\bullet}(\gamma)- \dbar{y}_{\gamma}| = O(\sqrt{\mu J^{-1}}\log I) \\
    & \max_{j=1}^J |\bar{y}_{\bullet,j}(\gamma)- \dbar{y}_{\gamma}| = O(\sqrt{\mu I^{-1}}\log J) \\
    & \max_{i,j} |y_{ij}(\gamma) - \bar{y}_{\bullet,j}(\gamma)| =  O(1) \\
    & \norm{y_{ij}(\gamma) - \dbar{y}_\gamma}_{\operatorname{op}} = O\left( [(I+J)\mu]^{\frac{1}{2}}+(IJ\mu)^{\frac{1}{4}}\right).
\end{align*}
The first two estimates follow from Bernstein's inequality for bounded random variables, while the last follows from applying Latala's theorem conditional upon $(X,Y)$. Thus Eq.~\eqref{eq:CLT_assumptions_MRD_estimands} is satisfied and the CLT holds whenever $\mu \gg (I \wedge J)^{-1}$. This is a major improvement over \cite{masoero2024multiple}, which requires $\mu \gg (I \wedge J)^{-1/2}$. 
\end{example}

\begin{example}[Sparse Heavy Tailed]
\label{ex:sparse_heavy_tailed}
We consider another example different from those presented in the main text. Suppose that $y_{ij}(\gamma) \sim X_iY_jZ_{ij}$ where $X_i, Y_j \overset{\mathrm{iid}}\sim \mathrm{Ber}(\sqrt{2\mu})$ and $Z_{ij}\overset{\mathrm{iid}}\sim \mathrm{Ber}(1/2)$ for $\frac{1}{2} \gg \mu \gg (I \wedge J)^{-1}$. 
In the limit as $\mu \downarrow 0$, we can show that up to logarithmic factors,
\[\Var\{\edavg{y}_\gamma|X,Y,Z\} \asymp \frac{\mu}{I} + \frac{\mu}{J} + \frac{\mu}{IJ}.\]
To see this, note that with very high probability $I^{-1}\sum_i X_i, J^{-1} \sum_j Y_j$ are of order $ \sqrt\mu$ and $\dbar{y}$ is of order $\mu$.
Meanwhile, with high probability,
\begin{align*}
    & \max_{i=1}^I |\overline{y}_{i,\bullet}(\gamma)- \dbar{y}_{\gamma}| = O(1) \\
    & \max_{j=1}^J |\overline{y}_{\bullet,j}(\gamma)- \dbar{y}_{\gamma}| = O(1) \\
    & \max_{i,j} |y_{ij}(\gamma) - \overline{y}_{\bullet,j}(\gamma)| =  O(1) \\
    & \norm{y_{ij}(\gamma) - \dbar{y}_\gamma}_{\operatorname{op}} = O\left([\sqrt{I} + \sqrt{J}]\mu^{\frac{1}{4}}\right).
\end{align*} 
Thus, once again, convergence holds when $\mu \gg (I \wedge J)^{-1}$.
\end{example}

\subsection{Auxiliary Results for CLT}

\begin{lemma}[Tensorization of Wasserstein Distance]
\label{lemma:wasserstein_tensorization}
Consider the mixture distributions $\sum_i \pi_i F_i, \sum_i \pi_i G_i$ for $\sum_i \pi_i = 1$ and $\pi_i \geq 0.$ $(F_i)_i, (G_i)_i$ are distributions. Then
\[
d_W\left(\sum_i \pi_i F_i, \sum_i \pi_i G_i \right) \leq \sum_i \pi_i d_W\left(F_i,G_i\right).
\]
\end{lemma}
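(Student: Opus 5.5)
The plan is to use the Kantorovich--Rubinstein dual characterization of the $1$-Wasserstein distance,
\[
d_W(\mu,\nu) = \sup_{\norm{f}_{\mathrm{Lip}} \leq 1} \left| \int f\, d\mu - \int f\, d\nu \right|,
\]
and to exploit that integration is linear in the measure. For any $1$-Lipschitz $f$, write the mixtures as $\int f\, d(\sum_i \pi_i F_i) = \sum_i \pi_i \int f\, dF_i$, and likewise for $G$. The difference of the two mixture integrals is then $\sum_i \pi_i \left( \int f\, dF_i - \int f\, dG_i \right)$.

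Bound this difference by the triangle inequality, using $\pi_i \geq 0$:
\[
\left|\sum_i \pi_i \left( \int f\,dF_i - \int f\,dG_i\right)\right| \leq \sum_i \pi_i \left|\int f\,dF_i - \int f\,dG_i\right| \leq \sum_i \pi_i\, d_W(F_i,G_i).
\]
The last step holds because each term is at most the supremum over Lipschitz test functions. The right-hand side does not depend on $f$, so taking the supremum over $f$ on the left yields the claim.

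The only technical point is to make sure all quantities are finite, which requires first moments. We assume every $F_i, G_i$ has a finite first moment, as is the case in all applications (the conditional laws are discrete or Gaussian). If some $d_W(F_i,G_i) = \infty$, the bound is trivial. If the index set is countable, exchanging the sum with the integral is justified by dominated convergence, since $|f(x)| \leq |f(0)| + |x|$. This step is not a real obstacle.

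An alternative would be a coupling argument: pick optimal couplings $\pi_i^\ast$ of $(F_i,G_i)$ and mix them into $\sum_i \pi_i \pi_i^\ast$, which is a coupling of the two mixtures. Its transport cost is exactly $\sum_i \pi_i\, d_W(F_i,G_i)$. The dual argument is cleaner because it avoids needing existence of optimal couplings, so we will use it.
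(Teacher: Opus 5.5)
Your argument is correct, but it runs on the dual side, whereas the paper's proof is primal. The paper takes arbitrary couplings $(X_i,Y_i)$ of $(F_i,G_i)$ and draws an independent index $P$ with $\Prob(P=i)=\pi_i$. Then $(X_P,Y_P)$ couples the two mixtures with cost $\E|X_P-Y_P|=\sum_i \pi_i \E|X_i-Y_i|$, and the paper takes the infimum over each coupling separately.

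That argument never needs optimal couplings, since near-optimal ones suffice, so the advantage you claim for the dual route does not hold. If anything the comparison goes the other way:
\begin{itemize}
\item The paper uses only the coupling definition of $d_W$, the same one it invokes in \cref{lemma:wasserstein_convolution}. It needs no integrability at all, because the cost identity is just Tonelli after conditioning on $P$.
\item Your route needs, for the mixture pair, the nontrivial half of Kantorovich--Rubinstein duality, $d_W(\mu,\nu)\le \sup_{f}|\int f\,d\mu-\int f\,d\nu|$. The componentwise bound $|\int f\,dF_i-\int f\,dG_i|\le d_W(F_i,G_i)$ is the easy half.
\item For countable mixtures, your dominated-convergence step needs the mixtures themselves, not merely each component, to have a finite first moment.
\end{itemize}

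What your approach buys is generality in the metric. The same three lines give joint convexity of any integral probability metric (bounded-Lipschitz, Kolmogorov, total variation). The coupling proof instead generalizes to arbitrary transport costs, e.g.\ joint convexity of $W_p^p$. In the paper's application the mixture ranges over the finitely many realizations of $(W_j^{\tS})_{j}$, and every component has a finite first moment, so either proof works without caveats there.
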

\begin{proof}
Let $(X_i,Y_i)$ be couplings between $F_i,G_i$ for all $i$. Let $P$ be a random variable that equals $i$, for $i=1,\dots,n$, with probability $\pi_i$, independent of $(X_i,Y_i)$. Then $(X_{P}, Y_P)$ is a coupling between $\sum_i \pi_i F_i, \sum_i \pi_i G_i$. Thus, $\E\left[ |X_P - Y_P| \right] \leq \sum_i \pi_i \E[|X_i - Y_i|]$, which implies that 
\[
d_W\left(\sum_i \pi_i F_i, \sum_i \pi_i G_i \right) \leq \sum_i \pi_i \E[|X_i - Y_i|].
\]
Take the infimum over the right hand side to conclude the desired property.
\end{proof}

\begin{lemma}
\label{lemma:subexponential} 
Consider a random variable $X$ and set $\mu := \E X$. Suppose that $X - \mu$ is subgaussian with parameter $\sigma$. Then
\[
\E \left| X^2 - \E[X^2]\right| \lesssim \sigma^2 + \sigma \left|\mu\right|.
\]
\end{lemma}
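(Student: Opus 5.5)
We write $Y := X-\mu$, which is mean zero and subgaussian with parameter $\sigma$, and expand
\[
X^2 - \E[X^2] = (Y^2 - \E[Y^2]) + 2\mu Y,
\]
using $\E[X^2] = \mu^2 + \E[Y^2]$ and $\E[Y]=0$. By the triangle inequality,
\[
\E\left|X^2 - \E[X^2]\right| \leq \E\left|Y^2 - \E[Y^2]\right| + 2|\mu|\,\E|Y|,
\]
so it suffices to bound the two terms on the right separately.

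For the first term, we use $\E|Y^2 - \E Y^2| \leq 2\E[Y^2]$. The subgaussian tail $\Prob(|Y|\geq t)\leq 2e^{-t^2/(2\sigma^2)}$, with whatever normalization of ``subgaussian with parameter $\sigma$'' is in force, gives
\[
\E[Y^2] = \int_0^\infty 2t\,\Prob(|Y|\geq t)\,dt \lesssim \sigma^2 .
\]
For the second term, Jensen's inequality gives $\E|Y| \leq \sqrt{\E Y^2} \lesssim \sigma$, so the cross term is $\lesssim \sigma|\mu|$. Combining the two bounds yields $\E|X^2-\E X^2| \lesssim \sigma^2 + \sigma|\mu|$.

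The only step needing care is the moment bound $\E[Y^2]\lesssim \sigma^2$. It must hold under the definition of subgaussianity used in the paper, namely a tail bound of the form $\exp(-ct^2/\sigma^2)$, as in the application inside \cref{lemma:variance_concentration}. Integrating that tail gives the bound with an absolute constant depending only on $c$, so the argument presents no real obstacle.
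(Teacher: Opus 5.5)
Your proof is correct and takes essentially the same approach as the paper: the decomposition $X^2-\E[X^2]=(Y^2-\E[Y^2])+2\mu Y$, then the triangle inequality with $\E|Y^2-\E[Y^2]|\le 2\E[Y^2]\lesssim\sigma^2$ and $\E|Y|\lesssim\sigma$. Your explicit tail integration for $\E[Y^2]\lesssim\sigma^2$ simply spells out the standard subgaussian moment bound that the paper cites.
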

\begin{proof}
By properties of sub-Gaussian random variables, $\E[(X - \mu)^2] \lesssim \sigma^2$ (see e.g. \cite{vershynin2018high} Proposition 2.5.2). Note the expansion $X^2 - \E\left[X^2\right] = (X-\mu)^2 - \Var(X) + 2\mu (X-\mu)$. Taking absolute values and expectations, we obtain $\E \left| X^2 - \E[X^2]\right| \leq 2\Var(X) + 2|\mu| \E \left|X - \mu\right|$. Using subgaussianity again, we arrive at $\E \left| X^2 - \E[X^2]\right| \lesssim \sigma^2 + \sigma \left| \mu\right|$ as desired.
\end{proof}

\begin{lemma}
\label{lemma:wasserstein_convolution}
Consider independent random variables $X,Y$ with laws $\cl{L}_X, \cl{L}_Y$. Let $\cl{G}$ be another distribution. Then
\[
d_W\left( \cl{L}_{X} * \cl{L}_{Y}, \cl{G} * \cl{L}_Y \right) \leq d_W\left( \cl{L}_X, \cl{G} \right).
\]
\end{lemma}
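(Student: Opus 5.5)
We will prove the statement by coupling, using the Kantorovich--Rubinstein duality for the $1$-Wasserstein distance, $d_W(\mu,\nu) = \sup_{\norm{f}_{\mathrm{Lip}} \le 1} |\E_\mu f - \E_\nu f|$. Equivalently, $d_W(\mu,\nu)$ is the infimum of $\E|U - V|$ over couplings $(U,V)$ of $\mu$ and $\nu$, and for laws on $\bR$ with finite first moments this infimum is attained, for example by the quantile coupling. Both the duality and attainment are standard, so we assume them.

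First, take an optimal coupling $(U,V)$ of $\cl{L}_X$ and $\cl{G}$, so that $\E|U-V| = d_W(\cl{L}_X,\cl{G})$. On an enlarged probability space, let $Y'$ be independent of the pair $(U,V)$ with $Y' \sim \cl{L}_Y$. Since $U$ is independent of $Y'$, the law of $U + Y'$ is $\cl{L}_X * \cl{L}_Y$. Since $V$ is independent of $Y'$, the law of $V + Y'$ is $\cl{G} * \cl{L}_Y$. Hence $(U+Y', V+Y')$ is a coupling of the two convolutions.

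Next, bound the transport cost of this coupling:
\[
d_W\left(\cl{L}_X * \cl{L}_Y, \cl{G} * \cl{L}_Y\right) \le \E\left|(U + Y') - (V + Y')\right| = \E|U - V| = d_W(\cl{L}_X, \cl{G}).
\]
This is the desired inequality. The $Y'$ terms cancel pathwise, so no moment assumptions on $Y$ are needed beyond the definition of $d_W$ being meaningful. If $d_W(\cl{L}_X,\cl{G}) = \infty$, the inequality holds trivially.

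The only delicate point is the existence of an optimal coupling. If we prefer to avoid attainment, we take couplings with $\E|U-V| \le d_W(\cl{L}_X,\cl{G}) + \varepsilon$, run the same argument, and let $\varepsilon \downarrow 0$.

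An alternative route uses duality directly. For $1$-Lipschitz $f$, the function $g(x) := \E f(x + Y)$ is also $1$-Lipschitz. Then $\E f(X+Y) - \E f(W+Y) = \E g(X) - \E g(W)$ with $W \sim \cl{G}$ independent of $Y$, and this is at most $d_W(\cl{L}_X,\cl{G})$. Taking the supremum over $f$ finishes the proof. I expect no real obstacle in either route.
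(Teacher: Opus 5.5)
Your coupling argument is correct and is essentially the paper's proof. Both couple $\cl{L}_X$ with $\cl{G}$, add an independent draw from $\cl{L}_Y$ to both coordinates so the $Y$ terms cancel, and take the infimum over couplings; your $\varepsilon$-approximate coupling remark is exactly that last step, and your duality alternative also works, though it tacitly needs $\E|Y|<\infty$ for $x \mapsto \E f(x+Y)$ to be defined, which the coupling route avoids.
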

\begin{proof}
Let $(X,X')$ be a coupling of $\cl{L}_X$ and $\cl{G}.$ Draw $Y \sim \cl{L}_Y$ independent of $X,X'$. Then $\left( X+Y, X' + Y \right)$ is a coupling of $\cl{L}_{X} * \cl{L}_{Y}, \cl{G} * \cl{L}_Y$. By the coupling definition of Wasserstein distance, $d_W\left( \cl{L}_{X} * \cl{L}_{Y}, \cl{G} * \cl{L}_Y \right) \leq \E|X - X'|$. Taking the infimum of all couplings between $\cl{L}_X,G$, yields $d_W\left( \cl{L}_{X} * \cl{L}_{Y}, \cl{G} * \cl{L}_Y \right) \leq d_W\left( \cl{L}_X, \cl{G} \right)$.
\end{proof}

\subsection{Results for Conservative Variance Estimation}
\label{sec:cons_variance_estimation}

In this section, we will prove consistency of the conservative variance estimator.
\conservativevariance*

\begin{proof}
We claim that it suffices to show that $\hat{\Sigma}_\gamma = (1+o_p(1))\Var(\edavg{y}_\gamma)$. Recall the elementary facts that (i) $\sqrt{1+o_p(1)} = 1+o_p(1)$ and (ii) for any nonnegative random variables $X,Y,$ $X(1+o_p(1)) + Y(1+o_p(1)) = (X+Y) (1+o_p(1)).$ Furthermore, if $X = (1+o_p(1))Y$ for $Y$ nonnegative, then the continuous mapping theorem implies that $X^+ = (1 + o_p(1))Y.$ Assembling the pieces, 
\begin{align*}
     \left(\sum_{\gamma \in \Gamma} |c_\gamma| \sqrt{\hat{\Sigma}_\gamma^+} \right)^2 & = \left(\sum_{\gamma \in \Gamma} |c_\gamma| \sqrt{(1+o_p(1))\Var(\edavg{y}_\gamma)} \right)^2 \\
     & = \left(\sum_{\gamma \in \Gamma} (1+o_p(1))|c_\gamma| \sqrt{\Var(\edavg{y}_\gamma)} \right)^2 \\
     & = \left(\sum_{\gamma \in \Gamma} |c_\gamma| \sqrt{\Var(\edavg{y}_\gamma)} \right)^2 (1+o_p(1)),
\end{align*}
which would imply $\frac{\hat{V}_{\mathbf{c}}}{V_\mathbf{c}}  \xrightarrow{p} 1.$ Therefore in the rest of this proof, we focus on showing that $\hat{\Sigma}_\gamma = (1+o_p(1))\Var(\edavg{y}_\gamma)$. Recalling that $\E[\hat{\Sigma}_\gamma] = \Var(\edavg{y}_\gamma)$ we need only show that 
\[
\frac{\Var(\hat{\Sigma}_\gamma)}{\Var(\edavg{y}_\gamma)^2} \rightarrow 0.
\]

We will focus on each term in the expression $\hat{\Sigma}_\gamma$. Recalling the definition of $\alpha_\gamma^\tB$ and $\alpha_\gamma^\tS$ defined in \eqref{eq:alpha_gamma}. Clearly, $1 - \alpha^\tB_\gamma - \alpha^\tS_\gamma + \alpha^\tB_\gamma\alpha^\tS_\gamma, 1 - \alpha^\tB_\gamma, 1 - \alpha^\tS_\gamma$ are all bounded away from zero, it suffices to show that 
\begin{align}
& \frac{\Var(\alpha^\tB \hat{\nu}_\gamma^\tB)}{\Var(\edavg{y}_\gamma)^2} \rightarrow 0 \\
& \frac{\Var(\alpha^\tS \hat{\nu}_\gamma^\tS)}{\Var(\edavg{y}_\gamma)^2} \rightarrow 0 \\
& \frac{\Var(\alpha^\tB \alpha^\tS \hat{\nu}_\gamma^\tBS)}{\Var(\edavg{y}_\gamma)^2} \rightarrow 0 \\
& \frac{1}{I^4}\frac{\Var\left(\frac{1}{I_\gamma J_\gamma} \sum_{(i,j) \in \gamma} \left( y_{ij}(\gamma) - \hat{\overline{y}}_{\bullet,j}(\gamma) \right)^2 \right)}{\Var(\edavg{y}_\gamma)^2} \rightarrow 0
\label{eq:cons_var_goestozero_IV}
\end{align}
Showing the first three asymptotic statements is done in \cref{prop:direct_effect_single_beta_prob_convergence} and  \cref{prop:sigma_buyer_seller_term_estimator_converges}. For \eqref{eq:cons_var_goestozero_IV}, elementary algebra shows that 
\[
\frac{1}{I_\gamma J_\gamma} \sum_{i,j \in \gamma}  \left( y_{ij}(\gamma) - \hat{\overline{y}}_{i,\bullet}(\gamma) \right)^2 = \frac{1}{I_\gamma J_\gamma} \sum_{i,j \in \gamma}  \left( y_{ij}(\gamma) - \dbar{y}_\gamma \right)^2 - \frac{1}{I_\gamma} \sum_{i \in \gamma}  \left( \hat{\overline{y}}_{i,\bullet}(\gamma) -\dbar{y}_\gamma \right)^2.
\]
\cref{lemma:empirical_buyer_variance_small} and \cref{lemma:square_of_group_average_convergence} show that the variance of both of these terms, divided by $I^4$, is $o(\Var(\edavg{y})^2)$. The last term in \eqref{eq:conservative_var_est} is symmetric with the fourth term, which is handled by \eqref{eq:cons_var_goestozero_IV}. 

\end{proof}

As the corresponding proofs are similar, throughout the section we will assume $\gamma = \tr$ and also omit the dependence, where clear.

\begin{proposition}
\label{prop:direct_effect_single_beta_prob_convergence}
Under the assumptions of \cref{thm:consistent_variance_estimator},
\[
\frac{\Var(\alpha^\tB \hat{\nu}^\tB)}{\Var(\edavg{y})^2} \rightarrow 0, \quad \frac{\Var(\alpha^\tS \hat{\nu}^\tS)}{\Var(\edavg{y})^2} \rightarrow 0.
\]
\end{proposition}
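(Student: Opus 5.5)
We treat the buyer term; the seller term follows by swapping the roles of rows and columns. Since $\alpha^\tB \asymp 1/I$, the claim is $\Var(\hat{\nu}^\tB) = o(I^2 \Var(\edavg{y})^2)$. We work with the centered matrix $d_{ij} = y_{ij} - \dbar{y}$, which leaves $\hat\nu^\tB$ unchanged. Then
\[
\hat{\nu}^\tB = \frac{1}{I_T}\sum_{i=1}^I W_i^\tB\,\eavg{d}_{i,\bullet}^2 - \edavg{d}^{\,2},
\qquad \eavg{d}_{i,\bullet} = \frac{1}{J_T}\sum_j W_j^\tS d_{ij}.
\]
Because $\Var(A-B) \le 2\Var(A) + 2\Var(B)$, we can bound the two pieces separately.

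\emph{Second piece.} We have $\edavg{d}$ mean zero with variance $\sigma_{\textsf{Tot}}^2$. The CLT step (via \cref{thm:improved_clt} and the sub-Gaussian tools used in \cref{lemma:variance_concentration}) gives fourth-moment control, $\E\edavg{d}^{\,4} \lesssim \sigma_{\textsf{Tot}}^4$. Alternatively, this follows by conditioning on $\scr{S}$ and using the sub-Gaussianity of sampling without replacement (\cref{lemma:sampling_subgaussianity}) twice. Hence $\Var(\edavg{d}^{\,2}) \lesssim \sigma_{\textsf{Tot}}^4 = o(I^2\sigma_{\textsf{Tot}}^4)$.

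\emph{First piece.} Here we use the law of total variance, conditioning on $\scr{S}$. This gives two terms.

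(a) The term $\Var\bigl(\E[\cdot\mid\scr{S}]\bigr) = \Var(m_{2,\tS})$, where $m_{2,\tS} = I^{-1}\sum_i \eavg{d}_{i,\bullet}^2$ as in \cref{lemma:variance_concentration}. That lemma already showed $\sqrt{I m_{2,\tS}}$ is sub-Gaussian with parameter $\|d\|_{\text{op}}/J_T$. Applying \cref{lemma:subexponential} in its second-moment form (the same proof bounds $\E(X^2-\E X^2)^2 \lesssim \sigma^4 + \sigma^2\mu^2$) gives
\[
\Var(m_{2,\tS}) \lesssim \frac{\|d\|_{\text{op}}^4}{I^6} + \frac{\|d\|_{\text{op}}^2}{I^3}\,\E m_{2,\tS} \lesssim \frac{\|d\|_{\text{op}}^4}{I^6} + \frac{\|d\|_{\text{op}}^2}{I^2}\,\sigma_{\textsf{Tot}}^2 ,
\]
using $\E m_{2,\tS} \lesssim I\sigma_{\textsf{Tot}}^2$. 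The assumption $I^{-2}\|d\|_{\text{op}} = o(\sigma_{\textsf{Tot}})$ makes both terms $o(I^2\sigma_{\textsf{Tot}}^4)$.

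(b) The term $\E\bigl[\Var(\cdot\mid\scr{S})\bigr]$. This is the variance of a without-replacement sample mean of the numbers $z_i = \eavg{d}_{i,\bullet}^2$. It is therefore at most
\[
\frac{1}{I_T}\cdot\frac{1}{I}\sum_i \eavg{d}_{i,\bullet}^4 \le \frac{1}{I}\,\max_i \eavg{d}_{i,\bullet}^2 \; m_{2,\tS}.
\]
Its expectation we bound by Cauchy--Schwarz. The first factor is controlled as in the proof of \cref{thm:improved_clt}: by the sub-Gaussian maximal inequality,
\[
\E\max_i \eavg{d}_{i,\bullet}^4 \lesssim \max_i |\overline{y}_{i,\bullet}-\dbar{y}|^4 + \frac{\log^2 I}{J^2}\max_{ij}|y_{ij}-\overline{y}_{i,\bullet}|^4 .
\]
Under the hypotheses (the $I^{-1}\max_i$ and $\sqrt{\log I}\,I^{-3/2}\max_{ij}$ conditions, including the extra row condition added in \cref{thm:consistent_variance_estimator}), this is $o(I^4\sigma_{\textsf{Tot}}^4)$. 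Combined with $\E m_{2,\tS}^2 \lesssim I^2\sigma_{\textsf{Tot}}^4$ from (a), the expectation in (b) is $I^{-1}\,o(I^2\sigma_{\textsf{Tot}}^2)\,O(I\sigma_{\textsf{Tot}}^2) = o(I^2\sigma_{\textsf{Tot}}^4)$.

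Finally, since $\Var(\edavg{y}) = \sigma_{\textsf{Tot}}^2$, the bounds from both pieces together give the claim.

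The main obstacle is step (b): the orders must be tracked carefully so that the maximal-inequality bound beats $I^2\sigma_{\textsf{Tot}}^4$ by exactly the $o(\cdot)$ factors the hypotheses supply. If the Cauchy--Schwarz split turns out too lossy, I would instead compute $\E\sum_i \eavg{d}_{i,\bullet}^4$ directly through fourth moments of without-replacement sums.
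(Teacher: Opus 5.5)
Your proposal is correct and follows the paper's proof. It uses the same split $\hat{\nu}^\tB = \frac{1}{I_T}\sum_i W_i^\tB \eavg{d}_{i,\bullet}^2 - \edavg{d}^{\,2}$, the same law of total variance conditioning on $\scr{S}$ with the vector-concentration bound on $\Var(m_{2,\tS})$, and a fourth-moment bound $\E\,\edavg{d}^{\,4} \lesssim \sigma_{\textsf{Tot}}^4$ for the second piece. For that bound, cite \cref{lemma:fourth_moment_bound_double_avg} rather than the CLT, since a Wasserstein CLT gives no fourth-moment control. Your conditioning alternative also works once you add $\E\,\sigma_{\scr{S}}^4 \lesssim I^{-2}\E\, m_{2,\tS}^2 \lesssim \sigma_{\textsf{Tot}}^4$ from step (a).

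The only deviation is step (b). You bound $\sum_i \eavg{d}_{i,\bullet}^4$ by $\max_i \eavg{d}_{i,\bullet}^2 \cdot I m_{2,\tS}$ and use Cauchy--Schwarz plus a maximal inequality, whereas the paper bounds $\sum_i \E\,\eavg{d}_{i,\bullet}^4$ row by row via per-row sub-Gaussian moments. The paper's route avoids the $\log I$ factor, needing only $I^{-3/2}\max_{ij}|y_{ij}-\overline{y}_{i,\bullet}| = o(\sigma_{\textsf{Tot}})$, but yours closes under the stated $\sqrt{\log I}$ hypothesis.
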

\begin{proof}
By symmetry we will focus on the buyer term. Note first that Lemma A.12 of \cite{masoero2024multiple}, or alternatively simple algebra, shows that
\[
\hat{\nu}^\tB_\gamma = \frac{1}{I_\gamma} \sum_{i \in \cl{I}_\gamma} \left( \hat{\overline{y}}_{i,\bullet}(\gamma) - \dbar{y}_\gamma \right)^2 - \left(\hat{\dbar{y}}_\gamma - \dbar{y}_\gamma \right)^2.
\]
Using \cref{lemma:variance_of_sum}, it suffices to bound the variance of the individual terms. This is accomplished in \cref{lemma:square_of_double_average_small variance,lemma:empirical_buyer_variance_small}.

\end{proof}

\begin{proposition}
\label{prop:sigma_buyer_seller_term_estimator_converges}
Under the assumptions of Theorem 6.2, we have
\[
\frac{\Var(\alpha^\tB \alpha^\tS \hat{\nu}^\tBS)}{\Var(\edavg{y})^2} \rightarrow 0
\]
\end{proposition}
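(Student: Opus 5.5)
The plan is to expand $\hat{\nu}^\tBS$ around the population double-decentered matrix and control each piece by variance computations for sampling without replacement. Take $\gamma=\tr$ and write $\delta_{ij}:=\delta^{\tBS}_{ij}$, $\delta_i:=\delta^\tB_i$, $\delta_j:=\delta^\tS_j$, so that $y_{ij}-\dbar{y}=\delta_i+\delta_j+\delta_{ij}$. Within-group double decentering annihilates anything depending only on $i$ or only on $j$. Hence
\[
y_{ij}-\hat{\bar y}_{i,\bullet}-\hat{\bar y}_{\bullet,j}+\hat{\dbar y}=\delta_{ij}-\hat{\bar\delta}_{i,\bullet}-\hat{\bar\delta}_{\bullet,j}+\hat{\dbar\delta},
\]
where the hats denote averages over the sampled rows and columns. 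Expanding the square and using the orthogonality of the within-sample decentering gives
\[
\hat{\nu}^\tBS=\frac{1}{I_TJ_T}\sum_{(i,j)\in\tr}\delta_{ij}^2-\frac{1}{I_T}\sum_{i\in\tr}\hat{\bar\delta}_{i,\bullet}^2-\frac{1}{J_T}\sum_{j\in\tr}\hat{\bar\delta}_{\bullet,j}^2+\hat{\dbar\delta}^{\,2}.
\]
By \cref{lemma:variance_of_sum} it then suffices to show that $\alpha^\tB\alpha^\tS$ times the standard deviation of each of the four terms is $o(\Var(\edavg{y}))$. Since $\alpha^\tB\alpha^\tS\asymp (IJ)^{-1}\asymp I^{-2}$, each term needs standard deviation $o(I^2\Var(\edavg y))$.

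\textbf{First term.} This is a doubly randomized sum of the matrix $(\delta_{ij}^2)$. Its variance follows from \cref{prop:generic_variance_formula} applied to that matrix. It is bounded by $I^{-1}$ times the row-mean variation of $\delta^2_{ij}$, plus the analogous column term, plus $(IJ)^{-1}\max_{ij}\delta_{ij}^2\cdot\frac{1}{IJ}\sum\delta_{ij}^2$. I would bound each row-mean deviation using $\max_{ij}|\delta_{ij}|$ and $\nu^\tBS\lesssim I^2\Var(\edavg y)$. Since $|\delta_{ij}|\le|y_{ij}-\bar y_{\bullet,j}|+|\delta_i|$, the hypotheses give $\max|\delta_{ij}|=o(I^{3/2}\Var(\edavg y)^{1/2}/\sqrt{\log I})+o(I\Var(\edavg y)^{1/2})$. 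Combining, I expect
\[
\Var\Bigl(\tfrac{1}{I_TJ_T}\sum\delta_{ij}^2\Bigr)\lesssim I^{-1}\max\delta_{ij}^2\,\nu^\tBS=o(I^4\Var(\edavg y)^2).
\]

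\textbf{Last three terms.} $\hat{\dbar\delta}$ is itself a mean-zero doubly randomized sum, so its square has variance of order $(\Var\hat{\dbar\delta})^2\lesssim (I^{-2}\nu^\tBS)^2\lesssim\Var(\edavg y)^2$, which is negligible after the $I^{-4}$ factor. The term $\frac{1}{I_T}\sum_i\hat{\bar\delta}_{i,\bullet}^2$ is the delicate one. I would reuse the proof of \cref{lemma:variance_concentration}, now applied to the matrix $(\delta_{ij})$ with the buyer sampling added on top. Conditional on $\cl S$, it is a without-replacement average of the quantities $\hat{\bar\delta}_{i,\bullet}^2$. Its conditional mean $m_{2,\tS}$ concentrates by \cref{lemma:vector_concentration_sampling_without_replacement} and \cref{lemma:subexponential}, with fluctuation of order $I^{-3/2}\|\delta\|_{\text{op}}\sqrt{\E m_{2,\tS}}$. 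The conditional variance is at most $I^{-1}\max_i\hat{\bar\delta}_{i,\bullet}^4$, which is controlled by the sub-Gaussian maximal bound of \cref{lemma:sampling_subgaussianity} as in the proof of \cref{thm:improved_clt}. The operator-norm hypothesis $I^{-2}\|y-\dbar y\|_{\text{op}}=o(\Var^{1/2})$ transfers to $\delta$ because $\|\delta_i\|_{\text{op}}\lesssim I\max|\delta_i|$. The column term is symmetric and uses the extra hypothesis on $|y_{ij}-\bar y_{i,\bullet}|$.

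\textbf{Main obstacle.} I expect the hardest step to be this middle term: obtaining fourth-moment or variance control of the sampled row means using only max-norm and operator-norm hypotheses, while keeping the logarithmic factors within the budget $o(I^2\Var(\edavg y))$.
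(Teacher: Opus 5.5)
Your proposal is correct and follows essentially the paper's proof: the same four-term ANOVA identity for $\hat{\nu}^{\tBS}$, \cref{lemma:variance_of_sum}, and term-by-term bounds paralleling \cref{lemma:square_of_group_average_convergence}, \cref{lemma:empirical_buyer_variance_small} and the fourth-moment bounds. The only difference is that you center at the population double-decentered matrix $\delta_{ij}$, whereas the paper centers at $\dbar{y}$, which lets it reuse lemmas already needed for the $\tB$ and $\tS$ terms.

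Your centering also removes what you call the main obstacle. Because $\sum_j \delta_{ij}=0$, Jensen and \cref{lemma:moment_bound_sampling_wout_replacement} give $\E\bigl[\bigl(\frac{1}{I_T}\sum_{i\in\cl{I}_\tr}\hat{\bar\delta}_{i,\bullet}^2\bigr)^2\bigr]\lesssim \frac{1}{IJ^4}\sum_i\bigl(\sum_j\delta_{ij}^2\bigr)^2\lesssim (\nu^{\tBS})^2/I\lesssim I^3\Var(\edavg{y})^2$. This is already $o(I^4\Var(\edavg{y})^2)$ without any max-norm or operator-norm hypothesis, and the column term is symmetric, so only the first term actually uses the assumptions.
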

\begin{proof}
Lemma A.15 of \cite{masoero2024multiple} shows
\[
\hat{\nu}^\tBS_\gamma = \frac{1}{I_\gamma J_\gamma} \sum_{i \in \cl{I}_\gamma, j \in \cl{J}_\gamma} \left( y_{ij}(\gamma) - \dbar{y}_\gamma \right)^2 - \frac{1}{I_\gamma} \sum_{i \in \cl{I}_\gamma} \left( \hat{\overline{y}}_{i,\bullet}(\gamma) - \dbar{y}_\gamma \right)^2 - \frac{1}{I_\gamma} \sum_{j \in \cl{J}_\gamma} \left( \hat{\overline{y}}_{\bullet, j}(\gamma) - \dbar{y}_\gamma \right)^2 +  \left(\hat{\dbar{y}}_\gamma - \dbar{y}_\gamma \right)^2.
\]
Using \cref{lemma:variance_of_sum}, it suffices to bound the variance of the individual terms. \cref{lemma:square_of_double_average_small variance,lemma:empirical_buyer_variance_small,lemma:square_of_group_average_convergence} in tandem show that each of the terms is $o(I^4 \Var(\edavg{y})^2)$ as desired.
\end{proof}

\subsection{Lemmas for Proofs}

\begin{lemma}
\label{lemma:square_of_double_average_small variance}
\[
\frac{1}{I^2} \frac{\Var \left[\left(\hat{\dbar{y}}_\gamma - \dbar{y}_\gamma \right)^2 \right]}{\Var(\edavg{y})^2} \rightarrow 0
\]
\end{lemma}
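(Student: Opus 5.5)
Write $X := \edavg{y}_\gamma - \dbar{y}_\gamma$ and $\sigma^2 := \Var(\edavg{y}_\gamma) = \E[X^2]$. The claim is that $\Var(X^2) = o(I^2\sigma^4)$. We have $\Var(X^2) \le \E[X^4]$, so it suffices to show $\E[X^4] = o(I^2 \sigma^4)$. In fact I expect the stronger statement $\E[X^4] \lesssim \sigma^4$, i.e.\ that fourth and second moments are comparable. Then the ratio in the lemma is $O(I^{-2}) \to 0$ without using the delicate parts of the assumptions of \cref{thm:consistent_variance_estimator}.

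To bound $\E[X^4]$, I would use the Hoeffding decomposition of \cref{lemma:hoeffding_decomposition}, which gives $X = \pm\overline{\e}^{\tB}_\gamma \pm \overline{\e}^{\tS}_\gamma \pm \dbar{\e}^{\tBS}_\gamma$. By the elementary bound $(a+b+c)^4 \lesssim a^4+b^4+c^4$, it is enough to control the fourth moment of each of the three pieces by the square of its variance. The variances are given in \cref{prop:generic_variance_formula}, and each is at most $\sigma^2$.
\begin{itemize}
\item The term $\overline{\e}^{\tB}_\gamma = I_\gamma^{-1}\sum_i W_i^\tB \delta_i^\tB$ is a centered sum under sampling without replacement. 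By \cref{lemma:sampling_subgaussianity}, or Hoeffding's comparison with sampling with replacement, it is subgaussian with parameter of order $\sqrt{\nu^\tB/I}$. Hence $\E[(\overline{\e}^{\tB}_\gamma)^4] \lesssim (\nu^\tB/I)^2 \lesssim \sigma^4$.
\item The seller term $\overline{\e}^{\tS}_\gamma$ is handled identically.
\item For $\dbar{\e}^{\tBS}_\gamma$, I would condition on $\scr{S}$. Conditionally it is again a sum over sampled buyers, $I_\gamma^{-1}\sum_i W_i^\tB a_i$ with $a_i = J_\gamma^{-1}\sum_j W_j^\tS \delta^{\tBS}_{ij}$. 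The conditional fourth moment is therefore $\lesssim \big(I^{-1}\cdot I^{-1}\sum_i a_i^2\big)^2$.
\end{itemize}

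After this conditional step it remains to bound $\E\big[(I^{-1}\sum_i a_i^2)^2\big]$ by $(\nu^{\tBS}/J)^2$ up to constants. This is the main obstacle, because it is a fourth moment of a seller-side sum, so it cannot be handled by comparing variances alone. The plan is to expand the square into $\sum_{i,i'}\E[a_i^2 a_{i'}^2]$ and apply Cauchy--Schwarz: $\E[a_i^2a_{i'}^2] \le (\E a_i^4)^{1/2}(\E a_{i'}^4)^{1/2}$. Each $a_i$ is itself a without-replacement sum over sellers, so $\E a_i^4 \lesssim (J^{-2}\sum_j (\delta^{\tBS}_{ij})^2)^2$. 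Summing over $i$ then gives
\[
\Big(I^{-1}\sum_i J^{-2}\sum_j (\delta^{\tBS}_{ij})^2\Big)^2 \asymp (\nu^{\tBS}/J)^2 .
\]
Dividing by $I^2$ gives a fourth moment $\lesssim (\nu^{\tBS}/IJ)^2 \lesssim \sigma^4$.

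If the subgaussian fourth-moment bounds for without-replacement sums turn out to be awkward to apply, I would fall back on the crude estimate $\E[a_i^4] \le \max_j|\delta_{ij}^{\tBS}|^2\,\E[a_i^2]\cdot O(J^{-2})$. The resulting ratio would then be absorbed using the $\max_{i,j}$ condition of \cref{thm:consistent_variance_estimator} together with the spare factor $I^{-2}$.

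Combining the three pieces gives $\E[X^4] \lesssim \sigma^4$, and therefore the ratio in the lemma is $O(I^{-2})\to 0$.
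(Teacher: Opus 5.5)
Your proposal is correct and follows the paper's proof. Both arguments bound the variance by $\E[X^4]$ and then show $\E[X^4]\lesssim \Var(\edavg{y}_\gamma)^2$, using the Hoeffding decomposition, subgaussian moment bounds for without-replacement sums, and, for $\dbar{\e}^{\tBS}_\gamma$, conditioning on $\scr{S}$ before expanding the square and applying Cauchy--Schwarz, as in \cref{lemma:fourth_moment_bound_double_avg,lemma:moment_bound_decentered_double_avg}.

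Your $(a+b+c)^4\lesssim a^4+b^4+c^4$ step is slightly more economical than the paper's multinomial expansion with H\"older, which needs moments up to order $16$. The fallback is unnecessary, and its extra $O(J^{-2})$ factor is spurious: the crude bound is only $\E[a_i^4]\le\max_j|\delta^{\tBS}_{ij}|^2\,\E[a_i^2]$.
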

\begin{proof}
Bounding the variance by the second moment and using \cref{lemma:fourth_moment_bound_double_avg}, we conclude that 
\[
\Var \left[\left(\hat{\dbar{y}}_\gamma - \dbar{y}_\gamma \right)^2 \right] \leq C \Var\left(\edavg{y} \right)^2.
\]
Thus 
\[
\frac{\Var \left[\alpha^\tB\left(\hat{\dbar{y}}_\gamma - \dbar{y}_\gamma \right)^2 \right]}{\Var(\edavg{y})^2} \leq \frac{C}{I^2}.
\]

\end{proof}

\begin{lemma}
\label{lemma:empirical_buyer_variance_small}
\[
\frac{1}{I^2}\frac{\Var\left[ \frac{1}{I_\gamma} \sum_{i \in \cl{I}_\gamma} \left( \hat{\overline{y}}_{i,\bullet}(\gamma) - \dbar{y}_\gamma \right)^2\right]}{\Var(\edavg{y})^2} \rightarrow 0.
\]    
\end{lemma}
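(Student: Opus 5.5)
We have to show that $T := \frac{1}{I_\gamma}\sum_{i\in\cl{I}_\gamma}(\hat{\bar y}_{i,\bullet}(\gamma)-\dbar{y}_\gamma)^2$ satisfies $\Var(T) = o(I^2\Var(\edavg{y})^2)$. As elsewhere in this subsection we take $\gamma=\tr$. We work with the centered matrix $d_{ij}=y_{ij}-\dbar{y}$, which leaves $T$ unchanged. Write $T = \frac{1}{I_T}\sum_i W_i^\tB \eavg{d}_{i,\bullet}^2$ with $\eavg{d}_{i,\bullet} = \frac1{J_T}\sum_j W_j^\tS d_{ij}$.

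The plan is to use the law of total variance, conditioning on the seller $\sigma$-algebra $\scr{S}$ exactly as in the proof of \cref{thm:improved_clt}:
\[
\Var(T) = \E[\Var(T\mid\scr{S})] + \Var(\E[T\mid\scr{S}]).
\]

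\textbf{Second piece.} We have $\E[T\mid\scr{S}] = \frac1I\sum_i \eavg{d}_{i,\bullet}^2 = m_{2,\tS}$, the quantity already analysed in \cref{lemma:variance_concentration}. There, \cref{lemma:vector_concentration_sampling_without_replacement} shows that $\sqrt{m_{2,\tS}}$ is subgaussian with parameter $\norm{d}_{\text{op}}/I^{3/2}$. Since $\E[m_{2,\tS}]\lesssim I\sigma_{\textsf{Tot}}^2$, the argument of \cref{lemma:subexponential}, upgraded to second moments (which subgaussianity gives equally well), yields
\[
\Var(m_{2,\tS}) \lesssim \frac{\norm{d}_{\text{op}}^4}{I^6} + \frac{\norm{d}_{\text{op}}^2}{I^3}\, I\sigma_{\textsf{Tot}}^2 .
\]
Dividing by $I^2\sigma_{\textsf{Tot}}^4$ gives
\[
\Bigl(\frac{\norm{d}_{\text{op}}}{I^2\sigma_{\textsf{Tot}}}\Bigr)^4 + \Bigl(\frac{\norm{d}_{\text{op}}}{I^2\sigma_{\textsf{Tot}}}\Bigr)^2 ,
\]
and both terms are $o(1)$ by the operator-norm hypothesis in \eqref{eq:CLT_assumptions_MRD_estimands}.

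\textbf{First piece.} Conditional on $\scr{S}$, $T$ is a sample mean without replacement of the values $a_i=\eavg{d}_{i,\bullet}^2$. Hence
\[
\Var(T\mid\scr{S})\lesssim \frac1I\cdot\frac1I\sum_i a_i^2 \le \frac1I \max_i \eavg{d}_{i,\bullet}^2\cdot m_{2,\tS}.
\]
Taking expectations and applying Cauchy–Schwarz reduces the problem to bounding $\E\max_i \eavg{d}_{i,\bullet}^4$ and $\E m_{2,\tS}^2$.

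For the maximum, we reuse the sub-Gaussian maximal inequality from the proof of \cref{thm:improved_clt} via \cref{lemma:sampling_subgaussianity}, now at the fourth moment:
\[
(\E\max_i\eavg{d}_{i,\bullet}^4)^{1/4}\lesssim \max_i|\bar y_{i,\bullet}-\dbar y| + \sqrt{\tfrac{\log I}{J}}\max_{ij}|y_{ij}-\bar y_{i,\bullet}| .
\]
For the other factor, $\E m_{2,\tS}^2\lesssim (I\sigma_{\textsf{Tot}}^2)^2$ follows from the second piece.

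Combining, $\E\Var(T\mid\scr{S})/(I^2\sigma_{\textsf{Tot}}^4)$ is bounded by
\[
\frac{1}{I^3}\cdot\frac{(\E\max\eavg{d}^4)^{1/2}\, I\sigma_{\textsf{Tot}}^2}{\sigma_{\textsf{Tot}}^4} = \Bigl(\frac{(\E\max\eavg{d}^4)^{1/4}}{I\sigma_{\textsf{Tot}}}\Bigr)^2 .
\]
This ratio is at most
\[
\frac{\max_i|\bar y_{i,\bullet}-\dbar y|}{I\sigma_{\textsf{Tot}}} + \frac{\sqrt{\log I}\,\max_{ij}|y_{ij}-\bar y_{i,\bullet}|}{I^{3/2}\sigma_{\textsf{Tot}}},
\]
up to constants, which is $o(1)$ by the hypotheses, including the additional row-residual condition in \cref{thm:consistent_variance_estimator}.

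\textbf{Main obstacle.} The delicate step is the fourth-moment maximal bound. \cref{lemma:sampling_subgaussianity} gives sub-Gaussian tails for each $\eavg{d}_{i,\bullet}-\bar d_{i,\bullet}$ with parameter $\max_j|y_{ij}-\bar y_{i,\bullet}|/\sqrt J$, and a union bound then gives the $\sqrt{\log I}$ factor for $L^4$ norms of the maximum. One must also track that $I_T\gtrsim I$ and $I\asymp J$ to absorb constants. If the Cauchy–Schwarz route is too lossy, the fallback is to bound $\frac1I\sum_i a_i^2\le \max_i a_i\cdot m_{2,\tS}$ pathwise and control the product directly using the independence of $\cl{B}$ and $\scr{S}$.
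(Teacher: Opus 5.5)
Your argument is correct and has the same skeleton as the paper's proof. Both use the law of total variance conditional on $\scr{S}$. Both handle $\Var(\E[T\mid\scr{S}])=\Var(m_{2,\tS})$ the same way: vector concentration for $\sqrt{I m_{2,\tS}}$, followed by the variance-of-a-square bound, which is \cref{lemma:var_squared_and_central_moments}.

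The one real difference is the conditional-variance piece. After $\Var(T\mid\scr{S})\lesssim I^{-2}\sum_i \eavg{d}_{i,\bullet}^4$, the paper takes expectations term by term, $\E[\Var(T\mid\scr{S})]\lesssim I^{-2}\sum_i \E[\eavg{d}_{i,\bullet}^4]$. It then splits $\eavg{d}_{i,\bullet}=(\eavg{y}_{i,\bullet}-\bar{y}_{i,\bullet})+(\bar{y}_{i,\bullet}-\dbar{y})$ and bounds each row's fourth moment with the per-row sub-Gaussian bound of \cref{lemma:sampling_subgaussianity}. That route needs no maximum over $i$, no union bound, and no control of $\E[m_{2,\tS}^2]$. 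It only requires $\max_{i,j}|y_{ij}-\bar{y}_{i,\bullet}|^2/(I^3\Var(\edavg{y}))\to 0$, with no $\log I$.

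Your route goes through $\max_i\eavg{d}_{i,\bullet}^2\cdot m_{2,\tS}$, Cauchy--Schwarz, and a fourth-moment maximal inequality. It is valid, but it costs exactly the $\sqrt{\log I}$ factor, which the hypotheses of \cref{thm:consistent_variance_estimator} happen to supply. The fallback you mention is not needed.
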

\begin{proof}
The ANOVA decomposition lets us compute this variance as 
\begin{equation}
\label{eq:lemma_buyer_term_cons_variance_anova}
\begin{split}
    & \Var\left[ \frac{1}{I} \sum_{i=1}^I \left( \hat{\overline{y}}_{i,\bullet}(\gamma) - \dbar{y}_\gamma \right)^2\right]. \\
    + & 
    \E \left[\frac{C}{I(I-1)}\sum_{i=1}^I \left[   \left( \hat{\overline{y}}_{i,\bullet}(\gamma) - \dbar{y}_\gamma \right)^2 - \frac{1}{I}\sum_{i'} \left( \hat{\overline{y}}_{i',\bullet}(\gamma) - \dbar{y}_\gamma \right)^2 \right]^2 \right]. 
\end{split}
\end{equation}

For the first term, apply \cref{lemma:vector_concentration_sampling_without_replacement} taking the vectors $u_j := \frac{1}{J_T}\left(d_{1j},\dots,d_{Ij} \right)^\top$. Then $\sum_j^J W_j^\tS u_j = \left(\eavg{d}_{i,\bullet} \right)_i^\top$. Then let
\[
D := \norm{\left( \eavg{d}_{i,\bullet}\right)_i} = \left(\sum_i \eavg{d}_{i,\bullet}^2 \right)^{1/2}.
\]
\cref{lemma:vector_concentration_sampling_without_replacement} gives that $D - \E D$ is a sub-Gaussian random variable with sub-Gaussian norm $\sigma_D := \frac{1}{J_T}\norm{(d_{ij})_{ij}}_{\text{op}}.$ Applying \cref{lemma:var_squared_and_central_moments} we find that 
\begin{align*}
& \Var\left[ \frac{1}{I} \sum_{i=1}^I \left( \hat{\overline{y}}_{i,\bullet} - \dbar{y} \right)^2\right]  = \frac{1}{I^2} \Var(D^2) \\
& \leq 
\frac{1}{I^2}\E\left[(D - \E D)^4 \right] + 4\frac{\E[D]}{I^2} \E\left[|D - \E D|^3 \right]+ 4 \frac{\E[D]^2}{I^2} \E\left[(D - \E D)^2 \right]
\end{align*}
Subgaussian bounds on the moments give us the upper bound, up to an absolute constant scaling factor,
\begin{equation}
\label{eq:lemma_buyer_term_var_estimator_helper}
\frac{1}{I^6}\norm{(d_{ij})}^4_{\text{op}} + \frac{1}{I^5}\E[D] \norm{(d_{ij})}^3_{\text{op}} + \frac{1}{I^4}\E[D]^2 \norm{(d_{ij})}^2_{\text{op}}
\end{equation}
We have$\frac{1}{I^4} \norm{(d_{ij})}^2_{F} \lesssim \Var(\edavg{y})$ and $\frac{1}{I^2}\E[D^2] \lesssim \Var(\edavg{y})$ by \cref{lemma:d_matrix_frobenius}. Using $\E[D] \leq \E[D^2]^{1/2}$ and $\norm{(d_{ij})}_{\text{op}} \leq \norm{(d_{ij})}_{F}$ then yields for the first term in \eqref{eq:lemma_buyer_term_var_estimator_helper}
\[
\frac{1}{I^6}\norm{(d_{ij})}^4_{\text{op}} \leq \frac{1}{I^6}\norm{(d_{ij})}^2_{F} \norm{(d_{ij})}^2_{\text{op}} \lesssim \frac{1}{I^2}\Var(\edavg{y}) \norm{(d_{ij})}_{\text{op}}^2
\]
For the second term in \eqref{eq:lemma_buyer_term_var_estimator_helper}, 
\begin{align*}
\frac{1}{I^5}\E[D] \norm{(d_{ij})}^3_{\text{op}} & \leq \frac{1}{I^2} \sqrt{ \frac{1}{I^2}\E[D^2]} \cdot \left(\frac{1}{I^2} \norm{(d_{ij})}_{F} \right) \norm{(d_{ij})}_{\text{op}}^2 \\
& \lesssim \frac{1}{I^2} \Var(\edavg{y}) \norm{(d_{ij})}_{\text{op}}^2
\end{align*}
For the third term in \eqref{eq:lemma_buyer_term_var_estimator_helper}, 
\begin{align*}
\frac{1}{I^4}\E[D]^2 \norm{(d_{ij})}^2_{\text{op}} & \leq \frac{1}{I^2} \left(\frac{1}{I^2} \E[D^2] \right) \norm{(d_{ij})}_{\text{op}}^2  \\
& \lesssim \frac{1}{I^2} \Var(\edavg{y}) \norm{(d_{ij})}_{\text{op}}^2.
\end{align*}
Summarizing the established bounds, we see that
\[
\frac{1}{I^2} \frac{\Var\left[ \frac{1}{I} \sum_{i=1}^I \left( \hat{\overline{y}}_{i,\bullet} - \dbar{y} \right)^2\right]}{\Var(\edavg{y})^2} \leq \frac{C}{I^4} \frac{\norm{(d_{ij})}_{\text{op}}^2}{\Var(\edavg{y})},
\]
which goes to zero by assumption.\\

For the second term of \eqref{eq:lemma_buyer_term_cons_variance_anova}, upper bound it by
\[
\frac{C}{I^2}\sum_{i=1}^I \E \left[   \left( \hat{\overline{y}}_{i,\bullet}- \dbar{y} \right)^4  \right]
\]
Rewrite $\left( \hat{\overline{y}}_{i,\bullet}- \dbar{y} \right) = \left( \hat{\overline{y}}_{i,\bullet} - \overline{y}_{i,\bullet}\right) + \left(\overline{y}_{i,\bullet} - \dbar{y} \right)$ and expand, yielding 
\begin{align}
\frac{1}{I^2}\sum_{i=1}^I \E \left[   \left( \hat{\overline{y}}_{i,\bullet}- \overline{y}_{i,\bullet} \right)^4  \right] & \lesssim \frac{1}{I^2}\sum_{i=1}^I \E \left[   \left( \hat{\overline{y}}_{i,\bullet}- \dbar{y} \right)^4  \right] \label{eq:anova_case2_term1} \\
+ & \frac{1}{I^2}\sum_{i=1}^I \E \left[   \left( \hat{\overline{y}}_{i,\bullet}- \overline{y}_{i,\bullet} \right)^3  \right] \left(\overline{y}_{i,\bullet} - \dbar{y} \right) \label{eq:anova_case2_term2}\\
+ & \frac{1}{I^2}\sum_{i=1}^I \E \left[   \left( \hat{\overline{y}}_{i,\bullet}- \overline{y}_{i,\bullet}\right)^2  \right] \left(\overline{y}_{i,\bullet} - \dbar{y} \right)^2 \label{eq:anova_case2_term3}\\
+ & \frac{1}{I^2}\sum_{i=1}^I \left(\overline{y}_{i,\bullet} - \dbar{y} \right)^4 \label{eq:anova_case2_term4}
\end{align}
We bound each of these terms separately.

\begin{enumerate}
    \item For $\eqref{eq:anova_case2_term1}$, \cref{lemma:sampling_subgaussianity} gives us the upper bound
    \begin{align*}
    \frac{1}{I^2} \sum_{i=1}^I \Var(\eavg{y}_{i,\bullet})^2 & \lesssim  \frac{1}{I^2} \sum_{i=1}^I \left[ \frac{1}{J^2} \sum_{j=1}^J (y_{ij} - \overline{y}_{i,\bullet})^2 \right]^2 \\
    & \lesssim \frac{1}{I^5} \sum_{i,j} (y_{ij} - \overline{y}_{i,\bullet})^4 \\
    & \lesssim \frac{1}{I^5} \sum_{i,j} (y_{ij} - \overline{y}_{i,\bullet})^2 \cdot \max |\overline{y}_{i,\bullet} - y_{ij}|^2 \\
    & \lesssim \frac{1}{I}  \Var(\edavg{y}) \cdot \max |\overline{y}_{i,\bullet} - y_{ij}|^2  
    \end{align*}
    \item For $\eqref{eq:anova_case2_term2}$, applying the same subgaussian moment bound and Jensen's inequality yields
    \begin{align*}
    & \frac{1}{I^2}\sum_{i=1}^I \E \left[   \left( \hat{\overline{y}}_{i,\bullet}- \overline{y}_{i,\bullet} \right)^3  \right] \left(\overline{y}_{i,\bullet} - \dbar{y} \right) \\
    \lesssim & \max_i |\overline{y}_{i,\bullet} - \dbar{y}| \frac{1}{I^2}\sum_{i=1}^I \left[ \frac{1}{J^2} \sum_{j=1}^J (y_{ij} - \overline{y}_{i,\bullet})^2 \right]^{3/2} \\
    \lesssim & \max_i |\overline{y}_{i,\bullet} - \dbar{y}| \frac{1}{I^2 J^{3/2} J} \sum_{i,j} |y_{ij} - \overline{y}_{i,\bullet}|^3 \\
    \lesssim & \frac{1}{I^{1/2}} \max_i |\overline{y}_{i,\bullet} - \dbar{y}| \max_{ij} |y_{ij} - \overline{y}_{i,\bullet}| \Var[\edavg{y}].
    \end{align*}  
    \item For $\eqref{eq:anova_case2_term3}$ apply the same logic:
    \begin{align*}
    & \frac{1}{I^2}\sum_{i=1}^I \E \left[   \left( \hat{\overline{y}}_{i,\bullet}- \overline{y}_{i,\bullet} \right)^2  \right] \left(\overline{y}_{i,\bullet} - \dbar{y} \right)^2 \\
    \lesssim & \max_i |\overline{y}_{i,\bullet} - \dbar{y}|^2 \frac{1}{I^2}\sum_{i=1}^I \left[ \frac{1}{J^2} \sum_{j=1}^J (y_{ij} - \overline{y}_{i,\bullet})^2 \right] \\
    \lesssim & \max_i |\overline{y}_{i,\bullet} - \dbar{y}|^2 \Var(\edavg{y})
    \end{align*}  
    \item For $\eqref{eq:anova_case2_term4}$, we can easily upper bound the term by $\frac{1}{I}\nu^\tB(y_{ij})  \max_i |\overline{y}_{i,\bullet} - \dbar{y}|^2 \leq \Var(\edavg{y}) \max_i |\overline{y}_{i,\bullet} - \dbar{y}|^2$.
\end{enumerate}

Combining the individual bounds gives the upper bound
\begin{align*}
    &\frac{1}{I^2} \frac{I^{-2}\sum_{i=1}^I \E \left[   \left( \hat{\overline{y}}_{i,\bullet}- \dbar{y} \right)^4  \right]}{\Var(\edavg{y})^2} \\
    & \lesssim \frac{1}{I^3} \frac{\max_{ij}|y_{ij} - \overline{y}_{i,\bullet}|^2}{\Var(\edavg{y})} + \frac{1}{I^{5/2}} \frac{\max_i |\overline{y}_{i,\bullet} - \dbar{y}| \max_{ij} |y_{ij} - \overline{y}_{i,\bullet}|}{ \Var(\edavg{y})}  + \frac{1}{I^2} \frac{\max_i |\overline{y}_{i,\bullet} - \dbar{y}|^2}{ \Var(\edavg{y})}
\end{align*}
Under the given assumptions, the expression goes to zero.

\end{proof}

\begin{lemma}
\label{lemma:square_of_group_average_convergence}
\[
\frac{1}{I^2 J^2} \frac{\Var\left[\frac{1}{I_\gamma J_\gamma} \sum_{i \in \cl{I}_\gamma, j \in \cl{J}_\gamma} \left( y_{ij}(\gamma) - \dbar{y}_\gamma \right)^2\right]}{\Var(\edavg{y})^2} \xrightarrow{p} 0
\]
\end{lemma}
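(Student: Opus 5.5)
The plan is to treat the squared deviations as a new matrix and reuse the tools already applied to the buyer-average term in \cref{lemma:empirical_buyer_variance_small}. As there, we work with $\gamma = \tr$ and write $d_{ij} = y_{ij} - \dbar{y}$. Set $q_{ij} := d_{ij}^2$. The statistic is then the doubly randomized sum $\edavg{q} = \frac{1}{I_T J_T}\sum_{i,j} W_i^\tB W_j^\tS q_{ij}$. It is a deterministic matrix average under double sampling, so its variance is given exactly by \cref{prop:generic_variance_formula}:
\[
\Var(\edavg{q}) \lesssim \frac{1}{I}\nu^\tB(q) + \frac{1}{J}\nu^\tS(q) + \frac{1}{IJ}\nu^\tBS(q).
\]
It therefore suffices to show that each of the three terms, divided by $I^2J^2\Var(\edavg{y})^2$, tends to zero. (The limit is deterministic, so the $\xrightarrow{p}$ in the statement is just ordinary convergence.)

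Next, bound each $\nu^\theta(q)$ by a second moment. For example, $\nu^\tBS(q) \lesssim \frac{1}{IJ}\sum_{i,j} d_{ij}^4$, and $\nu^\tB(q) \le \frac{1}{I}\sum_i \bar{q}_{i,\bullet}^2 \le \frac{1}{IJ}\sum_{i,j} d_{ij}^4$ by Jensen, with the same bound for $\nu^\tS(q)$. Because $I\asymp J$, the whole variance is $\lesssim \frac{1}{I}\cdot\frac{1}{IJ}\sum_{i,j} d_{ij}^4$. We then peel off a maximum:
\[
\sum_{i,j} d_{ij}^4 \le \max_{i,j} d_{ij}^2 \cdot \norm{d}_F^2 \lesssim \max_{i,j} d_{ij}^2 \cdot I^2 \Var(\edavg{y}),
\]
where the last step uses \cref{lemma:d_matrix_frobenius} in the form $\norm{d}_F^2 \lesssim I^2\Var(\edavg{y})$, exactly as in the proof of \cref{lemma:empirical_buyer_variance_small}. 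This gives $\Var(\edavg{q}) \lesssim I^{-1}\max_{ij} d_{ij}^2\,\Var(\edavg{y})$. The target ratio is then
\[
\lesssim \frac{\max_{ij} d_{ij}^2}{I^{5}\Var(\edavg{y})}.
\]

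The last step uses the hypotheses of \cref{thm:consistent_variance_estimator}. Decompose $d_{ij} = (y_{ij} - \bar{y}_{i,\bullet}) + (\bar{y}_{i,\bullet} - \dbar{y})$. The assumptions give $\max_{ij}|y_{ij} - \bar{y}_{i,\bullet}| = o(I^{3/2}\Var(\edavg{y})^{1/2})$ and $\max_i|\bar{y}_{i,\bullet} - \dbar{y}| = o(I\,\Var(\edavg{y})^{1/2})$. Hence $\max_{ij} d_{ij}^2 = o(I^3\Var(\edavg{y}))$, and the ratio is $o(I^{-2}) \to 0$.

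I expect the main obstacle to be bookkeeping in the second step rather than anything conceptual. One must check that the exact $\nu^\theta(q)$ weights from \cref{prop:generic_variance_formula} do not lose the factor of $I$ compared with the crude bound, and that \cref{lemma:d_matrix_frobenius} really delivers $\norm{d}_F^2 \lesssim I^2\Var(\edavg{y})$ and not a weaker power. If that Frobenius comparison only holds in a weaker form, I would instead bound $\sum_{i,j} d_{ij}^4$ by splitting $d_{ij}$ into its row-mean part and its within-row part. The within-row part would be controlled through $\nu^\tBS+\nu^\tS$ and the row-mean part through $\nu^\tB$, mirroring the four-term expansion used for \eqref{eq:anova_case2_term1}--\eqref{eq:anova_case2_term4}. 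There is a lot of slack (a factor of $I^{2}$), so even a lossy version should suffice.
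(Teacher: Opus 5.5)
Your route is the same as the paper's. You bound $\Var(\edavg{q})$ via \cref{prop:generic_variance_formula}, dominate every $\nu^\theta(q)$ by $\frac{1}{IJ}\sum_{i,j} d_{ij}^4 \le \max_{i,j} d_{ij}^2\cdot\frac{1}{IJ}\norm{d}_F^2$, compare $\norm{d}_F^2$ with $\Var(\edavg{y})$, and conclude from the max-entry hypotheses.

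The one error is the exponent in the Frobenius comparison. \cref{lemma:d_matrix_frobenius} gives $\norm{d}_F^2 \lesssim I^4\Var(\edavg{y})$, not $I^2\Var(\edavg{y})$; the $I^2$ is the bound for $\E[D^2]$. The $I^4$ cannot be improved in general: in \cref{ex:iid_normal_superpop}, $\norm{d}_F^2\asymp IJ$ while $\Var(\edavg{y})\asymp 1/(IJ)$.

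With the correct exponent you get $\Var(\edavg{q})\lesssim I\max_{i,j}d_{ij}^2\,\Var(\edavg{y})$, so the target ratio is $\lesssim \max_{i,j}d_{ij}^2/(I^3\Var(\edavg{y}))$. This is exactly the paper's final bound. Your last step establishes precisely $\max_{i,j}d_{ij}^2 = o(I^3\Var(\edavg{y}))$, so the lemma still follows, but the ratio is $o(1)$, not $o(I^{-2})$.

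So the ``factor of $I^2$ of slack'' you cite does not exist. The hypotheses are used essentially sharply: for the within-row deviation $\max_{i,j}|y_{ij}-\bar{y}_{i,\bullet}|$, only the $\sqrt{\log I}$ factor is spare. The fallback splitting you describe is not needed.
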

\begin{proof}
Fix any $\gamma \in \Gamma$ and let $d_{ij} = y_{ij}(\gamma) - \dbar{y}_\gamma$. By the variance formula in \cite{masoero2024multiple},
\[
\Var \left[ \frac{1}{I_\gamma J_\gamma} \sum_{i \in \cl{I}_\gamma, j \in \cl{J}_\gamma} d_{ij}(\gamma)^2 \right] \lesssim \frac{1}{I}\nu^\tB_\gamma(\set{d_{ij}^2}) + \frac{1}{J}\nu^\tS_\gamma(\set{d_{ij}^2}) + \frac{1}{IJ}\nu^\tBS_\gamma(\set{d_{ij}^2}) 
\]
Thus, it suffices to bound $\nu^\tB(d_{ij}^2)$ in terms of $\nu^\tB(y_{ij})$. Observe that 
\begin{align*}
\nu^\tB(d_{ij}^2) \lesssim \frac{1}{I J} \sum_{i=1}^I \sum_{j=1}^J d_{ij}^4 & \leq \max_{ij} d_{ij}^2\frac{1}{IJ} \sum_{ij} d_{ij}^2 \\
& \lesssim \max_{ij} d_{ij}^2 \cdot \left[\nu^\tB(y_{ij}) + \nu^\tS(y_{ij}) + \nu^\tBS(y_{ij}) \right].
\end{align*}
The same bounds hold for $\nu^\tS(d_{ij}^2), \nu^\tBS(d_{ij}^2)$. Thus,
\[
\frac{1}{I^2 J^2} \frac{\Var\left[\frac{1}{I_\gamma J_\gamma} \sum_{i \in \cl{I}_\gamma, j \in \cl{J}_\gamma} \left( y_{ij}(\gamma) - \dbar{y}_\gamma \right)^2\right]}{\Var(\edavg{y})^2} \leq \frac{1}{I^3} \frac{\max_{ij}|d_{ij}|^2}{\Var(\edavg{y})}.
\]
By assumption, the right hand side goes to zero.
\end{proof}

\begin{lemma}
\label{lemma:var_squared_and_central_moments}
Let $X$ be a random variable with finite fourth moment. Set $\mu := \E X$. Then
\begin{equation}
\label{eq:varx2-identity}
\Var(X^2) \leq \E\left[(X - \mu)^4 \right] + 4|\mu| \E\left[|X - \mu|^3 \right]+ 4 \mu^2 \E\left[(X - \mu)^2 \right].
\end{equation}
\end{lemma}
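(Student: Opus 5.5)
Write $Y := X - \mu$, so that $X = \mu + Y$ with $\E Y = 0$ and $X^2 = \mu^2 + 2\mu Y + Y^2$. The plan is to expand the variance of this expression and bound each piece by central moments of $X$. There is no probabilistic content here; the argument is purely algebraic, using only the finiteness of the fourth moment so that every quantity below is well defined.

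The first step is to drop the constant $\mu^2$ and note that $\E[Y^2] = \Var(X)$. This gives
\[
\Var(X^2) = \Var(Y^2 + 2\mu Y) = \E\left[(Y^2 + 2\mu Y - \E[Y^2])^2\right].
\]
Expanding the square produces three diagonal terms and three cross terms. The diagonal terms are $\E[(Y^2 - \E Y^2)^2] = \E[Y^4] - (\E Y^2)^2 \le \E[Y^4]$ and $4\mu^2\E[Y^2]$. The cross term is $4\mu\,\E[Y(Y^2 - \E Y^2)] = 4\mu\,\E[Y^3]$, which uses $\E Y = 0$ to kill the term $4\mu\,\E[Y]\,\E[Y^2]$.

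The second step bounds the only signed term by $|4\mu\,\E[Y^3]| \le 4|\mu|\,\E|Y|^3$. Summing the three bounds then gives exactly \eqref{eq:varx2-identity}. The bound is in fact slightly loose, because we discarded the nonnegative quantity $(\E Y^2)^2$.

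The only step needing care is the cross term: one must check that the centering $\E Y^2$ contributes nothing after multiplication by $Y$, which holds precisely because $\E Y = 0$. Beyond that, every step is routine bookkeeping.
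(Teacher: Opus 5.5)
Your proof is correct and follows essentially the same route as the paper. Both substitute $Y = X - \mu$ and expand $\Var(Y^2 + 2\mu Y)$ into $\E[Y^4] - (\E Y^2)^2 + 4\mu^2 \E[Y^2] + 4\mu\,\E[Y^3]$; the paper reaches this via $\Var(Y^2) + 4\mu^2\Var(Y) + 4\mu\Cov(Y,Y^2)$, and then, as you do, bounds the cubic term in absolute value and discards $-(\E Y^2)^2$.
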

\begin{proof}Set $Y := X-\mu$. Then $X^2 = (\mu+Y)^2 = \mu^2 + 2\mu Y + Y^2$ and hence
\[
\Var(X^2) = \Var(2\mu Y + Y^2)
           = \Var(Y^2) + 4\mu^2 \Var(Y) + 4\mu\,\Cov(Y,Y^2),
\]
Since $\Var(Y)=\E[Y^2]$, $\Var(Y^2)=\E[Y^4]-\E[Y^2]^2$, and $\Cov(Y,Y^2) = \E[Y^3]$:
\[
\Var(X^2) = \E[(X-\mu)^4] - \E[(X-\mu)^2]^2 + 4\mu\,\E[(X-\mu)^3] + 4\mu^2\,\E[(X-\mu)^2].
\]
Using triangle inequality and removing the negative term gives the result.
\end{proof}

\begin{lemma}[ANOVA decomposition for variance]
\label{lemma:anova_variance_decomp}
\begin{align*}
& \frac{1}{IJ}\sum_{ij} (y_{ij} - \overline{y}_{i,\bullet})^2 = \nu^{\tBS} + \nu^{\tS} \\
& \frac{1}{IJ}\sum_{ij} (y_{ij} - \overline{y}_{\bullet,j})^2 = \nu^{\tBS} + \nu^{\tB} \\
& \frac{1}{IJ}\sum_{ij} (y_{ij} - \dbar{y})^2 = \nu^{\tBS} + \nu^{\tB} + \nu^{\tS}
\end{align*}
\end{lemma}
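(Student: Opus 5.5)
The three identities follow from a single orthogonal decomposition of the centered matrix. I would write, for each pair $(i,j)$,
\[
y_{ij} - \dbar{y} = \delta_i^{\tB} + \delta_j^{\tS} + \delta_{ij}^{\tBS},
\]
with $\delta_i^{\tB} = \overline{y}_{i,\bullet} - \dbar{y}$, $\delta_j^{\tS} = \overline{y}_{\bullet,j} - \dbar{y}$, and $\delta_{ij}^{\tBS}$ the doubly decentered entry, as defined at the start of \cref{sec:variance_formulas}. The key observation is that these three pieces are mutually orthogonal in $\ell^2$ over the index set $[I]\times[J]$. For the cross terms:
\begin{itemize}
\item $\sum_{i,j}\delta_i^{\tB}\delta_j^{\tS} = \bigl(\sum_i \delta_i^{\tB}\bigr)\bigl(\sum_j\delta_j^{\tS}\bigr) = 0$, since $\delta^{\tB}$ sums to zero;
\item $\sum_{i,j}\delta_i^{\tB}\delta_{ij}^{\tBS} = \sum_i \delta_i^{\tB}\sum_j \delta_{ij}^{\tBS} = 0$, because each row of $\delta^{\tBS}$ sums to zero;
\item $\sum_{i,j}\delta_j^{\tS}\delta_{ij}^{\tBS} = 0$ by the column sums of $\delta^{\tBS}$.
\end{itemize}

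Squaring and summing then gives the total identity
\[
\sum_{i,j}(y_{ij}-\dbar{y})^2 = J\sum_i (\delta_i^{\tB})^2 + I\sum_j(\delta_j^{\tS})^2 + \sum_{i,j}(\delta_{ij}^{\tBS})^2 .
\]
Dividing by $IJ$ produces the averages $\frac{1}{I}\sum_i(\delta_i^{\tB})^2$, $\frac{1}{J}\sum_j(\delta_j^{\tS})^2$ and $\frac{1}{IJ}\sum_{i,j}(\delta_{ij}^{\tBS})^2$. These match $\nu^{\tB},\nu^{\tS},\nu^{\tBS}$ up to the normalizations $1/(I-1)$, $1/(J-1)$ and $1/((I-1)(J-1))$. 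As stated at the start of \cref{sec:variance_formulas}, we interchange $1/I$ and $1/(I-1)$ throughout, so this gives the third identity (exactly under the $1/I$ convention, and up to $1+O(1/I)$ factors otherwise).

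For the first identity, I note that $y_{ij}-\overline{y}_{i,\bullet} = \delta_j^{\tS} + \delta_{ij}^{\tBS}$. The same cross-term argument, using the column sums of $\delta^{\tBS}$, kills the mixed term, leaving $\frac{1}{J}\sum_j(\delta_j^{\tS})^2 + \frac{1}{IJ}\sum_{i,j}(\delta_{ij}^{\tBS})^2 = \nu^{\tS}+\nu^{\tBS}$. The second identity is the symmetric statement: $y_{ij}-\overline{y}_{\bullet,j} = \delta_i^{\tB}+\delta_{ij}^{\tBS}$, which gives $\nu^{\tB}+\nu^{\tBS}$.

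There is no real obstacle here. The only point needing care is the normalization convention, and I would state explicitly that the equalities hold with $1/I,1/J$ normalizations, which the paper treats as interchangeable with $1/(I-1),1/(J-1)$.
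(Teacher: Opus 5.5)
Your proof is correct. The paper states this lemma without proof, and your orthogonal decomposition $y_{ij}-\dbar{y}=\delta_i^{\tB}+\delta_j^{\tS}+\delta_{ij}^{\tBS}$, with the cross terms removed by the zero row and column sums, is the standard argument the paper relies on implicitly elsewhere (e.g.\ in the proof of \cref{lemma:variance_concentration}, where it writes $y_{ij}-\overline{y}_{i,\bullet}=\delta_{ij}^{\tBS}+\delta_j^{\tS}$). Your normalization caveat is also right: the identities are exact with $1/I$, $1/J$ normalizations, and under the $1/(I-1)$, $1/(J-1)$ definitions they hold up to factors $(I-1)/I$ and $(J-1)/J$, which the paper explicitly treats as interchangeable.
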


\begin{lemma}
\label{lemma:d_matrix_frobenius}
Recall the notation $d_{ij} := y_{ij} - \dbar{y}$. Let $D := \norm{\left( \eavg{d}_{i,\bullet}\right)_i} = \left(\sum_i \eavg{d}_{i,\bullet}^2 \right)^{1/2}.$ We have $\norm{d_{ij}}_F^2  \lesssim I^4\Var(\edavg{y})$ and $\E[D^2] \lesssim I^2\Var(\edavg{y}).$
\end{lemma}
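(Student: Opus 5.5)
We need to prove Lemma \ref{lemma:d_matrix_frobenius}: $\norm{d_{ij}}_F^2 \lesssim I^4\Var(\edavg{y})$ and $\E[D^2]\lesssim I^2\Var(\edavg{y})$.

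The plan is to reduce both bounds to the ANOVA decomposition and the explicit variance formula. By \cref{prop:generic_variance_formula} applied to the single group with $I_T/I$ and $J_T/J$ bounded away from $0$ and $1$ (regime \eqref{eq:asymptotic_regime}), we have
\[
\Var(\edavg{y}) \asymp \frac{1}{I}\nu^\tB + \frac{1}{J}\nu^\tS + \frac{1}{IJ}\nu^\tBS,
\]
with all coefficients of exact order $I^{-1}$, $J^{-1}$ and $(IJ)^{-1}$. Since $I\asymp J$, this gives $I^2\Var(\edavg{y}) \gtrsim I(\nu^\tB+\nu^\tS)+\nu^\tBS$.

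\textbf{Frobenius bound.} By \cref{lemma:anova_variance_decomp},
\[
\norm{d_{ij}}_F^2=\sum_{ij}(y_{ij}-\dbar{y})^2 \asymp IJ\,(\nu^\tB+\nu^\tS+\nu^\tBS) \lesssim I^2(\nu^\tB+\nu^\tS+\nu^\tBS).
\]
It therefore suffices to show $\nu^\tB+\nu^\tS+\nu^\tBS\lesssim I^2\Var(\edavg{y})$. This follows from the lower bound above, since $I\nu^\tB\ge\nu^\tB$, $I\nu^\tS\ge\nu^\tS$, and the $\nu^\tBS$ term appears with coefficient of order one. The bound is crude, losing a factor of $I$ on the row and column terms, but it is all that is claimed.

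\textbf{Bound on $D^2$.} Write $\eavg{d}_{i,\bullet}=\bar d_{i,\bullet}+(\eavg{d}_{i,\bullet}-\bar d_{i,\bullet})$, where $\bar d_{i,\bullet}=\bar y_{i,\bullet}-\dbar y$. The cross term has mean zero because $\E\,\eavg{d}_{i,\bullet}=\bar d_{i,\bullet}$ under simple random sampling of columns. Hence
\[
\E[D^2]=\sum_i \bar d_{i,\bullet}^2+\sum_i\Var(\eavg{d}_{i,\bullet}).
\]
The first sum is $(I-1)\nu^\tB$. By the without-replacement variance formula, $\Var(\eavg{d}_{i,\bullet})\lesssim J^{-2}\sum_j(y_{ij}-\bar y_{i,\bullet})^2$, so the second sum is at most $J^{-2}\sum_{ij}(y_{ij}-\bar y_{i,\bullet})^2\asymp J^{-2}\cdot IJ(\nu^\tBS+\nu^\tS)$, again by \cref{lemma:anova_variance_decomp}. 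Altogether,
\[
\E[D^2]\lesssim I\nu^\tB+\nu^\tS+\nu^\tBS\lesssim I^2\Var(\edavg{y}),
\]
using the lower bound from the variance formula.

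The only point requiring care is the lower bound on $\Var(\edavg{y})$ by each of its three components. This is immediate because the finite-population factors $(I-I_T)/(I_TI)$ and $(J-J_T)/(J_TJ)$ are of exact order $1/I$ under \eqref{eq:asymptotic_regime} and all three terms in the variance formula are nonnegative. We also interchange $1/I$ and $1/(I-1)$ freely, as stated in the appendix preamble.
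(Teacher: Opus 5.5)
Your proof is correct and follows essentially the same route as the paper. Both arguments bound $\norm{d_{ij}}_F^2$ and $\E[D^2]$ by combining the ANOVA decomposition and the without-replacement variance of each $\eavg{d}_{i,\bullet}$ with the explicit variance formula for $\edavg{y}$. You also correctly make explicit that the step needs the lower bound $\Var(\edavg{y}) \gtrsim \frac{1}{I}\nu^\tB + \frac{1}{J}\nu^\tS + \frac{1}{IJ}\nu^\tBS$, which holds under \eqref{eq:asymptotic_regime}, whereas the paper's write-up states this comparison with $\lesssim$.
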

\begin{proof}
By \cref{lemma:anova_variance_decomp}, $\norm{d_{ij}}_F^2 = \sum_{i,j} (y_{ij} - \dbar{y})^2 = IJ(\nu^{\tBS} + \nu^{\tB} + \nu^{\tS})$. By \cref{prop:generic_variance_formula}, $\Var(\edavg{y}) \lesssim \frac{1}{I}\nu^\tB + \frac{1}{J}\nu^\tS + \frac{1}{IJ}\nu^\tBS$. This shows the first bound. For the second bound, the sampling without replacement variance formula gives
\begin{align*}
\E[D^2] & \lesssim \frac{1}{J^2} \sum_{ij} (d_{ij} - \overline{d}_{i,\bullet})^2 + \sum_i \overline{d}_{i,\bullet}^2 \\
& = \frac{1}{J^2} \sum_{ij} (y_{ij} - \overline{y}_{i,\bullet})^2 + \sum_{i=1}^I (\overline{y}_{i,\bullet} - \dbar{y})^2 \\
& \lesssim \nu^\tBS + \nu^\tS + I \nu^\tB && \text{(\cref{lemma:anova_variance_decomp})}\\
& \lesssim I^2 \Var(\edavg{y}).
\end{align*}
\end{proof}

\begin{lemma}[Trivial Variance Bound]
\label{lemma:variance_of_sum}
For any two random variables $X,Y$, $\Var(X + Y) \leq 2\Var(X) + 2\Var(Y)$. More generally, let $X_1,\dots,X_n$ be any collection of random variables. Then
\[
\Var\left(\frac{1}{n}\sum_{i=1}^n X_i \right) \leq \frac{1}{n} \sum_{i=1}^n \Var(X_i).
\]
\end{lemma}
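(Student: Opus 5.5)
This is a direct consequence of the elementary decomposition of the variance of an average. I will first prove the two-variable inequality and then extend it to \(n\) variables, either by Cauchy--Schwarz or by convexity of the quadratic form \(v\mapsto v^\top \Sigma v\).

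For the two-variable statement, expand
\[
\Var(X+Y) = \Var(X) + \Var(Y) + 2\Cov(X,Y).
\]
Bound the cross term by \(2\Cov(X,Y) \le 2\sqrt{\Var(X)\Var(Y)} \le \Var(X)+\Var(Y)\), using Cauchy--Schwarz followed by AM--GM. This gives \(\Var(X+Y)\le 2\Var(X)+2\Var(Y)\).

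For the general statement, center each variable: set \(\tilde X_i := X_i - \E X_i\), so that \(\Var(\frac1n\sum_i X_i) = \E[(\frac1n \sum_i \tilde X_i)^2]\). The function \(t\mapsto t^2\) is convex, so Jensen's inequality applied to the uniform average over indices gives, pointwise,
\[
\Bigl(\frac1n\sum_{i=1}^n \tilde X_i\Bigr)^2 \le \frac1n\sum_{i=1}^n \tilde X_i^2 .
\]
Equivalently, this is Cauchy--Schwarz in the form \((\sum_i \tilde X_i)^2\le n\sum_i \tilde X_i^2\). Taking expectations yields \(\Var(\frac1n\sum_i X_i)\le \frac1n\sum_i \Var(X_i)\). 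The two-variable case also follows from this by taking \(n=2\) and multiplying through by \(4\).

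There is no real obstacle here. The only point to check is that all variances are finite, which holds in the paper's finite-population setting because all variables are bounded functions of the discrete randomization. In the applications, e.g.\ the proofs of \cref{prop:direct_effect_single_beta_prob_convergence} and \cref{prop:sigma_buyer_seller_term_estimator_converges}, the lemma is used with a fixed number of terms (at most four). Consequently, the constant \(n\) lost when passing from a sum to an average is absolute, and it can be absorbed into the \(\lesssim\) notation.
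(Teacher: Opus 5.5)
Your proof is correct and follows the paper's approach. The paper's proof is simply ``Cauchy--Schwarz,'' and your pointwise bound $(\frac{1}{n}\sum_i \tilde X_i)^2 \le \frac{1}{n}\sum_i \tilde X_i^2$, followed by taking expectations, is exactly that argument; the two-variable inequality is the special case $n=2$ multiplied by $4$.
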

\begin{proof}
Cauchy-Schwarz.     
\end{proof}

\section{Sub-Gaussianity \& Moment Bounds}
\label{sec:subgaussianity_sampling_without_replacement}

Results in Appendix A of \cite{lei2021regression} give us the following.

\begin{lemma}[Sampling without replacement has subgaussian tails] For scalars $u_1, \ldots, u_I$, 
\label{lemma:sampling_subgaussianity}
\[
\Prob\left( \left|\sum_{i=1}^I W_i^\tB u_i - I_T \overline{u} \right| > t \right) \leq 2\exp(-\frac{t^2}{8 I \sigma^2_u})
\]
where $\overline{u} = \frac{1}{I} \sum_i^I u_i$ and $\sigma^2_u := \frac{1}{I}\sum_{i=1}^I  u_i^2 - \left(\frac{1}{I} \sum_i u_i \right)^2$. Thus,
\[
\Prob\left( \left|\widehat{\overline{u}} - \overline{u} \right| > t \right) \leq 2\exp(-\frac{t^2}{8 I\sigma^2_u/I_T^2}) 
\]
where $\widehat{\overline{u}} = \frac{1}{I_T} \sum_i W_i^\tB u_i$.
\end{lemma}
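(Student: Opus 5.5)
The plan is to prove the first tail bound with Chatterjee's exchangeable-pair method for concentration. This gives a variance proxy proportional to $I\sigma_u^2$, independent of $I_T$. A naive reduction to sampling with replacement plus Hoeffding's range bound would lose a factor of $I_T$, since it only gives a proxy of order $I_T \cdot \max_i (u_i-\overline{u})^2 \lesssim I_T I \sigma_u^2$. The second display then follows by dividing by $I_T$, i.e.\ substituting $t \mapsto I_T t$.

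\textbf{Setup.} Let $F := \sum_i W_i^\tB u_i - I_T\overline{u}$, a function of the uniformly random treated set $\cl{I}_\tr$ of size $I_T$, with control set $\cl{I}_\cc$ of size $I_C = I - I_T$. Build an exchangeable pair by drawing $a \in \cl{I}_\tr$ and $b \in \cl{I}_\cc$ independently and uniformly, then swapping them. The resulting set $\cl{I}_\tr'$ is again uniform, and $(\cl{I}_\tr, \cl{I}_\tr')$ is exchangeable. Then $F - F' = u_a - u_b$.

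\textbf{Step 1 (linear regression condition).} Compute
\[
\E[F - F' \mid W] = \frac{1}{I_T}\sum_{i\in\cl{I}_\tr}u_i - \frac{1}{I_C}\sum_{j\in\cl{I}_\cc}u_j .
\]
Writing $\sum_{\cl{I}_\cc}u_j = I\overline{u} - \sum_{\cl{I}_\tr}u_i$, this simplifies to $\lambda F$ with $\lambda = I/(I_T I_C)$.

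\textbf{Step 2 (bound the conditional variance proxy).} Define $v(W) := \frac{1}{2\lambda}\E[(F-F')^2 \mid W]$. We have
\[
\E[(F-F')^2 \mid W] = \frac{1}{I_T I_C}\sum_{i\in\cl{I}_\tr,\, j\in\cl{I}_\cc}(u_i-u_j)^2 \le \frac{1}{I_T I_C}\sum_{i,j=1}^I (u_i-u_j)^2 = \frac{2I^2\sigma_u^2}{I_T I_C}.
\]
Hence $v(W) \le \frac{I_T I_C}{2I}\cdot\frac{2I^2\sigma_u^2}{I_T I_C} = I\sigma_u^2$ almost surely.

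\textbf{Step 3 (conclude).} Chatterjee's theorem applies, with integrability automatic on a finite probability space. It gives $\Prob(|F|\ge t)\le 2\exp(-t^2/(2I\sigma_u^2))$, which is stronger than the claimed constant $8$. For the rescaled statement, note $\widehat{\overline{u}} - \overline{u} = F/I_T$ and substitute.

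The only step I expect to need care is verifying the exact form $\E[F-F'\mid W] = \lambda F$, since Chatterjee's method requires this identity rather than an approximation. The bookkeeping of $I_T$ versus $I_C$ there must be checked. Alternatively, one can simply cite Appendix A of \cite{lei2021regression}, which establishes the same tail via a comparable argument.
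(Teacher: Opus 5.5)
Your proof is correct. It differs from the paper, which gives no argument of its own for this lemma and simply imports the bound from Appendix A of \cite{lei2021regression}.

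The details all check out:
\begin{itemize}
\item The swap pair is exchangeable, because the swap chain is reversible for the uniform law on subsets of size $I_T$.
\item The identity $\E[F-F'\mid W^{\tB}]=\lambda F$ with $\lambda=I/(I_TI_C)$ holds exactly.
\item Chatterjee's theorem with $B=0$ and $C=I\sigma_u^2$ gives the tail $2\exp(-t^2/(2I\sigma_u^2))$, so the stated constant $8$ is loose.
\end{itemize}
Your Step~2 also gives away a factor of $2$. Each cross pair $(i,j)\in\mathcal{I}_T\times\mathcal{I}_C$ appears twice in $\sum_{i,j=1}^I(u_i-u_j)^2$, once in each order. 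Hence $\E[(F-F')^2\mid W^{\tB}]\le I^2\sigma_u^2/(I_TI_C)$, and the proxy improves to $I\sigma_u^2/2$.

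What your route buys is a short, self-contained proof of exactly the scalar statement, with explicit and sharper constants.

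What the citation buys is uniformity with the companion vector lemma. That lemma gives concentration of $\|\sum_i W_i^{\tB}u_i\|_2$ around its mean with proxy $\|U\|_{\operatorname{op}}^2$. The paper takes it from the same source and uses it in its variance-concentration arguments. That statistic is not linear in $W^{\tB}$, so the exact regression identity your proof rests on fails there. The dimension-free operator-norm proxy instead reflects concentration of the convex, $\|U\|_{\operatorname{op}}$-Lipschitz map $W^{\tB}\mapsto\|U^\top W^{\tB}\|_2$ over a uniformly random subset.

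So your argument is the better choice for making this scalar lemma self-contained. It does not replace the citation for the vector bound the paper also needs.
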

By standard formulas, $\Var(\widehat{\overline{u}}) = \frac{(I - I_T)}{I_T (I-1)}\sigma_u^2$. Under the asymptotic scalings for $I,I_T$ in this paper, we have $I\sigma^2_u/I_T^2 \leq C\Var(\widehat{\overline{u}})$ for some absolute constant $C$. Therefore $\widehat{\overline{u}}$ is also subgaussian with parameter $O(\Var(\widehat{\overline{u}}))$.

\begin{lemma}[Vector sampling without replacement has subgaussian tails]
\label{lemma:vector_concentration_sampling_without_replacement}
Let $u_1,\dots,u_I$ now be a finite population of vectors. Then
\[
\Prob \left(\norm{\sum_i W_i^\tB u_i}_2 \geq \E \norm{\sum_i W_i^\tB u_i}_2 + t \right) \leq \exp\left(-\frac{t^2}{8 \norm{U}_{\operatorname{op}}^2} \right)
\]
where the $i$th row of $U$ is $u_i^\top.$
\end{lemma}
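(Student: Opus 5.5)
The plan is to reduce the statement to a concentration inequality for convex Lipschitz functions of a uniformly random permutation. This is the route behind the results of Appendix A of \cite{lei2021regression} that we are adapting.

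\textbf{Step 1: reformulation.} Write $W = (W_1^\tB,\dots,W_I^\tB)^\top \in \set{0,1}^I$ and define
\[
f(w) := \norm{U^\top w}_2 = \Big\|\sum_i w_i u_i\Big\|_2, \qquad w \in \bR^I .
\]
Two properties of $f$ are immediate. First, $f$ is convex, being a norm composed with a linear map. Second, $f$ is Lipschitz in the Euclidean norm with constant $\norm{U}_{\operatorname{op}}$, since
\[
|f(w) - f(w')| \leq \norm{U^\top (w - w')}_2 \leq \norm{U}_{\operatorname{op}} \norm{w - w'}_2 .
\]
Next, realize $W$ through a uniformly random permutation $\pi$ of $[I]$ by setting $W_i^\tB = \mathbf{1}\set{\pi^{-1}(i) \leq I_T}$. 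Then $\norm{\sum_i W_i^\tB u_i}_2 = f(W(\pi))$, and the problem becomes concentration of a convex, $\norm{U}_{\operatorname{op}}$-Lipschitz function evaluated at a random point on the slice $\set{w \in \set{0,1}^I : \sum_i w_i = I_T}$.

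\textbf{Step 2: why the coordinatewise route is not enough.} A naive McDiarmid or martingale argument for permutations uses swap differences. A single swap changes $f$ by at most $\norm{u_i - u_k}_2$, so this route bounds the sub-Gaussian parameter by $\sum_i \norm{u_i}_2^2 = \norm{U}_F^2$. That is too weak: later, in \cref{lemma:variance_concentration}, the bound must scale with the operator norm. Convexity is what upgrades $\norm{U}_F$ to $\norm{U}_{\operatorname{op}}$.

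\textbf{Step 3: apply a convex-Lipschitz inequality.} I will invoke Talagrand's convex distance inequality for random permutations. Its upper-tail form for sampling without replacement, as stated by Paulin (concentration for sampling without replacement via the convex distance / entropy method), reads, for convex $L$-Lipschitz $f$,
\[
\Prob\big(f(W) \geq \E f(W) + t\big) \leq \exp\!\big(-t^2/(8L^2)\big).
\]
Taking $L = \norm{U}_{\operatorname{op}}$ gives the claim.

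\textbf{Step 4: recentering and constants (the main obstacle).} If only the median form of Talagrand's inequality is available, I will pass from the median to the mean in the standard way. Integrating the two-sided median tail shows $|\E f - \operatorname{med} f| \lesssim L$, which is absorbed by adjusting the constant. The delicate point is obtaining exactly the constant $8$ with the mean as the centering. The bound is also genuinely one-sided, since only the upper tail of a convex function is controlled cleanly.

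If that bookkeeping fails, my fallback is the entropy (Herbst) method on the Bernoulli--Laplace chain, using its modified log-Sobolev inequality. Convexity gives
\[
\big(f(w) - f(w^{(ik)})\big)_+ \leq \langle \nabla f(w),\, e_i - e_k\rangle_+ ,
\]
where $w^{(ik)}$ denotes $w$ with coordinates $i$ and $k$ swapped. Summing over admissible swaps bounds the energy by a constant times $\norm{\nabla f}_2^2 \leq \norm{U}_{\operatorname{op}}^2$. The Herbst argument then yields a sub-Gaussian upper tail centered at the mean, which is the form actually used in the paper. Only this sub-Gaussianity up to absolute constants is used downstream, so any absolute constant in place of $8$ suffices.
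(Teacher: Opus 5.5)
Your proposal follows the same route as the paper, which simply defers to the proof of Lemma A.3 in \cite{lei2021regression}. That proof views $\norm{\sum_i W_i^\tB u_i}_2 = \norm{U^\top W^\tB}_2$ as a convex, $\norm{U}_{\operatorname{op}}$-Lipschitz function of the assignment vector on the slice and applies a mean-centered concentration inequality for convex Lipschitz functions under sampling without replacement, i.e., your Steps 1 and 3. Your Step 4 caveat is fair, since a median-form Talagrand bound only recovers the statement up to absolute constants, but as you note, only sub-Gaussianity up to constants is used downstream.
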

See the Proof of Lemma A.3 in \cite{lei2021regression}. The other side also holds:
\[
\Prob \left(\norm{\frac{1}{I_T}\sum_i W_i^\tB u_i}_2 \leq \E \norm{\frac{1}{I_T}\sum_i W_i^\tB u_i}_2 - t \right) \leq \exp\left(-\frac{I_T^2 t^2}{8 \norm{U}_{\operatorname{op}}^2} \right).
\]

Subgaussianity of sampling without replacement gives the following moment bounds.

\begin{lemma}
\label{lemma:moment_bound_sampling_wout_replacement}
For any natural number $k$ and scalars $u_1, \ldots, u_I$, such that $\overline{u} = 0.$
\[
\E\left[\left(\frac{1}{I_T} \sum_{i=1}^I W_i^\tB u_i \right)^{2k} \right] \leq C_k \sigma_u^{2k} 
\]
for some constant $C_k$ that only depends on $k$, where $\sigma_u^2$ is the finite population variance of the average,
\[
\sigma_u^2 = \left(\frac{1}{I_T} - \frac{1}{I}\right) \frac{1}{I-1} \sum_{i=1}^I \left( u_i - \overline{u}\right)^2.
\]
\end{lemma}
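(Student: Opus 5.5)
The claim is a moment bound: the $2k$-th moment of the centered sample mean under sampling without replacement is controlled by $\sigma_u^{2k}$. The plan is to read it off the sub-Gaussian tail bound of \cref{lemma:sampling_subgaussianity} via the layer-cake formula.

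Write $\widehat{\overline{u}} = \frac{1}{I_T}\sum_i W_i^\tB u_i$. Since $\overline{u}=0$, $\widehat{\overline{u}}$ is already centered. Set $s^2 := \frac{1}{I}\sum_i (u_i-\overline{u})^2$, the population variance appearing in \cref{lemma:sampling_subgaussianity}. That lemma gives
\[
\Prob\left(|\widehat{\overline{u}}|>t\right)\le 2\exp\left(-\frac{t^2}{K^2}\right), \qquad K^2 := 8 I s^2/I_T^2 .
\]

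Integrating the tail bound gives
\[
\E|\widehat{\overline{u}}|^{2k} = \int_0^\infty 2k\, t^{2k-1}\Prob(|\widehat{\overline{u}}|>t)\,dt \le 4k\int_0^\infty t^{2k-1}e^{-t^2/K^2}\,dt = 2k\,\Gamma(k)\,K^{2k} = 2\,k!\,K^{2k}.
\]
So the $2k$-th moment is at most $2\,k!\,K^{2k}$, with a constant depending only on $k$.

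It remains to compare $K^2$ with $\sigma_u^2$. We have
\[
\sigma_u^2 = \frac{I-I_T}{I_T I}\cdot\frac{I}{I-1}\, s^2,
\]
and hence
\[
\frac{K^2}{\sigma_u^2} = \frac{8 I}{I_T}\cdot\frac{I-1}{I-I_T}.
\]
This is where the constant could fail to be absolute, and it is the only genuine point in the proof. Under the standing regime \eqref{eq:asymptotic_regime}, $I_T/I$ converges to a limit in $(0,1)$, so both $I/I_T$ and $(I-1)/(I-I_T)$ are bounded for all large $I$. Hence $K^2 \le C\sigma_u^2$, which is exactly the remark made after \cref{lemma:sampling_subgaussianity}.

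Combining the two steps yields $\E[\widehat{\overline{u}}^{2k}] \le 2\,k!\,C^k\sigma_u^{2k} =: C_k\sigma_u^{2k}$.

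The dependence of $C$ on the limiting ratio $I_T/I$ is absorbed into the convention that constants may depend on the fixed asymptotic proportions. If one preferred not to rely on this, the alternative is to replace the Hoeffding-type bound with a variance-adaptive (Bernstein/Serfling-type) tail bound for sampling without replacement. This step is not needed under the paper's assumptions.
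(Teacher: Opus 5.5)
Your proof is correct and is essentially the paper's argument. The paper gives no separate proof: it simply asserts that the sub-Gaussian tail of \cref{lemma:sampling_subgaussianity}, together with the remark that $I\sigma_u^2/I_T^2 \le C\Var(\widehat{\overline{u}})$ under \eqref{eq:asymptotic_regime}, yields the moment bounds. Your layer-cake integration and the explicit ratio $K^2/\sigma_u^2 = 8I(I-1)/\{I_T(I-I_T)\}$ are exactly that step written out.

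One correction to your closing aside. The dependence of $C_k$ on the proportion $I_T/I$ cannot be removed by any sharper (Bernstein- or Serfling-type) tail bound. Take $I_T=1$, $u_1=1-1/I$ and $u_i=-1/I$ for $i\ge 2$. Then $\E[\widehat{\overline{u}}^{2k}]\asymp I^{-1}$ while $\sigma_u^{2k}\asymp I^{-k}$, so for $k\ge 2$ the ratio grows like $I^{k-1}$. The regime \eqref{eq:asymptotic_regime} is therefore genuinely needed, not just a convenience.
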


Recall the quantities
\begin{equation}
    \overline{\e}_\gamma^{\tB} = \frac{1}{I_\gamma}\sum_{i=1}^I W_i^{\tB} \delta_i^{\tB}(\gamma), \quad \overline{\e}_\gamma^{\tS} = \frac{1}{J_\gamma}\sum_{j=1}^J W_j^{\tS} \delta_j^{\tS}(\gamma), \quad 
    \dbar{\e}_\gamma^{\tBS} = \frac{1}{I_\gamma J_\gamma}\sum_{i,j} W_i^{\tB} W_j^{\tS}\delta_{ij}^{\tBS}(\gamma).
\end{equation}

\cref{lemma:hoeffding_decomposition} shows that these quantities are uncorrelated. 

\begin{lemma}[Even Moment Bound for Decentered Double Average]
\label{lemma:moment_bound_decentered_double_avg}
Let $k$ be a nonnegative integer. Then for each $\gamma \in \set{\tr,\ib,\is,\cc},$
\[
\E \left[(\dbar{\e}^{\tBS}_\gamma)^{2k}\right] \lesssim C_k \Var(\dbar{\e}^\tBS_\gamma)^k,
\]
for some constant that only depends on $k$. 

\end{lemma}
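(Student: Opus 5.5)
We need to prove that $\E[(\dbar{\e}^{\tBS}_\gamma)^{2k}] \lesssim C_k \Var(\dbar{\e}^\tBS_\gamma)^k$. We argue by conditioning on the seller assignment and applying the scalar moment bound \cref{lemma:moment_bound_sampling_wout_replacement} twice, once per side of the design. Write $\delta_{ij} := \delta_{ij}^{\tBS}(\gamma)$. By the remark following the definition of the average residuals, $\dbar{\e}^{\tBS}_\gamma = \frac{1}{I_\gamma J_\gamma}\sum_{i,j} W_i^\tB W_j^\tS \delta_{ij}$, up to replacing $W$ by $1-W$ on either side, which only changes signs since $\delta$ has zero row and column sums. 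Prefactors such as $I_T/I_\gamma$ are bounded under \eqref{eq:asymptotic_regime}. Define
\[
v_i := \frac{1}{J_\gamma}\sum_j W_j^\tS \delta_{ij}.
\]
This is $\cl{S}$-measurable, and $\sum_i v_i = 0$ because the column sums of $\delta$ vanish. Conditionally on $\cl{S}$, $\dbar{\e}^{\tBS}_\gamma = \frac{1}{I_\gamma}\sum_i W_i^\tB v_i$ is a sampling-without-replacement mean of a centered population, since $\cl{B}$ and $\cl{S}$ are independent.

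\textbf{Step 1 (buyer side).} Applying \cref{lemma:moment_bound_sampling_wout_replacement} conditionally on $\cl{S}$ gives
\[
\E[(\dbar{\e}^{\tBS}_\gamma)^{2k}\mid\cl{S}] \le C_k\Big(\tfrac{c}{I^2}\sum_i v_i^2\Big)^k .
\]
Taking expectations and using the power-mean (Jensen) inequality $(\frac1I\sum_i v_i^2)^k\le \frac1I\sum_i v_i^{2k}$, we obtain
\[
\E[(\dbar{\e}^{\tBS}_\gamma)^{2k}] \lesssim C_k I^{-k}\,\frac1I\sum_i \E[v_i^{2k}].
\]

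\textbf{Step 2 (seller side).} Each $v_i$ is a seller-side sampling mean of the row $(\delta_{ij})_j$, and this row has mean zero because the row sums of $\delta$ vanish. A second application of \cref{lemma:moment_bound_sampling_wout_replacement} gives
\[
\E[v_i^{2k}] \le C_k\Big(\tfrac{c}{J^2}\sum_j \delta_{ij}^2\Big)^k .
\]
Substituting,
\[
\E[(\dbar{\e}^{\tBS}_\gamma)^{2k}] \lesssim C_k^2 (IJ)^{-k}\,\frac1I\sum_i\Big(\frac1J\sum_j\delta_{ij}^2\Big)^k .
\]

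\textbf{Step 3 (main obstacle).} The target is $\Var(\dbar{\e}^\tBS_\gamma)^k\asymp (IJ)^{-k}(\nu^{\tBS}_\gamma)^k$ by \cref{prop:generic_variance_formula}. Step 2 leaves $\frac1I\sum_i r_i^k$ with row energies $r_i:=\frac1J\sum_j\delta_{ij}^2$, whereas the target needs $(\frac1I\sum_i r_i)^k$. Jensen's inequality goes the wrong way here, so the iterated bound closes only if the row energies are roughly balanced. To handle this, I would avoid Jensen in Step 1 and expand $(\sum_i v_i^2)^k$ as a sum over $k$-tuples $\E[v_{i_1}^2\cdots v_{i_k}^2]$. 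I would then bound each mixed moment by H\"older and the hypercontractive estimate from Step 2. This still produces $\max_i r_i$ factors, so genuinely unbalanced matrices remain a problem.

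The honest fallback is to concede that the inequality needs an extra hypothesis: either balanced row energies, $\max_i r_i\lesssim \frac1I\sum_i r_i$, or the bounded/small-residual conditions already imposed in \cref{assmp:main_assumption}, under which $\max_i r_i$ is controlled. One should also check whether the statement as intended fails without such a hypothesis. A rank-one example in the spirit of the discussion section suggests higher moments can be dominated by a single heavy row, so I would expect the final proof to rely on such a condition.
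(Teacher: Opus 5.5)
There is a genuine gap: you abandon the argument because of a miscalculation, and then wrongly conclude that the lemma needs an extra hypothesis. Your Steps 1 and 2 are sound. The loss comes only from the power-mean step $(\frac1I\sum_i v_i^2)^k\le\frac1I\sum_i v_i^{2k}$ in Step 1.

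The alternative you sketch in Step 3, expanding and applying H\"older, does close, and no $\max_i r_i$ factor appears. Write $(\sum_i v_i^2)^k=\sum_{i_1,\dots,i_k}v_{i_1}^2\cdots v_{i_k}^2$ and apply generalized H\"older with all exponents equal to $k$, together with your Step 2 bound:
\[
\E\bigl[v_{i_1}^2\cdots v_{i_k}^2\bigr]\le\prod_{m=1}^k\bigl(\E[v_{i_m}^{2k}]\bigr)^{1/k}\lesssim C_k\prod_{m=1}^k\frac{r_{i_m}}{J}.
\]
Summing over all tuples in $[I]^k$, the right-hand side factorizes exactly as $\sum_{i_1,\dots,i_k}\prod_m r_{i_m}=(\sum_i r_i)^k$. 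Equivalently, Minkowski in $L^k$ gives $\|\sum_i v_i^2\|_{L^k}\le\sum_i\|v_i\|_{L^{2k}}^2$. Combined with Step 1, this yields
$\E[(\dbar{\e}^{\tBS}_\gamma)^{2k}]\lesssim C_k\bigl((IJ)^{-1}\tfrac{1}{IJ}\sum_{i,j}\delta_{ij}^2\bigr)^k\asymp C_k\Var(\dbar{\e}^{\tBS}_\gamma)^k$.
This is exactly the paper's proof: condition on the sellers, apply the sampling-without-replacement moment bound on the buyer side, expand the $k$th power, then use H\"older plus the seller-side moment bound on each factor.

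Your fallback is therefore also wrong. No balanced-row-energy or small-residual hypothesis is needed. Adding one would also break the downstream use in \cref{lemma:fourth_moment_bound_double_avg}, which invokes this lemma unconditionally.

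The rank-one heuristic does not give a counterexample either. If $\delta_{ij}=x_iy_j$ with $\sum_i x_i=\sum_j y_j=0$, then $\dbar{\e}^{\tBS}_\gamma$ is, up to a bounded factor, a product $AB$ of independent buyer and seller sample means. Hence $\E[(AB)^{2k}]=\E[A^{2k}]\E[B^{2k}]\le C_k^2(\Var A\,\Var B)^k=C_k^2\Var(AB)^k$. A non-Gaussian limit says nothing about comparability of moments. A single heavy row is likewise harmless, because the bound depends on the row energies only through $\sum_i r_i$.
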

\begin{proof}
The proof is the same for each $\gamma$ so we will ignore dependence on $\gamma.$ Write $\dbar{\e}^{\tBS} = \frac{1}{I_T} \sum_{i=1}^I W_i^\tB \eavg{\delta}_{i,\bullet}$ where $\eavg{\delta}_{i,\bullet} = \frac{1}{J_T} \sum_j W_j^\tS \delta_{ij}^\tBS$. Conditioning first on $\tS$, notice that $\sum_i \eavg{\delta}_{i,\bullet} = 0$. Using the subgaussian moment bounds for the sample average $\dbar{\e}^{\tBS}$ from the previous lemma, we find that
\begin{align*}
\E[\dbar{\e}^{\tBS \ 2k}] & \lesssim \frac{C_k}{I^k} \E\left[ \left( \frac{1}{I-1} \sum_{i=1}^I \eavg{\delta}_{i,\bullet}^2\right)^k \right].
\end{align*}
Expanding out the $k$th power gives terms of the form $\E\left[ \eavg{\delta}_{i_1,\bullet}^2 \dots \eavg{\delta}_{i_k,\bullet}^2\right]$ for possibly non-unique indices $i_1,\dots,i_k$. Using the generalized Holder inequality and sub-Gaussianity of the sample average $\eavg{\delta}_{i_j,\bullet}$ to bound $\E[\eavg{\delta}_{i_j,\bullet}^{2k}]$ (using the previous lemma again),
\[
\E\left[ \eavg{\delta}_{i_1,\bullet}^2 \dots \eavg{\delta}_{i_k,\bullet}^2\right] \leq \prod_{j=1}^k \E\left[\eavg{\delta}_{i_j,\bullet}^{2k}\right]^{1/k} \lesssim C_k^k \prod_{j=1}^k \Var(\eavg{\delta}_{i_j,\bullet}).
\]
This implies 
\[
\E\left[(\dbar{\e}^{\tBS})^{2k}\right] \lesssim \frac{C_k}{I^k} \E \left[ \frac{1}{I-1} \sum_{i=1}^I \Var(\eavg{\delta}_{i,\bullet})\right]^k .
\]
Recall that $\Var(\eavg{\delta}_{i,\bullet}) \lesssim \frac{1}{J^2} \sum_j (\delta_{ij}^{\tBS})^2.$
Thus, $\E\left[(\dbar{\e}^{\tBS})^{ 2k}\right] \leq C_k \left(\frac{1}{IJ} \nu^{\tBS} \right)^k$ from which the conclusion follows.
\end{proof}

\begin{lemma}[4th Moment Bound for Double Average]
\label{lemma:fourth_moment_bound_double_avg}
\[
\E\left[\left(\edavg{y}- \dbar{y}\right)^4\right]  \leq C\Var(\edavg{y})^2,
\]
for some constant $C$.
\end{lemma}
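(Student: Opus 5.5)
We will use the Hoeffding decomposition of \cref{lemma:hoeffding_decomposition}. Writing $\gamma$ for the group in question (say $\gamma=\tr$), we have
\[
\edavg{y} - \dbar{y} = \pm\,\overline{\e}^{\tB} \pm \overline{\e}^{\tS} \pm \dbar{\e}^{\tBS}.
\]
By the elementary inequality $(a+b+c)^4 \le 27(a^4+b^4+c^4)$, it suffices to bound the fourth moment of each of the three pieces by a constant times the square of its own variance. Since the three pieces are uncorrelated, \cref{prop:generic_variance_formula} gives
\[
\Var(\edavg{y}) = \Var(\overline{\e}^{\tB}) + \Var(\overline{\e}^{\tS}) + \Var(\dbar{\e}^{\tBS}),
\]
so each individual variance is at most $\Var(\edavg{y})$. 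Summing the three bounds will therefore give
\[
\E\left[(\edavg{y}-\dbar{y})^4\right] \lesssim \Var(\edavg{y})^2 .
\]

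\textbf{Buyer and seller terms.} The term $\overline{\e}^{\tB} = \frac{1}{I_\gamma}\sum_i W_i^{\tB}\delta_i^{\tB}$ is a sample average under simple random sampling without replacement of the population $\{\delta_i^{\tB}\}$, which is mean zero. (If the group uses the control buyers, we rewrite it via the complement, as in the proof of \cref{thm:goldstein_adaptation}; this changes only the constant.) \cref{lemma:moment_bound_sampling_wout_replacement} with $k=2$ then yields
\[
\E\left[(\overline{\e}^{\tB})^4\right] \le C_2\,\sigma_u^4 .
\]
Here $\sigma_u^2$ is exactly $\Var(\overline{\e}^{\tB})$, up to the ratio $I_T/I_\gamma$, which is bounded under \eqref{eq:asymptotic_regime}. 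The seller term $\overline{\e}^{\tS}$ is handled identically.

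\textbf{Interaction term.} For $\dbar{\e}^{\tBS}$ we invoke \cref{lemma:moment_bound_decentered_double_avg} with $k=2$, which gives
\[
\E\left[(\dbar{\e}^{\tBS})^4\right] \lesssim \Var(\dbar{\e}^{\tBS})^2 .
\]
The only subtlety there is that its proof bounds the fourth moment by $C\,(\nu^{\tBS}/(IJ))^2$. We must check that this quantity is comparable to $\Var(\dbar{\e}^{\tBS})$, and \cref{prop:generic_variance_formula} shows $\Var(\dbar{\e}^{\tBS}) \asymp \nu^{\tBS}/(IJ)$ because $I_\gamma/I$ and $J_\gamma/J$ are bounded away from $0$ and $1$.

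\textbf{Main obstacle.} The one step I expect to need care is this comparability in the interaction term: the conditioning argument in \cref{lemma:moment_bound_decentered_double_avg} requires the inner averages $\eavg{\delta}_{i,\bullet}$ to sum to zero over $i$ for every fixed seller draw, which is what allows the unconditional sub-Gaussian moment bound to be applied. Everything else is the triangle inequality together with the asymptotic regime \eqref{eq:asymptotic_regime}, which ensures that all the finite-population correction factors are of constant order.
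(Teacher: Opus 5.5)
Your proposal is correct and follows essentially the same argument as the paper: the Hoeffding decomposition of \cref{lemma:hoeffding_decomposition}, the even-moment bounds of \cref{lemma:moment_bound_sampling_wout_replacement} and \cref{lemma:moment_bound_decentered_double_avg} applied to each piece, and additivity of the variances of the uncorrelated pieces. The only difference is minor: you handle the cross terms with $(a+b+c)^4\le 27(a^4+b^4+c^4)$, whereas the paper expands the fourth power and applies H\"older to each mixed moment, so your version needs only the $k=2$ moment bounds rather than higher-order ones.
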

\begin{proof}
Recalling the decomposition $\edavg{y} - \dbar{y} = \overline{\e}^{\tB} + \overline{\e}^\tS + \dbar{\e}^\tBS$
up to signs, the fourth moment is given by
\begin{align*}
\E\left[\left(\edavg{y}- \dbar{y}\right)^4\right] & = \sum_{\substack{a,b,c \geq 0 \\ a+b+c = 4}} \E\left[(\overline{\e}^{\tB})^{a}(\overline{\e}^{\tS})^{b}(\overline{\e}^{\tBS})^{c}  \right]  \\
& \leq \sum_{\substack{a,b,c \geq 0 \\ a+b+c = 4}} \left(\E\left[(\overline{\e}^\tB)^{4a}\right] \right)^{1/4} \left(\E\left[(\overline{\e}^{\tS})^{4b}\right] \right)^{1/4} \left(\E\left[(\overline{\e}^{\tBS})^{2c}\right] \right)^{1/2}  \\
& \leq \sum_{\substack{a,b,c \geq 0 \\ a+b+c = 4}} \left(\Var\left[\overline{\e}^{\tB}\right] \right)^{a/2} \left(\Var\left[\overline{\e}^{\tS}\right] \right)^{b/2} \left(\Var\left[\overline{\e}^{\tBS}\right] \right)^{c/2} \\
& = \left(\sqrt{\Var\left[\overline{\e}^{\tB}\right]} + \sqrt{\Var\left[\overline{\e}^{\tS}\right]} + \sqrt{\Var\left[\overline{\e}^{\tBS}\right]} \right)^4 \\
& \leq C\left(\Var\left[\overline{\e}^{\tB}\right] + \Var\left[\overline{\e}^{\tS}\right] + \Var\left[\overline{\e}^{\tBS}\right] \right)^2,
\end{align*}
where we have applied the moment bounds \cref{lemma:moment_bound_sampling_wout_replacement}, \cref{lemma:moment_bound_decentered_double_avg}, and Cauchy Schwarz. The last bound follows from Jensen's inequality.
\end{proof}

\section{Proofs for \cref{sec:asymptotic_theory}}
\label{sec:proofs_for_asymptotic_theory}

In this section we show the details on proving consistency of the optimal adjustment. As all our results are asymptotic in nature, we will replace all appearances of $I-1, J-1$ by $I,J$ respectively. For our asymptotic statements, we also interchange $I,J$ freely since $I/J \rightarrow \rho \in (0,\infty)$. Throughout this section, define $\hat{e}_c$ to be $\hat{u}_c$ replacing the potential outcomes $y_{ij}(\gamma)$ by the residuals $e_{ij}(\gamma) = y_{ij}(\gamma) - X_{ij}^\top\tilde{\beta}_c$, and let $\tilde{e}_c$ denote the vector $\tilde{u}_c$ of \eqref{eq:def_inner_product_beta_components} with $y_{ij}$ replaced by $e_{ij}$. Further, define $\hat{e}_\gamma^\theta, \tilde{e}_\gamma^\theta$ as the analogues of $\hat{u}_\gamma^\theta, \tilde{u}_\gamma^\theta$ in the same way. We first record some elementary facts that follow from the assumptions of \cref{sec:asymptotic_theory}.

\begin{lemma}
\label{lemma:betas_bounded}
Under \cref{assmp:finite_population_limits,assmp:non_vanishing_buyer_seller_variation}, $\tilde{\beta}_c = O(1)$.
\end{lemma}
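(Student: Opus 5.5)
Since $\tilde{\beta}_c = \tilde{Z}_c^{-1}\tilde{u}_c$, I will bound the two factors separately after rescaling by $I$. The identity $\tilde{\beta}_c = (I\tilde{Z}_c)^{-1}(I\tilde{u}_c)$ holds for any positive rescaling, so it suffices to show that $(I\tilde{Z}_c)^{-1}$ has bounded operator norm and that $I\tilde{u}_c = O(1)$.

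\emph{Step 1: the gram side.} By \eqref{eq:def_quadratic_form_components}, $I\tilde{Z}_c = \sum_\gamma (I a_{\gamma,\tB}) Z^\tB + (I a_{\gamma,\tS}) Z^\tS + (I a_{\gamma,\tBS}) Z^\tBS$. The coefficients satisfy $a_{\gamma,\tB} = O(1/I)$ and $a_{\gamma,\tS} = O(1/J)$, and $I/J$ converges. So the buyer and seller parts $I\sum_\gamma a_{\gamma,\theta}Z^\theta$ for $\theta\in\{\tB,\tS\}$ are the matrices that \cref{assmp:non_vanishing_buyer_seller_variation} requires to converge to an invertible limit. I read that assumption as the combined buyer-plus-seller matrix converging to an invertible limit $A$. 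The remaining term is $I a_{\gamma,\tBS} Z^\tBS = O(1/J)\cdot Z^\tBS$. Since $Z^\tBS$ has a finite limit by \cref{assmp:finite_population_limits}, this term is $o(1)$. Hence $I\tilde{Z}_c \to A$. Matrix inversion is continuous at invertible matrices, so $(I\tilde{Z}_c)^{-1} \to A^{-1}$, which is in particular $O(1)$.

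There is a caveat if the assumption is read literally as "each of the $\theta=\tB$ and $\theta=\tS$ parts converges to an invertible matrix." In that case I would still need the sum to be invertible. I would argue this from positive semidefiniteness: each part is a nonnegative combination of the PSD Gram matrices $Z^\theta$ for the estimands considered (direct, total, spillover), so the sum of an invertible PSD limit and a PSD limit is positive definite. For a general $c$, where some $a_{\gamma,\theta}$ might be negative, I would instead take the combined reading.

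\emph{Step 2: the cross term.} By \eqref{eq:def_inner_product_beta_components}, $I\tilde{u}_c$ is a finite sum of the terms $I a_{\gamma,\theta} u^\theta_\gamma$. Each $I a_{\gamma,\theta}$ is $O(1)$, using $I/J = O(1)$ again, and each $u_\gamma^\theta$ has a finite limit by \cref{assmp:finite_population_limits}. Therefore $I\tilde{u}_c = O(1)$.

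Combining the two steps gives $\|\tilde{\beta}_c\| \le \|(I\tilde{Z}_c)^{-1}\|_{\mathrm{op}}\,\|I\tilde{u}_c\| = O(1)$.

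The only delicate point is Step 1, specifically making sure the $\tBS$ contribution does not destroy invertibility and that the buyer and seller pieces combine to an invertible limit. Everything else is bookkeeping with the orders of the coefficients $a_{\gamma,\theta}$.
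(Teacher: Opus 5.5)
Your proof is correct and follows the paper's argument: factor $\tilde{\beta}_c=(I\tilde{Z}_c)^{-1}(I\tilde{u}_c)$, bound the inverse via \cref{assmp:non_vanishing_buyer_seller_variation}, and bound the cross term via \cref{assmp:finite_population_limits} together with $a_{\gamma,\theta}=O(1/I)$. You are slightly more careful than the paper in noting that the $Z^{\tBS}$ part of $I\tilde{Z}_c$ is $O(1/J)$ and so does not disturb the invertible limit; your hedge about negative coefficients is unnecessary, since the aggregated weight $\sum_\gamma a_{\gamma,\theta}$ on each $Z^\theta$ is always nonnegative. This holds because $\tilde{Z}_c$ is the covariance of $\sum_\gamma c_\gamma \edavg{X}_\gamma$, whose buyer, seller and interaction components are uncorrelated.
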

\begin{proof}
Write $\tilde{\beta}_c = (I\tilde{Z}_c)^{-1} (I \tilde{u}_c).$ \cref{assmp:non_vanishing_buyer_seller_variation} shows $(I\tilde{Z}_c)^{-1} = O(1)$ entrywise while combining \cref{assmp:finite_population_limits} with the fact that each $a_{\gamma,\theta}$ is $O(1/I)$ shows that the entries of $I \tilde{u}_c$ are $O(1).$
\end{proof}

\begin{lemma}[Naive Variance Bounds]
\label{lemma:naive_variance_bounds}
Under \cref{assmp:finite_population_limits,assmp:non_vanishing_buyer_seller_variation}, $\Var(\hat{\tau}_c(\tilde \beta_c)) = O(1/I)$.
\end{lemma}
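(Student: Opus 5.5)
The plan is to write $\hat{\tau}_c(\tilde\beta_c)=\sum_{\gamma}c_\gamma \edavg{e}_\gamma$, where $e_{ij}(\gamma)=y_{ij}(\gamma)-X_{ij}^\top\tilde\beta_c$. By \cref{lemma:variance_of_sum}, or simply Cauchy--Schwarz on the four terms,
\[
\Var(\hat{\tau}_c(\tilde\beta_c)) \le 4\sum_\gamma c_\gamma^2 \Var(\edavg{e}_\gamma).
\]
It therefore suffices to show that $\Var(\edavg{e}_\gamma)=O(1/I)$ for each $\gamma$.

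For each group, \cref{prop:generic_variance_formula} gives
\[
\Var(\edavg{e}_\gamma)\lesssim \frac1I\nu^\tB_\gamma(e)+\frac1J\nu^\tS_\gamma(e)+\frac1{IJ}\nu^\tBS_\gamma(e).
\]
Here the prefactors are bounded using the asymptotic regime \eqref{eq:asymptotic_regime}, since $I_\gamma\asymp I$ and $J_\gamma\asymp J\asymp I$.

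Next I would expand each residualized variance term exactly as in \cref{sec:methodology}:
\[
\nu^\theta_\gamma(e)=\tilde\beta_c^\top Z^\theta\tilde\beta_c-2\tilde\beta_c^\top u^\theta_\gamma+\nu^\theta_\gamma .
\]
By \cref{assmp:finite_population_limits}, the quantities $Z^\theta$, $u^\theta_\gamma$ and $\nu^\theta_\gamma$ converge, so they are $O(1)$. By \cref{lemma:betas_bounded}, $\tilde\beta_c=O(1)$. Hence each $\nu^\theta_\gamma(e)=O(1)$. Plugging this in gives
\[
\Var(\edavg{e}_\gamma)=O\!\left(\tfrac1I+\tfrac1J+\tfrac1{IJ}\right)=O(1/I),
\]
using $I/J\to\rho\in(0,\infty)$.

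None of these steps is a genuine obstacle. The only point needing care is that \cref{lemma:betas_bounded} relies on the invertibility part of \cref{assmp:non_vanishing_buyer_seller_variation}. Without that assumption, $\tilde\beta_c$ could blow up, and the residual variances would no longer be $O(1)$.
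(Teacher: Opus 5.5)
Your proof is correct and is essentially the paper's argument: everything reduces to $\nu^\theta_\gamma(e)=O(1)$, which follows from the quadratic expansion in $\tilde\beta_c$, \cref{assmp:finite_population_limits}, and \cref{lemma:betas_bounded}. The only difference is cosmetic. You handle the cross-group terms by applying Cauchy--Schwarz to the four group averages, whereas the paper uses the full covariance formula (\cref{prop:generic_covariance_formula}) and notes that $\xi^\theta_{\gamma,\gamma'}(e)=\xi^\theta_{\gamma,\gamma'}$ does not depend on $\beta$ and is bounded by $2(\nu^\theta_\gamma+\nu^\theta_{\gamma'})$.
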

\begin{proof}
By inspecting the variance formulas in \cref{prop:generic_variance_formula,prop:generic_covariance_formula}, it suffices to prove that $\nu_\gamma^\theta(e_{ij}) = O(1)$ and $\xi_{\gamma,\gamma'}^\theta(e_{ij}) = O(1)$ for $\gamma,\gamma' \in \set{\tr,\ib,\is,\cc}$ and $\theta \in \set{\tB,\tS,\tBS}$. Observe the following expansions:
\begin{equation*}
\begin{split}
\nu^\theta_{\gamma}(e_{ij}) & = \beta_c^\top Z^\theta \beta_c - 2u^{\theta\top}_\gamma \beta_c + \nu^\theta_{\gamma} \\
\xi^\theta_{\gamma,\gamma'}(e_{ij}) & = \xi^\theta_{\gamma,\gamma'}.  
\end{split}
\end{equation*}
\cref{assmp:finite_population_limits} and \cref{lemma:betas_bounded} give $\nu^\theta_{\gamma}(e_{ij}) = O(1)$, while Cauchy-Schwarz shows $\xi^\theta_{\gamma,\gamma'} = O(1).$
\end{proof}

\begin{lemma}[Decomposition of $\beta_c$]
\label{lemma:beta_decomposition}
$\hat \beta_c - \tilde{\beta}_c = \hat{Z}_c^{-1} \hat{e}_c$. Moreover, $\tilde{Z}_c^{-1} \tilde{e}_c = 0.$
\end{lemma}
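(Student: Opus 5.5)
The plan is to exploit the fact that every estimator in \eqref{eq:inner_prod_term_estimators} is bilinear in the pair (covariate, outcome), with the decentering carried out within group $\gamma$. First I would record linearity: if $\hat{u}_\gamma^\theta(A)$ denotes the estimator with the outcome matrix $(y_{ij}(\gamma))$ replaced by an arbitrary matrix $A$, then $\hat{u}_\gamma^\theta(\cdot)$ is linear in $A$. This holds because within-group row averages, column averages and grand averages are linear operations, and the product with $\eavg{X}^{(d)}$ or $\hat{X}^{(dcr)}$ is linear in the second factor.

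Next, substitute $y_{ij}(\gamma) = e_{ij}(\gamma) + X_{ij}^\top \tilde{\beta}_c$ for each $\gamma$. The within-group decentering of $X_{ij}^\top\tilde\beta_c$ gives $\eavg{X}^{(d)\top}_{i,\bullet}(\gamma)\tilde\beta_c$ (respectively the column and double-decentered analogues). This is exactly why the paper uses empirical rather than population averages of $X$: the decentering operator applied to $y$ matches the one applied to $X$. Consequently $\hat{u}_\gamma^\theta = \hat{e}_\gamma^\theta + \hat{Z}_\gamma^\theta \tilde{\beta}_c$ for each $\theta \in \set{\tB,\tS,\tBS}$ and $\gamma\in\Gamma$. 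Summing with the weights $a_{\gamma,\theta}$ from \eqref{eq:optimal_beta_estimation} yields $\hat{u}_c = \hat{e}_c + \hat{Z}_c\tilde\beta_c$. Multiplying by $\hat{Z}_c^{-1}$ then gives $\hat\beta_c = \hat Z_c^{-1}\hat u_c = \tilde\beta_c + \hat Z_c^{-1}\hat e_c$, which is the first claim.

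For the population statement, I would repeat the same computation with the population quantities \eqref{eq:inner_prod_terms} and \eqref{eq:gram_matrix_terms}. These are again linear in the outcome, and there the decentering is with respect to full-population averages for both $X$ and $y$. This gives $\tilde u_c = \tilde e_c + \tilde Z_c\tilde\beta_c$. Since $\tilde\beta_c = \tilde Z_c^{-1}\tilde u_c$ by definition, it follows that $\tilde e_c = \tilde u_c - \tilde Z_c\tilde\beta_c = 0$, and hence $\tilde Z_c^{-1}\tilde e_c = 0$. In other words, this is the first-order condition: the optimal residuals are orthogonal to the covariates in the weighted inner product.

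The only delicate point is invertibility of $\hat Z_c$ (and of $\tilde Z_c$). I would state the identity on the event where $\hat Z_c$ is invertible, or interpret $\hat Z_c^{-1}$ as the pseudoinverse, in line with the well-posedness remark. Under \cref{assmp:non_vanishing_buyer_seller_variation} this event has probability tending to one once $I\hat Z_c$ is shown to be consistent, which is needed later in \cref{prop:beta_consistency} anyway. Beyond this caveat, the argument is purely algebraic and I expect no real obstacle. The one thing to check carefully is that the same within-group decentering is applied to $X$ in $\hat Z_\gamma^\theta$ and in $\hat u_\gamma^\theta$, so that the cross terms cancel exactly.
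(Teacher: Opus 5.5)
Your argument is correct and is essentially the paper's proof: you use the shared within-group decentering to write $\hat u_\gamma^\theta = \hat e_\gamma^\theta + \hat Z_\gamma^\theta\tilde\beta_c$ (and $u_\gamma^\theta = e_\gamma^\theta + Z^\theta\tilde\beta_c$), sum with the weights $a_{\gamma,\theta}$, and then apply $\hat Z_c^{-1}$ (respectively use $\tilde Z_c\tilde\beta_c = \tilde u_c$). One minor caution: the pseudoinverse fallback does not preserve the first identity exactly, since it only yields $\hat\beta_c - \hat Z_c^{+}\hat Z_c\tilde\beta_c = \hat Z_c^{+}\hat e_c$, so the statement should be read on the event that $\hat Z_c$ is invertible, as the paper implicitly does.
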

\begin{proof}
For the first claim, recall \eqref{eq:inner_prod_term_estimators} and the definitions $\hat Z_c^{-1} \hat u_c = \hat \beta_c$ and $\tilde Z_c^{-1}\tilde u_c = \tilde \beta_c$. Then,

\begin{align*}
\hat{e}_\gamma^{\tB} & = \frac{1}{I_\gamma}\sum_{i\in \mathcal{I}_\gamma} \eavg{X}_{i,\bullet}^{(d)}(\gamma)\eavg{e}_{i,\bullet}^{(d)}(\gamma) \\
& = \frac{1}{I_\gamma}\sum_{i\in \mathcal{I}_\gamma} \eavg{X}_{i,\bullet}^{(d)}(\gamma)\left(\eavg{y}_{i,\bullet}^{(d)}(\gamma) -  \eavg{X}_{i,\bullet}^{(d)}(\gamma)^\top \tilde\beta_c\right) \\
& = \hat{u}_\gamma^\tB - \hat Z_\gamma^\tB \tilde\beta_c.
\end{align*}
The same argument gives $\hat{e}_\gamma^{\tS} = \hat{u}_\gamma^\tS - \hat Z_\gamma^\tS \tilde\beta_c, \hat{e}_\gamma^{\tBS} = \hat{u}_\gamma^\tBS - \hat Z_\gamma^\tBS \tilde\beta_c$, from which it follows that $\hat{e}_c = \hat{u}_c - \hat{Z}_c \tilde\beta_c$. 
Left-multiplication by $\hat Z_c^{-1}$ gives the result. Next, recall \eqref{eq:inner_prod_terms}. Replacing $y_{ij}(\gamma)$ by $e_{ij}(\gamma)$ gives the following:
\begin{align*}
e_\gamma^{\tB} & = \frac{1}{I-1}\sum_{i=1}^I \bar{X}_{i,\bullet}^{(d)}\bar{e}_{i,\bullet}^{(d)}(\gamma) = \frac{1}{I-1}\sum_{i=1}^I \bar{X}_{i,\bullet}^{(d)}(\bar{y}_{i,\bullet}^{(d)}(\gamma) - \bar{X}_{i,\bullet}^{(d)\top} \tilde\beta_c) = u_\gamma^\tB - Z^\tB \tilde\beta_c.
\end{align*}
In the same way, $e_\gamma^\tS = u_\gamma^\tS - Z^\tS \tilde\beta_c, e_\gamma^\tBS = u_\gamma^\tBS - Z^\tBS \tilde\beta_c.$ As a result, $\tilde{e}_c = \tilde{u}_c - \tilde{Z}_c\tilde\beta_c$, which gives the second claim.
\end{proof}

Firstly, we prove consistency of the estimate for $\beta_c$; the result is restated here for convenience.

\consistentbeta*

\begin{proof}[Proof of \cref{prop:beta_consistency}]
By \cref{lemma:beta_decomposition} and invertibility, $\tilde{e}_c = 0.$ Therefore, 
\begin{equation}
\label{eq:beta_decomposition}
\hat{\beta}_c - \tilde{\beta}_c = \left(\hat{Z}_c^{-1} - \tilde{Z}_c^{-1} \right)(\hat{e}_c - \tilde{e}_c) + \tilde{Z}_c^{-1}\left(\hat{e}_c -\tilde{e}_c\right).
\end{equation}
Then $ \tilde{Z}_c^{-1}\left(\hat{e}_c -\tilde{e}_c\right) =  (I\tilde{Z}_c)^{-1}\left(I\hat{e}_c -I\tilde{e}_c\right)$. Note that $(I\tilde{Z}_c)^{-1} = O(1)$ by \cref{assmp:non_vanishing_buyer_seller_variation} and \cref{assmp:finite_population_limits}. Next, write
\begin{align*}
\left(I\hat{e}_c -I\tilde{e}_c\right) & = \sum_\gamma Ia_{\tB,\gamma}(\hat{e}^\tB_\gamma - \tilde{e}^\tB_\gamma) + \sum_\gamma Ia_{\tS,\gamma}(\hat{e}^\tS_\gamma - \tilde{e}^\tS_\gamma) + \sum_\gamma Ia_{\tBS,\gamma} (\hat{e}^\tBS_\gamma - \tilde{e}^\tBS_\gamma).
\end{align*}
By the formulas in \cref{sec:methodology}, $Ia_{\tB,\gamma} = O(1), Ia_{\tS,\gamma} = O(1)$ and $Ia_{\tBS,\gamma} = O(1/I).$ To show each of these terms is $o_p(1)$, we will first prove that entrywise, $\hat{Z}^\theta_\gamma - Z^\theta = o_p(1)$ and $\hat{e}^\theta_\gamma - e^\theta_\gamma = o_p(1)$ for all $\theta \in \set{\tB,\tS,\tBS}$. As the argument will be the same for each $\gamma,$ we omit dependence on $\gamma$ when clear. We apply \cref{prop:convergence_of_empirical_buyer_seller_term} with $a_{ij} = X_{ij}^{(k)}$ for $k = 1,\dots,d$ and $b_{ij} = e_{ij}$. Under \cref{assmp:covariate_assumption}, the condition $\max_{i} |\bar{X}_{i,\bullet} - \dbar{X}| = O(1)$ directly yields $\hat{e}^\tB_\gamma - \tilde{e}^\tB_\gamma = O_p(1/\sqrt{I})$. By symmetry, the same argument holds for the seller terms, so that $\hat{e}^\tS_\gamma - \tilde{e}^\tS_\gamma = O_p(1/\sqrt{I})$. Next, applying \cref{prop:convergence_of_empirical_dcr_term} with \cref{assmp:covariate_assumption} shows that $\hat{e}^\tBS_\gamma - \tilde{e}^\tBS_\gamma = o_p(1)$. Collecting terms, we have shown $(I \hat e_c - I \tilde e_c) = O_p(1/\sqrt{I}).$ \\

For the first term in \eqref{eq:beta_decomposition}, write $\left(\hat{Z}_c^{-1} - \tilde{Z}_c^{-1} \right)(\hat{e}_c - \tilde{e}_c) = \left((I\hat{Z}_c)^{-1} - (I\tilde{Z}_c)^{-1} \right)(I\hat{e}_c - I\tilde{e}_c)$. We will show $\norm{(I\hat{Z}_c)^{-1} - (I\tilde{Z}_c)^{-1}}_{\text{op}} = o_p(1)$. To bound the difference of matrix inverses, we will utilize Proposition E.1 of \cite{lei2021regression}, which states that for any symmetric matrices $A,B$ with $A$ positive definite, $A+B$ invertible, 
\[
\opnorm{(A+B)^{-1} - A^{-1}} \leq \frac{\opnorm{A^{-1}}^2 \opnorm{B}}{1 - \min(1, \opnorm{A^{-1}}\cdot \opnorm{B})}.
\]
Apply this by taking $A := I\tilde{Z}_c$ and $B := I\hat{Z}_c - I\tilde{Z}_c$. By supposition, $\norm{(I\tilde{Z}_c)^{-1}}_{\text{op}} = O(1)$, so it suffices to show that $\norm{I\hat{Z}_c - I\tilde{Z}_c}_{\text{op}} = o_p(1).$ We will show that  the entries of $I\hat{Z}_c - I\tilde{Z}_c$ are $o_p(1)$. Using again the formulas in \cref{sec:methodology}, we have
\begin{align*}
\left(I\hat{Z}_c -I\tilde{Z}_c\right) & = \sum_\gamma Ia_{\tB,\gamma}(\hat{Z}^\tB_\gamma - \tilde{Z}^\tB_\gamma) + \sum_\gamma Ia_{\tS,\gamma} (\hat{Z}^\tS_\gamma - \tilde{Z}^\tS_\gamma) + \sum_\gamma Ia_{\tBS,\gamma} (\hat{Z}^\tBS_\gamma - \tilde{Z}^\tBS_\gamma).
\end{align*}

\Cref{assmp:covariate_assumption} implies
\begin{align*}
& \frac{1}{I}\max_i |\bar{X}_{i,\bullet}^{(k)} - \dbar{X}^{(k)}| = o(1/\sqrt{I}), \quad \frac{1}{I}\max_{ij} |X_{ij}^{(k)} - \bar{X}_{i,\bullet}^{(k)}| = o(1/\sqrt{I})
\end{align*} 
for each coordinate $k$, and symmetrically for the seller averages. Applying \cref{prop:convergence_of_empirical_buyer_seller_term} with two coordinates of the covariates, $X_{ij}^{(k_1)}, X_{ij}^{(k_2)}$ for $k_1, k_2 \in \set{1,\dots,d}$ shows $Ia_{\tB,\gamma}(\hat{Z}^\tB_\gamma - \tilde{Z}^\tB_\gamma)$ and $Ia_{\tS,\gamma}(\hat{Z}^\tS_\gamma - \tilde{Z}^\tS_\gamma)$ are $o_p(1)$. Applying \cref{prop:convergence_of_empirical_dcr_term} shows $Ia_{\tBS,\gamma}\left(\hat{Z}^\tBS_\gamma - \tilde{Z}^\tBS_\gamma\right) = o_p(1).$ 
\end{proof}

\begin{proof}[Proof of \cref{prop:same_asymptotic_distribution}]
 We aim to show that
\[
\hat{\tau}_{c}(\hat\beta_{c}) - \hat{\tau}_{c}(\tilde{\beta}_{c}) = o_p\left( \sqrt{\Var(\hat{\tau}_{c}(\tilde{\beta}_{c}))} \right).
\]
We may write
\begin{align}
\hat{\tau}_{c}(\hat\beta_{c}) - \hat{\tau}_{c}(\tilde{\beta}_{c}) & = \mathbb{X}^\top(\hat{\beta}_c - \tilde{\beta}_c) \\
\mathbb{X} & := \sum_\gamma c_\gamma \edavg{X}_\gamma. 
\end{align}
Since $\sum_\gamma c_\gamma = 0, \mathbb{X}$ is mean zero. Thus $\norm{\mathbb{X}}_2 = O_p\left(\sum_{k=1}^d \sqrt{\Var(\mathbb{X}^{(k)})} \right)$. \Cref{assmp:finite_population_limits} ensures $\nu^\theta(X^{(k)}) = O(1)$ for each $\theta \in \set{\tB,\tS,\tBS}$. \cref{prop:generic_variance_formula,prop:generic_covariance_formula} then show $\Var(\mathbb{X}^{(k)}) = O(1/I)$ for all $k$. By \cref{prop:beta_consistency}, we conclude
\begin{equation*}
\mathbb{X}^\top(\hat{\beta}_c - \tilde{\beta}_c) = O_p(1/I),
\end{equation*}
which is $o_p(\Var(\hat{\tau}_c(\tilde{\beta}_c))^{1/2})$ by \cref{assmp:non_vanishing_buyer_seller_variation}.
\end{proof}

\begin{proof}[Proof of \cref{thm:valid_inference}]

Define $\hat V_c(b)$ to be the conservative variance estimator applied to the matrices $y_{ij} - X_{ij}^\top b$ and $V_c(b)$ similarly where $V_c$ was defined in \eqref{eq:conservative_upper_bound_on_variance}. Next, recall $B_I(M) := \set{b : \norm{b - \tilde{\beta}_c} \leq MI^{-1/2}}$. \cref{prop:proof_of_uniform_ratio_convergence} will show that under \cref{assmp:finite_population_limits,assmp:main_assumption,assmp:covariate_assumption,assmp:non_vanishing_buyer_seller_variation,assmp:per_gamma_variance_lb}, we have 
\begin{equation}
\label{eq:uniform_ratio_convergence_of_conservative_estimator}
\sup_{b \in B_I(M)} \left| \frac{\hat{V}_c(b)}{V_c(b)} - 1 \right| = o_p(1)
\end{equation}
for every fixed $M > 0.$ We will finish the proof using this condition. From \cref{sec:methodology}, the oracle variance is a quadratic function of $\beta$:
\[
\Var(\hat \tau_c(\beta)) = \beta^\top \tilde Z_c\beta-2\tilde u_c^\top \beta + C_c.
\]
Since $\tilde \beta_c=\tilde Z_c^{-1}\tilde u_c$, completing the square gives the exact identity
\[
\Var(\hat \tau_c (\beta))-\Var(\hat \tau_c (\tilde \beta _c))=
(\beta-\tilde \beta_c)^\top \tilde Z_c(\beta-\tilde \beta_c)
\qquad\text{for every }\beta\in \bR^d.
\]
In particular,
\[
\Var(\hat \tau_c (\hat \beta_c))-\Var(\hat \tau_c (\tilde \beta_c))
=(\hat \beta_c-\tilde \beta_c)^\top \tilde Z_c(\hat \beta_c-\tilde \beta_c)\ge 0.
\]
We now quantify its size. By the formula
\[
\tilde Z_c=\sum_{\gamma\in \Gamma} a_{\gamma,B}Z^B+\sum_{\gamma\in \Gamma} a_{\gamma,S}Z^S+
\sum_{\gamma\in \Gamma} a_{\gamma,BS}Z^{BS},
\]
with $a_{\gamma,B}=O(I^{-1})$, $a_{\gamma,S} = O(I^{-1})$, and
$a_{\gamma,BS} = O(I^{-2})$, while $Z^B,Z^S,Z^{BS}=O(1)$ entrywise by \cref{assmp:finite_population_limits}. We then have $\norm{\tilde Z_c}_{\text{op}} = O(I^{-1}).$ Therefore by \cref{prop:beta_consistency},
\[
0\le \Var(\hat \tau_c (\hat \beta_c)) - \Var(\hat \tau_c (\tilde \beta_c))
\le \norm{\tilde Z_c}_{\text{op}} \norm{\hat \beta_c-\tilde \beta_c}_2^2
=O(I^{-1})\cdot O_p(I^{-1})
=O_p(I^{-2}).
\]
Because $\Var(\hat \tau_c (\tilde \beta_c)) =\omega(I^{-2})$ by \cref{assmp:non_vanishing_buyer_seller_variation}, it follows that
\[
\Var(\hat \tau_c(\hat \beta_c))=\Var(\hat \tau_c(\tilde \beta_c))\{1+o_p(1)\}.
\]
Since $V_c(\beta)\ge \Var(\hat \tau_c (\beta))$ for every $\beta$, we deduce
\[
V_c(\hat \beta_c)\ge \Var(\hat \tau_c (\hat \beta_c)) = \Var(\hat \tau_c (\tilde \beta_c))\{1+o_p(1)\}.
\]

It remains to replace the population quantity $V_c(\hat \beta_c)$ by the sample quantity
$\hat V_c(\hat \beta_c)$. We claim that
\[
\frac{\hat V_c(\hat \beta_c)}{V_c(\hat \beta_c)}\to_p 1.
\]
Fix $\varepsilon,\eta>0$. By \cref{prop:beta_consistency}, there exists $M<\infty$ such that for all large $I$,
\[
\Prob\bigl(\norm{\hat \beta_c-\tilde \beta_c}_2> M I^{-1/2}\bigr)<\eta.
\]
On the event $\{\norm{\hat \beta_c-\tilde \beta_c}_2\le M I^{-1/2}\}$,
\[
\left|\frac{\hat V_c(\hat \beta_c)}{V_c(\hat \beta_c)}-1\right|
\le
\sup_{\norm{b-\tilde \beta_c}_2\le M I^{-1/2}}
\left|\frac{\hat V_c(b)}{V_c(b)}-1\right|.
\]
\eqref{eq:uniform_ratio_convergence_of_conservative_estimator} makes the right-hand side $o_p(1)$, so for all sufficiently large $I$,
\[
\Prob\left(\left|\frac{\hat V_c(\hat \beta_c)}{V_c(\hat \beta_c)}-1\right|>\varepsilon\right)
\le 2\eta.
\]
Since $\eta$ is arbitrary, the claim follows. Combining the last two displays gives
\[
\frac{\hat V_c(\hat \beta_c)}{\Var(\hat\tau_c(\tilde \beta_c))}
=
\frac{\hat V_c(\hat \beta_c)}{V_c(\hat \beta_c)}
\cdot
\frac{V_c(\hat \beta_c)}{\Var(\hat\tau_c(\tilde \beta_c))}
\ge (1-o_p(1))(1+o_p(1))
=1-o_p(1).
\]

Next, let
\[
Z_I := \frac{\hat \tau_c(\tilde \beta_c) - \tau_c}{\Var(\hat\tau_c(\tilde \beta_c))^{1/2}}
\qquad 
R_I:=\frac{\hat \tau_c(\hat \beta_c)-\hat \tau_c(\tilde \beta_c)}{\Var(\hat\tau_c(\tilde \beta_c))^{1/2}},
\qquad
S_I:=\left(\frac{\hat V_c(\hat \beta_c)}{\Var(\hat\tau_c(\tilde \beta_c))}\right)^{1/2}.
\]
By \cref{thm:clt_population_regression}, $Z_I \Rightarrow N(0,1)$, while the argument of \cref{prop:same_asymptotic_distribution} shows $R_I=o_p(1)$. We have just shown that $S_I\ge 1-o_p(1)$, which means that for every fixed
$\delta\in(0,1)$, $\Prob(S_I\ge 1-\delta)\to 1.$ The coverage event can be written as
\[
\left\{\tau_c\in
\hat \tau_c(\hat \beta_c)\pm z_{1-\alpha/2}\hat V_c(\hat \beta_c)^{1/2}
\right\}
=
\left\{|Z_I+R_I|\le z_{1-\alpha/2}S_I\right\}.
\]
Hence, for every fixed $\delta\in(0,1)$,
\begin{align*}
\Prob\left(\tau_c\in
\hat \tau_c(\hat \beta_c)\pm z_{1-\alpha/2}\hat V_c(\hat \beta_c)^{1/2}
\right)
&\ge
\Prob\left(|Z_I+R_I|\le z_{1-\alpha/2}(1-\delta),\; S_I\ge 1-\delta\right) \\
&\ge
\Prob\left(|Z_I+R_I|\le z_{1-\alpha/2}(1-\delta)\right)-\Prob(S_I<1-\delta).
\end{align*}
By Slutsky's theorem, $Z_I+R_I\Rightarrow N(0,1)$, while the second term tends to $0$. Therefore,
\[
\liminf_{I\to\infty}
\Prob\left(\tau_c\in
\hat \tau_c(\hat \beta_c)\pm z_{1-\alpha/2}\hat V_c(\hat \beta_c)^{1/2}
\right)
\ge
\Prob\left(|N(0,1)|\le z_{1-\alpha/2}(1-\delta)\right).
\]
Letting $\delta\downarrow 0$ yields
\[
\liminf_{I\to\infty}
\Prob\left(\tau_c\in
\hat \tau_c(\hat \beta_c)\pm z_{1-\alpha/2}\hat V_c(\hat \beta_c)^{1/2}
\right)
\ge 1-\alpha.
\]
\end{proof}

\begin{proof}[Proof of \cref{ex:inference_for_direct_effect}]
Recall \cref{ex:optimal_adjustment_direct_effect} shows that $\tilde{Z}_{\dir} = \sum_\gamma \frac{1}{I_\gamma J_\gamma} Z^\tBS, \tilde{u}_{\dir} = \sum_\gamma \frac{1}{I_\gamma J_\gamma} u^\tBS_\gamma$ and $\hat{Z}_{\dir} = \sum_\gamma \frac{1}{I_\gamma J_\gamma} \hat Z^\tBS_\gamma, \hat{u}_{\dir} = \sum_\gamma \frac{1}{I_\gamma J_\gamma} \hat u^\tBS_\gamma$. From \cref{assmp:finite_population_limits} it is immediate that $\tilde{\beta}_{\dir} = O(1)$. We will establish the following claims:
\begin{enumerate}
    \item The variance of the direct effect estimator is $\Theta(1/IJ)$.
    \item $\hat{\beta}_{\dir} - \tilde{\beta}_{\dir} = o_p(1)$.
    \item Asymptotic normality of $\hat{\tau}_{\dir}(\hat{\beta}_{\dir})$.
    \item Asymptotically conservative confidence intervals using the plug-in.
\end{enumerate}

For the first claim, it suffices to prove $\nu^\theta(e_{ij}(\gamma)) = O(1/I)$ and also $\xi^\theta_{\gamma,\gamma'} = O(1/I)$ for $\theta \in \set{\tB, \tS}$. Since $\xi^\theta_{\gamma,\gamma'} \lesssim \nu^\theta_{\gamma} + \nu^\theta_{\gamma'},$ we need only prove $\nu^\theta(e_{ij}(\gamma)) = O(1/I)$. Expand \[
\nu^\theta(e_{ij}(\gamma)) = \beta_{\dir}^\top Z^\theta \beta_{\dir} - 2u^{\theta\top}_\gamma \beta_{\dir} + \nu^\theta_\gamma.
\]
Note that each entry $(k_1,k_2)$ of $Z^\theta$ is bounded by a constant multiple of $\nu^\theta (X^{(k_1)}) + \nu^\theta (X^{(k_2)})$, which is $O(1/I)$ by assumption. Similarly, each coordinate $k$ of $u^\theta_\gamma$ is bounded by a constant multiple of $\nu^\theta (X^{(k)}) + \nu^\theta_\gamma$, which is also $O(1/I).$ The result follows.\\ 

Towards the second claim,  \cref{lemma:beta_decomposition} gives
\begin{equation}
\label{eq:direct_effect_beta}
\hat{\beta}_{\dir} - \tilde{\beta}_{\dir} = \tilde{Z}_{\dir}^{-1} (\hat e_{\dir} - \tilde{e}_{\dir}) + (\hat{Z}^{-1}_{\dir} - \tilde{Z}^{-1}_{\dir}) \hat{e}_{\dir}.
\end{equation}
Rewrite the first term in \eqref{eq:direct_effect_beta} as $(IJ \tilde{Z}_{\dir})^{-1} IJ(\hat e_{\dir} - \tilde{e}_{\dir}).$ It is easy to see that $(IJ \tilde{Z}_{\dir})^{-1} = O(1),$ while $IJ(\hat e_{\dir} - \tilde{e}_{\dir}) = o_p(1)$ since $\hat{e}_\gamma^\tBS - \tilde{e}_{\gamma}^\tBS = o_p(1), \forall \gamma \in \set{\tr,\ib,\is,\cc}$ by \cref{prop:convergence_of_empirical_dcr_term}. The second term in \eqref{eq:direct_effect_beta} will be handled in similar fashion as in \cref{prop:beta_consistency}.
Write $(\hat{Z}^{-1}_{\dir} - \tilde{Z}^{-1}_{\dir}) \hat{e}_{\dir} = \left((IJ\hat{Z}_{\dir})^{-1} - (IJ\tilde{Z}_{\dir})^{-1} \right)IJ\hat{e}_{\dir}$. We have $\norm{IJ\hat{e}_{\dir}}_2 = O_p(1)$ combining \cref{assmp:finite_population_limits} and \cref{lemma:betas_bounded}. Apply again Proposition E.1 of \cite{lei2021regression} with $A := IJ\tilde{Z}_{\dir}$ and $B := IJ(\hat{Z}_{\dir} - \tilde{Z}_{\dir})$. By supposition, $\norm{(IJ\tilde{Z}_{\dir})^{-1}}_{\text{op}} = O(1)$, so it suffices to show that $\norm{IJ(\hat{Z}_{\dir} - \tilde{Z}_{\dir})}_{\text{op}} = o_p(1).$ From the explicit form of the direct effect adjustment, it suffices to show that each entry of $\hat{Z}^\tBS_\gamma - \tilde{Z}^\tBS$ is $o_p(1)$. This follows directly from \cref{prop:convergence_of_empirical_dcr_term} under \cref{assmp:finite_population_limits,assmp:covariate_assumption}.\\

Consider now the third claim. Applying \cref{thm:clt_population_regression} gives 
\[
\frac{\hat{\tau}_{\dir}(\tilde{\beta}_{\dir}) - \hat{\tau}_{\dir}}{\sqrt{\Var(\hat{\tau}_{\dir}(\tilde{\beta}_{\dir}))}} \Rightarrow N(0,1).
\]
Next, observe  
\[
\hat{\tau}_{\dir}(\hat{\beta}_{\dir}) - \hat{\tau}_{\dir}(\tilde{\beta}_{\dir}) = \left(\edavg{X}_\tr - \edavg{X}_\ib - \edavg{X}_\is + \edavg{X}_\cc\right)^\top (\hat \beta_{\dir} - \tilde \beta_{\dir}).
\]
The term $\mathbb{X} := \edavg{X}_\tr - \edavg{X}_\ib - \edavg{X}_\is + \edavg{X}_\cc$ has mean zero. The variance of each coordinate $k$ can be read off from \cref{sec:optimal_direct_effect_derivation} and is equal to $\sum_{\gamma} \frac{1}{I_\gamma J_\gamma} \nu^{\tBS}(X_{ij}^{(k)})$ since the cross-terms $\xi^\theta_{\gamma,\gamma'}$ are zero. Under \cref{assmp:finite_population_limits}, $\Var(\mathbb{X}^{(k)}) = O(1/IJ)$ while $\Var(\hat{\tau}_{\dir}(\tilde{\beta}_{\dir})) = \Theta(1/IJ)$. This shows $\hat{\tau}_{\dir}(\hat{\beta}_{\dir}) - \hat{\tau}_{\dir}(\tilde{\beta}_{\dir}) = o_p\left(\sqrt{\Var(\hat{\tau}_{\dir}(\tilde{\beta}_{\dir}))}\right)$, so $\hat{\tau}_{\dir}(\hat{\beta}_{\dir})$ has the same asymptotic distribution as that of $\hat{\tau}_{\dir}(\tilde{\beta}_{\dir})$. \\

For the last claim, the proof of \cref{thm:valid_inference} reduces to showing
\[
\hat{V}_{\dir}(\hat\beta_{\dir}) - \hat{V}_{\dir}(\tilde\beta_{\dir}) = o_p(1/IJ) = o_p\left(\Var(\hat{\tau}_{\dir}(\tilde{\beta}_{\dir})) \right).    
\]

By inspecting the forms of the conservative variance estimators, one can see that
\begin{equation}
\label{eq:cons_variance_estimator_difference}
\begin{split}
\hat \nu^\theta_\gamma(\hat\beta_c) - \hat \nu^\theta_\gamma(\tilde\beta_c) & = 2\hat u_\gamma^{\theta \top}\left(\hat\beta_c  -\tilde\beta_c \right) + \left(\hat\beta_c^\top \hat Z^\theta_\gamma \hat\beta_c - \tilde\beta_c^\top \hat Z^\theta_\gamma \tilde\beta_c \right) \\
& = 2\hat u_\gamma^{\theta \top}\left(\hat\beta_c  -\tilde\beta_c \right) + (\hat \beta_c +\tilde \beta_c)^\top \hat Z^\theta_\gamma (\hat \beta_c - \tilde{\beta}_c) 
\end{split}
\end{equation}
for each $\theta \in \set{\tB,\tS,\tBS}, \gamma \in \set{\tr,\ib,\is,\cc}.$ By \eqref{eq:cons_variance_estimator_difference}, it suffices to show that $\hat{u}^\tBS_\gamma, \hat Z^\tBS_\gamma = O_p(1)$ and $\hat{u}^\theta_\gamma, \hat Z^\theta_\gamma = O_p(1/I)$ for $\theta \in \set{\tB, \tS}$. The former claim is immediate from \cref{prop:convergence_of_empirical_dcr_term} and $u^\tBS_\gamma, Z^\tBS_\gamma = O_p(1)$ via  \cref{assmp:finite_population_limits}. For the latter, we adapt the proof of \cref{prop:convergence_of_empirical_buyer_seller_term} to show that $\hat{u}^\tB_\gamma - \tilde{u}^\tB_\gamma = O_p(1/I)$, $\hat Z^\tB_\gamma - \tilde Z^\tB = O_p(1/I)$, and similarly for the seller terms. Fix a coordinate $k$ and omit the dependence on $\gamma$. Then \cref{prop:convergence_of_empirical_buyer_seller_term} stochastically bounds the $k$th coordinate of the difference $\hat{u}^\tB_\gamma - \tilde{u}^\tB_\gamma$ by
\begin{equation}
\label{eq:direct_effect_buyer_term_analysis_helper}
\begin{split}
& \frac{1}{J} + \left(\frac{1}{I}\nu^{\tB}(X^{(k)})\right)^{1/2} + \left(\frac{1}{I}\nu^{\tB}(y)\right)^{1/2}  \\
+ & \left(\frac{1}{I} \max_i (\bar{y}_{i,\bullet} -\dbar{y})^2 \nu^\tB(X^{(k)})\right)^{1/2}.
\end{split}
\end{equation}
We will show each of these terms are $O(1/I)$. To do so, 
we bound
\begin{align*}
\frac{1}{I}\max_{i} |\bar{y}_{i,\bullet} -\dbar{y}| & = \frac{1}{I}\max_{i} |\bar{e}_{i,\bullet} -\dbar{e}| + \frac{1}{I}\max_{i} |\bar{X}_{i,\bullet}^\top\tilde\beta_{\dir} - \dbar{X}^\top \tilde\beta_{\dir}| \\
& = o(\sqrt{\Var(\hat \tau_{\dir})}) + \frac{1}{I} \sum_{k=1}^d \max_{i} |\bar{X}_{i,\bullet}^{(k)} - \dbar{X}^{(k)}| |\tilde{\beta}_{\dir}^{(k)}| \\
& = O(1/I),
\end{align*}
since $\max_{i} |\bar{X}_{i,\bullet}^{(k)} -\dbar{X}^{(k)}| = O(1)$. By the assumptions of \cref{ex:inference_for_direct_effect}, $\nu^{\tB}(X^{(k)}) = O(1/I),$ $\max_{i} |\bar{X}_{i,\bullet}^{(k)} -\dbar{X}^{(k)}| = O(1)$ and $\nu^{\tB}(y) = O(1/I).$ This shows that every term in the first line of \eqref{eq:direct_effect_buyer_term_analysis_helper} is $O(1/I).$ For the last term,
\begin{align*}
\frac{1}{I} \max_i (\bar{y}_{i,\bullet} -\dbar{y})^2 & \lesssim \frac{1}{I}\max_{i} (\bar{e}_{i,\bullet} -\dbar{e})^2 + \frac{1}{I}\max_{i} |\bar{X}_{i,\bullet}^\top\tilde\beta_{\dir} - \dbar{X}^\top \tilde\beta_{\dir}|^2 \\
& = o(I\Var(\hat \tau_{\dir})) + O(1/I) \\
& = O(1/I).
\end{align*}
The analysis of $\hat Z^\tB_\gamma - \tilde Z^\tB$ is done the same way, applying \cref{prop:convergence_of_empirical_buyer_seller_term} entrywise with two coordinates of the covariate vector. 
\end{proof}

\subsection{Auxiliary Results}


\begin{proposition}
\label{prop:convergence_of_empirical_buyer_seller_term}
Let $(a_{ij}), (b_{ij})$ be a sequence of $I \times J$ matrices, indexed by $n$. Suppose that the limits of $\nu^\theta(a)$ and $\nu^\theta(b), \theta \in \set{\tB,\tS,\tBS}$ exist and are finite, where $\nu^\theta(-)$ is defined in \eqref{eq:sigma_terms}. Suppose either $\frac{1}{I} \max_{i} |\bar{a}_{i,\bullet} - \dbar{a}|^2 = o(1)$ or $\frac{1}{I} \max_{i} |\bar{b}_{i,\bullet} -\dbar{b}|^2 = o(1)$. Then
\begin{equation}
\label{eq:convergence_of_empirical_buyer_seller_term}
\frac{1}{I_\gamma} \sum_{i \in \cl{I}_\gamma} \eavg{a}^{(d)}_{i,\bullet}\eavg{b}^{(d)}_{i,\bullet} - \frac{1}{I} \sum_{i=1}^I \bar{a}^{(d)}_{i,\bullet}\bar{b}^{(d)}_{i,\bullet} = o_p(1).
\end{equation}
Under the stronger condition that either $\max_{i} |\bar{a}_{i,\bullet} - \dbar{a}|$ or $\max_{i} |\bar{b}_{i,\bullet} - \dbar{b}|$ is $O(1)$, \eqref{eq:convergence_of_empirical_buyer_seller_term} is $O(1/\sqrt{I})$. By symmetry, the analogous result is true when randomizing over the sellers. 
\end{proposition}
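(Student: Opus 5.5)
Throughout I take $\gamma=\tr$, so the sampled rows are $\cl{I}_\tr$ and the sampled columns are $\cl{J}_\tr$; the other groups are handled identically with $W$ replaced by $1-W$ where needed. Both sides of \eqref{eq:convergence_of_empirical_buyer_seller_term} are invariant under shifting $a$ and $b$ by constants, so I assume $\dbar{a}=\dbar{b}=0$ and write $\bar a_i:=\bar a_{i,\bullet}$. Then the population target is $\frac1I\sum_i \bar a_i\bar b_i$.

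The plan is to split the empirical row mean as $\eavg{a}_{i,\bullet}=\bar a_i+\Delta^a_i$, where
\[
\Delta^a_i:=\frac{1}{J_T}\sum_j W_j^\tS(a_{ij}-\bar a_i)
\]
depends only on $\cl{S}$. Let $\hat m_a:=\frac{1}{I_T}\sum_i W_i^\tB \eavg{a}_{i,\bullet}=\edavg{a}$ be the within-group grand mean, and define $\Delta^b_i,\hat m_b$ in the same way. The empirical term then equals
\[
\frac{1}{I_T}\sum_i W_i^\tB\big(\bar a_i+\Delta^a_i\big)\big(\bar b_i+\Delta^b_i\big)-\hat m_a\hat m_b .
\]
Expanding the product gives four pieces, plus the correction $\hat m_a\hat m_b$. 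I bound each piece separately.

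\emph{Step 1, the main term.} The piece $\frac{1}{I_T}\sum_i W_i^\tB\bar a_i\bar b_i$ is a simple sampling-without-replacement mean of the population $\{\bar a_i\bar b_i\}$. Its mean is exactly $\frac1I\sum_i\bar a_i\bar b_i$. By \cref{lemma:sampling_subgaussianity}, or simply the variance formula for sampling without replacement, its variance is at most a constant times $\frac{1}{I}\cdot\frac1I\sum_i\bar a_i^2\bar b_i^2$. This in turn is at most $\frac1I\max_i\bar a_i^2\cdot\nu^\tB(b)$, or the same with the roles of $a$ and $b$ swapped. Under the hypothesis, one of these products is $o(1)$, since $\nu^\tB$ of the other matrix is bounded by the finite-limit assumption. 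Under the stronger hypothesis, $\max_i|\bar a_i|=O(1)$, so the variance is $O(1/I)$ and the deviation is $O_p(1/\sqrt I)$. I expect this to be the only step where the max condition enters; the condition is needed only on one matrix because the fourth-moment sum factors into a max of one matrix times $\nu^\tB$ of the other.

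\emph{Step 2, the cross and quadratic noise terms.} Each $\Delta^b_i$ is a mean-zero column-sampling average. Its variance is at most a constant times $\frac{1}{J^2}\sum_j(b_{ij}-\bar b_i)^2$, and averaging over $i$ gives $\frac1I\sum_i\E(\Delta^b_i)^2\lesssim \frac1J(\nu^\tBS(b)+\nu^\tS(b))$ by \cref{lemma:anova_variance_decomp}. For the cross term I use Cauchy--Schwarz together with $W_i^\tB\le1$ and $I/I_T=O(1)$:
\[
\E\Big|\frac{1}{I_T}\sum_i W_i^\tB\bar a_i\Delta^b_i\Big|\lesssim\Big(\frac1I\sum_i\bar a_i^2\Big)^{1/2}\Big(\frac1I\sum_i\E(\Delta_i^b)^2\Big)^{1/2}=O(J^{-1/2}).
\]
The symmetric cross term is handled the same way. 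For the quadratic noise term, the same Cauchy--Schwarz argument gives $\E|\frac{1}{I_T}\sum_iW_i^\tB\Delta^a_i\Delta^b_i|=O(1/J)$. By Markov's inequality all of these are $O_p(1/\sqrt I)$, using $I\asymp J$.

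\emph{Step 3, the mean correction.} Since $\dbar a=0$, \cref{prop:generic_variance_formula} with bounded $\nu^\theta(a)$ gives $\Var(\hat m_a)=O(1/I)$, so $\hat m_a=O_p(I^{-1/2})$, and likewise $\hat m_b=O_p(I^{-1/2})$. Hence $\hat m_a\hat m_b=O_p(1/I)$.

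Collecting Steps 1 through 3 gives $o_p(1)$ in general and $O_p(1/\sqrt I)$ under the stronger condition. The seller version follows by transposing the matrices, which exchanges the roles of rows and columns.
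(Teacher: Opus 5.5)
Your proof is correct, and its decomposition is the same as the paper's. You split off the grand-mean correction $(\edavg{a}-\dbar{a})(\edavg{b}-\dbar{b})=O_p(1/I)$, a main term, two cross terms and a quadratic noise term. Your main-term argument also matches the paper's, with the max condition entering only through the bound $\frac1I\max_i(\bar a_{i,\bullet}-\dbar a)^2\,\nu^\tB(b)$.

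Where you depart is in bounding the noise terms. The paper uses second moments:
\begin{itemize}
\item For the quadratic term, it computes the mean and applies the law of total variance conditional on the buyer draw. It then bounds the conditional variance of the resulting doubly-centered quadratic form $W^{\tS\top}AW^{\tS}$ in the seller indicators via Lemma A.5 of \cite{lei2021regression}.
\item For the mean-zero cross terms, it carries out a separate conditional-variance calculation.
\end{itemize}

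You instead bound first absolute moments directly: drop $W_i^\tB\le 1$, apply Cauchy--Schwarz over $i$, and finish with Markov. The only input you need is the scalar sampling-without-replacement variance of each $\Delta_i^b$.

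This is shorter and avoids the quadratic-form lemma entirely. It loses nothing, because you get the same explicit rates as the paper: $\bigl(\nu^\tB(a)(\nu^\tBS(b)+\nu^\tS(b))/J\bigr)^{1/2}$ for a cross term and $O(1/J)$ for the quadratic term. Your version therefore still supports the paper's later reuse of this bound in the direct-effect inference example, which needs only stochastic control with the explicit dependence on $\nu^\tB$. The paper's route gives $L^2$ rather than $L^1$ control of these pieces, which neither the stated $o_p$ nor the $O_p(1/\sqrt I)$ conclusions require.
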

\begin{proof}
Combining the definition of $ \eavg{a}^{(d)}_{i,\bullet},\,\eavg{b}^{(d)}_{i,\bullet}$ with a simple manipulation gives
\begin{equation*}
    \frac{1}{I_\gamma} \sum_{i \in \cl{I}_\gamma} \eavg{a}^{(d)}_{i,\bullet}\eavg{b}^{(d)}_{i,\bullet} =
\frac{1}{I_\gamma} \sum_{i \in \cl{I}_\gamma} (\eavg{a}_{i,\bullet} - \edavg{a})(\eavg{b}_{i,\bullet} - \edavg{b}) = \frac{1}{I_\gamma} \sum_{i \in \cl{I}_\gamma} (\eavg{a}_{i,\bullet} - \dbar{a})(\eavg{b}_{i,\bullet} - \dbar{b})  -(\edavg{a} - \dbar{a})(\edavg{b} - \dbar{b}).
\end{equation*}
We have $(\edavg{a} - \dbar{a})(\edavg{b} - \dbar{b}) = O_p\left(\sqrt{\Var(\edavg{a})\Var(\edavg{b})} \right)$, by Cauchy-Schwarz and Chebyshev. Under the assumption that $\nu^\theta(a) = O(1)$ and $\nu^\theta(b) = O(1)$ for all $\theta \in \tB,\tS,\tBS$, we have $\Var(\edavg{a}) = O(1/I)$ and similarly with $\Var(\edavg{b})$ by Proposition \ref{prop:generic_variance_formula}. Thus
\begin{align*}
    \frac{1}{I_\gamma} \sum_{i \in \cl{I}_\gamma} \eavg{a}^{(d)}_{i,\bullet}\eavg{b}^{(d)}_{i,\bullet} - \frac{1}{I} \sum_{i=1}^I \bar{a}^{(d)}_{i,\bullet}\bar{b}^{(d)}_{i,\bullet} & = \frac{1}{I_\gamma} \sum_{i \in \cl{I}_\gamma} (\eavg{a}_{i,\bullet} - \dbar{a})(\eavg{b}_{i,\bullet} - \dbar{b}) - \frac{1}{I} \sum_{i =1}^I \bar{a}^{(d)}_{i,\bullet}\bar{b}^{(d)}_{i,\bullet} + O_p(1/I)
\end{align*}

Decompose the quantity on the right hand side as follows:
\begin{align}
& \frac{1}{I_\gamma} \sum_{i \in \cl{I}_\gamma} (\eavg{a}_{i,\bullet} - \bar{a}_{i,\bullet})(\eavg{b}_{i,\bullet} - \bar{b}_{i,\bullet}) \label{eq:empirical_buyer_seller_decomp_1}\\
  & + \frac{1}{I_\gamma} \sum_{i \in \cl{I}_\gamma} (\bar{a}_{i,\bullet} - \dbar{a}) (\eavg{b}_{i,\bullet} - \bar{b}_{i,\bullet})  \label{eq:empirical_buyer_seller_decomp_2}\\
  & + \frac{1}{I_\gamma} \sum_{i \in \cl{I}_\gamma} (\eavg{a}_{i,\bullet} - \bar{a}_{i,\bullet})(\bar{b}_{i,\bullet} -\dbar{b})   \label{eq:empirical_buyer_seller_decomp_3}\\
  & + \frac{1}{I_\gamma} \sum_{i \in \cl{I}_\gamma} (\bar{a}_{i,\bullet} -\dbar{a}) (\bar{b}_{i,\bullet} -\dbar{b}) - \frac{1}{I} \sum_{i=1}^I (\bar{a}_{i,\bullet} -\dbar{a}) (\bar{b}_{i,\bullet} -\dbar{b})
\end{align}

\begin{enumerate}
    \item For the first term \eqref{eq:empirical_buyer_seller_decomp_1}, the mean is given by $\frac{1}{I} \sum_{i=1}^I \Cov(\eavg{a}_{i,\bullet}, \eavg{b}_{i,\bullet})$ using the tower property. Using the covariance formula for sampling without replacement, this is upper bounded by a constant multiple times
    \[
    \frac{1}{I J^2} \sum_{i,j} (a_{ij} - \bar{a}_{i,\bullet})(b_{ij} -  \bar{b}_{i,\bullet}) .
    \]
    Applying Cauchy-Schwarz gives the bound
    \[
     \frac{1}{J}  \left(\frac{1}{IJ}\sum_{i,j} (a_{ij} - \bar{a}_{i,\bullet})^2 \right)^{1/2} \left(\frac{1}{IJ}\sum_{i,j} (b_{ij} - \bar{b}_{i,\bullet})^2 \right)^{1/2} = \frac{1}{J} \left( (\nu^\tBS(a) + \nu^\tS(a))(\nu^\tBS(b) + \nu^\tS(b)) \right)^{\frac{1}{2}},
    \]
    which is $O(1/J)$ by assumption. Let us now inspect the variance of the term. We apply the law of total variance, conditioning on the buyer indicator variables, to obtain
    \begin{equation}
    \label{eq:case1helper}
    \Var\left( \frac{1}{I_\gamma} \sum_{i \in \cl{I}_\gamma} \Cov(\eavg{a}_{i,\bullet},\eavg{b}_{i,\bullet}) \right) + \E\left[\Var\left( \frac{1}{I_\gamma} \sum_{i \in \cl{I}_\gamma} (\eavg{a}_{i,\bullet} - \bar{a}_{i,\bullet})(\eavg{b}_{i,\bullet} - \bar{b}_{i,\bullet}) \middle| \tB \right) \right]
    \end{equation}
    Firstly, 
    \[
        \Var\left( \frac{1}{I_\gamma} \sum_{i \in \cl{I}_\gamma} \Cov(\eavg{a}_{i,\bullet},\eavg{b}_{i,\bullet}) \right)  \lesssim \frac{1}{I^2} \sum_i \Cov(\eavg{a}_{i,\bullet},\eavg{b}_{i,\bullet})^2 \le \left(\frac{1}{I} \sum_i |\Cov(\eavg{a}_{i,\bullet},\eavg{b}_{i,\bullet})| \right)^2 = O(1/J^2).
    \]
    
    To handle the second term in \eqref{eq:case1helper}, define the vectors $u_j \in \bR^{I_T}, j =1,\dots,J$ with entries $a_{ij} - \bar{a}_{i,\bullet}, i \in \cl{I}_T$ and $v_j \in \bR^{I_T}$ with entries $b_{ij} - \bar{b}_{i,\bullet}, i \in \cl{I}_T$. Then
\[
\sum_{i \in \cl{I}_\gamma} (\eavg{a}_{i,\bullet} - \bar{a}_{i,\bullet})(\eavg{b}_{i,\bullet} - \bar{b}_{i,\bullet}) = \frac{1}{J_T^2}\inner{\sum_{j} W_j^\tS u_j}{\sum_{j} W_j^\tS v_j} = \frac{1}{J_T^2}W^{\tS\top} A W^{\tS}
\]
for a matrix $A$ with entries $A_{ij} = \inner{u_i}{v_j}.$ Notice that $\sum_j u_j = 0, \sum_j v_j = 0$ so the row and column sums of $A_{ij}$ are zero. Applying Lemma A.5 of \cite{lei2021regression}, we have
\[
\E\left[\Var\left( \frac{1}{I_\gamma} \sum_{i \in \cl{I}_\gamma} (\eavg{a}_{i,\bullet} - \bar{a}_{i,\bullet})(\eavg{b}_{i,\bullet} - \bar{b}_{i,\bullet}) | \tB \right) \right] \leq \frac{1}{I_T^2J_T^4}\norm{A}_F^2 = 
\frac{1}{J_T^4}\sum_{j,k} \E\left( \frac{1}{I_T} u_j^\top v_k \right)^2 
\]
Since $u_j^\top v_k = \sum_{i \in \cl{I}_T} (a_{ij} - \bar{a}_{i,\bullet})(b_{ik} - \bar{b}_{i,\bullet})$, we can use the mean and variance formulas for sampling without replacement to obtain
\[
\E\left( \frac{1}{I_T} u_j^\top v_k \right)^2 \lesssim \left(\frac{1}{I} \sum_{i=1} (a_{ij} - \bar{a}_{i,\bullet})(b_{ik} - \bar{b}_{i,\bullet})\right)^2 + \frac{1}{I^2} \sum_{i=1} (a_{ij} - \bar{a}_{i,\bullet})^2(b_{ik} - \bar{b}_{i,\bullet})^2.
\]
Thus,
\begin{align*}
& \frac{1}{J_T^4} \sum_{j,k = 1}^J \left(\frac{1}{I} \sum_{i=1} (a_{ij} - \bar{a}_{i,\bullet})(b_{ik} - \bar{b}_{i,\bullet})\right)^2 + \frac{1}{J_T^4} \sum_{j,k = 1}^J \frac{1}{I^2} \sum_{i=1} (a_{ij} - \bar{a}_{i,\bullet})^2(b_{ik} - \bar{b}_{i,\bullet})^2 \\
& \leq \frac{1}{J_T^4} \sum_{j,k = 1}^J \left(\frac{1}{I} \sum_{i=1} (a_{ij} - \bar{a}_{i,\bullet})(b_{ik} - \bar{b}_{i,\bullet})\right)^2 + \frac{1}{J_T^4} \sum_{j,k = 1}^J \left(\frac{1}{I} \sum_{i=1} |(a_{ij} - \bar{a}_{i,\bullet})(b_{ik} - \bar{b}_{i,\bullet})|\right)^2 \\
 \leq & \frac{2}{J_T^4} \sum_{j,k = 1}^J \left(\frac{1}{I} \sum_{i=1} (a_{ij} - \bar{a}_{i,\bullet})^2\right) \left(\frac{1}{I} \sum_{i=1} (b_{ik} - \bar{b}_{i,\bullet})^2\right) \\
 \leq & \frac{2}{J^2} (\nu^\tBS(a) + \nu^\tS(a)) (\nu^\tBS(b) + \nu^\tS(b)).
\end{align*}

The total contribution for the first term is thus $O(1/J)$.

\item For the second term \eqref{eq:empirical_buyer_seller_decomp_2}, 
    conditioning on $\tB$ shows that the expression is mean zero. Thus it suffices to compute the variance. Using the law of total variance and conditioning on the buyer variables, the variance is given by
    \[
    \E\left[\Var\left[ \frac{1}{I_\gamma} \sum_{i \in \cl{I}_\gamma} \bar{a}^{(d)}_{i,\bullet} \eavg{b}_{i,\bullet} \ | \ \tB \right] \right].
    \]
    The inner term is the finite sample variance of the population indexed by $j$ given by $\frac{1}{I_\gamma} \sum_{i \in \cl{I}_\gamma} (\bar{a}_{i,\bullet} - \dbar{a})b_{ij}$, yielding
    \begin{align*}
    \frac{1}{J^2} \sum_{j=1}^J \E \left[ \left(\frac{1}{I_\gamma} \sum_{i \in \cl{I}_\gamma} \bar{a}^{(d)}_{i,\bullet}(b_{ij} - \bar{b}_{i,\bullet}) \right)^2\right] & = \frac{1}{J^2} \sum_{j=1}^J \left[ \frac{1}{I} \sum_{i = 1}^I \bar{a}^{(d)}_{i,\bullet}(b_{ij} - \bar{b}_{i,\bullet}) \right]^2 \\
    & + \frac{1}{I^2 J^2} \sum_{i,j}^{I,J} \bar{a}^{(d)2}_{i,\bullet}(b_{ij} - \bar{b}_{i,\bullet})^2 - \frac{1}{IJ^2} \sum_{j=1}^J \left( \frac{1}{I} \sum_{i=1}^I \bar{a}_{i,\bullet}(b_{ij} - \bar{b}_{i,\bullet}) \right)^2 \\
    & \leq \frac{1}{J^2} \sum_{j=1}^J \left[ \frac{1}{I} \sum_{i = 1}^I \bar{a}^{(d)}_{i,\bullet}(b_{ij} - \bar{b}_{i,\bullet}) \right]^2 + \frac{1}{I^2 J^2} \sum_{i,j}^{I,J} \bar{a}^{(d)2}_{i,\bullet}(b_{ij} - \bar{b}_{i,\bullet})^2 \\
    & \leq  \frac{2}{J^2} \sum_{j=1}^J \left[ \frac{1}{I} \sum_{i = 1}^I \left|\bar{a}^{(d)}_{i,\bullet}(b_{ij} - \bar{b}_{i,\bullet}) \right| \right]^2.
    \end{align*}
    Applying Cauchy-Schwarz,
    \[
    \frac{1}{J^2} \sum_{j=1}^J \left[ \frac{1}{I} \sum_{i = 1}^I \left|(\bar{a}_{i,\bullet} -\dbar{a})(b_{ij} - \bar{b}_{i,\bullet})\right| \right]^2 \leq \frac{1}{IJ^2} \sum_{i,j}^{I,J} (b_{ij} - \bar{b}_{i,\bullet})^2 \cdot \frac{1}{I} \sum_{i=1}^I (\bar{a}_{i,\bullet} -\dbar{a})^2
    \]
    In turn, bound this by 
    \[
    \frac{1}{J}\left(\nu^\tBS(b) + \nu^\tS(b) \right) \frac{1}{I} \sum_{i=1}^I (\bar{a}_{i,\bullet} -\dbar{a})^2 = \frac{1}{I}\left(\nu^\tBS(b) + \nu^\tS(b) \right) \nu^{\tB}(a) = O\left( \frac{1}{I}\nu^{\tB}(a)\right)
    \]
    Thus the total contribution from the second term is $O_p\left(\left(\frac{1}{I}\nu^{\tB}(a)\right)^{1/2} \right)$.
    
    \item The third term is symmetric to the second term and is  $O_p\left(\left(\frac{1}{I}\nu^{\tB}(b)\right)^{1/2}\right)$.
    \item For the last term, bound it above by $O_p\left(\sqrt{\Var(\frac{1}{I_\gamma} \sum_{i \in \cl{I}_\gamma} \bar{a}^{(d)}_{i,\bullet} \bar{b}^{(d)}_{i,\bullet})} \right)$. Note that 
    \begin{align*}
        \Var\left(\frac{1}{I_\gamma} \sum_{i \in \cl{I}_\gamma} \bar{a}^{(d)}_{i,\bullet} \bar{b}^{(d)}_{i,\bullet} \right) & \lesssim \frac{1}{I^2}\sum_{i=1}^I (\bar{a}_{i,\bullet} -\dbar{a})^2 (\bar{b}_{i,\bullet} -\dbar{b})^2 \\
        & \leq \frac{1}{I} \max_i (\bar{b}_{i,\bullet} -\dbar{b})^2 \frac{1}{I}\sum_{i=1}^I (\bar{a}_{i,\bullet} -\dbar{a})^2 \\
        & \leq \frac{1}{I} \max_i (\bar{b}_{i,\bullet} -\dbar{b})^2 \nu^\tB(a).
    \end{align*}
We could have equivalently used the bound $\frac{1}{I} \max_i (\bar{a}_{i,\bullet} -\dbar{a})^2 \nu^\tB(b)$ for the last term.
\end{enumerate}

We have thus shown the left hand side of \eqref{eq:convergence_of_empirical_buyer_seller_term} is stochastically bounded by
\[
\frac{1}{J} +  \left(\frac{1}{I}\nu^{\tB}(a)\right)^{1/2} +  \left(\frac{1}{I}\nu^{\tB}(b)\right)^{1/2} + \left(\frac{1}{I} \max_i (\bar{b}_{i,\bullet} -\dbar{b})^2 \nu^\tB(a) \wedge  \frac{1}{I} \max_i (\bar{a}_{i,\bullet} -\dbar{a})^2 \nu^\tB(b) \right)^{1/2}
\]

By assumption, $\nu^\tB(a),\nu^\tB(b) = O(1)$. Under the condition that either $\frac{1}{I} \max_i (\bar{b}_{i,\bullet} -\dbar{b})^2 = o(1)$ or $\frac{1}{I} \max_i (\bar{a}_{i,\bullet} -\dbar{a})^2 = o(1)$, the result is $o(1)$. Under the stronger condition that $\max_i |\bar{a}_{i,\bullet} -\dbar{a} |$ or $\max_i |\bar{b}_{i,\bullet} -\dbar{b} | = O(1)$, the result is $O_p(1/\sqrt{I}).$
\end{proof}
\begin{proposition}
\label{prop:convergence_of_empirical_dcr_term}
Let $(a_{ij}), (b_{ij})$ be a sequence of $I \times J$ matrices, indexed by $n$. Suppose the weaker set of conditions of \cref{prop:convergence_of_empirical_buyer_seller_term} and further that either $\frac{1}{I}\max_{ij}|a_{ij} - \dbar{a}|^2 = o(1)$ or $\frac{1}{I}\max_{ij}|b_{ij} - \dbar{b}|^2 = o(1)$. Then
\begin{equation}
\label{eq:convergence_of_empirical_dcr_term_eq}
\frac{1}{I_\gamma J_\gamma} \sum_{(i,j) \in \gamma} \hat{a}^{(dcr)}_{ij}\hat{b}^{(dcr)}_{ij} - \frac{1}{IJ} \sum_{i,j} a^{(dcr)}_{ij}b^{(dcr)}_{ij} = o_p(1).
\end{equation}
\end{proposition}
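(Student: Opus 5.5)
The plan is to reduce the double-decentered cross moment to quantities we already control. The reduction uses the ANOVA (projection) identity for double decentering. For any $I_\gamma\times J_\gamma$ submatrices and any constants $c_a,c_b$, double decentering within $\gamma$ is an orthogonal projection that kills constants, row effects and column effects. Hence
\begin{align*}
\frac{1}{I_\gamma J_\gamma}\sum_{(i,j)\in\gamma}\hat a^{(dcr)}_{ij}\hat b^{(dcr)}_{ij}
&= \frac{1}{I_\gamma J_\gamma}\sum_{(i,j)\in\gamma}(a_{ij}-c_a)(b_{ij}-c_b)
-\frac{1}{I_\gamma}\sum_{i\in\cl{I}_\gamma}(\eavg{a}_{i,\bullet}-c_a)(\eavg{b}_{i,\bullet}-c_b)\\
&\quad -\frac{1}{J_\gamma}\sum_{j\in\cl{J}_\gamma}(\eavg{a}_{\bullet,j}-c_a)(\eavg{b}_{\bullet,j}-c_b)
+(\edavg{a}-c_a)(\edavg{b}-c_b).
\end{align*}
The population quantity $\frac{1}{IJ}\sum_{i,j}a^{(dcr)}_{ij}b^{(dcr)}_{ij}$ has the same decomposition with full-population averages. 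I will take $c_a=\dbar{a}$ and $c_b=\dbar{b}$. In that case the population grand term vanishes.

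The difference in \eqref{eq:convergence_of_empirical_dcr_term_eq} then splits into four pieces, which I will handle as follows.

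\emph{Grand-mean term.} The piece $(\edavg{a}-\dbar{a})(\edavg{b}-\dbar{b})$ is $O_p(\sqrt{\Var(\edavg{a})\Var(\edavg{b})})=O_p(1/I)$. This follows from Cauchy--Schwarz, Chebyshev, and \cref{prop:generic_variance_formula} together with $\nu^\theta(a),\nu^\theta(b)=O(1)$.

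\emph{Row and column terms.} The buyer cross-moment difference
\[
\frac{1}{I_\gamma}\sum_{i\in\cl{I}_\gamma}(\eavg{a}_{i,\bullet}-\dbar{a})(\eavg{b}_{i,\bullet}-\dbar{b})-\frac{1}{I}\sum_{i=1}^I \bar a^{(d)}_{i,\bullet}\bar b^{(d)}_{i,\bullet}
\]
is exactly the quantity treated inside the proof of \cref{prop:convergence_of_empirical_buyer_seller_term}, before the recentering by $\edavg{a},\edavg{b}$. That proposition therefore gives $o_p(1)$ under the weaker condition we assume. The seller term is symmetric.

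\emph{Full cross moment.} This is the remaining term, with $c_{ij}:=(a_{ij}-\dbar{a})(b_{ij}-\dbar{b})$:
\[
\frac{1}{I_\gamma J_\gamma}\sum_{(i,j)\in\gamma}c_{ij}-\frac{1}{IJ}\sum_{i,j}c_{ij}.
\]
It is a doubly randomized mean minus its expectation; for the groups other than $\tr$ the same holds with sampling of the complement. By \cref{prop:generic_variance_formula} its variance is
\[
\lesssim \frac1I\nu^\tB(c)+\frac1J\nu^\tS(c)+\frac1{IJ}\nu^\tBS(c)\lesssim \frac{1}{I}\cdot\frac{1}{IJ}\sum_{i,j}c_{ij}^2 .
\]
Each $\nu^\theta(c)$ is bounded by the raw second moment, via \cref{lemma:anova_variance_decomp}. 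Pulling out the maximum, this is at most
\[
\frac{1}{I}\max_{ij}|a_{ij}-\dbar{a}|^2\cdot\frac{1}{IJ}\sum_{i,j}(b_{ij}-\dbar{b})^2 .
\]
The first factor is $o(1)$ by hypothesis. The second factor is $\nu^\tB(b)+\nu^\tS(b)+\nu^\tBS(b)=O(1)$. The roles of $a$ and $b$ swap if the max condition is on $b$. Chebyshev then gives $o_p(1)$.

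The main obstacle I expect is verifying the projection identity carefully with within-group empirical centering, and checking that the row-term piece matches exactly what \cref{prop:convergence_of_empirical_buyer_seller_term} controls. If the match is off, I would instead redo that proposition's four-term decomposition verbatim. Summing the four pieces yields \eqref{eq:convergence_of_empirical_dcr_term_eq}.
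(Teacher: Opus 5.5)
Your proposal is correct and follows essentially the same route as the paper. Both use the projection identity to split the double-decentered cross moment into a full cross moment, row and column cross moments, and a grand-mean product; the row and column pieces go to \cref{prop:convergence_of_empirical_buyer_seller_term}, and the full cross moment to the doubly randomized variance formula with the entrywise max condition. The only difference is bookkeeping: you center at $\dbar{a},\dbar{b}$ from the start, so your row terms are the $\dbar{a}$-centered quantities from inside that proposition's proof rather than the empirically decentered ones, which just shifts the product $(\edavg{a}-\dbar{a})(\edavg{b}-\dbar{b})$ between pieces.
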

\begin{proof}
Since double decentering forces each row and column sum to be zero, it follows that $\sum_{(i,j) \in \gamma} \hat{a}^{(dcr)}_{ij}\hat{b}^{(dcr)}_{ij} = \sum_{(i,j) \in \gamma} a_{ij}\hat{b}^{(dcr)}_{ij}$ and similarly $\sum_{i,j} a_{ij}^{(dcr)}b^{(dcr)}_{ij} = \sum_{i,j} a_{ij}b^{(dcr)}_{ij}.$ From this, one can show that the left hand side of \eqref{eq:convergence_of_empirical_dcr_term_eq} is equal to 
\begin{align*} 
\widehat{\bar{\bar{ab}}} - \bar{\bar{ab}} & -  \left( \frac{1}{I_\gamma} \sum_{i \in \cl{I}_\gamma} 
\eavg{a}_{i,\bullet} \eavg{b}_{i,\bullet} -
\frac{1}{I} \sum_{i = 1}^I \bar{a}_{i,\bullet} \bar{b}_{i,\bullet}\right) -  \left( \frac{1}{J_\gamma} \sum_{j \in \cl{J}_\gamma} \eavg{a}_{j,\bullet} \eavg{b}_{i,\bullet} -
\frac{1}{J} \sum_{j = 1}^J \bar{a}_{j,\bullet} \bar{b}_{j,\bullet}\right) + \left(\edavg{a} \edavg{b} - \dbar{a}\dbar{b} \right).
\end{align*}
Now, observe that
\begin{align*}
    & \widehat{\bar{\bar{ab}}} - \bar{\bar{ab}} =  \frac{1}{I_\gamma J_\gamma} \sum_{i,j \in \gamma} (a_{ij} - \dbar{a})(b_{ij}-\dbar{b}) - \frac{1}{IJ}\sum_{ij} (a_{ij} - \dbar{a})(b_{ij}-\dbar{b}) + (\dbar{a} \edavg{b} + \edavg{a} \dbar{b} - 2\dbar{a}\dbar{b}), \\
    & \frac{1}{I_\gamma} \sum_{i \in \cl{I}_\gamma} 
    \eavg{a}_{i,\bullet} \eavg{b}_{i,\bullet} -
    \frac{1}{I} \sum_{i = 1}^I \bar{a}_{i,\bullet} \bar{b}_{i,\bullet} = \left( \frac{1}{I_\gamma} \sum_{i \in \cl{I}_\gamma} 
    \eavg{a}_{i,\bullet}^{(d)} \eavg{b}_{i,\bullet}^{(d)} -
    \frac{1}{I} \sum_{i = 1}^I \bar{a}_{i,\bullet}^{(d)} \bar{b}_{i,\bullet}^{(d)} \right) + (\edavg{a}\edavg{b} - \dbar{a}\dbar{b}) \\
& \frac{1}{J_\gamma} \sum_{j \in \cl{J}_\gamma} \eavg{a}_{\bullet,j} \eavg{b}_{\bullet,j} -
\frac{1}{J} \sum_{j = 1}^J \bar{a}_{\bullet,j} \bar{b}_{\bullet,j} = \left( \frac{1}{J_\gamma} \sum_{j \in \cl{J}_\gamma} \eavg{a}_{\bullet,j}^{(d)} \eavg{b}_{\bullet,j}^{(d)} -
\frac{1}{J} \sum_{j = 1}^J \bar{a}_{\bullet,j}^{(d)} \bar{b}_{\bullet,j}^{(d)}\right) + (\edavg{a}\edavg{b} - \dbar{a}\dbar{b}).
\end{align*}
Thus, the left hand side of \eqref{eq:convergence_of_empirical_dcr_term_eq} can equivalently be written as
\begin{align}
    \label{eq:convergence_dcr_term_I}
    & \frac{1}{I_\gamma J_\gamma} \sum_{i,j \in \gamma} (a_{ij} - \dbar{a})(b_{ij}-\dbar{b}) - \frac{1}{IJ}\sum_{ij} (a_{ij} - \dbar{a})(b_{ij}-\dbar{b}) \\
    \label{eq:convergence_dcr_term_II}
    - & \left(\frac{1}{I_\gamma} \sum_{i \in \cl{I}_\gamma} 
    \eavg{a}_{i,\bullet}^{(d)} \eavg{b}_{i,\bullet}^{(d)} -
    \frac{1}{I} \sum_{i = 1}^I \bar{a}_{i,\bullet}^{(d)} \bar{b}_{i,\bullet}^{(d)}\right) \\
    \label{eq:convergence_dcr_term_III}
    - & \left(\frac{1}{J_\gamma} \sum_{j \in \cl{J}_\gamma} 
    \eavg{a}_{\bullet,j}^{(d)} \eavg{b}_{\bullet,j}^{(d)} -
    \frac{1}{J} \sum_{j = 1}^J \bar{a}_{\bullet,j}^{(d)} \bar{b}_{\bullet,j}^{(d)}\right) \\
    - & (\edavg{a} - \dbar{a}) (\edavg{b} - \dbar{b}).
\end{align}
By the same argument as \cref{prop:convergence_of_empirical_buyer_seller_term} the last three terms are $o_p(1)$. For the first term \eqref{eq:convergence_dcr_term_I}, we apply \cref{lemma:variance_of_davg_kronecker_prod}. Here, note by Jensen that
\begin{equation}
\label{eq:convergence_dcr_term_helper}
\frac{1}{I}\nu^\tB(a^{(d)}b^{(d)}) \lesssim  \frac{1}{I^2} \sum_{i=1}^I \left(\frac{1}{J} \sum_{j=1}^J a_{ij}^{(d)} b_{ij}^{(d)} \right)^2 \leq \frac{1}{I}\max_{ij}(a_{ij} -\dbar{a})^2 \cdot \frac{1}{IJ} \sum_{ij} (b_{ij} - \dbar{b})^2.
\end{equation}

Under the assumed condition that $\frac{1}{I}\max_{ij} |a_{ij} - \dbar{a}|^2 = o(1)$, \eqref{eq:convergence_dcr_term_helper} is $o(1)$. By symmetry a similar bound holds for $\frac{1}{J}\nu^\tS(a^{(d)}b^{(d)}).$ Lastly, by the same argument as in the beginning of the proof, $\nu^\tBS(y) = \frac{1}{IJ} \sum_{ij}^{I,J} y_{ij}^2 - \frac{1}{I} \sum_{i=1}^I \bar{y}_{i,\bullet}^2 - \frac{1}{J} \sum_{j=1}^J \bar{y}_{\bullet,j}^2 + \dbar{y}^2 \leq \frac{1}{IJ} \sum_{ij}^{I,J} y_{ij}^2 + \dbar{y}^2$. Combining this with Jensen's inequality,
\begin{align*}
\frac{1}{IJ}\nu^\tBS(a^{(d)}b^{(d)}) & \lesssim \frac{1}{I^2J^2} \sum_{i,j}(a_{ij} -\dbar{a})^2 (b_{ij} -\dbar{b})^2 + \frac{1}{IJ} \left(\frac{1}{IJ} \sum_{i,j}(a_{ij} -\dbar{a})(b_{ij} -\dbar{b}) \right)^2 \\
& \leq \frac{2}{I^2J^2} \sum_{i,j}(a_{ij} -\dbar{a})^2 (b_{ij} -\dbar{b})^2 \\
& \lesssim \frac{1}{I^2} \max_{i,j} (a_{ij} -\dbar{a})^2 \cdot \frac{1}{IJ}\sum_{i,j} (b_{ij} -\dbar{b})^2 = o(1).
\end{align*}
\eqref{eq:convergence_dcr_term_I} could have been equivalently bounded by switching $a$ and $b$, so it is also $o_p(1)$ if $\frac{1}{I} \max_{ij} \left|b_{ij} -\dbar{b}\right|^2 = o_p(1).$ Collecting the terms shows that  \eqref{eq:convergence_dcr_term_I} is $o_p(1)$.

\end{proof}


\begin{lemma}
\label{lemma:variance_of_davg_kronecker_prod}
Under the same assumptions as above, we have
\[
\edavg{ab} - \dbar{ab} = O_p\left(\sqrt{\frac{1}{I}\nu^\tB(ab) + \frac{1}{J}\nu^\tS(ab) + \frac{1}{IJ}\nu^\tBS(ab)} \right).
\]
\end{lemma}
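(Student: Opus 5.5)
The statement is that the doubly randomized average of the elementwise product $ab$ is within its own standard deviation of its population mean. I would deduce this from Chebyshev's inequality applied to $\edavg{ab}$, viewed as the doubly randomized average of the single $I\times J$ matrix $c_{ij} := a_{ij}b_{ij}$ over the block $\gamma$.

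First, I would note that $\edavg{ab} = \frac{1}{I_\gamma J_\gamma}\sum_{(i,j)\in\gamma} c_{ij}$ is exactly the groupwise average $\edavg{c}_\gamma$ studied in \cref{sec:variance_formulas}. Its expectation is $\dbar{c} = \dbar{ab}$ by \cref{lemma:hoeffding_decomposition}, since the three residual terms $\overline{\e}^{\tB}_\gamma$, $\overline{\e}^{\tS}_\gamma$ and $\dbar{\e}^{\tBS}_\gamma$ built from $c$ are mean zero. The sign pattern for $\gamma\neq\tr$ does not affect the mean. Alternatively, one can argue directly that $\E[W_i^\tB W_j^\tS] = \frac{I_T J_T}{IJ}$, and similarly for the other blocks.

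Second, I would compute the variance exactly via \cref{prop:generic_variance_formula} applied to the matrix $c$:
\[
\Var(\edavg{c}_\gamma) = \frac{I-I_\gamma}{I_\gamma I}\nu^\tB(c) + \frac{J-J_\gamma}{J_\gamma J}\nu^\tS(c) + \frac{I-I_\gamma}{I_\gamma I}\frac{J-J_\gamma}{J_\gamma J}\nu^\tBS(c).
\]
Under the asymptotic regime \eqref{eq:asymptotic_regime}, each $I_\gamma$ is of order $I$ and each $J_\gamma$ is of order $J$. Hence the coefficients are $\lesssim 1/I$, $\lesssim 1/J$ and $\lesssim 1/(IJ)$ respectively, which gives
\[
\Var(\edavg{ab}) \lesssim \frac{1}{I}\nu^\tB(ab) + \frac{1}{J}\nu^\tS(ab) + \frac{1}{IJ}\nu^\tBS(ab).
\]

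Third, Chebyshev's inequality gives $\edavg{ab}-\dbar{ab} = O_p(\Var(\edavg{ab})^{1/2})$, which is the claim. In the application inside the proof of \cref{prop:convergence_of_empirical_dcr_term}, the lemma is used with the centered matrices $a^{(d)}$ and $b^{(d)}$; the same argument applies verbatim to those matrices.

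There is no real obstacle here. The only point needing care is confirming that the generic variance formula holds for an arbitrary matrix rather than only for potential outcomes. This is immediate, because its proof uses nothing about $y$ beyond its row, column and double-centered decomposition.
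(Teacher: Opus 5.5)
Your proposal is correct and matches the paper's proof, which simply applies the generic variance formula of \cref{prop:generic_variance_formula} to the matrix $(a_{ij}b_{ij})$ and uses Chebyshev's inequality. You spell out the details the paper leaves implicit: unbiasedness of the block average, and the fact that the coefficients are $\lesssim 1/I$, $\lesssim 1/J$ and $\lesssim 1/(IJ)$ under \eqref{eq:asymptotic_regime}.
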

\begin{proof}
Simply apply the variance formula in \cref{sec:variance_formulas}.
\end{proof}

\begin{proposition}
\label{prop:proof_of_uniform_ratio_convergence}
Under \cref{assmp:finite_population_limits,assmp:main_assumption,assmp:covariate_assumption,assmp:non_vanishing_buyer_seller_variation,assmp:per_gamma_variance_lb}, \eqref{eq:uniform_ratio_convergence_of_conservative_estimator} holds for all $M > 0$.
\end{proposition}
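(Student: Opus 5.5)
Since $\hat V_c(b)=\bigl(\sum_\gamma |c_\gamma|\sqrt{\hat\Sigma_\gamma(b)^+}\bigr)^2$ and $V_c(b)=\bigl(\sum_\gamma|c_\gamma|\sqrt{\Var(\edavg{e}(b,\gamma))}\bigr)^2$, the plan is to reduce \eqref{eq:uniform_ratio_convergence_of_conservative_estimator} to the per-group statement
\[
\sup_{b\in B_I(M)}\left|\frac{\hat\Sigma_\gamma(b)-\Var(\edavg{e}(b,\gamma))}{\Var(\hat\tau_c(\tilde\beta_c))}\right| = o_p(1)\quad\text{for each }\gamma\in\Gamma.
\]
This suffices because \cref{assmp:per_gamma_variance_lb} gives $\inf_{b\in B_I(M)}\Var(\edavg{e}(b,\gamma))\gtrsim \Var(\hat\tau_c(\tilde\beta_c))$. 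The additive error is therefore a uniform relative error $\hat\Sigma_\gamma(b)=(1+o_p(1))\Var(\edavg{e}(b,\gamma))$. The elementary manipulations with $x^+$, square roots and sums used at the start of the proof of \cref{thm:consistent_variance_estimator} then carry over verbatim, with all $o_p(1)$ terms uniform in $b$.

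For the per-group statement, I would exploit that both objects are quadratic in $b$. Each of $\hat\nu^\theta_\gamma(b)$ and the two correction sums in \eqref{eq:conservative_var_est} has the form $\hat\nu^\theta_\gamma(\tilde\beta_c) - 2\hat e^{\theta\top}_\gamma(b-\tilde\beta_c)+(b-\tilde\beta_c)^\top\hat Z^\theta_\gamma(b-\tilde\beta_c)$, with residuals centered at $\tilde\beta_c$; this is the same expansion as \eqref{eq:cons_variance_estimator_difference}. The population variance $\Var(\edavg{e}(b,\gamma))$ has the identical structure with $e^\theta_\gamma$ and $Z^\theta$ in place of $\hat e^\theta_\gamma$ and $\hat Z^\theta_\gamma$, weighted by the coefficients of \cref{prop:generic_variance_formula}. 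The difference therefore splits into three pieces. The first is the $b$-free part $\hat\Sigma_\gamma(\tilde\beta_c)-\Var(\edavg{e}(\tilde\beta_c,\gamma))$. The second is a linear part, $\sum_\theta w_\theta(\hat e^\theta_\gamma-e^\theta_\gamma)^\top(b-\tilde\beta_c)$. The third is a quadratic part, $\sum_\theta w_\theta(b-\tilde\beta_c)^\top(\hat Z^\theta_\gamma-Z^\theta)(b-\tilde\beta_c)$. Here $w_{\tB},w_{\tS}=O(1/I)$ and $w_{\tBS}=O(1/I^2)$, and the unbiasedness corrections in \eqref{eq:conservative_var_est} contribute terms of the same orders.

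I would bound the three pieces as follows.
\begin{itemize}
\item[(i)] The $b$-free piece is handled by applying \cref{thm:consistent_variance_estimator}, more precisely its per-group conclusion $\hat\Sigma_\gamma=(1+o_p(1))\Var(\edavg{y}_\gamma)$, to the matrix $e_{ij}(\gamma)$. Its hypotheses are supplied by \cref{assmp:main_assumption}, and the normalization by $\Var(\edavg{e}(\tilde\beta_c,\gamma))$ is compared to $\Var(\hat\tau_c(\tilde\beta_c))$ via \cref{assmp:per_gamma_variance_lb}.
\item[(ii)] For the linear piece, \cref{prop:convergence_of_empirical_buyer_seller_term,prop:convergence_of_empirical_dcr_term} give $\hat e^\theta_\gamma-e^\theta_\gamma=O_p(I^{-1/2})$ for $\theta\in\{\tB,\tS\}$ and $o_p(1)$ for $\tBS$. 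Since $\sup_{b\in B_I(M)}\|b-\tilde\beta_c\|\le MI^{-1/2}$, the piece is $O_p(I^{-1}\cdot I^{-1/2}\cdot I^{-1/2})+o_p(I^{-2}\cdot I^{-1/2})=O_p(I^{-2})$.
\item[(iii)] The quadratic piece is $O(I^{-1})\cdot O_p(1)\cdot M^2I^{-1}=O_p(I^{-2})$, using only $\hat Z^\theta_\gamma=O_p(1)$ from the same propositions.
\end{itemize}
Both (ii) and (iii) are uniform in $b$, since the randomness enters only through $b$-free quantities, and both are $o_p(\Var(\hat\tau_c(\tilde\beta_c)))$ by the $\omega(I^{-2})$ bound of \cref{assmp:non_vanishing_buyer_seller_variation}.

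I expect step (i) to be the main obstacle. \cref{thm:consistent_variance_estimator} is phrased for potential outcomes with a CLT-type condition relative to $\Var(\widehat\tau_c)$, whereas here the residuals $e_{ij}(\gamma)$ need consistency relative to the per-group variance. I would have to recheck that the lemmas it relies on (\cref{lemma:empirical_buyer_variance_small,lemma:square_of_group_average_convergence,lemma:square_of_double_average_small variance}) only require the smallness conditions of \cref{assmp:main_assumption} divided by $\Var(\edavg{e}(\tilde\beta_c,\gamma))$. This holds because \cref{assmp:per_gamma_variance_lb} makes that variance at least of the order of $\Var(\hat\tau_c(\tilde\beta_c))$. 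A secondary check is that the correction sums in \eqref{eq:conservative_var_est}, which are built from $(e_{ij}-\hat{\bar e}_{i,\bullet})^2$, admit the same quadratic expansion in $b$ with Gram terms controlled by \cref{assmp:covariate_assumption}.
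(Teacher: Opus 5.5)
Your route is genuinely different from the paper's, and steps (ii)--(iii) are sound in order of magnitude, but step (i) contains a real gap: it inverts the direction of \cref{assmp:per_gamma_variance_lb}.

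\textbf{Comparison of routes.} The paper first re-verifies \cref{assmp:main_assumption} uniformly over $B_I(M)$. It then uses that $u\mapsto\hat\Sigma_\gamma(b(u))-\Sigma_\gamma(b(u))$ is a quadratic polynomial, so its supremum over the ball is controlled by its values on a finite stencil, and it applies \cref{thm:consistent_variance_estimator} at each stencil point. You instead expand once around $\tilde\beta_c$, invoke the theorem only at $\tilde\beta_c$, and bound the linear and quadratic coefficients directly with \cref{prop:convergence_of_empirical_buyer_seller_term,prop:convergence_of_empirical_dcr_term}. This avoids the perturbed version of \cref{assmp:main_assumption} and gives an explicit $O_p(I^{-2})$ rate for the $b$-dependent part.

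\textbf{A bookkeeping point in (ii).} The weights in \eqref{eq:conservative_var_est} are not literally those of \cref{prop:generic_variance_formula}. Also, the group-level $e^\theta_\gamma$ need not vanish at $\tilde\beta_c$, so weight mismatches are not automatically negligible. The clean argument is by unbiasedness: $\E\hat\Sigma_\gamma(b)=\Sigma_\gamma(b)$ for every $b$ forces the coefficients to agree in expectation. The linear-coefficient difference is then a mean-zero fluctuation of size $O(I^{-1})\cdot O_p(I^{-1/2})$, which gives your $O_p(I^{-2})$ bound.

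\textbf{The gap in step (i).} Write $\Sigma_\gamma(b):=\Var(\edavg{e}(b,\gamma))$. \Cref{thm:consistent_variance_estimator} applied to $e_{ij}(\gamma)$ gives only $\hat\Sigma_\gamma(\tilde\beta_c)-\Sigma_\gamma(\tilde\beta_c)=o_p(\Sigma_\gamma(\tilde\beta_c))$. Upgrading this to $o_p(\Var(\hat\tau_c(\tilde\beta_c)))$ needs an \emph{upper} bound $\Sigma_\gamma(\tilde\beta_c)\lesssim\Var(\hat\tau_c(\tilde\beta_c))$, but \cref{assmp:per_gamma_variance_lb} is a lower bound.

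That upper bound can fail under the stated assumptions. For the buyer spillover, the $\nu^\tS$ terms cancel from $\Var(\hat\tau_{\text{bs}})$, leaving only $\xi^\tS_{\ib,\cc}$, yet they remain in $\Sigma_\ib$ at order $1/J$. With $\xi^\tS_{\ib,\cc}\approx0$ and small residual buyer variation, one can have $\Sigma_\ib\asymp I^{-1}$ while $\Var(\hat\tau_c(\tilde\beta_c))\asymp I^{-3/2}$. The $O_p(J^{-1/2})$ relative fluctuation of $\hat\nu^\tS_\ib$ then makes $\hat\Sigma_\ib-\Sigma_\ib$ of order $I^{-3/2}$. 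So your per-group additive target is false in that regime. The paper's proof takes the same step when it passes from $\hat\Sigma_\gamma(b(u))/\Sigma_\gamma(b(u))\to_p1$ to an $o_p(\Var(\hat\tau_c(\tilde\beta_c)))$ error.

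\textbf{The repair.} Aim directly for uniform \emph{relative} error, using that $\Sigma_\gamma$ has constant order across the ball.
\begin{itemize}
\item Since $\edavg{e}(\tilde\beta_c,\gamma)=\edavg{e}(b,\gamma)+\edavg{X}_\gamma^\top(b-\tilde\beta_c)$, and $\Var(\edavg{X}_\gamma)=O(I^{-1})$ by \cref{assmp:finite_population_limits} and \cref{prop:generic_variance_formula}, you get $\Sigma_\gamma(\tilde\beta_c)\le 2\Sigma_\gamma(b)+O(I^{-2})$ uniformly on $B_I(M)$.
\item By \cref{assmp:per_gamma_variance_lb,assmp:non_vanishing_buyer_seller_variation}, $O(I^{-2})=o(\inf_{b\in B_I(M)}\Sigma_\gamma(b))$. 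Hence $\Sigma_\gamma(\tilde\beta_c)\lesssim\inf_{b\in B_I(M)}\Sigma_\gamma(b)$.
\item Your $b$-free piece is $o_p(\Sigma_\gamma(\tilde\beta_c))$, and your linear and quadratic pieces are $O_p(I^{-2})$. Both are therefore $o_p(\Sigma_\gamma(b))$ uniformly in $b$.
\end{itemize}
This is the uniform per-group relative statement you need before the $x^+$, square-root and summation steps.
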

\begin{proof}
Firstly, we will show that \cref{assmp:main_assumption} continues to hold when perturbing $\tilde{\beta}_c$ over a small ball $B_I(M)$, for every fixed $M$. Set $e_{ij}^\star(\gamma) = y_{ij}(\gamma) - X_{ij}^\top \tilde \beta_c$. Precisely, we will show
\begin{align}
    \sup_{b\in B_I(M)}
    & \max\left\{
    \frac1I\max_i \abs{\bar e^{(d)}_{i,\bullet}(b,\gamma)},\;
    \frac1J\max_j \abs{\bar e^{(d)}_{\bullet,j}(b,\gamma)},\;
    \frac{\sqrt{\log I}}{I^{3/2}}\max_{i,j}\abs{e_{ij}(b,\gamma)-\bar e_{i,\bullet}(b,\gamma)},\right. \\
    \label{eq:uniform_main_assumption}
    & \left.\frac{\sqrt{\log J}}{J^{3/2}}\max_{i,j}\abs{e_{ij}(b,\gamma)-\bar e_{\bullet,j}(b,\gamma)},\;
    \frac1{I^2}\norm{e^{(dcr)}(b,\gamma)}_{\text{op}}
    \right\}
    =o\!\left(\Var(\hat \tau_c(\tilde \beta_c))^{1/2}\right).
\end{align}

Fix $\gamma\in\Gamma$ and $b\in B_I(M)$. Write $\Delta:=b-\tilde \beta_c,$ so that $\norm{\Delta}_2\le MI^{-1/2}$. Since $e_{ij}(b,\gamma)=e^\star_{ij}(\gamma)-X_{ij}^\top\Delta$, we have
\[
\bar e^{(d)}_{i,\bullet}(b,\gamma)
=
\bar e^{\star(d)}_{i,\bullet}(\gamma)
-
\sum_{k=1}^d \Delta_k\bigl(\bar X_{i,\bullet}^{(k)}-\dbar{X}^{(k)}\bigr).
\]
Hence
\begin{align*}
\frac1I\max_i \abs{\bar e^{(d)}_{i,\bullet}(b,\gamma)}
&\le
\frac1I\max_i \abs{\bar e^{\star(d)}_{i,\bullet}(\gamma)}
+
\frac{\norm{\Delta}_1}{I}\max_{1\le k\le d}\max_i \abs{\bar X_{i,\bullet}^{(k)}-\dbar{X}^{(k)}} \\
&\le
\frac1I\max_i \abs{\bar e^{\star(d)}_{i,\bullet}(\gamma)}
+
M\sqrt d\,I^{-3/2}
\max_{1\le k\le d}\max_i \abs{\bar X_{i,\bullet}^{(k)}-\dbar{X}^{(k)}}.
\end{align*}
By \cref{assmp:main_assumption}, the first term is $o(\Var(\hat \tau_c(\tilde \beta_c))^{1/2})$. The
second term is $O(I^{-3/2})=o(\Var(\hat \tau_c(\tilde \beta_c))^{1/2})$ because
$\Var(\hat \tau_c(\tilde \beta_c))=\omega(I^{-2})$ and $\max_i \abs{\bar X_{i,\bullet}^{(k)}-\dbar{X}^{(k)}} = O(1)$ from \cref{assmp:covariate_assumption}. Therefore
\[
\sup_{b\in B_I(M)} \frac1I\max_i \abs{\bar e^{(d)}_{i,\bullet}(b,\gamma)}
=o\!\left(\Var(\hat \tau_c(\tilde \beta_c))^{1/2}\right).
\]
The seller-side bound is identical:
\[
\sup_{b\in B_I(M)} \frac1J\max_j \abs{\bar e^{(d)}_{\bullet,j}(b,\gamma)}
=o\!\left(\Var(\hat \tau_c(\tilde \beta_c))^{1/2}\right).
\]

Next,
\[
e_{ij}(b,\gamma)-\bar e_{i,\bullet}(b,\gamma)
=
e^\star_{ij}(\gamma)-\bar e^\star_{i,\bullet}(\gamma)
-
\sum_{k=1}^d \Delta_k\bigl(X_{ij}^{(k)}-\bar X_{i,\bullet}^{(k)}\bigr).
\]
Therefore
\begin{align*}
&\frac{\sqrt{\log I}}{I^{3/2}}\max_{i,j}\abs{e_{ij}(b,\gamma)-\bar e_{i,\bullet}(b,\gamma)} \\
&\qquad\le
\frac{\sqrt{\log I}}{I^{3/2}}\max_{i,j}\abs{e^\star_{ij}(\gamma)-\bar e^\star_{i,\bullet}(\gamma)}
+
\frac{\sqrt{\log I}\,\norm{\Delta}_1}{I^{3/2}}
\max_{1\le k\le d}\max_{i,j}\abs{X_{ij}^{(k)}-\bar X_{i,\bullet}^{(k)}}.
\end{align*}
By \cref{assmp:covariate_assumption},
\[
\max_{i,j}\abs{X_{ij}^{(k)}-\bar X_{i,\bullet}^{(k)}}
\le
\max_{i,j}\abs{X_{ij}^{(k)}-\dbar{X}^{(k)}}
+
\max_i\abs{\bar X_{i,\bullet}^{(k)}-\dbar{X}^{(k)}}
=o(I^{1/2}).
\]
Since $d$ is fixed and $\norm{\Delta}_1\le M\sqrt d\,I^{-1/2}$, the second term above is
$o(\sqrt{\log I}\,I^{-3/2})=o(I^{-1})$, which is $o(\Var(\hat \tau_c(\tilde \beta_c))^{1/2})$ because
$\Var(\hat \tau_c(\tilde \beta_c))^{1/2}=\omega(I^{-1})$. Combining with \cref{assmp:main_assumption} gives
\[
\sup_{b\in B_I(M)}
\frac{\sqrt{\log I}}{I^{3/2}}\max_{i,j}\abs{e_{ij}(b,\gamma)-\bar e_{i,\bullet}(b,\gamma)}
=o\!\left(\Var(\hat \tau_c(\tilde \beta_c))^{1/2}\right).
\]
The column-centered bound is analogous. Because $J\asymp I$, it follows that
\[
\sup_{b\in B_I(M)}
\frac{\sqrt{\log J}}{J^{3/2}}\max_{i,j}\abs{e_{ij}(b,\gamma)-\bar e_{\bullet,j}(b,\gamma)}
=o\!\left(\Var(\hat \tau_c(\tilde \beta_c))^{1/2}\right).
\]

Finally,
\[
e^{(dcr)}_{ij}(b,\gamma)
=
e^{\star(dcr)}_{ij}(\gamma)-\sum_{k=1}^d \Delta_k X_{ij}^{(k),dcr},
\]
so, viewing these arrays as matrices,
\begin{align*}
\frac1{I^2}\norm{e^{(dcr)}(b,\gamma)}_{\text{op}}
&\le
\frac1{I^2}\norm{e^{\star(dcr)}(\gamma)}_{\text{op}}
+
\frac{\norm{\Delta}_1}{I^2}\max_{1\le k\le d}\norm{X^{(k),dcr}}_{\text{op}} \\
&\le
\frac1{I^2}\norm{e^{\star(dcr)}(\gamma)}_{\text{op}}
+
M\sqrt d\,I^{-1/2}\max_{1\le k\le d}\frac1{I^2}\norm{X^{(k),dcr}}_{\text{op}}.
\end{align*}
By the operator norm restriction on the covariates, the second term is $O(I^{-3/2})$, hence again
$o(\Var(\hat \tau_c(\tilde \beta_c))^{1/2})$. Therefore, for all $M > 0$,
\[
\sup_{b\in B_I(M)}\frac1{I^2}\norm{e^{(dcr)}(b,\gamma)}_{\text{op}}
=o\!\left(\Var(\hat \tau_c(\tilde \beta_c))^{1/2}\right).
\]
Collecting the five displayed bounds verifies \eqref{eq:uniform_main_assumption}. \\

Now, we will prove \eqref{eq:uniform_ratio_convergence_of_conservative_estimator}. Throughout the rest of the proof, we will define 
\[
\Sigma_\gamma(b) := \Var(\edavg{e}(b,\gamma)).
\]
Fix $M>0$. For $u\in\bR^d$ with $\norm{u}_2\le M$, set $b(u):=\tilde \beta_c+I^{-1/2}u.$ By Lemma~\ref{lem:quadratic-sigma}, for each $\gamma\in\Gamma$ the function
\[
q_{\gamma,I}(u):=\hat\Sigma_\gamma\bigl(b(u)\bigr)-\Sigma_\gamma\bigl(b(u)\bigr)
\]
belongs to $\mathcal Q_d$. Lemma~\ref{lem:stencil} therefore yields a finite set $\mathcal U_M$ and
constant $C_M<\infty$ such that
\[
\sup_{b\in B_I(M)}\abs{\hat\Sigma_\gamma(b)-\Sigma_\gamma(b)}
=\sup_{\norm{u}_2\le M}\abs{q_{\gamma,I}(u)}
\le C_M\max_{u\in\mathcal U_M}\abs{q_{\gamma,I}(u)}.
\]

Fix $u\in\mathcal U_M$. Because $b(u)\in B_I(M)$, \eqref{eq:uniform_main_assumption} shows that the residualized finite population $e_{ij}(b(u),\gamma)$ satisfies the hypotheses of
\cref{thm:consistent_variance_estimator}. The proof of that theorem shows that
\[
\frac{\hat\Sigma_\gamma(b(u))}{\Sigma_\gamma(b(u))}\to_p 1.
\]
By \cref{assmp:per_gamma_variance_lb}, $\Sigma_\gamma(b(u)) = \Omega( \Var(\hat \tau_c(\tilde \beta_c)))$, so
\[
\hat\Sigma_\gamma(b(u)) - \Sigma_\gamma(b(u)) = o_p\!\left(\Var(\hat \tau_c(\tilde \beta_c))\right).
\]
Since $\mathcal U_M$ is finite,
\[
\max_{u\in\mathcal U_M}\abs{\hat\Sigma_\gamma(b(u)) - \Sigma_\gamma(b(u))}
= o_p\!\left(\Var(\hat \tau_c(\tilde \beta_c))\right).
\]
Therefore,
\[
\sup_{b\in B_I(M)}\abs{\hat\Sigma_\gamma(b)-\Sigma_\gamma(b)}
=o_p\!\left(\Var(\hat \tau_c(\tilde \beta_c))\right).
\]
Using \cref{assmp:per_gamma_variance_lb} again to control the denominator,
\[
\sup_{b\in B_I(M)}
\left|
\frac{\hat\Sigma_\gamma(b)}{\Sigma_\gamma(b)}-1
\right|
\lesssim
\frac{\sup_{b\in B_I(M)}\abs{\hat\Sigma_\gamma(b)-\Sigma_\gamma(b)}}{\Var(\hat \tau_c(\tilde \beta_c))}
=o_p(1).
\]
In particular, with probability tending to one, $\inf_{b\in B_I(M)} \hat\Sigma_\gamma(b)\ge 0,$ for each $\gamma \in \Gamma$. On that event, the truncation $x^+$ is inactive, and therefore
\[
\sup_{b\in B_I(M)}
\left|
\frac{\left(\hat\Sigma_\gamma(b)^{+}\right)^{1/2}}{\Sigma_\gamma(b)^{1/2}}-1
\right|
=
\sup_{b\in B_I(M)}
\left|
\sqrt{\frac{\hat\Sigma_\gamma(b)}{\Sigma_\gamma(b)}}-1
\right|
=o_p(1),
\]
because $x\mapsto \sqrt{x}$ is continuous at $x=1$. Now define
\[
a_\gamma(b):=|c_\gamma|\Sigma_\gamma(b)^{1/2},
\qquad
\hat a_\gamma(b):=|c_\gamma|\hat\Sigma_\gamma(b)_+^{1/2}
\qquad (\gamma\in\Gamma).
\]
Then
\[
V_c(b)^{1/2}=\sum_{\gamma\in\Gamma} a_\gamma(b),
\qquad
\hat V_c(b)^{1/2}=\sum_{\gamma\in\Gamma} \hat a_\gamma(b).
\]
For every $b\in B_I(M)$ with $V_c(b)>0$,
\begin{align*}
\left|\frac{\hat V_c(b)^{1/2}}{V_c(b)^{1/2}}-1\right|
&=
\left|
\sum_{\gamma\in\Gamma}
\frac{a_\gamma(b)}{\sum_{\gamma'\in\Gamma} a_{\gamma'}(b)}
\left(\frac{\hat a_\gamma(b)}{a_\gamma(b)}-1\right)
\right| \\
&\le
\max_{\gamma\in\Gamma}
\left|\frac{\hat a_\gamma(b)}{a_\gamma(b)}-1\right|.
\end{align*}
Taking the supremum over $b\in B_I(M)$ and using that $\Gamma$ is finite gives
\[
\sup_{b\in B_I(M)}
\left|\frac{\hat V_c(b)^{1/2}}{V_c(b)^{1/2}}-1\right|
=o_p(1).
\]
Finally,
\[
\frac{\hat V_c(b)}{V_c(b)}-1
=
\left(\frac{\hat V_c(b)^{1/2}}{V_c(b)^{1/2}}-1\right)
\left(\frac{\hat V_c(b)^{1/2}}{V_c(b)^{1/2}}+1\right),
\]
and the second factor is $O_p(1)$ uniformly on $B_I(M)$. Hence
\[
\sup_{b\in B_I(M)}
\left|\frac{\hat V_c(b)}{V_c(b)}-1\right|
=o_p(1).
\]
    
\end{proof}

\begin{lemma}\label{lem:quadratic-sigma}
For each $\gamma\in \Gamma$, the maps $b\mapsto \Var_\gamma(\edavg{e}(b,\gamma))$ and $b\mapsto \hat\Sigma_\gamma(b)$ are quadratic polynomials on $\bR^d$.
\end{lemma}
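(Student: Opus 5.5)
The plan is to write $e_{ij}(b,\gamma) = y_{ij}(\gamma) - X_{ij}^\top b$, which is affine in $b$, and to check that each map is a polynomial of degree at most two in the entries of the residual matrix. Each map is therefore a quadratic polynomial in $b$.

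For the population map $b\mapsto \Var(\edavg{e}(b,\gamma))$, I would invoke \cref{prop:generic_variance_formula}. It expresses $\Var(\edavg{e}(b,\gamma))$ as a fixed linear combination, with coefficients depending only on $I,J,I_\gamma,J_\gamma$, of $\nu^\tB_\gamma$, $\nu^\tS_\gamma$ and $\nu^\tBS_\gamma$ evaluated at $e(b,\gamma)$. The operations $\bar{(\cdot)}^{(d)}_{i,\bullet}$, $\bar{(\cdot)}^{(d)}_{\bullet,j}$ and $(\cdot)^{(dcr)}$ are linear. The expansion already recorded in \cref{sec:methodology} then gives
\[
\nu^\theta(e(b,\gamma)) = b^\top Z^\theta b - 2 b^\top u^\theta_\gamma + \nu^\theta_\gamma .
\]
So the variance equals $b^\top(\sum_\theta a_\theta Z^\theta) b - 2 b^\top(\sum_\theta a_\theta u_\gamma^\theta) + \mathrm{const}$, which is quadratic. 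Alternatively, one can note directly that $\edavg{e}(b,\gamma) = \edavg{y}_\gamma - \edavg{X}_\gamma^\top b$. Its variance is then $\Var(\edavg{y}_\gamma) - 2b^\top \Cov(\edavg{X}_\gamma,\edavg{y}_\gamma) + b^\top \Var(\edavg{X}_\gamma) b$, using only finite second moments, which holds trivially in a finite population.

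For $\hat\Sigma_\gamma(b)$, I would inspect \eqref{eq:conservative_var_est} term by term. The empirical averages $\hat{\bar e}_{i,\bullet}(\gamma)$, $\hat{\bar e}_{\bullet,j}(\gamma)$ and $\hat{\dbar{e}}_\gamma$ are linear in $e$ and hence affine in $b$. The within-group row-, column- and double-decentered residuals are therefore affine in $b$, with the form $\hat y^{(\cdot)} - \hat X^{(\cdot)\top} b$. Each of $\hat\nu^\tB_\gamma$, $\hat\nu^\tS_\gamma$, $\hat\nu^\tBS_\gamma$ and the two correction sums $\frac{1}{I_\gamma J_\gamma}\sum (e_{ij} - \hat{\bar e}_{i,\bullet})^2$ and $\frac{1}{I_\gamma J_\gamma}\sum (e_{ij} - \hat{\bar e}_{\bullet,j})^2$ is an average of squares of affine functions of $b$. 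Each is thus a quadratic polynomial, for example $\hat\nu^\theta_\gamma(b) = \hat\nu^\theta_\gamma - 2b^\top \hat u^\theta_\gamma + b^\top \hat Z^\theta_\gamma b$, as in \eqref{eq:cons_variance_estimator_difference}. The prefactors involving $\alpha^\tB_\gamma$ and $\alpha^\tS_\gamma$ are deterministic constants, so $\hat\Sigma_\gamma(b)$ is a fixed linear combination of quadratics and is itself quadratic.

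There is no real obstacle here. The only point to watch is that the positive-part truncation $x^+$ is not part of $\hat\Sigma_\gamma$ itself, since it is applied only later in $\hat V_c$, so no nonsmoothness enters. Note also that the coefficients of $\hat\Sigma_\gamma(b)$ are random but do not depend on $b$, which is what places $q_{\gamma,I}$ in $\mathcal Q_d$ for each realization.
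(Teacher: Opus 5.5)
Your proposal is correct and follows the paper's approach: the paper's proof is the one-line ``By definition,'' and you have unpacked exactly that, namely that $e_{ij}(b,\gamma)$ is affine in $b$ and that both $\Var(\edavg{e}(b,\gamma))$ and $\hat\Sigma_\gamma(b)$ are linear combinations, with coefficients not depending on $b$, of averages of squares of linear transforms of the residual matrix. Your observation that the truncation $x^+$ enters only through $\hat V_c$ and not through $\hat\Sigma_\gamma$ is the detail the later application of \cref{lem:stencil} relies on.
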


\begin{proof}
By definition.
\end{proof}

\begin{lemma}\label{lem:stencil}
Let $\mathcal Q_d$ denote the vector space of real polynomials on $\bR^d$ of total degree at most $2$.
For every fixed $M>0$, there exist a finite set $\mathcal U_M\subset\{u\in\bR^d:\norm{u}_2\le M\}$
and a constant $C_M<\infty$ such that
\[
\sup_{\norm{u}_2\le M}|q(u)|\le C_M \max_{u\in\mathcal U_M}|q(u)|
\qquad\text{for every }q\in\mathcal Q_d.
\]
\end{lemma}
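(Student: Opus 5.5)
Because $\mathcal Q_d$ is finite dimensional, this reduces to the equivalence of norms on a finite-dimensional vector space together with a unisolvent interpolation set. Let $N := \dim \mathcal Q_d = \binom{d+2}{2}$. The plan is to choose $N$ points $u_1,\dots,u_N$ in the ball $\{\norm{u}_2 \le M\}$ such that the evaluation map $q \mapsto (q(u_1),\dots,q(u_N))$ is injective on $\mathcal Q_d$, and to set $\mathcal U_M := \{u_1,\dots,u_N\}$. Then
\[
\norm{q}_{\mathcal U} := \max_{u\in\mathcal U_M}|q(u)|
\]
is a norm on $\mathcal Q_d$, and so is
\[
\norm{q}_\infty := \sup_{\norm{u}_2\le M}|q(u)|.
\]
The latter is finite because polynomials are continuous on a compact set. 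It is a norm rather than a seminorm because a polynomial vanishing on a ball with nonempty interior is identically zero. Equivalence of all norms on a finite-dimensional space then gives a constant $C_M$ with $\norm{q}_\infty \le C_M \norm{q}_{\mathcal U}$ for every $q \in \mathcal Q_d$, which is exactly the claim.

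The only substantive step is exhibiting a unisolvent set inside the ball. I would take a scaled simplex-type stencil with $r := M/2$, $e_k$ the standard basis of $\bR^d$, and
\[
\mathcal U_M := \{0\} \cup \{r e_k,\ 2r e_k : 1\le k\le d\} \cup \{r(e_k+e_l) : 1\le k<l\le d\}.
\]
Every point has norm at most $M$: the largest is $2r e_k$ with norm $M$, and $r(e_k+e_l)$ has norm $r\sqrt2 < M$. The count is $1 + 2d + \binom d2 = N$.

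To check injectivity, suppose
\[
q(u) = c_0 + \sum_k b_k u_k + \sum_k a_{kk}u_k^2 + \sum_{k<l} a_{kl}u_ku_l
\]
vanishes on $\mathcal U_M$. The argument proceeds in three steps:
\begin{itemize}
\item Evaluating at $0$ gives $c_0 = 0$.
\item Evaluating at $re_k$ and $2re_k$ gives $b_k r + a_{kk}r^2 = 0$ and $2b_k r + 4a_{kk}r^2 = 0$. This $2\times 2$ system has determinant $2r^3 \neq 0$, so $b_k = a_{kk} = 0$.
\item Evaluating at $r(e_k+e_l)$ then leaves only $a_{kl}r^2 = 0$, so $a_{kl} = 0$.
\end{itemize}
Hence $q \equiv 0$.

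An equivalent route is to argue by compactness on the unit sphere of $\norm{\cdot}_\infty$: the continuous function $q \mapsto \norm{q}_{\mathcal U}$ is strictly positive there by injectivity, so it attains a positive minimum $1/C_M$. Either way no real obstacle remains. The one point needing care is that the constant depends only on $M$ and $d$, not on $I$, which holds because the stencil and the space $\mathcal Q_d$ are fixed. This is precisely what the application in the proof of \cref{prop:proof_of_uniform_ratio_convergence} requires.
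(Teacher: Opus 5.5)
Your proof is correct and takes the same route as the paper: choose $N=\dim\mathcal Q_d$ points in the ball on which evaluation is injective, so that $\max_{u\in\mathcal U_M}|q(u)|$ is a norm, and then invoke equivalence of norms on the finite-dimensional space $\mathcal Q_d$. Your explicit stencil $\{0\}\cup\{re_k,\,2re_k\}\cup\{r(e_k+e_l)\}$ with $r=M/2$, together with its injectivity check, supplies the unisolvent set that the paper only asserts exists through a brief ``generic choice'' remark.
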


\begin{proof}
The space $\mathcal Q_d$ has finite dimension $(d+1)(d+2)/2$. Choose points
$u_1,\dots,u_N$ in the closed ball $\{u:\norm{u}_2\le M\}$, with $N=\dim(\mathcal Q_d)$, such that the
evaluation map
\[
T:\mathcal Q_d\to \bR^N,
\qquad
T(q):=(q(u_1),\dots,q(u_N)),
\]
is injective; a generic choice of $N$ points works because a nonzero quadratic polynomial cannot
vanish on an open set. Set $\mathcal U_M:=\{u_1,\dots,u_N\}$. Then
\[
\norm{q}_{\mathcal U_M}:=\max_{u\in\mathcal U_M}|q(u)|
\]
is a norm on $\mathcal Q_d$, as is
\[
\norm{q}_{\infty,M}:=\sup_{\norm{u}_2\le M}|q(u)|.
\]
All norms are equivalent on a finite-dimensional vector space, so there exists $C_M<\infty$ such that
\[
\norm{q}_{\infty,M}\le C_M\norm{q}_{\mathcal U_M}
\]
for every $q\in\mathcal Q_d$.
\end{proof}

\section{Optimal Interacted Regression Adjustments}
\label{sec:interacted_estimators}

In this section, we consider regression adjustments of the form
\[
\hat{\tau}_{c,\text{int}}(\bm{\beta}) := \sum_{\gamma \in \Gamma} c_\gamma \frac{1}{I_\gamma J_\gamma} \sum_{(i,j) \in \gamma} y_{ij} - \beta_\gamma^\top X_{ij}.
\]
We will refer to these as \textit{interacted imputation estimators}. 
For the purposes of not further complicating the paper, we only derive the forms of optimal regression adjustments without further theoretical analysis. 
Our techniques for the non-interacted setting developed in \cref{sec:proofs_for_asymptotic_theory} should apply here. Let us recall the notation
\begin{equation*}
    u_\gamma^{\tB} := \frac{1}{I-1}\sum_{i=1}^I \overline{X}_{i,\bullet}^{(d)}\overline{y}_{i,\bullet}^{(d)}(\gamma), \quad u_\gamma^{\tS} 
    =
    \frac{1}{J-1}\sum_{j=1}^J \overline{X}_{\bullet,j}^{(d)}\overline{y}_{\bullet,j}^{(d)}(\gamma), \quad u_\gamma^{\tBS} 
    =
    \frac{1}{(I-1)(J-1)}\sum_{i=1}^I\sum_{j=1}^J X_{ij}^{(dcr)} y_{ij}^{(dcr)}(\gamma).
\end{equation*}
and where $Z$ is a known linear combination of the matrices
\begin{equation*}
Z^{\tB} := \frac{1}{I}\sum_{i=1}^I \overline{X}_{i,\bullet}^{(d)}\overline{X}_{i,\bullet}^{(d)\top}, \quad Z^{\tS} := \frac{1}{J}\sum_{j=1}^J \overline{X}_{\bullet,j}^{(d)}\overline{X}_{\bullet,j}^{(d)\top}, \quad
Z^{\tBS} := \frac{1}{IJ}\sum_{i=1}^I\sum_{j=1}^J X_{ij}^{(dcr)}X_{ij}^{(dcr)\top}.
\end{equation*}
Expanding,
\begin{equation}
\label{eq:variance_terms_of_residual}
\begin{split}
\nu^\theta_{\gamma}(e_{ij}) & := \beta_\gamma^\top Z^\theta \beta_\gamma - 2u^{\theta\top}_\gamma \beta_\gamma + \nu^\theta_{\gamma} \\
\xi^\theta_{\gamma,\gamma'}(e_{ij}) & := (\beta_\gamma - \beta_\gamma')^\top Z^\theta (\beta_\gamma - \beta_\gamma') - 2(u^{\theta}_\gamma - u^{\theta}_{\gamma'})^\top (\beta_\gamma - \beta_{\gamma'}) + \xi^\theta_{\gamma,\gamma'} \\
& = \beta_\gamma^\top Z^\theta \beta_{\gamma} - 2\beta_{\gamma'}^\top Z^\theta \beta_{\gamma} + \beta_{\gamma'}^\top Z^\theta \beta_{\gamma'} \\ 
& - 2\left(u_\gamma^{\theta \top} \beta_\gamma + u_{\gamma'}^{\theta \top} \beta_{\gamma'} - u_{\gamma}^{\theta \top} \beta_{\gamma'} - u_{\gamma'}^{\theta \top} \beta_{\gamma}   \right) + \xi^\theta_{\gamma,\gamma'}
\end{split}
\end{equation}
Let $\bm{\beta} \in \bR^{4d}$ be the concatenation of the adjustments $\beta_{\tr},\beta_{\ib},\beta_{\is},\beta_{\cc}.$ We can thus see $\Var(\hat \tau_{c,\text{int}}(\bm{\beta}))$ is a quadratic form in $\bm{\beta}$ which can be written as a block equation
\[
\bm{\beta}^\top \bm{Z} \bm{\beta} - 2\bm{u}^\top \bm{\beta} + \Var(\hat \tau_{c}),
\]
with $\bm{Z} \in \mathbb{R}^{4d \times 4d}, \bm{u} \in \mathbb{R}^{4d}.$ Note that $\bm{Z}$ only depends on the covariates while $\bm{u}$ only depends on potential outcome contrasts of the form $u^{\theta}_\gamma - u^{\theta}_{\gamma'}$ which are estimable via the formulas in \cref{sec:methodology}. Thus, the optimal regression-adjusted coefficient is estimable from data. The next several subsections detail the explicit forms of the adjustment for the direct, total, and spillover effects, alongside natural estimators. 

The resulting estimators are generally different from running interacted least squares (taking $\beta_\gamma$ to be the OLS coefficient in each group), which we will call the Lin estimator \cite{lin2013agnostic}. \cref{fig:interacted_total_effect_comparison} gives a simulation setting where the optimal interacted adjustment for the total effect is considerably more efficient than the Lin procedure. In other settings, we find that the optimal interacted procedure and the Lin estimator are comparable; the Lin estimator has slightly better small sample performance, especially when one of the $\tr,\ib,\is,\cc$ blocks is small. We leave a detailed numerical and theoretical study to future work. The tools developed in this paper should be enough to prove the analogous results in \cref{sec:asymptotic_theory}.

\begin{figure}
    \centering
    \includegraphics[width=0.65\linewidth]{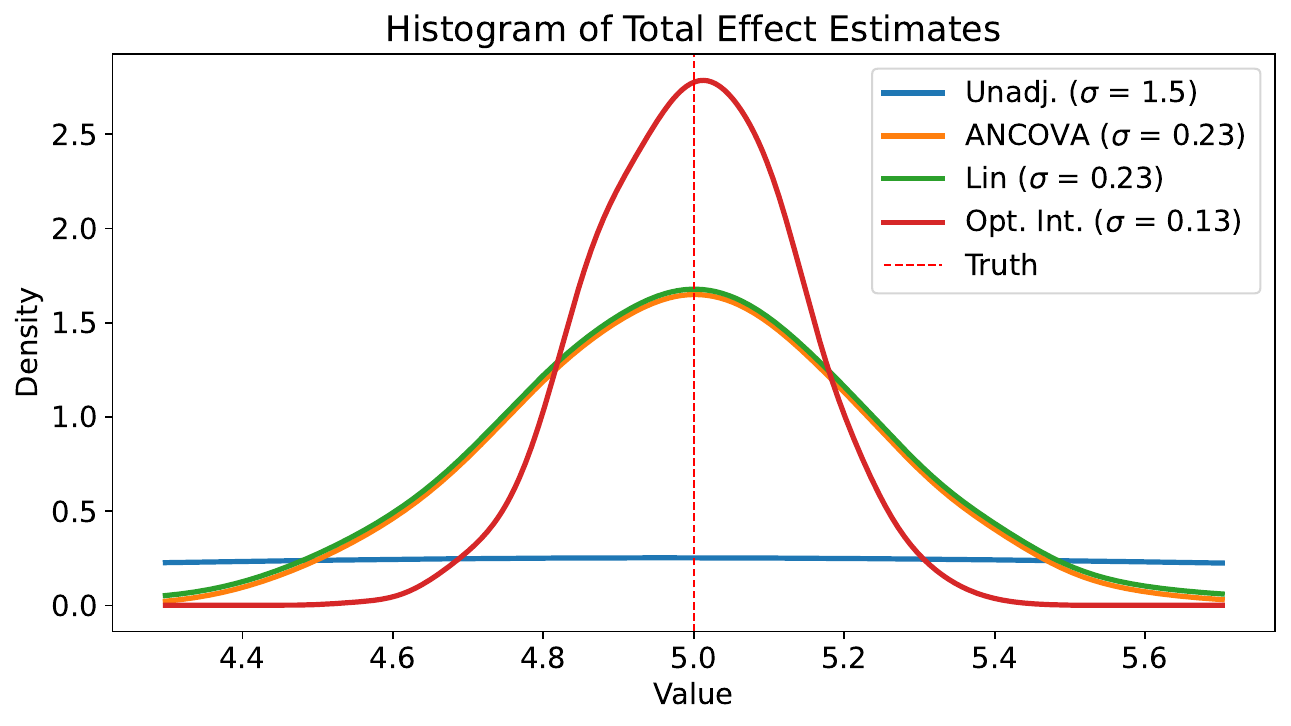}
    \caption{KDE estimate of sampling distributions of the Lin estimator, optimal interacted adjustment, ANCOVA, and unadjusted estimators in a synthetic setting, where the treatment indicators were rerandomized over 2000 Monte Carlo runs. The potential outcomes were taken as $y_{ij}(\cc) = \mu^\tB_i + \mu^\tS_j + \e_{ij}$, where $\mu_i^\tB,\mu^\tS_j, \e_{ij}$ are mutually independent standard normals and $y_{ij}(\tr) = y_{ij}(\cc) + 5.$ $X_{ij} \in \bR$ is taken to be a noisy observation of the control potential outcome $y_{ij}(\cc)$. The relative efficiency of the optimal interacted estimator with respect to the Lin estimator is $\sim 0.3$.}
    \label{fig:interacted_total_effect_comparison}
\end{figure}


\subsection{Interacted Estimators for the Direct Effect}

Recall that the variance expression for the direct effect estimator can be written in \eqref{eq:direct_effect_variance_formula}:

\begin{align*}
& \frac{1}{I_T J_T }\nu^{\tBS}_\tr + \frac{1}{I_T J_C }\nu^{\tBS}_\ib + \frac{1}{I_C J_T }\nu^{\tBS}_\is + \frac{1}{I_C J_C }\nu^{\tBS}_\cc \\   
& + \frac{I_C}{I_{T}I}\xi^{\tB}_{\tr,\ib}
  - \frac{1}{I}\xi^{\tB}_{\tr,\is} 
  + \frac{1}{I}\xi^{\tB}_{\tr,\cc}
 + \frac{1}{I} \xi^{\tB}_{\ib,\is}
 - \frac{1}{I} \xi^{\tB}_{\ib,\cc}
 + \frac{I_T}{I_C}\frac{1}{I}\xi^{\tB}_{\is,\cc} \\
 &  - \frac{1}{J}   \xi^{\tS}_{\tr,\ib}  
+ \frac{J_C}{J_T}\frac{1}{J} \xi^{\tS}_{\tr,\is}
+ \frac{1}{J} \xi^{\tS}_{\tr,\cc} 
+ \frac{1}{J} \xi^{\tS}_{\ib,\is} 
+ \frac{J_T}{J_C}\frac{1}{J} \xi^{\tS}_{\ib,\cc} 
- \frac{1}{J} \xi^{\tS}_{\is,\cc} \\
&  - \frac{I_C}{I_T}\frac{1}{IJ}  \xi^{\tBS}_{\tr,\ib} 
  - \frac{J_C}{J_T}\frac{1}{IJ}  \xi^{\tBS}_{\tr,\is} 
   - \frac{1}{IJ}  \xi^{\tBS}_{\tr,\cc}
  - \frac{1}{IJ}  \xi^{\tBS}_{\ib,\is}
 - \frac{J_T}{J_C}\frac{1}{IJ}  \xi^{\tBS}_{\ib,\cc}
  - \frac{I_T}{I_C}\frac{1}{IJ} \xi^{\tBS}_{\is,\cc}.
\end{align*}

We record the formula for $\bm{Z}_{\dir} := \bm{Z}$ here, as a $4\times 4$ block matrix with block entries $\bm{Z}_{\gamma, \gamma'}$. 
By symmetry, $\bm{Z}_{\gamma,\gamma'} = \bm{Z}_{\gamma',\gamma}$ so we only record the upper diagonal block entries.
\begin{align*}
\bm{Z}_{\tr,\tr} & :=  \left(\frac{1}{I_TJ_T} - \frac{I_C}{I_T IJ} - \frac{J_C}{J_T IJ} - \frac{1}{IJ}\right) Z^\tBS + \frac{I_C}{I_T I} Z^\tB + \frac{J_C}{J_T J} Z^\tS \\
\bm{Z}_{\tr,\ib} & := -\frac{I_C}{I_T I}Z^\tB + \frac{1}{J} Z^\tS + \frac{I_C}{I_T} \frac{1}{IJ}Z^\tBS\\
\bm{Z}_{\tr,\is} & :=  \frac{1}{I}Z^\tB - \frac{J_C}{J_T}\frac{1}{J} Z^\tS + \frac{J_C}{J_T} \frac{1}{IJ}Z^\tBS\\
\bm{Z}_{\tr,\cc} & := -\frac{1}{I}Z^\tB - \frac{1}{J} Z^\tS + \frac{1}{IJ}Z^\tBS\\
\bm{Z}_{\ib,\ib} & := \left(\frac{1}{I_TJ_C} - \frac{I_C}{I_T IJ} - \frac{1}{IJ}  - \frac{J_T}{J_C IJ}\right) Z^\tBS + \frac{I_C}{I_T I} Z^\tB + \frac{J_T}{J_C J} Z^\tS \\
\bm{Z}_{\ib,\is} & :=  -\frac{1}{I}Z^\tB - \frac{1}{J} Z^\tS + \frac{1}{IJ}Z^\tBS\\
\bm{Z}_{\ib,\cc} & :=  \frac{1}{I}Z^\tB - \frac{J_T}{J_C}\frac{1}{J} Z^\tS + \frac{J_T}{J_C} \frac{1}{IJ}Z^\tBS\\
\bm{Z}_{\is,\is} & := \left(\frac{1}{I_CJ_T} - \frac{J_C}{J_T IJ} - \frac{1}{IJ}  - \frac{I_T}{I_C IJ}\right) Z^\tBS + \frac{I_T}{I_C I} Z^\tB + \frac{J_C}{J_T J} Z^\tS \\
\bm{Z}_{\is,\cc} & := -\frac{I_T}{I_C I}Z^\tB + \frac{1}{J} Z^\tS + \frac{I_T}{I_C} \frac{1}{IJ}Z^\tBS\\
\bm{Z}_{\cc,\cc} & := \left(\frac{1}{I_CJ_C} - \frac{1}{IJ} - \frac{J_T}{J_C IJ}  - \frac{I_T}{I_C IJ}\right) Z^\tBS + \frac{I_T}{I_C I} Z^\tB + \frac{J_C}{J_T J} Z^\tS.
\end{align*}

Simplifying the diagonal terms slightly,
\begin{align*}
Z_{\tr,\tr} & := \frac{I_C J_C}{I_T J_T} \frac{1}{IJ} Z^\tBS + \frac{I_C}{I_T I} Z^\tB + \frac{J_C}{J_T J} Z^\tS \\
Z_{\ib,\ib} & := \frac{I_C J_T}{I_T J_C} \frac{1}{IJ} Z^\tBS + \frac{I_C}{I_T I} Z^\tB + \frac{J_T}{J_C J} Z^\tS \\
Z_{\is,\is} & := \frac{I_T J_C}{I_C J_T} \frac{1}{IJ} Z^\tBS + \frac{I_T}{I_C I} Z^\tB + \frac{J_C}{J_T J} Z^\tS \\
Z_{\cc,\cc} & := \frac{I_T J_T}{I_C J_C} \frac{1}{IJ} Z^\tBS + \frac{I_T}{I_C I} Z^\tB + \frac{J_C}{J_T J} Z^\tS.
\end{align*}

Now, we record the block formula for $\bm{u}$ where the block entries $\bm{u}_\gamma$ are the terms multiplied by $\beta_\gamma$.

\begin{align*}
\bm{u}_{\tr} & := \frac{1}{I_TJ_T} u_{\tr}^\tBS - \frac{I_C}{I_T IJ}(u_{\tr}^\tBS - u_{\ib}^\tBS) - \frac{J_C}{J_T IJ}(u_{\tr}^\tBS - u_{\is}^\tBS) - \frac{1}{IJ}(u_{\tr}^\tBS - u_{\cc}^\tBS)\\
& + \frac{I_C}{I_T I} (u_{\tr}^\tB - u_{\ib}^\tB) - \frac{1}{I} (u_{\tr}^\tB - u_{\is}^\tB) + \frac{1}{I} (u_{\tr}^\tB - u_{\cc}^\tB) \\
& - \frac{1}{J} (u_{\tr}^\tS - u_{\ib}^\tS) + \frac{J_C}{J_T J} (u_{\tr}^\tS - u_{\is}^\tS) + \frac{1}{J} (u_{\tr}^\tS - u_{\cc}^\tS)\\
\bm{u}_{\ib} & := \frac{1}{I_TJ_C} u_{\ib}^\tBS - \frac{I_C}{I_T IJ}(u_{\ib}^\tBS - u_{\tr}^\tBS) - \frac{1}{IJ}(u_{\ib}^\tBS - u_{\is}^\tBS) - \frac{J_T}{J_CIJ}(u_{\ib}^\tBS - u_{\cc}^\tBS)\\
& + \frac{I_C}{I_T I} (u_{\ib}^\tB - u_{\tr}^\tB) + \frac{1}{I} (u_{\ib}^\tB - u_{\is}^\tB) - \frac{1}{I} (u_{\ib}^\tB - u_{\cc}^\tB) \\
& - \frac{1}{J} (u_{\ib}^\tS - u_{\tr}^\tS) + \frac{1}{J} (u_{\ib}^\tS - u_{\is}^\tS) + \frac{J_T}{J_CJ} (u_{\ib}^\tS - u_{\cc}^\tS) \\
\bm{u}_{\is} & := \frac{1}{I_CJ_T} u_{\is}^\tBS - \frac{J_C}{J_T IJ}(u_{\is}^\tBS - u_{\tr}^\tBS) - \frac{1}{IJ}(u_{\is}^\tBS - u_{\ib}^\tBS) - \frac{I_T}{I_CIJ}(u_{\is}^\tBS - u_{\cc}^\tBS)\\
& - \frac{1}{I} (u_{\is}^\tB - u_{\tr}^\tB) + \frac{1}{I} (u_{\is}^\tB - u_{\ib}^\tB) + \frac{I_T}{I_CI} (u_{\is}^\tB - u_{\cc}^\tB) \\
& + \frac{J_C}{J_TJ} (u_{\is}^\tS - u_{\tr}^\tS) + \frac{1}{J} (u_{\is}^\tS - u_{\ib}^\tS) - \frac{1}{J} (u_{\is}^\tS - u_{\cc}^\tS) \\
\bm{u}_{\cc} & := \frac{1}{I_CJ_C} u_{\cc}^\tBS - \frac{1}{IJ}(u_{\cc}^\tBS - u_{\tr}^\tBS) - \frac{J_T}{J_CIJ}(u_{\cc}^\tBS - u_{\ib}^\tBS) - \frac{I_T}{I_CIJ}(u_{\cc}^\tBS - u_{\is}^\tBS)\\
& + \frac{1}{I} (u_{\cc}^\tB - u_{\tr}^\tB) - \frac{1}{I} (u_{\cc}^\tB - u_{\ib}^\tB) + \frac{I_T}{I_CI} (u_{\cc}^\tB - u_{\is}^\tB) \\
& + \frac{1}{J} (u_{\cc}^\tS - u_{\tr}^\tS) + \frac{J_T}{J_CJ} (u_{\cc}^\tS - u_{\ib}^\tS) - \frac{1}{J} (u_{\cc}^\tS - u_{\is}^\tS).
\end{align*}

The optimal adjustment is then given by $\tilde{\bm{\beta}}_{\dir} := \bm{Z}_{\dir}^{-1} \bm{u}_{\dir}$. 

\subsection{Interacted Estimators for the Total Effect}

The full formula for the total effect variance is given by \eqref{eq:var_total_effect}:

\begin{align}
\begin{split}
& \frac{1}{I_T}\nu_{\tr}^{\tB} + \frac{1}{J_T}\nu_{\tr}^{\tS} + \frac{1}{IJ}\left( \frac{I_CJ_C}{I_TJ_T} - 1\right)\nu_{\tr}^{\tBS}\\
& + \frac{1}{I_C}\nu_{\cc}^{\tB} + \frac{1}{J_C}\nu_{\cc}^{\tS} + \frac{1}{IJ}\left(\frac{I_TJ_T}{I_CJ_C} - 1\right)\nu_{\cc}^{\tBS}\\
& - \frac{1}{I}\xi_{\tr,\cc}^{\tB} - \frac{1}{J}\xi_{\tr,\cc}^{\tS} + \frac{1}{IJ}
\xi_{\tr,\cc}^{\tBS}.
\end{split}
\end{align}


We can express the optimal adjustment $\bm{\beta} := [\beta_\tr \ \beta_\cc]$ as $\bm{\beta} = \bm{Z}^{-1} \bm{u}$, where $\bm{Z}$ is a $2 \times 2$ block matrix with block entries
\begin{align*}
\bm{Z}_{\tr,\tr} & := \frac{I_C}{I_T I}Z^{\tB} + \frac{J_C}{J_T J} Z^\tS + \frac{I_CJ_C}{I_TJ_T}\frac{1}{IJ} Z^\tBS \\
\bm{Z}_{\tr,\cc} & := \frac{1}{I}Z^\tB + \frac{1}{J}Z^\tS - \frac{1}{IJ}Z^\tBS \\
\bm{Z}_{\cc,\cc} & := \frac{I_T}{I_C I}Z^{\tB} + \frac{J_T}{J_C J} Z^\tS + \frac{I_TJ_T}{I_CJ_C}\frac{1}{IJ} Z^\tBS
\end{align*}
and $\bm{u}$ is given by
\begin{align*}
\bm{u}_{\tr} & := \frac{1}{I_T} u^\tB_\tr -\frac{1}{I} (u^\tB_\tr - u^\tB_\cc) \\
& + \frac{1}{J_T} u^\tS_\tr -\frac{1}{J} (u^\tS_\tr - u^\tS_\cc) \\
& + \frac{1}{IJ}\left( \frac{I_CJ_C}{I_TJ_T} - 1\right)u^\tBS_\tr + \frac{1}{IJ} (u^\tBS_\tr - u^\tBS_\cc) \\
\bm{u}_{\cc} & := \frac{1}{I_C} u^\tB_\cc -\frac{1}{I} (u^\tB_\cc - u^\tB_\tr) \\
& + \frac{1}{J_C} u^\tS_\cc -\frac{1}{J} (u^\tS_\cc - u^\tS_\tr) \\
& + \frac{1}{IJ}\left( \frac{I_TJ_T}{I_CJ_C} - 1\right)u^\tBS_\cc + \frac{1}{IJ} (u^\tBS_\cc - u^\tBS_\tr).
\end{align*}

This simplifies to

\begin{align*}
\bm{u}_{\tr} & := \frac{I_C}{I_T I} u^\tB_\tr + \frac{1}{I} u^\tB_\cc \\
& + \frac{J_C}{J_T J} u^\tS_\tr + \frac{1}{J} u^\tS_\cc \\
& + \frac{1}{IJ}\left( \frac{I_CJ_C}{I_TJ_T}\right)u^\tBS_\tr - \frac{1}{IJ} u^\tBS_\cc \\
\bm{u}_{\cc} & := \frac{I_T}{I_C I} u^\tB_\cc + \frac{1}{I} u^\tB_\tr \\
& + \frac{J_T}{J_C J} u^\tS_\cc + \frac{1}{J} u^\tS_\tr \\
& + \frac{1}{IJ}\left( \frac{I_TJ_T}{I_CJ_C} \right)u^\tBS_\cc - \frac{1}{IJ} u^\tBS_\tr.
\end{align*}

If we replace $y_{ij}(\gamma)$ by the residuals $y_{ij}(\gamma) - X_{ij}^\top \beta_\gamma$, we have
\begin{align*}
\hat{\bm{u}}_{\tr}(e) - \hat{\bm{u}}_{\tr} & := \frac{I_C}{I_T I} \hat Z^\tB_\tr \beta_\tr + \frac{1}{I} \hat Z^\tB_\cc \beta_\cc \\
& + \frac{J_C}{J_T J} \hat{Z}^\tS_\tr \beta_\tr + \frac{1}{J} \hat{Z}^\tS_\cc \beta_\cc \\
& + \frac{1}{IJ}\left( \frac{I_CJ_C}{I_TJ_T}\right)\hat{Z}^\tBS_\tr \beta_\tr - \frac{1}{IJ} \hat{Z}^\tBS_\cc \beta_\cc \\
\hat{\bm{u}}_{\cc}(e) - \hat{\bm{u}}_{\cc} & := \frac{I_T}{I_C I} \hat Z^\tB_\cc \beta_\cc + \frac{1}{I} \hat Z^\tB_\tr \beta_\tr  \\
& + \frac{J_T}{J_C J} \hat Z^\tS_\cc \beta_\cc  + \frac{1}{J} \hat Z^\tS_\tr \beta_\tr  \\
& + \frac{1}{IJ}\left( \frac{I_TJ_T}{I_CJ_C} \right) \hat Z^\tBS_\cc \beta_\cc  - \frac{1}{IJ} \hat Z^\tBS_\tr \beta_\tr.
\end{align*}

We may write this as $\hat{\bm{u}}(e) - \hat{\bm{u}} = \hat{\bm{Z}} \bm{\beta}$ where
\begin{align*}
\hat{\bm{Z}}_{\tr,\tr} & := \frac{I_C}{I_T I}\hat Z_{\tr}^{\tB} + \frac{J_C}{J_T J} \hat Z_{\tr}^\tS + \frac{I_CJ_C}{I_TJ_T}\frac{1}{IJ} \hat Z_{\tr}^\tBS \\
\hat{\bm{Z}}_{\tr,\cc} & := \frac{1}{I}\hat Z_{\cc}^\tB + \frac{1}{J}\hat Z_{\cc}^\tS - \frac{1}{IJ}\hat Z_{\cc}^\tBS \\
\hat{\bm{Z}}_{\cc,\tr} & := \frac{1}{I}\hat Z_{\tr}^\tB + \frac{1}{J}\hat Z_{\tr}^\tS - \frac{1}{IJ}\hat Z_{\tr}^\tBS \\
\hat{\bm{Z}}_{\cc,\cc} & := \frac{I_T}{I_C I}\hat Z_{\cc}^{\tB} + \frac{J_T}{J_C J} \hat Z_{\cc}^\tS + \frac{I_TJ_T}{I_CJ_C}\frac{1}{IJ} \hat Z_{\cc}^\tBS
\end{align*}

It is instructive to compare this to least squares in each group. The OLS objective with constants is to minimize in $\beta_{\tr}$
\begin{align*}
\frac{1}{I_T J_T}\sum_{(i,j) \in \tr} (y_{ij}^{(d)}(\tr) - X_{ij}^{(d)\top} \beta_\tr)^2 & =  \frac{1}{I_T J_T}\sum_{(i,j) \in \tr} (y_{ij}^{(dcr)}(\tr) - X_{ij}^{(dcr)\top} \beta_\tr)^2 \\
& + \frac{1}{I_T} \sum_{i \in \tr} (\bar y_{i,\bullet}^{(d)}(\tr) - \bar X_{i,\bullet}^{(d)\top} \beta_\tr)^2 \\
& + \frac{1}{J_T} \sum_{j \in \tr} (\bar y_{\bullet,j}^{(d)}(\tr) - \bar X_{\bullet,j}^{(d)\top} \beta_\tr)^2,
\end{align*}
which suggests the population objective, up to constant factors,
\begin{equation}
\begin{split}
& \frac{1}{I_T}\nu_{\tr}^{\tB} + \frac{1}{I_T}\nu_{\tr}^{\tS} + \frac{1}{I_T}\nu_{\tr}^{\tBS}\\
& + \frac{1}{I_T}\nu_{\cc}^{\tB} + \frac{1}{I_T}\nu_{\cc}^{\tS} + \frac{1}{I_T}\nu_{\cc}^{\tBS}
\end{split}
\end{equation}
So compared to the optimal adjustment, the OLS upweights the decentered term and underweights the row and column means. Moreover, there is no interaction term. 

\subsection{Interacted Estimators for the Buyer Spillover Effect}

For the buyer spillover estimator $\hat{\tau}_{\text{bs}}(\beta)$ (\eqref{eq:imputation_estimators} with $c = [0,1,0,-1]$), the variance is given by
\begin{align*}
    \Var(\tau_{\text{bs}}(\beta)) & =  \frac{1}{I_T}\nu^{\tB}_\ib  +  \frac{1}{I_C}\nu^{\tB}_\cc + \frac{J_T}{I_TJ_C J}\nu^{\tBS}_\ib + \frac{J_T}{I_CJ_C J}\nu^{\tBS}_\cc \\
    - & \frac{1}{I}\xi^{\tB}_{\ib,\cc} + \frac{J_T}{J_CJ} \xi^{\tS}_{\ib,\cc} - \frac{J_T}{J_C}\frac{1}{IJ}\xi^{\tBS}_{\ib,\cc}.
\end{align*}

We can express the optimal adjustment $\bm{\beta} := [\beta_\ib \ \beta_\cc]$ as $\bm{\beta} = \bm{Z}^{-1} \bm{u}$, where $\bm{Z}$ is a $2 \times 2$ block matrix with block entries
\begin{align*}
\bm{Z}_{\ib,\ib} & := \frac{1}{I_T}Z^{\tB} + \frac{J_T}{I_T J_C J} Z^\tBS \\
\bm{Z}_{\ib,\cc} & := \frac{1}{I}Z^\tB - \frac{J_T}{J_C J} Z^\tS + \frac{J_T}{J_C}\frac{1}{IJ} Z^\tBS \\
\bm{Z}_{\cc,\cc} & := \frac{1}{I_C} Z^{\tB} + \frac{J_T}{I_C J_C J} Z^\tBS.
\end{align*}
and $\bm{u}$ is given by
\begin{align*}
\bm{u}_{\ib} & := \frac{1}{I_T} u^\tB_\ib -\frac{1}{I} (u^\tB_\ib - u^\tB_\cc) \\
& + \frac{J_T}{J_C J} (u^\tS_\ib - u^\tS_\cc) \\
& + \frac{J_T}{I_TJ_CJ}u^\tBS_\ib - \frac{J_T}{J_C IJ} (u^\tBS_\ib - u^\tBS_\cc) \\
\bm{u}_{\cc} & := \frac{1}{I_C} u^\tB_\cc -\frac{1}{I} (u^\tB_\cc - u^\tB_\ib) \\
& + \frac{J_T}{J_C J} (u^\tS_\cc - u^\tS_\ib) \\
& + \frac{J_T}{I_CJ_CJ}u^\tBS_\cc - \frac{J_T}{J_C IJ} (u^\tBS_\cc - u^\tBS_\ib) \\
\end{align*}

\subsection{Interacted Estimators for the Seller Spillover Effect}
In this case, the variance is given by
\begin{align*}
    \Var(\tau_{\text{ss}}(\beta)) & =  \frac{1}{J_T}\nu^{\tS}_\is  +  \frac{1}{J_C}\nu^{\tS}_\cc + \frac{I_T}{J_TI_C I}\nu^{\tBS}_\is + \frac{I_T}{J_CI_C I}\nu^{\tBS}_\cc \\
    + & \frac{I_T}{I_C}\frac{1}{I}\xi^{\tB}_{\is,\cc} - \frac{1}{J} \xi^{\tS}_{\is,\cc} - \frac{I_T}{I_C}\frac{1}{IJ} \xi^{\tBS}_{\is,\cc},
\end{align*}
and the adjustment has the components
\begin{align*}
\bm{Z}_{\is,\is} & := \frac{1}{J_T}Z^{\tS} + \frac{I_T}{J_T I_C I} Z^\tBS \\
\bm{Z}_{\is,\cc} & := -\frac{I_T}{I_CI}Z^\tB + \frac{1}{J} Z^\tS + \frac{I_T}{I_C}\frac{1}{IJ} Z^\tBS \\
\bm{Z}_{\cc,\cc} & := \frac{1}{J_C} Z^{\tS} + \frac{I_T}{J_C I_C I} Z^\tBS.
\end{align*}
and $\bm{u}$ is given by
\begin{align*}
\bm{u}_{\is} & :=  \frac{I_T}{I_C I} (u^\tB_\is - u^\tB_\cc) \\
& + \frac{1}{J_T} u^\tS_\is  - \frac{1}{J} (u^\tS_\is - u^\tS_\cc) \\
& + \frac{I_T}{J_TI_CI}u^\tBS_\is - \frac{I_T}{I_C IJ} (u^\tBS_\is - u^\tBS_\cc) \\
\bm{u}_{\cc} & := \frac{I_T}{I_C I} (u^\tB_\cc - u^\tB_\is) \\
& + \frac{1}{J_C} u^\tS_\cc  - \frac{1}{J} (u^\tS_\cc - u^\tS_\is) \\
& + \frac{I_T}{J_CI_CI}u^\tBS_\cc - \frac{I_T}{I_C IJ} (u^\tBS_\cc - u^\tBS_\is).
\end{align*}

\end{document}